\newcommand{\dollar}[0]{\$}
\newcommand{\modm}{\mathtt{MOD_m}}
\newcommand{\Leq}{\mathtt{EQ}}
\newcommand{\modtwothree}{\mathtt{MOD23}}
\newcommand{\mymatrix}[2]{\left( \begin{array}{#1} #2\end{array} \right)}
\newcommand{\mypar}[1]{\left( #1 \right)}
\newtheorem{thm}{Theorem}[chapter]
\newtheorem{prop}[thm]{Proposition}
\newtheorem{lem}[thm]{Lemma}
\newtheorem{cor}[thm]{Corollary}
\newtheorem{fact}[thm]{Fact}
\newtheorem{ex}[thm]{Example}
\title{EXTENDED MODELS OF FINITE AUTOMATA}
\author{Özlem Salehi Köken}
\begin{document}

\pagenumbering{roman}
\makephdtitle 
\makeapprovalpage
\newpage

\begin{flushright}

\vspace*{\fill}
To the memory of my grandfather Hasan Salehi,
\\
my great-grandmother Gülizar Burhanetin, and 
\\
my grandmother Mader Burhanettin...
	
\end{flushright}

\begin{acknowledgements}

First of all I would like to express my gratitude to my supervisor Prof. A. C. Cem Say for his guidance and support throughout my journey at Boğaziçi University. He has always been a source of inspiration for me and it is a great pleasure to be his student. I would also like to thank Abuzer Yakaryılmaz for his significant contributions to my thesis. His enthusiasm has always motivated and encouraged me.

I would like to thank my thesis committee members Prof. Can Özturan and Assoc. Prof. Özlem Beyarslan for their valuable feedbacks over the years and devoting their time to this research. I would like to thank Prof. Sema Fatma Oktuğ and Assoc. Prof. Tolga Ovatman for kindly accepting to be in my thesis jury and for their helpful comments.

I am grateful to Dr. Flavio D'Alessandro for our joint works and for sharing his knowledge. It was a great experience to work with him. I would like to thank professors Ryan O'Donnell, Alexandre Borovik and Igor Potapov for their helpful answers to my questions. 

My gratitudes go to all members of the Department of Computer Engineering for providing such a nice environment. I would like to especially thank Prof. Pınar Yolum Birbil and Prof. Cem Ersoy for their kindness and support. I want to thank all past and current members of BM26 with whom I had the chance of working with. 

I would like to thank Pelin Gürel for her valuable friendship all through these years.

I would like to offer my deepest gratitude to my grandfather İbrahim Burhanettin, my aunt Perihan Burhanettin, and my parent-in-law Nebahat Köken for their endless love and faith in me. I want to thank my sister Meltem Salehi for always being cheerful and motivating me. I am indebted to my parents Feryal and Mecid Salehi for all their efforts and generosity which made this thesis possible. Finally I would like to express my love and gratitude to my husband Oktay Köken for his patience, guidance and encouragement and to my lovely daughter Eda for bringing joy to my life. Without them, it would have no meaning.

This work is supported by Boğaziçi University Research Fund Grant Number 11760 and by Scientific and Technical Research Council of Turkey (TÜBİTAK) BİDEB Scholarhip Program.

\end{acknowledgements}
\begin{abstract}

Many of the numerous automaton models proposed in the literature can be regarded as a finite automaton equipped with an additional storage mechanism. In this thesis, we focus on two such models, namely the finite automata over groups and the homing vector automata.

A finite automaton over a group $ G $ is a nondeterministic finite automaton equipped with a register that holds an element of the group $ G $. The register is initialized to the identity element of the group and a computation is successful if the register is equal to the identity element at the end of the computation after being multiplied with a group element at every step. We investigate the language recognition power of finite automata over integer and rational matrix groups and reveal new relationships between the language classes corresponding to these models. We examine the effect of various parameters 
on the language recognition power. We establish a link between the decision problems of matrix semigroups and the corresponding automata.  We present some new results about 
valence pushdown automata and context-free valence grammars.

We also propose the new homing vector automaton model, which is a finite automaton equipped with a vector that can be multiplied with a matrix at each step. The vector can be checked for equivalence to the initial vector and the acceptance criterion is ending up in an accept state with the value of the vector being equal to the initial vector. We examine the effect of various restrictions on the model by confining the matrices to a particular set and allowing the equivalence test only at the end of the computation. We define the different variants of the model and compare their language recognition power with that of the classical models. \vspace{-5in}
\end{abstract}

\begin{ozet}
Literatürde ortaya sürülmüş olan pek çok makine, bir sonlu durumlu makinenin ek bir hafıza ünitesi ile güçlendirilmiş hali olarak düşünülebilir. Bu tezde, bu makine-\newline lerden ikisine, gruplar üzerinde tanımlı sonlu durumlu makinelere ve eve dönen vektör makinelerine odaklanılmıştır.

$ G $ grubu üzerinde tanımlı bir makine, ek hafıza ünitesinde $ G $ grubundan bir elemanı tutma hakkına sahip, belirlenimci olmayan bir sonlu durumlu makinedir. Başlangıçta hafıza ünitesinin değeri $ G $ grubunun birim elemanıdır. Bir hesaplamanın başarılı sayılabilmesi için hafıza ünitesinin değeri, her adımda grubun bir elemanıyla çarpıldıktan sonra bitimde grubun birim elemanına eşit olmalıdır. Bu çalışmada tam sayılı ve rasyonel sayılı matris grupları üzerinde tanımlanan sonlu durumlu makinelerin tanıdıkları dil sınıfları incelenmiştir. Çeşitli parametrelerin makinelerin tanıma gücünü nasıl etkilediği araştırılmıştır. Matris yarıgruplarının karar verme problemleri ile ilintili makinelerinki arasında bir bağ kurulmuştur. Grup üzerinde tanımlı makinelerle ilişkili olan bazı modellerle ilgili yeni sonuçlar elde edilmiştir.

Yeni tanımladığımız eve dönen vektör makinesi, bir sonlu durumlu makinenin bir vektörle güçlendirilmesi ve bu vektöre her adımda bir matrisle çarpılma hakkı verilmesiyle ortaya çıkmıştır. Vektörün başlangıç vektörüne eşit olup olmadığı kontrol edilebilir ve makinenin kabul şartı, hesaplama bittiğinde vektörün başlangıç vektörüne eşit olması ve kabul durumlarından birinde bulunulmasıdır. Kullanılan matris kümesi sınırlanarak ve vektörün eşitlik kontrolünün sadece sonda gerçekleşmesine izin verilerek, farklı kısıtlamaların makineye olan etkisi incelenmiştir. Makinenin çeşitli sürümlerinin dil tanıma gücüyle klasik modellerin dil tanıma gücü karşılaştırılmıştır. 
\vspace{-5in} 
\end{ozet}
\tableofcontents
\listoffigures
\listoftables
\begin{symbols}
%

\sym{$ 1 $}{The identity element of a group/monoid}
\sym{$ |A| $}{The cardinality of a set $ A $}
\sym{$ \mathcal{A} $}{A pushdown automaton}
\sym{$ A[i,j] $}{The entry in the $ i $'th row and the $ j $'th  column of the matrix $ A $}
\sym{$ A_L(n) $}{The maximum $ k $ such that there exist $ k $ distinct strings that are pairwise $ n $-dissimilar for $ L $ }
\sym{B}{The bicyclic monoid}
\sym{$ B^A_G(n) $}{The set of all elements in 
$ G $ which can be represented by a word of length at most $ n $}
\sym{$ BS(m,n) $}{The Baumslag Solitar group}
\sym{$ \mathcal{C} $}{A counter automaton}
\sym{$ \mathcal{D} $}{The set of head directions}
\sym{$ e $}{The identity element of a group/monoid}
\sym{$ e_m(w) $}{The base-10 number encoded by $ w $ in base-$ m $}
\sym{$ \mathcal{E} $}{An extended finite automaton}
\sym{$ \mathcal{F} $}{A finite automaton}
\sym{$ \mathbf{F}_r $}{The free group of rank $ r $}
\sym{$ G $}{A group}
\sym{$ \mathcal{G} $}{A context-free grammar}
\sym{$ \mathfrak{G} $}{A context-free valence grammar}
\sym{$ g_G(n) $}{The growth function of a group $ G $}

\sym{$ GL(n,\mathbb{Q}) $}{The general linear group of order $ n $ over the field of rationals}
\sym{$ GL(n,\mathbb{Z}) $}{The general linear group of order $ n $ over the field of integers}
\sym{$ \mathbf{H} $}{The discrete Heisenberg group}
\sym{$ {I} $}{Identity matrix}
\sym{$ \mathtt{I} $}{An ideal}

\sym{$ L $}{A language}
\sym{$\bar{ L} $}{The complement of the language $ L $}
\sym{$ L(.) $}{The language generated by some grammar/recognized by some machine}
\sym{$ \mathfrak{L}(G)$}{The class of languages recognized by $ {G} $-automata}
\sym{$ \mathfrak{L}(G)_{t(n)}^s $}{The class of languages recognized by strongly $ t(n) $-time bounded $ G $-automata}
\sym{$ \mathfrak{L}(G)_{t(n)}^w $}{The class of languages recognized by weakly $ t(n) $-time bounded $ G $-automata}
\sym{$ \mathfrak{L}(\cal{M})$}{The class of languages recognized by the machine type $ \cal{M} $}
\sym{$\mathfrak{L}(\textup{Val}, \textup{CF},M)$}{The class of languages generated by context-free valence grammars over $ M $}
\sym{$\mathfrak{L}(\textup{Val}, \textup{NFA},M)$}{The class of languages recognized by valence automata over $ M $}
\sym{$\mathfrak{L}(\textup{Val}, \textup{PDA},M)$}{The class of languages recognized by valence pushdown automata over $ M $}
\sym{$\mathfrak{L}(\textup{Val}, \textup{REG},M)$}{The class of languages generated by regular valence grammars over $ M $}
\sym{ M}{The set of matrices multiplied with the vector of a homing vector automaton}
\sym{$ M$}{A monoid}
\sym{$ M_n(\mathbb{Z})$}{The set of $ n \times n $ matrices with integer entries }
\sym{$ \mathbb{N} $}{The set of natural numbers}
\sym{$o(.)$}{Little-o notation}
\sym{$O(.)$}{Big-O notation}
\sym{$\mathfrak P$}{A valence pushdown automaton}
\sym{$\mathcal{P}(A)$}{The power set of a set $ A $}
\sym{P$_2$}{The polycyclic monoid of rank 2}
\sym{$P(X)$}{The polycyclic monoid on $ X $ }
\sym{$ \mathbb{Q} $}{The set of rational numbers}
\sym{$ \mathbb{Q}^+ $}{The set/group of nonzero rational numbers}
\sym{$ Q $}{The set of states}.
\sym{$ q_1 $}{The initial state}
\sym{$ Q_a $}{The set of accept states}
\sym{$ S $}{A semigroup}
\sym{$ S/\mathtt{I} $}{The Rees quotient semigroup}
\sym{$ S_k(m) $}{The set of $ k \times k $ matrices whose entries belong to the set $ \{-m,\dots,m\} $ for some positive integer $ m $}
\sym{$ SL(n,\mathbb{Q}) $}{The special linear group of order $ n $ over the field of rationals}
\sym{$ SL(n,\mathbb{Z}) $}{The special linear group of order $ n $ over the field of integers}
\sym{$ \mathcal{T} $}{A Turing machine}
\sym{$ U_L(n) $}{The maximum $ k $ such that there exist $ k $ distinct strings that are uniformly $ n $-dissimilar for $ L $ }
\sym{$ v $}{A vector}
\sym{$ \cal V $}{A vector automaton}
\sym{$ v[i] $}{The $ i $'th entry of the vector $ v $}
\sym{$ \mathcal{W} $}{A finite automaton with multiplication}
\sym{$ |w| $}{The length of the string $ w $}
\sym{$ W(G) $}{The word problem language of the group $ G $}
\sym{$ |w|_{\sigma} $}{The number of occurrences of $ \sigma $ in $ w $}
\sym{$ w[i] $}{The $ i $'th symbol of the string $ w $}
\sym{$ w^r$}{The reverse of the string $ w $}
\sym{$ \mathbb{Z} $}{The set of integers}
\sym{$ \mathbb{Z}^k $}{The set/group of $ k $-dimensional integer vectors}
\\

\sym{$ \delta $}{A transition function}
\sym{$ \varepsilon $}{The empty string}
\sym{$ \Gamma $}{A tape/stack alphabet}
\sym{$ \Gamma_{\varepsilon} $}{$ \Gamma \cup \{\varepsilon\} $}
\sym{$ \Lambda $}{The set of multipliers of a finite automaton with multiplication}
\sym{$ \omega $}{A register status}
\sym{$ \Omega $}{The set of register status}
\sym{$ \phi(L) $}{The Parikh image of a language $ L $}
\sym{$ \phi(w) $}{The Parikh image of a string $ w $}
\sym{$ \sigma $}{A symbol}
\sym{$ \Sigma $}{An alphabet}
\sym{$ \Sigma^* $}{The set of all strings over $ \Sigma $}

\sym{$ \Sigma_{\dollar} $}{$ \Sigma \cup \{\dollar \}$}
\sym{$ \Sigma_{\varepsilon} $}{$ \Sigma \cup \{\varepsilon \}$}
\sym{$ \theta $}{A counter status}
\sym{$ \Theta $}{The set of counter status}
\\

\sym{$ \simeq $}{Isomorphic}
\sym{$ \downarrow $}{Stay}
\sym{$ \rightarrow $}{Right}
\sym{$ \leftarrow $}{Left}
\sym{$ \sqcup $}{Blank symbol}
\sym{$ \dollar $}{End-marker}
\end{symbols}

\begin{abbreviations}
\sym{0-1DFAMW}{Stateless one-way deterministic finite automaton with multiplication without equality}
\sym{0-DBHVA($ k $)}{Stateless real-time $ k $-dimensional deterministic blind homing vector automaton}
\sym{0-DFAMW}{Stateless real-time deterministic finite automaton with multiplication without equality}
\sym{0-DHVA($ k $)}{Stateless real-time $ k $-dimensional deterministic homing vector automaton}
\sym{0-NBHVA($ k $)}{Stateless real-time $ k $-dimensional nondeterministic blind homing vector automaton}
\sym{0-NFAMW}{Stateless real-time nondeterministic finite automaton with multiplication without equality}
\sym{0-NHVA($ k $)}{Stateless real-time $ k $-dimensional nondeterministic homing vector automaton}
\sym{1DBHVA($ k $)}{One-way $ k $-dimensional deterministic blind homing vector automaton}
\sym{1DFA}{One-way deterministic finite automaton}
\sym{1DFAM}{One-way deterministic finite automaton with multiplication}
\sym{1DFAMW}{One-way deterministic finite automaton with multiplication without equality}
\sym{1DHVA($ k $)}{One-way $ k $-dimensional deterministic homing vector automaton}
\sym{1D$ k $BCA}{One-way deterministic blind $ k $-counter automaton}
\sym{1D$ k $CA}{One-way deterministic $ k $-counter automaton}
\sym{1NBHVA($ k $)}{One-way $ k $-dimensional nondeterministic blind homing vector automaton}
\sym{1NHVA($ k $)}{One-way $ k $-dimensional nondeterministic homing vector automaton}
\sym{1NFA}{One-way nondeterministic finite automaton}
\sym{1NFAM}{One-way nondeterministic finite automaton with multiplication}
\sym{1NFAMW}{One-way nondeterministic finite automaton with multiplication without equality}
\sym{1N$ k $BCA}{One-way nondeterministic blind $ k $-counter automaton}
\sym{1N$ k $CA}{One-way nondeterministic $ k $-counter automaton}
\sym{1NPDA}{One-way nondeterministic pushdown automaton}
\sym{DBHVA($ k $)}{Real-time $ k $-dimensional deterministic blind homing vector automaton}
\sym{DFA}{Real-time deterministic finite automaton}
\sym{DFAM}{Real-time deterministic finite automaton with multiplication} 
\sym{DFAMW}{Real-time deterministic finite automaton with multiplication without equality}
\sym{DHVA($ k $)}{Real-time $ k $-dimensional deterministic homing vector automaton}
\sym{D$ k $BCA}{Real-time deterministic blind $ k $-counter automaton}
\sym{D$ k $CA}{Real-time deterministic $ k $-counter automaton}
\sym{DVA($ k $)}{Real-time $ k $-dimensional deterministic vector automaton}

\sym{CA}{Counter automaton}
\sym{$ \mathsf{CF} $}{The class of context-free languages}
\sym{FA}{Finite automaton}
\sym{FAM}{Finite automaton with multiplication}
\sym{HVA($ k $)}{$ k $-dimensional homing vector automaton}
\sym{$ k $BCA}{blind $ k $-counter automaton}
\sym{$ k $CA}{$ k $-counter automaton}
\sym{NBHVA($ k $)}{Real-time $ k $-dimensional nondeterministic blind homing vector automaton}
\sym{NHVA($ k $)}{Real-time $ k $-dimensional nondeterministic homing vector automaton}
\sym{N$ k $BCA}{Real-time nondeterministic blind $ k $-counter automaton}
\sym{N$ k $CA}{Real-time nondeterministic $ k $-counter automaton}
\sym{NFA}{Real-time nondeterministic finite automaton}
\sym{NFAM}{Real-time nondeterministic finite automaton with multiplication}
\sym{NFAMW}{Real-time nondeterministic finite automaton with multiplication without equality}
\sym{PDA}{Pushdown automaton}
\sym{$ \mathsf{RE} $}{The class of recursively enumerable languages}
\sym{$ \mathsf{REG} $}{The class of regular languages}
\sym{$  \mathsf{TISP}(t(n),s(n) $}{The class of languages that can be decided by a deterministic Turing machine within $ t(n) $ time and $ s(n) $ space where $ n $ is the length of the input }
\sym{TM}{Turing machine}
\end{abbreviations}

\chapter{INTRODUCTION}
\label{chapter:introduction}
\pagenumbering{arabic}
\section{Automata Theory}\label{sec: int-aut}
The theory of computation aims to investigate the computation process and tries to answer the question ``What are the fundamental capabilities and limitations of computers?'' as stated by Sipser \cite{Si06}. To study the computation process, we have to first formalize the notions of computational problems and computational models. 

At the heart of computational problems, lie the decision problems, the problems whose answers are either yes or no. Any decision problem can be represented by the set of instances which have the answer yes. Consider the problem of checking whether a number is prime. This problem can be represented by the set $ \{2,3,5,7,\dots\} $, where the members of the set are the prime numbers. Furthermore, the elements of the set can be represented as \textit{strings}, a finite sequence of \textit{symbols} belonging to a finite set called the \textit{alphabet}. For instance, the set of prime numbers can be expressed by the set $\{11,111,11111,1111111,\dots\}  $, where the alphabet contains the single symbol $ 1 $. 

Automata theory is the study of computational models that solve decision problems. An automaton is as an abstract machine which processes an input string and makes the decision of yes or no, more technically called as acceptance or rejection. Automata allow us to formalize the notion of computation and serve as mathematical models for computing devices. The set of all accepted input strings is called the \textit{language} recognized by the machine. If there exists a machine whose language is the set of yes instances of a problem, than we have a machine solving the problem. 

The most basic model which is known as the \textit{finite automaton} or \textit{finite state machine} is an abstract model for computation with finite memory. It is assumed that the input string is written on a tape and the machine has a tape-head reading the string from left to right. There exist finitely many states and a set of rules governing the transitions between these states. Computation starts from a designated initial state and one input symbol is consumed at each step. An input string is accepted if after reading the string, computation ends in a special state called the accept state and otherwise rejected.

One of the founders of the theory of formal languages, Noam Chomsky, defined four classes of languages, namely the classes of regular languages, context-free languages, context-sensitive languages and recursively enumerable languages, forming the Chomsky Hierarchy. Finite automata recognize exactly the class of regular languages.

\section{Extended Models of Finite Automata} \label{sec: int-efa}
Throughout the literature, a variety of automaton models has been proposed. Many different models of automata that have been examined can be regarded as a finite automaton augmented with some additional memory. The type of the memory, restrictions on how this memory is accessed, computation mode and the conditions for acceptance determine the expressiveness of the model in terms of language recognition. One can list pushdown automata \cite{Ch62}, counter machines \cite{FMR67} and Turing machines \cite{Tu37} among the many such proposed models. 

\textit{Pushdown automata} are finite automata augmented with a stack, a memory which can be used in the last-in-first-out manner. Their \textit{nondeterministic} variants, in which there may be more than one possible move at each step, and the acceptance condition is the existence of at least one computational path that ends in an accept state, recognize exactly the class of context-free languages. Note that a language is context-free if it is generated by a \textit{context-free grammar}, which is a collection consisting of a set of variables and terminals, and a set of production rules. Starting from the start variable, the rules describe how to generate a string of terminals, by replacing each variable with a string of variables and terminals. The set of all generated strings is the language of the grammar.  

The foundations of the theory of computation have been established by Alan Turing, who proposed the \textit{Turing machine} as a model for universal computation \cite{Tu37}. A Turing machine is a finite automaton equipped with an infinite read-write tape which is allowed to move in both directions. The Turing machine is a model for our computers and for the computation process performed by the human mind. A language is recognized by a Turing machine if for every string in the language, the computation on the string ends in the accept state. The class of languages recognized by Turing machines is known as the \textit{recursively enumerable} or \textit{Turing recognizable} languages. 

\textit{Counter machines} are finite automata equipped with additional registers that are initialized to zero at the beginning and can be incremented or decremented based on the current state and the status of the counters (zero or nonzero) throughout the computation. A finite automaton with 2 counters is as powerful as a Turing machine, which led researchers to add various restrictions to the definition. For instance, in a \textit{blind counter automaton}, the counters cannot be checked until the end of the computation and the next move depends only on the current state and the scanned symbol. The class of languages recognized by nondeterministic blind counter automata are incomparable with the class of context-free languages.

Another variant is the \textit{extended finite automaton} (finite automaton
over a group, group automaton, $ G$-automaton), which is a nondeterministic finite automaton equipped
with a register that holds an element from a group \cite{MS97}. The
register is initialized to the identity element of the group, and a
computation is deemed successful if the register is equal to the
identity element at the end of the computation after being multiplied with a group element
at every step. The computational power of a $ G$-automaton is determined by the group $ G $. This setup generalizes various models such as pushdown automata, Turing machines,
nondeterministic blind counter automata and finite
automata with multiplication \cite{ISK76}. When a monoid is used instead of a group, then the model is also called monoid automaton or $ M $-automaton. The same model also appears under the name of valence automata in various papers.

The notion of extended finite automata is also strictly related to that of valence grammar introduced by P\v{a}un in \cite{Pa80}. A \textit{valence grammar} is a formal grammar in which every rule
of the grammar is equipped with an element of a monoid called the valence of the rule. Words generated by the grammar are defined by successful derivations. A successful
derivation is a derivation that starts from the start symbol of the grammar such
that the product of the valences of its productions (taken in the obvious order) is the
identity of the monoid. Valence pushdown automata, which are pushdown automata equipped with a register that is multiplied by elements from a monoid at each step, and context-free valence grammars are discussed in \cite{FS02}.

We have also introduced a new model called vector automaton in \cite{SYS13}. A \textit{vector automaton} is a finite  automaton which is endowed with a
vector, and which can multiply this vector with an appropriate matrix
at each step. One of the entries of this vector can be tested for
equality to a rational number. The machine accepts an input string if
the computation ends in an accept state, and the test for equivalence
succeeds. 

In order to incorporate the notion of the computation being successful if the register returns to
its initial value at the end of the computation as in the case of extended finite automata to this setup, we
propose the new \textit{homing vector automaton} (HVA) model. A homing vector automaton can multiply its vector with an appropriate matrix at each step and can check the entire vector for
equivalence to the initial value of the vector. The acceptance
criterion is ending up in an accept state with the value of the vector
being equal to the initial vector.

\section{Contributions and Overview}
The aim of this thesis is to investigate extended models of finite automata focusing mainly on finite automata over groups and homing vector automata. We examine the classes of languages that
can be recognized by different variants of these models and compare them with the classes of languages recognized by the classical models. We prove separation results based on the different restrictions imposed on the models.

Much of the current literature on extended finite automata pays particular attention to finite automata over free groups and free Abelian groups. This study makes a major contribution to the research on extended finite automata by exploring finite automata over matrix groups for the first time. Most of these results were published in \cite{SDS16,SDS18,SS18}. 

Matrices play an important role in many areas of computation and many important models of probabilistic and quantum computation
\cite{Tu69,LR14} can be viewed in terms of vectors being multiplied by
matrices. The motivation behind analyzing homing vector automata is the matrix
multiplication view of programming, which abstracts the remaining features
of such models away. We investigate homing vector automata under several different regimes, which helps us to determine whether different parameters confer any additional recognition power. Our results on homing vector automata have previously appeared in \cite{SS15, SSD16, SYS19}.

We also present some results on context-free valence grammars and valence pushdown automata. These results are mainly some generalizations of the previously established results for the theory of extended finite automata and appeared in \cite{SDS17}.

The rest of the thesis is structured as follows:

Chapter \ref{chap: back} contains definitions of basic terminology and formal definitions of some of the classical models. It provides a framework for the rest of the thesis. A background on algebra is also presented.

In Chapter \ref{chap: efa}, we investigate the language classes recognized by finite automata over
matrix groups. For the case of $2 \times 2 $ matrices, we prove that
the corresponding finite automata over rational matrix groups are more
powerful than the corresponding finite automata over integer matrix
groups. Finite automata over some special matrix groups, such as the
discrete Heisenberg group and the Baumslag-Solitar group are also
examined.  We also introduce the notion of time complexity for group
automata and demonstrate some separations among related classes. The
case of linear-time bounds is examined in detail throughout our
repertory of matrix group automata. Furthermore, we look at the connection between decision problems for matrix groups and finite automata over matrix groups.

Chapter \ref{chap: hva} defines the homing vector automaton and introduces the various limited versions that we will use to examine the nature of the contribution of different aspects of the definition to the power of the machine. A generalized version of the
Stern-Brocot encoding method, suitable for representing strings on
arbitrary alphabets, is developed. The computational power and properties of deterministic, nondeterministic,
blind, non-blind, real-time and one-way
versions of these machines are examined and compared to
various related types of automata. We establish a connection between one-way nondeterministic version of homing vector automata and extended finite automata. As one-way versions are too powerful even in the case of low dimensions, we pay special attention to real-time homing vector automata. Some closure properties of real-time homing vector automata and their stateless (one state) versions are investigated.

In Chapter \ref{chapter: valence}, we focus on pushdown valence automata and context-free valence grammars. We investigate valence pushdown automata, and prove that they are only as powerful as valence automata. We observe that certain results proven for monoid automata can be easily lifted to the case of context-free valence
grammars. 

Chapter \ref{chap: con} is the conclusion of the thesis. We summarize the results and list some open questions which form a basis for future research.
\chapter{BACKGROUND}\label{chap: back}

\section{Basic Notation and Terminology}
\subsection{Sets} 

Let $ A $ be a set. $ |A| $ denotes the cardinality of $ A $. The power set of $A$ is denoted by $ \mathcal{P}(A) $. A subset $ A \subseteq \mathbb{N}^n  $ is a \textit{linear} set if there exist vectors  $ v_0,v_1,\dots,v_k \in\mathbb{N}^n  $ such that
$$ A=\{v|v=v_0 + \Sigma_{i=1}^k c_iv_i, c_i \in \mathbb{N}\}. $$ 
A \textit{semilinear} set is a finite union of linear sets.

The \textit{Cartesian product of sets} $ A_1,A_2,\dots,A_n $ is the set of all ordered $ n $-tuples $ (a_1,a_2,\dots,a_n) $ where $ a_i \in A_i $ for $ i=1,2,\dots,n $. The Cartesian product is denoted by either $ A_1\times A_2\times \dots \times A_n $ or by $ \prod_{i=1}^{n} A_i$.

\subsection{Strings and Languages} An \textit{alphabet} is a finite set of symbols and usually it is denoted by $ \Sigma $. A \textit{string} (\textit{word}) over $\Sigma  $ is obtained by concatenating zero or more symbols from $ \Sigma $. The string of length zero is called the \textit{empty string} and denoted by $ \varepsilon $. The set $ \Sigma \cup \{\varepsilon\}$ is denoted by $  \Sigma_{\varepsilon} $ in short. We denote by $ \Sigma^* $ the set of all words over $ \Sigma $.

For a string $w \in \Sigma^*  $, $w^r$ denotes its reverse, $ |w| $ denotes its length, $ w[i] $ denotes its $i$'th symbol, $ |w|_{\sigma} $ denotes the number of occurrences of $ \sigma \in \Sigma $ in $ w $. 

A \textit{language} $ L \subseteq \Sigma^* $ is a set of strings over $ \Sigma $. For a given language $L$, its complement is denoted by $\bar{L}$. For a given string $ w$, $ {L}_w $ denotes the singleton language containing only $ w $.    

For a string $ w \in \Sigma^* $ where $ \Sigma=\{\sigma_1,\sigma_2,\dots,\sigma_k\} $ is an ordered alphabet, the \textit{Parikh image of} $ w $ is defined as $$ \phi(w)=(|w|_{\sigma_1}, |w|_{\sigma_2},\dots, |w|_{\sigma_k}). $$ For a language $ L $, its Parikh image is defined as $$ \phi(L) = \{\phi(w)| w \in L\}. $$  A language is called \textit{semilinear} if $ \phi(L) $ is semilinear.

A language ${L}\subseteq \Sigma^*$ is said to be {\em bounded} if there exist words $w_1, \ldots, w_n \in \Sigma^+$ such that
${L} \subseteq w_1 ^* \cdots w_n^*$. A bounded language is said to be (\textit{bounded}) \textit{semilinear} if there exists a semilinear set $A$
of $\mathbb{N}^n$ such that $${L} = \{w_1 ^{a_1} \cdots w_n ^{a_n} : (a_1, \ldots, a_n)\in {A}.\}$$ 

\subsection{Vectors and Matrices} For a given row vector $ v $, $ v[i] $ denotes its $i$'th entry. Let $ A_{k \times l} $ be a $ k \times l $ dimensional matrix. $ A[i,j] $ denotes the entry in the $ i $'th row and $ j $'th column of $ A $. 

The \textit{identity matrix} $ I $ of size $ n $ is the $ n\times n  $ matrix with ones on the main diagonal and zeros elsewhere.

\section{Automata and Computation}
In this section, we are going to talk about basic notions of computation. We refer to the finite automaton model which may be extended with a storage mechanism as discussed in Section \ref{sec: int-aut} and Section \ref{sec: int-efa}. 

For a machine $ \mathcal{M} $, there exist a finite set of \textit{states} $Q = \{ q_1,\ldots,q_n \}$ where $ q_1 $ is the \textit{initial state} unless otherwise specified and a set of \textit{accept state(s)} $Q_a \subseteq Q$. An input string $ w \in \Sigma^* $ is written on a one-way infinite tape starting from the leftmost tape square. The transition function denoted by $ \delta $ describes the next move of $ \mathcal{M} $ upon reading some input symbol.  Initially, the tape-head is placed on the leftmost tape square. Starting from the initial state, the sequence of transitions performed by $ \mathcal{M} $ on any input string is called a \textit{computation}. 

The acceptance criteria of an input string depend on the type of the machine which we will discuss in detail in Section \ref{sec: back-c}. A computation is called \textit{accepting} if the computation results in the acceptance of the string, and \textit{rejecting} otherwise. The set of all strings accepted by $ \mathcal{M} $ is called the \textit{language recognized by} $ \mathcal{M} $ and denoted by $ L(\mathcal{M}) $.

\subsection{Deterministic Computation} A computation is \textit{deterministic} if there is only one possible move at each step. When an input string is read, there is only a single computation. 

\subsection{Nondeterministic Computation} In a \textit{nondeterministic} computation, there may be more than one possible move at each step. When an input string is read, the computation looks like a tree since there may be more than one computation path.

\begin{figure}[h]
	\centering
	\includegraphics[width=1.1\linewidth]{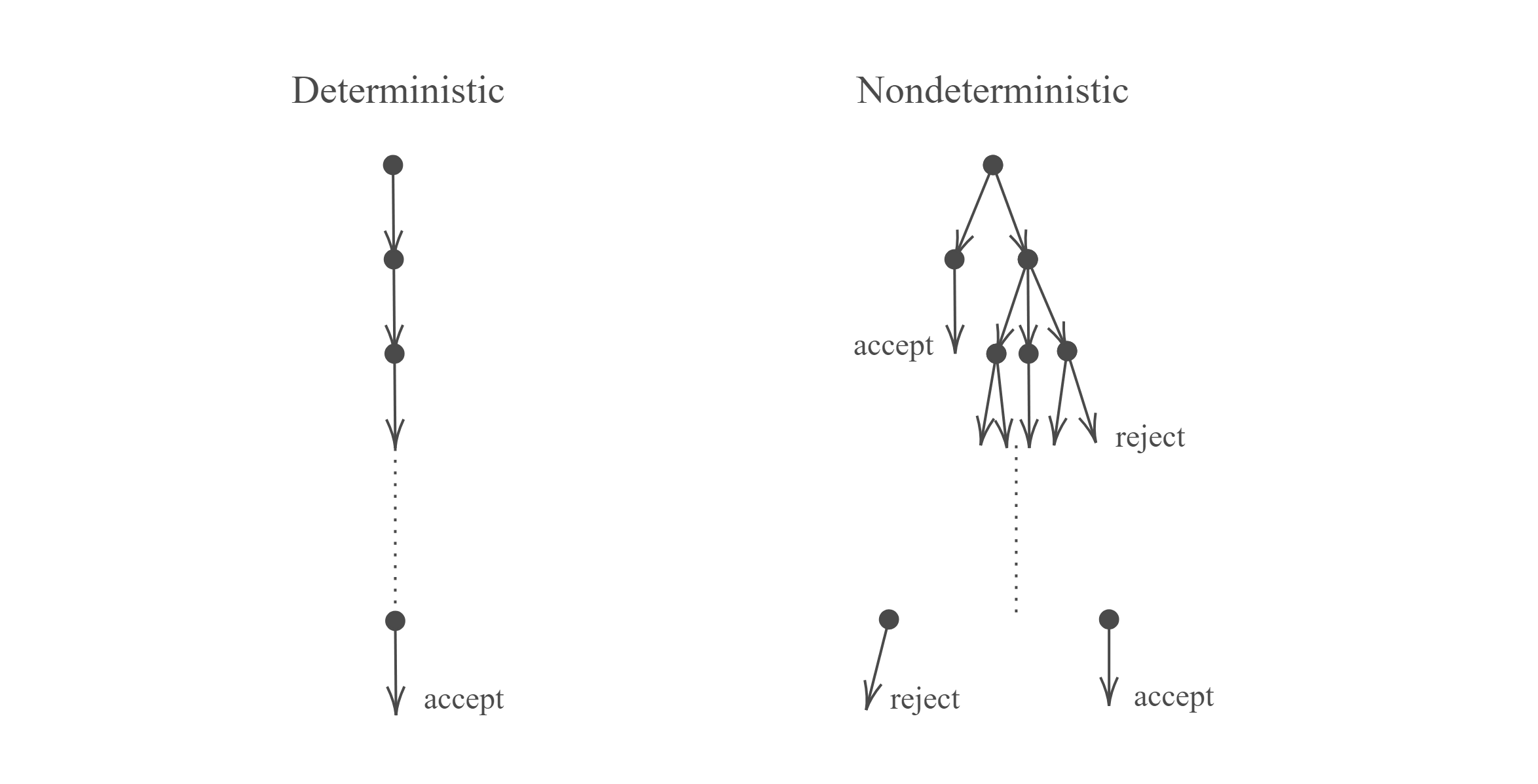}
	\caption{Deterministic and nondeterministic computation}
	\label{fig:dnd}
\end{figure}

\subsection{Blind Computation} For the machine types which have additional storage mechanisms like registers or counters, computation is called \textit{blind} if the status of the storage mechanism cannot be checked until the end of the computation. The next move of the machine is not affected by the current status of the storage mechanism. When the computation ends, the status of the storage mechanism is checked and it determines whether the input string will be accepted or not.

\subsection{Empty String Transitions} When a machine makes an \textit{empty string} ($ \varepsilon $) transition, it moves without consuming any input symbol. In deterministic machines, $ \varepsilon $ moves should be defined carefully as they may lead to nondeterminism. 

\subsection{Real-time, One-way and Two-way Computation} A computation is called \textit{real-time} if the tape-head moves right at each step. A computation is called \textit{one-way} if the tape-head is allowed to stay on the input tape while moving from left to right. This can be accomplished by adding an additional direction component to the transition function which dictates the movement of the tape-head. A computation is called two-way if the tape-head can move both left and right. Tape head directions will be specified by a subset of the set $\mathcal{D}= \{\leftarrow, \downarrow, \rightarrow\} $ where $ \leftarrow, \downarrow $, and $ \rightarrow $ stand for left, stay and right respectively.

Note that a machine making $ \varepsilon $-transitions does not operate in real-time. A one-way nondeterministic computation may be also defined without specifying the head directions but allowing $ \varepsilon $-moves instead. In that case, it is assumed that the machine moves right as long as it consumes an input symbol. 
 
\subsection{End-marker} In some models, the input string is written on the tape in the form $ w\dollar $ and the machine is allowed to make transition(s) after finishing reading the input string, upon scanning the end-marker $ \dollar $, which we call postprocessing. Postprocessing may add additional power depending on the model.

\section{Classical Models}\label{sec: back-c}

\subsection{Finite Automaton} \label{sec: back-c-f}

A \textit{finite automaton} (FA) is a 5-tuple $$\mathcal{F}= (Q,\Sigma,\delta,q_1,Q_a), $$ where $ Q $ is the set of \textit{states}, $ \Sigma $ is the \textit{input alphabet}, $ \delta $ is the \textit{transition function}, $ q_1 \in Q $ is the \textit{initial state} and $ Q_a \subseteq Q  $ is the set of \textit{accept states}. The transitions of $ \mathcal{F} $ depend only on the current state and the input symbol. 

Formally, the transition function of a one-way deterministic finite automaton (1DFA) is defined as follows:
\[ \delta: Q \times \Sigma \rightarrow Q \times \mathcal{D},\]
where $ \cal{D}=\{\downarrow,\rightarrow\} $ is the set representing the possible moves of the tape-head, $ \downarrow $ denoting stay and $ \rightarrow $ denoting right. 

$\delta(q,\sigma)=(q',d)$ means that $ \mathcal{F} $ moves to state $ q' \in Q $ moving its tape-head in direction $ d \in \mathcal{D }$ upon reading $ \sigma \in \Sigma $ in state $ q \in Q$. We assume that the last move of the machine always moves the tape-head right.

In a \textit{real-time deterministic finite automaton} (DFA), the tape-head moves right at every step and the direction component for the tape-head is omitted from the transition function:
\[ \delta: Q \times \Sigma \rightarrow Q . \]

Let us define the nondeterministic variants of finite automata. The transition function of a \textit{one-way nondeterministic finite automaton} (1NFA) is defined as
\[ \delta: Q \times \Sigma_{\varepsilon} \rightarrow \mathcal{P}(Q), \]
so that there may be more than one possible move at each step and the machine is allowed to make $ \varepsilon $-transitions. 

A \textit{real-time nondeterministic finite automaton} (NFA) is not allowed to perform any $ \varepsilon $-transitions. The transition function of an NFA is defined as follows:
\[ \delta: Q \times \Sigma \rightarrow \mathcal{P}(Q). \]

An input string $ w $ of length $ n $ is accepted by a finite automaton if there is a computation in which the machine enters an accept state with the tape-head on the $ n+1 $'st tape square.

1NFAs, 1DFAs, NFAs and DFAs recognize the same class of languages known as the class of \textit{regular} languages, abbreviated by $ \mathsf{REG} $.

\subsection{Pushdown Automaton}
A \textit{pushdown automaton} (PDA) is a finite automaton equipped with a stack. Stacks are one-way infinite storage mechanisms working in last-in-first-out fashion. The operations applied on a stack are called the \textit{pop} and \textit{push} operations, which stand for removing the topmost symbol from the stack and adding a new symbol on top of the stack by pushing down the other symbols in the stack, respectively. Note that the stack alphabet may be different than the input alphabet. At each step of the computation, the machine may pop the topmost symbol from the stack, move to a new state, and push a new symbol onto the stack, depending on the current state and the input symbol.

Formally, \textit{a one-way nondeterministic pushdown automaton} (1NPDA) is a 6-tuple $$ \mathcal{A}= (Q, \Sigma, \Gamma, \delta, q_1, Q_a), $$ where $ \Gamma $ is the stack alphabet. The transition function of a 1NPDA is defined as 
\[ 
Q \times  \Sigma_{\varepsilon} \times \Gamma_{\varepsilon} \rightarrow \mathcal{P}(Q \times \Gamma_{\varepsilon}),
\]
where $  \Gamma_{\varepsilon} = \Gamma \cup \{\varepsilon\}$. $ (q_2,\gamma_2) \in \delta(q_1,\sigma,\gamma_1) $ means when in state $ q_1 $ reading $ \sigma \in \Sigma_{\varepsilon}$, $ \mathcal{A} $ pops $ \gamma_1 \in \Gamma_{\varepsilon} $ from the stack, moves to state $ q_2 $ and pushes $\gamma_2 \in \Gamma  $ onto the stack. Note that $  \gamma_1 $ and $ \gamma_2 $ can be $ \varepsilon $ in which case nothing is popped from or pushed onto the stack. If $ \gamma_1 $ is not on top of the stack, then the transition cannot take place.

A string of length $ n $ is accepted by a 1NPDA if there exists a computation in which the machine enters an accept state with the tape-head on the $n+1 $'st tape square and the stack is empty. There are also alternative definitions for acceptance which do not require an empty stack. By using $ \varepsilon $-transitions, it is easy to see that the stack may be emptied in an accept state to satisfy the additional empty stack requirement and the two definitions correspond to the same class of languages, as long as the computation is one-way and $ \varepsilon $-moves are possible.  

1NPDAs recognize the class of context-free ($ \mathsf{CF} $) languages.

\subsection{Turing Machine}

A \textit{Turing machine} is a finite state automaton with the following properties:
\begin{itemize}
	\item The tape-head can move in both directions.
	\item The tape-head can read from the tape and modify the tape content by writing on the tape.
\end{itemize} 

At the beginning of the computation, the input is written on the tape starting from the first tape square and the rest of the tape contains the special blank symbol. If the tape-head tries to move left on the leftmost square, its position does not change.

Formally, a (two-way deterministic) \textit{Turing machine} (TM) is a 7-tuple
\[ \mathcal{T}=(Q, \Sigma, \Gamma, \delta, q_1, q_a, q_r), \]
where $ \Sigma $ is the input alphabet not containing the blank symbol $ \sqcup $, $ \Gamma  $ is the tape alphabet where $\sqcup  \in \Gamma$ and $ \Sigma \subseteq \Gamma $, $ q_a \in Q $ and $ q_r \in Q$ are the accept and reject states respectively. The transition function of a TM is defined as follows:
\[ \delta: Q \times \Gamma \rightarrow Q \times \Gamma \times \cal{D}  \]
where $\cal{D}= \{\leftarrow,\rightarrow\} $, meaning that $ \cal{T} $ moves to state $ q' \in Q $, updating the tape square under the tape-head by $ \gamma_2 \in \Gamma $, moving the tape-head in the direction $ d \in \cal{D} $, when in state $ q $ and reading $ \gamma_1 \in \Gamma $ from the tape, specified by the transition $ \delta(q,\gamma_1)=(q',\gamma_2,d) $.  

 While processing an input string, the computation may end at any point (before finishing reading the input string) in the designated accept state resulting in the acceptance of the input string, in the designated reject state resulting in the rejection of the input string or the computation may go on forever without ever entering the accept state or the reject state.
 
Turing machines recognize the class of \textit{recursively enumerable} languages ($\mathsf{RE}  $).

\subsection{Counter Automaton}
A \textit{counter automaton} (CA) is a finite automaton equipped with one or more counters, a storage mechanism holding an integer which can be incremented, decremented, and checked for equivalence to zero. Formally a CA is a 6-tuple 
$$ \mathcal{C}= ( Q,\Sigma,\delta,q_1,Q_a) $$  
and a CA with $ k $ counters is abbreviated as $k$CA. At the beginning of the computation, counters are initialized to 0. At each step of the computation, depending on the current state and the status of the counters, $ \mathcal{C} $ moves to another state and updates its counters. . 

The transition function of a \textit{one-way deterministic $ k $-counter automaton} (1D$ k $CA) is defined as follows:
 \[ \delta:Q \times \Sigma \times \Theta \rightarrow Q \times \{-1,0,1\}^k \times \mathcal{D} \]
where $ \Theta = \{=,\neq\}^k $ and $ \mathcal{D}=\{\downarrow,\rightarrow\} $. A transition of the form $ \delta(q,\sigma,\theta) = (q',c,d)$ means that upon reading $ \sigma \in \Sigma $ in state $ q \in Q $, $ \cal{C} $ moves to $ q' $ updating its counters by $ c  \in  \{-1,0,1\}^k $ and updates the tape-head with respect to $ d \in \mathcal{D} $, given that the status of the counters is $\theta  \in \{=,\neq\}^k $, where $ = $ and $ \neq  $ denote whether the corresponding counter values equal zero or not, respectively.

By restricting  1D$ k$CAs so that the status of the counters cannot be checked until the end of the computation, we obtain \textit{one-way deterministic blind $ k $-counter automaton} (1D$ k $BCA). The transition function of a 1D$ k $BCA is formally defined as follows:
\[  
  \delta:Q \times \Sigma \rightarrow Q \times \{-1,0,1\}^k \times \mathcal{D}. 
\]
In a blind counter automaton, the next move of the machine does not depend on the status of the counters.

For the one-way deterministic machines, we assume that the last move of the machine always moves the tape-head to the right.

The real-time versions, \textit{real-time deterministic $ k $-counter automaton} (D$ k $CA) and \textit{real-time deterministic blind $ k $-counter automaton} (D$ k $BCA) are defined analogously, by omitting the direction component from the transition functions. The ranges of the transition functions take the following form: $  Q \times \{-1,0,1\}^k  $.

Let us define the nondeterministic variants of counter automata. The transition function of a \textit{one-way nondeterministic $ k $-counter automaton} (1N$ k $CA) is defined as follows:
\[  
\delta:Q \times \Sigma_{\varepsilon} \times \Theta \rightarrow \mathcal{P}(Q \times \{-1,0,1\}^k). 
\]
A \textit{one-way nondeterministic blind $ k $-counter automaton} (1N$ k $BCA) is a restricted 1N$ k $CA which cannot check the value of the counters until the end of the computation. Transition function of a 1N$ k $BCA is defined as
\[  
\delta:Q \times \Sigma_{\varepsilon} \rightarrow \mathcal{P}(Q \times \{-1,0,1\}^k ). 
\]
The real-time versions, \textit{real-time nondeterministic $ k $-counter automaton} (N$ k $CA) and  \textit{real-time nondeterministic blind $ k $-counter automaton}  (N$ k $BCA) are defined analogously by not allowing $ \varepsilon $-moves. The domains of the transition functions of these models are replaced with $ Q \times \Sigma \times \Theta $ and $ Q \times \Sigma $ respectively.

An input string $ w $ of length $ n $ is accepted by a $ k $CA if there exists a computation in which the machine enters an accept state with the tape-head on the $ n+1 $'st square. An input string is accepted by a $ k $BCA with the further requirement that all of the counters- values are equal to 0 .

The abbreviations used for counter automata variants discussed so far are given in Table \ref{table: counter}.

\begin{table}[!h] 
	\centering
	\vskip\baselineskip 
	\caption{The abbreviations for CA variants.}
	\label{table: counter}
	\vspace{0.1in}
	\begin{tabular}{|p{0.3\textwidth}|p{0.15\textwidth}|p{0.15\textwidth}|}
		\hline 
		& Real-time & One-way \\
		\hline 
		Deterministic & D$  k$CA & 1D$  k$CA \\
		\hline 
		Deterministic blind & D$  k$BCA & 1D$  k$BCA \\
		\hline 
		Nondeterministic & N$  k$CA & 1N$  k$CA \\
		\hline 
		Nondeterministic blind & N$  k$BCA & 1N$  k$BCA \\
		\hline
	\end{tabular}
	
\end{table}

A 1D2CA can simulate a Turing Machine \cite{Mi60} and therefore $ \bigcup_k \mathfrak{L}(\textup{1D$ k $CA}) = \mathsf{RE}$. As two counters are enough to recognize any recursively enumerable language, a considerable amount of literature has been published on language recognition power of counter automata under various restrictions. Preliminary work on counter automata was undertaken by Fischer \textit{et al.}, who investigated real-time deterministic counter automata \cite{FMR67}, and one-way deterministic counter automata \cite{FMR68}. A hierarchy based on the number of the counters for real-time deterministic counter automata is  demonstrated in \cite{FMR67}. In \cite{FMR68}, the state set is separated into polling states from which the machine moves to another state and updates the counters by consuming an input symbol and autonomous states which allow machine to update the counters and change the state without reading anything. It is easy to show that both definitions are equivalent and correspond to machines with the same language recognition power. 

A remarkable result from \cite{FMR68} states the following:

\begin{fact}\label{fact: counter}\textup{\cite{FMR68}}
	Given any $ k $CA  with the ability to alter the contents of each counter independently by any integer between $ +c  $ and $ -c $ in a single step (for some fixed integer $ c $), one can effectively find a time-equivalent (ordinary) $ k $CA. 
\end{fact}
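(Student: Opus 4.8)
The plan is to simulate the generalized $k$CA (call it $\mathcal{C}$, whose counters may change by any value in $\{-c,\dots,c\}$ per step) by an ordinary $k$CA $\mathcal{C}'$ whose counters change only by values in $\{-1,0,1\}$, while consuming exactly one input symbol and making the same tape-head move whenever $\mathcal{C}$ does, so that the two machines run for the same number of steps. The obvious naive idea---decomposing a single $\pm c$ update into $c$ consecutive $\pm 1$ updates---fails precisely because it costs up to $c$ extra steps and destroys time-equivalence. The key idea is therefore to change the \emph{representation} of the counter contents rather than the update schedule: I would store, for each counter, a scaled-down value in the physical counter and keep the low-order part in the finite control.

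Concretely, for each counter $i$ I maintain the invariant that its true value $V_i$ in $\mathcal{C}$ is written as $V_i = c\cdot u_i + r_i$, where $u_i$ is the value physically held in the $i$'th counter of $\mathcal{C}'$ and $r_i$ is a remainder kept in the finite state, constrained to the bounded range $r_i \in \{-(c-1),\dots,c-1\}$. When $\mathcal{C}$ adds some $\delta_i \in \{-c,\dots,c\}$ to counter $i$, the new true value is $V_i+\delta_i = c\cdot u_i + (r_i+\delta_i)$, and since $|r_i+\delta_i|\le 2c-1$ the invariant can be restored by a single carry or borrow: if $r_i+\delta_i \ge c$ set $u_i' = u_i+1$ and $r_i' = r_i+\delta_i-c$; if $r_i+\delta_i \le -c$ set $u_i' = u_i-1$ and $r_i' = r_i+\delta_i+c$; otherwise set $u_i' = u_i$ and $r_i' = r_i+\delta_i$. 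In every case $|r_i'|\le c-1$ and the physical counter changes by at most one, which is exactly an admissible ordinary-counter move.

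First I would verify that the zero-tests translate correctly. Because $|r_i|\le c-1 < c$, we have $V_i = 0$ if and only if $u_i = 0$ and $r_i = 0$; the machine $\mathcal{C}'$ reads the status of $u_i$ from its own counter (the $\theta$ component) and knows $r_i$ from its finite control, so it can reconstruct the status vector $\theta$ that $\mathcal{C}$ would see and faithfully choose the corresponding transition. Next I would carry out the product construction: the state set of $\mathcal{C}'$ is $Q \times R$, where $Q$ is the state set of $\mathcal{C}$ and $R = \{-(c-1),\dots,c-1\}^k$ is the finite set of remainder vectors; this set is finite because $c$ is a fixed constant, so $\mathcal{C}'$ is a legitimate $k$CA. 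Each transition of $\mathcal{C}'$ consumes the same input symbol, moves the tape-head in the same direction, applies the carry/borrow update above to each counter, and updates the remainder component of the state, so the simulation is step-for-step and hence time-equivalent, with acceptance declared exactly when $\mathcal{C}$ would accept.

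The only delicate point---and the reason the statement insists on ``time-equivalent'' rather than merely ``equivalent''---is the requirement that the per-step counter change stay within $\{-1,0,1\}$; the bounded-remainder representation is what guarantees this, and it is the crux of the argument. The remaining work (writing out $\delta$ for the product machine explicitly, adapting the ranges if counters are required to remain nonnegative, and confirming the real-time/one-way direction component is simply copied unchanged) is routine bookkeeping. I expect the main obstacle to be nothing deep, but rather stating the carry/borrow rule cleanly and checking that the invariant is preserved in the boundary cases $r_i = \pm(c-1)$ with $\delta_i = \pm c$.
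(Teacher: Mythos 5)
Your proposal is correct and realizes exactly what the paper's one-line sketch intends: the ``additional states'' mentioned there are precisely your remainder components $r_i \in \{-(c-1),\dots,c-1\}$, and the scaled representation $V_i = c\cdot u_i + r_i$ with a single carry/borrow per step is the standard way to absorb bounded updates without increasing the number of counters or the number of steps. Your verification of the zero test ($V_i = 0$ iff $u_i = 0$ and $r_i = 0$, since $|r_i| < c$) covers the only delicate point, so the argument is complete.
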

 
The proof involves using some additional states to simulate counter updates from the set $ \{-c,-c+1,\dots,c-1,c\}^k $ with an ordinary counter automaton, without increasing the time complexity and the number of the counters. Hence, we can assume that any $ k $CA can be updated by arbitrary integers at each step.

One-way deterministic and nondeterministic counter automata working under time restrictions formed the central focus of study of Greibach in \cite{Gr76}, in which the author proved various separation results between deterministic and nondeterministic models. Let us note that in \cite{Gr76} and \cite{Pe11} where one-way deterministic counter automata are investigated, it is assumed that the counter automata can process the end-marker.

In another major study by Greibach, one-way nondeterministic blind counter automata are examined \cite{Gr78} and some other restricted versions like partially blind counters and reversal bounded counters are introduced as well.

\subsection{Finite Automata with Multiplication}

A \textit{finite automaton with multiplication} (FAM) \cite{ISK76} is 6-tuple
\[\mathcal{W}=(Q,\Sigma,\delta,q_1,Q_a,\Lambda), \] where the additional component $\Lambda$ is a finite set of rational numbers
(multipliers). A FAM is a finite automaton equipped with a register holding a positive rational number. The register is initialized to 1 at the beginning of the computation and multiplied with a positive rational number at each step, based on the current state, input symbol and the status of the register determined by whether the register is equal to 1 or not. The input string of a FAM is given in the form $ w\dollar $ and FAMs are allowed to perform post-processing. 

The original definition of FAMs is given for \textit{one-way} machines where the tape-head is allowed to stay on the same input symbol for more than one step. 

A \textit{one-way deterministic finite automaton with multiplication} (1DFAM) is defined by the transition function
$$ \delta: Q \times \Sigma_{\dollar} \times \Omega
\rightarrow  Q\times \mathcal{D} \times \Lambda, $$ where $ \Sigma_{\dollar}= \Sigma \cup \{\dollar\} $, $\Omega$ is the set $\{=,\neq\}$ denoting the whether the register is equal to 1 or not respectively and $ \cal{D}= \{\downarrow, \rightarrow\} $ is the set of head directions. It is assumed that $ \delta(q,\dollar,\omega) = \emptyset $ for all $ q \in Q_a $ so that the computation ends once the end-marker $ \dollar $ is scanned in an accept state. Reading symbol $\sigma \in \Sigma_{\dollar}$ in state $q \in Q$, $\mathcal{W}$ compares the current value of the register with 1, thereby calculating
the corresponding value $\omega \in \Omega$, and switches its state to $q' \in Q$, moves its head in direction
$d \in \cal{D}$, and multiplies the register by $\lambda \in \Lambda$, in accordance with the transition function value 
$\delta(q,\sigma,\omega) =(q',d,\lambda)$.

A \textit{one-way deterministic finite automaton with multiplication without equality} (1DFAMW) is a model obtained by restricting 1DFAM so that the register can be checked only at the end of the computation. The transition function of a 1DFAMW is defined as follows: 
$$ \delta: Q \times \Sigma_{\dollar} \rightarrow  Q\times \mathcal{D} \times \Lambda, $$
where the next move of the machine does not depend on the current status of the register. The 1DFAMW can be seen as the blind version of the 1DFAM model.

We define the real-time versions, \textit{real-time deterministic finite automaton with multiplication} (DFAM), and \textit{real-time deterministic finite automaton with multiplication without equality} (DFAMW), by removing the direction component from the transition functions and assuming that the tape-head moves right at each step. The ranges of the transition functions are updated with $ Q \times \Lambda $.

A \textit{one-way nondeterministic finite automaton with multiplication} (1NFAM) is a model that extends the 1DFAM with the ability to make nondeterministic moves. The transition function of a 1NFAM is defined as 
$$ Q \times \Sigma_{\dollar} \times \Omega \rightarrow \mathcal{P}( Q\times \mathcal{D} \times \Lambda). $$ 

 A \textit{one-way nondeterministic finite automaton with multiplication without equality} (1NFAMW) is the blind version of the 1NFAM model which cannot check whether or not the register has value 1 during computation. Transition function of a 1NFAMW is defined as follows: 
$$ \delta: Q \times \Sigma_{\dollar} \rightarrow \mathcal{P}( Q\times \mathcal{D} \times \Lambda), $$
so that the next move of the machine does not depend on the current status of the register.

We also define real-time versions \textit{real-time nondeterministic finite automaton with multiplication} (NFAM) and \textit{real-time nondeterministic finite automaton with multiplication without equality} (NFAMW) by removing the direction component from the transition functions and assuming that the tape-head moves right at each step. In that case, the ranges of the transition functions are updated with  
$ \mathcal{P}( Q\times \Lambda) $.

For an input string $ w $ of length $ n $, $ w $ is accepted by a FAM or a FAMW if there exists a computation in which the machine enters an accept state with the input head on the end-marker $ \dollar $ and the register is equal to 1.

The abbreviations used for finite automata with multiplication variants discussed so far are given in Table \ref{table: fam}.

\begin{table}[!h] 
	\centering
	\vskip\baselineskip 
	\caption{The abbreviations for FAM variants.}
	\label{table: fam}
	\vspace{0.1in}
	\begin{tabular}{|p{0.3\textwidth}|p{0.15\textwidth}|p{0.15\textwidth}|}
		\hline 
		& Real-time & One-way \\
		\hline 
		Deterministic & DFAM & 1DFAM \\
		\hline 
		Deterministic blind & DFAMW & 1DFAMW \\
		\hline 
		Nondeterministic & NFAM & 1NFAMW \\
		\hline 
		Nondeterministic blind & NFAMW & 1NFAMW \\
		\hline
	\end{tabular}
	
\end{table}

The following characterization of the class of
languages recognized by 1NFAMWs for the case where the alphabet is unary is given in \cite{ISK76}.

\begin{fact}\textup{\cite{ISK76}}\label{fact: unary}
	All \textup{1NFAMW}-recognizable languages over a unary alphabet are regular.
\end{fact}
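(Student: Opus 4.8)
The plan is to reduce a 1NFAMW over a unary alphabet to a blind multicounter machine over $\mathbb{Z}^d$ and then exploit the semilinearity of the sets of computation-step counts. First I would analyze the register. Since the register holds a positive rational and is only ever multiplied by elements of the finite multiplier set $\Lambda \subseteq \mathbb{Q}^+$, every reachable register value lies in the subgroup of $(\mathbb{Q}^+,\cdot)$ generated by $\Lambda$. As $(\mathbb{Q}^+,\cdot)$ is free abelian on the primes, this subgroup is isomorphic to $\mathbb{Z}^d$, where $d$ is the number of distinct primes occurring in the factorizations of the elements of $\Lambda$. Concretely, sending each rational to its vector of prime exponents turns multiplication into vector addition, the initial value $1$ into the zero vector $\mathbf{0}$, and the acceptance test ``register $=1$'' into ``exponent vector $=\mathbf{0}$''. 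Thus over a unary input $a^n\dollar$ the machine behaves exactly like a one-way nondeterministic blind $d$-counter automaton whose counters must all return to $0$.

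Next I would recast accepting computations combinatorially. Regard the states as the vertices and each transition (reading $a$, reading $\dollar$, or merely staying via a $\downarrow$-move) as a labelled edge carrying an update vector in $\mathbb{Z}^d$. The edge-sequences that form an accepting walk from $q_1$ to an accept state constitute a regular language over the finite edge set $E$, so by Parikh's theorem its image under the Parikh map into $\mathbb{N}^E$ is a semilinear set $P$. Two quantities are linear functions of an edge-count vector $x\in\mathbb{N}^E$: the number $\alpha(x)$ of input symbols consumed (the count of $a$-reading right-moves, which must equal $n$), and the net register exponent $\beta(x)\in\mathbb{Z}^d$ (the weighted sum of the edges' update vectors). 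The accepted length set is then $\{\alpha(x): x\in P,\ \beta(x)=\mathbf{0}\}$.

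Finally I would close with standard closure properties. The solution set of the homogeneous system $\beta(x)=\mathbf{0}$ in $\mathbb{N}^E$ is semilinear, semilinear sets are closed under intersection, and the image of a semilinear set under the linear map $\alpha$ is again semilinear; hence the accepted length set is a semilinear subset of $\mathbb{N}$, that is, an ultimately periodic set. Since a unary language is regular precisely when its length set is ultimately periodic, the recognized language lies in $\mathsf{REG}$.

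The main obstacle I anticipate is the bookkeeping forced by the non-real-time features: the $\downarrow$-moves let a single computation stay on one symbol and thus run arbitrarily long, and postprocessing on the end-marker $\dollar$ adds a final phase. The key point to nail down is that, despite unbounded computation length, the accepting walks still form a \emph{regular} language over the finite edge set $E$ (the transition graph has only $|Q|$ vertices), so that $\alpha$ and $\beta$ remain genuine linear functionals of the edge-count vector and Parikh's theorem applies. Once this is established, the rest is routine manipulation of semilinear sets.
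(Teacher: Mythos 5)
The paper does not actually prove this statement: it is imported verbatim as a fact from \cite{ISK76}, so there is no in-paper proof to compare yours against, and your argument has to stand on its own. It does. The three pillars are all sound: (1) since the multipliers are positive rationals, the subgroup of $(\mathbb{Q}^+,\cdot)$ generated by $\Lambda$ is free Abelian of finite rank, and the prime-exponent map turns the register into a $\mathbb{Z}^d$-valued blind counter vector with acceptance condition ``vector $=\mathbf{0}$''; (2) the set of accepting walks is regular over the finite edge alphabet $E$ --- the only delicate bookkeeping, which you correctly flagged, is that validity of a walk is a local condition (all $a$-edges precede all $\dollar$-edges, the phase switch happens at a right-moving $a$-edge, and the consumed-symbol count $\alpha$ counts only right-moves), so stay-moves and end-marker postprocessing, despite making computations arbitrarily long, do not break regularity; (3) the accepted length set $\{\alpha(x) : x \in P,\ \beta(x)=\mathbf{0}\}$ is semilinear because Parikh images of regular languages, solution sets of homogeneous systems over $\mathbb{N}^E$, intersections, and linear images all preserve semilinearity, and a semilinear subset of $\mathbb{N}$ is ultimately periodic, i.e.\ the length set of a regular unary language. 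It is worth noting that your prime-exponent device is exactly the one this thesis uses elsewhere --- in the equivalence between blind $k$-counter automata and one-dimensional homing vector automata, and in the characterization of stateless DFAMW languages via linear homogeneous Diophantine equations --- so your proof, while independent of \cite{ISK76}, is very much in the spirit of the surrounding text; the Parikh-theorem packaging of the walk analysis is the one genuinely new ingredient you add.
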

 
Furthermore, bounded languages recognized by 1NFAMWs have also been examined.

\begin{fact}\textup{\cite{ISK76}}\label{fact: bounded}
	A bounded language is recognized by a \textup{1NFAMW} iff it is semilinear.
\end{fact}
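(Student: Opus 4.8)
The plan is to exploit the fact that the multiplicative group $\mathbb{Q}^+$ of positive rationals is free abelian, freely generated by the primes, so that a positive rational equals $1$ if and only if the exponent of every prime in its factorization is $0$. Since the multiplier set $\Lambda$ of a given \textup{1NFAMW} is finite, only finitely many primes $p_1,\dots,p_k$ occur in the factorizations of the members of $\Lambda$. Tracking the exponent of each $p_i$ as a counter ranging over $\mathbb{Z}$, a \textup{1NFAMW} becomes exactly a one-way nondeterministic blind $k$-counter automaton: the register equals $1$ at the end precisely when all $k$ counters are $0$, and, because the register can be inspected only at the end, only the \emph{net} change of each counter along a computation matters for acceptance. (By Fact \ref{fact: counter} the fact that a single step may alter several exponents by arbitrary integers causes no loss of generality.) I would prove the two implications separately against this counter picture.

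For the \emph{if} direction, suppose $A \subseteq \mathbb{N}^n$ is semilinear and $L = \{w_1^{a_1}\cdots w_n^{a_n} : (a_1,\dots,a_n)\in A\}$. Writing $A$ as a finite union of linear sets, the machine first uses nondeterminism to guess one linear set $\{v_0 + \sum_{i=1}^m c_iv_i : c_i \in \mathbb{N}\}$. It then reads the input, using its finite control to verify that the input has the form $w_1^{a_1}\cdots w_n^{a_n}$ (a regular condition, with block boundaries guessed nondeterministically) while multiplying the register by the prime $p_j$ each time a full block $w_j$ is consumed; after the whole input the register equals $\prod_j p_j^{a_j}$. On the end-marker $\dollar$ the machine performs post-processing: it multiplies once by the fixed rational $\prod_j p_j^{-(v_0)_j}$ and then, staying on $\dollar$, nondeterministically applies the multipliers $\prod_j p_j^{-(v_i)_j}$ any number $c_i$ of times. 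Accepting in an accept state with register $1$ then holds exactly when $(a_1,\dots,a_n)=v_0+\sum_i c_iv_i$, so the constructed \textup{1NFAMW} recognizes $L$; all multipliers used are fixed positive rationals, forming a legal finite set $\Lambda$.

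For the \emph{only if} direction, let $\mathcal{W}$ be a \textup{1NFAMW} recognizing a bounded $L \subseteq w_1^*\cdots w_n^*$, viewed as a blind $k$-counter automaton $\mathcal{C}$. I would show that $A=\{(a_1,\dots,a_n):w_1^{a_1}\cdots w_n^{a_n}\in L\}$ is semilinear by reducing to Parikh's theorem. Introduce an alphabet $\Delta=\{b_1,\dots,b_n\}\cup\{t_1^{+},t_1^{-},\dots,t_k^{+},t_k^{-}\}$ and build an ordinary finite automaton $\mathcal{N}$ over $\Delta$ that simulates the finite control of $\mathcal{C}$ as it scans inputs of the form $w_1^{a_1}\cdots w_n^{a_n}$: $\mathcal{N}$ emits $b_j$ whenever $\mathcal{C}$ finishes one copy of $w_j$, and emits $t_i^{+}$ (respectively $t_i^{-}$) once for each unit by which a transition increments (respectively decrements) counter $i$, including the post-processing steps on $\dollar$. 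Since each $w_j$ has fixed length and $\mathcal{C}$ has finitely many states, $L(\mathcal{N})$ is regular, and each word $u\in L(\mathcal{N})$ corresponds to an accepting computation of $\mathcal{C}$ on $w_1^{|u|_{b_1}}\cdots w_n^{|u|_{b_n}}$ whose net change of counter $i$ is $|u|_{t_i^{+}}-|u|_{t_i^{-}}$. Because acceptance of the blind machine requires net change $0$ in every counter, we obtain
\[ A=\{(\,|u|_{b_1},\dots,|u|_{b_n}\,):u\in L(\mathcal{N}),\ |u|_{t_i^{+}}=|u|_{t_i^{-}}\text{ for }1\le i\le k\}. \]
By Parikh's theorem $\phi(L(\mathcal{N}))$ is a semilinear subset of $\mathbb{N}^{\Delta}$; intersecting it with the semilinear set cut out by the equalities $x_{t_i^{+}}=x_{t_i^{-}}$ and projecting onto the $b$-coordinates keeps it semilinear, since semilinear sets are closed under intersection and projection. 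Hence $A$, and therefore the bounded language $L$, is semilinear.

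The main obstacle is the \emph{only if} direction, and within it the faithful construction of the finite-state abstraction $\mathcal{N}$. One must check that collapsing each block $w_j$ and each counter update into emitted symbols really preserves the accepting computations, that intra-block $\varepsilon$-moves of $\mathcal{C}$ (which may change counters without advancing the head) are captured without breaking regularity, and that it is precisely the blindness of the counters that licenses replacing the entire computation by the single net-displacement condition $|u|_{t_i^{+}}=|u|_{t_i^{-}}$. Once this translation is pinned down, semilinearity of $A$ follows from Parikh's theorem together with the standard closure properties, while the construction of the preceding paragraph supplies the converse.
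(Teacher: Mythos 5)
This statement is imported by the paper as a fact from \cite{ISK76} and is given no proof there, so there is no internal argument to compare yours against; what follows is an assessment of your proposal on its own terms. Your proof is correct. The opening reduction---factoring the finitely many multipliers over the primes $p_1,\dots,p_k$ and reading a \textup{1NFAMW} as a one-way nondeterministic blind $k$-counter automaton---is precisely the equivalence the paper records elsewhere (it observes that $\mathbb{Q}^+$-automata coincide with \textup{1NFAMW}s and $\mathbb{Z}^k$-automata with \textup{1N$k$BCA}s, and Theorem~\ref{thm: counters} uses the same prime-exponent correspondence), and you correctly identify blindness as the feature that makes the net-displacement abstraction faithful. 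The ``if'' direction is legal under the paper's definitions: the one-way head may stay on $\dollar$, so the post-processing loop that cancels $v_0$ once and each period $v_i$ a nondeterministically chosen number of times uses only a fixed finite set of positive rational multipliers, and the register returns to $1$ exactly on the tuples of the guessed linear set. The ``only if'' direction is also sound: the event automaton $\mathcal{N}$ needs only finite state (simulated state, block index, offset within the current block), stay/$\varepsilon$-moves of the simulated machine become $t$-reading loops rather than obstacles to regularity, and the set you extract is the full preimage $\{(a_1,\dots,a_n) : w_1^{a_1}\cdots w_n^{a_n} \in L\}$, which is what is needed even when a string admits several factorizations. Two points deserve to be made explicit rather than implicit: the closure of semilinear sets under intersection and projection is the Ginsburg--Spanier theorem (Parikh's theorem alone does not provide it), and the block-counting multiplication by $p_j$ should be attached to the transition reading the last symbol of each block, with multiplier $1$ elsewhere. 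Neither is a gap. Overall this is a clean, modular route: it delegates all the work to the counter-machine picture and Parikh's theorem instead of a bespoke analysis of multiplicative registers, at the price of invoking those classical results as black boxes.
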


1NFAMWs are defined by the tape-head directions and they process the end-marker. In the next lemma, we show that modifying the definition of 1NFAMW slightly does not change its recognition power.

\begin{lem}\label{lem: 1nfamw}
Let $ \cal W $ be a \textup{1NFAMW}. There exists a \textup{1NFAMW} $ \cal{W} '$ which does not process the end-marker and defined using $ \varepsilon $-transitions that recognizes the same language as $ \cal W $.
\end{lem}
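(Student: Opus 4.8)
The plan is to let $\mathcal{W}'$ run $\mathcal{W}$ in the $\varepsilon$-move formulation of one-way computation described above, organizing its work into a \emph{reading phase} that mimics $\mathcal{W}$ as it scans $w$ and a \emph{post-processing phase} that mimics what $\mathcal{W}$ does while its head rests on $\dollar$. The register is treated identically in both machines: every simulated step multiplies it by the same element of $\Lambda$, and acceptance of $\mathcal{W}'$ is the standard no-end-marker condition, namely ending in an accept state with the head past $w$ and the register equal to $1$ — exactly the test applied by $\mathcal{W}$. For the scanning of $w$ I would invoke the equivalence between the two one-way formulations mentioned above, translating each move $(q',\rightarrow,\lambda)\in\delta(q,\sigma)$ into a transition that consumes $\sigma$ and multiplies by $\lambda$, and each stationary move $(q',\downarrow,\lambda)\in\delta(q,\sigma)$ into an $\varepsilon$-transition multiplying by $\lambda$. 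The one caveat here is that an $\varepsilon$-transition cannot inspect the current tape symbol, so a naive replacement of $\downarrow$-moves could fire on the wrong symbol; I keep each stationary move conditioned on the symbol it was meant for by letting $\mathcal{W}'$ carry the last-read symbol in its finite control. Since the register holds a rational and multiplication of rationals is commutative, the bookkeeping does not disturb the final register value.

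The genuinely new part is the removal of the end-marker, which I would handle with a second copy of the state set. Keep a \emph{reading copy} for the first phase and a \emph{hatted copy} $\{\hat q\}$ for the post-processing phase, and add an $\varepsilon$-transition $q\to\hat q$ (carrying no multiplication) from every reading state. In the hatted copy, turn each end-marker move $(q',\downarrow,\lambda)\in\delta(q,\dollar)$ into an $\varepsilon$-transition $\hat q\to\hat{q'}$ that multiplies by $\lambda$, and declare $\hat q$ accepting exactly when $q\in Q_a$. The decisive design choice is that the hatted states are given \emph{no} transitions on any symbol of $\Sigma$, so that once $\mathcal{W}'$ has entered the post-processing phase it can only make $\varepsilon$-moves.

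The correctness argument — and the only delicate point — is that $\mathcal{W}'$ accepts \emph{exactly} $L(\mathcal{W})$. For completeness, an accepting computation of $\mathcal{W}$ reaches $\dollar$ in some state after consuming all of $w$ and then performs its $\dollar$-moves; $\mathcal{W}'$ reproduces this by taking the $\varepsilon$-jump into the hatted copy at the moment all of $w$ has been read, after which the hatted $\varepsilon$-transitions mirror the $\dollar$-processing and leave the register at the same value. For soundness I would show that no spurious acceptance can arise from jumping into the post-processing phase too early: once in a hatted state $\mathcal{W}'$ has no transition on any symbol of $\Sigma$, so if any input remains it can never be consumed, and the computation cannot reach an accepting configuration with the head past $w$. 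Hence the $\varepsilon$-jump contributes to acceptance only when taken after the whole of $w$ is consumed, which is precisely $\mathcal{W}$'s requirement that its head be on $\dollar$; together with the identical register-equals-$1$ test this yields $L(\mathcal{W}')=L(\mathcal{W})$. I expect the main obstacle to be exactly this soundness step: arranging the state copies so that premature entry into the post-processing phase is provably non-accepting while every genuine $\dollar$-computation remains available, and simultaneously keeping the stationary $\downarrow$-moves correctly conditioned on the current symbol.
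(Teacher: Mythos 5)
Your construction follows the same two-phase plan as the paper's proof: eliminate stationary moves on input symbols by encoding the relevant symbol in the finite control, then eliminate the end-marker by a disjoint copy of the state set in which $\dollar$-moves become $\varepsilon$-moves, with soundness coming from the fact that a premature jump into the post-processing copy strands the computation on unread input. Your end-marker phase is correct, and your decision to translate only the stationary $\dollar$-moves is exactly right; the paper does the same (it also demotes the original accept states, but as you implicitly argue this is not essential, since a reading-copy accept state reached with all of $w$ consumed and register $1$ corresponds precisely to $\mathcal{W}$ accepting at the moment its head lands on $\dollar$).

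The reading phase, however, has a genuine wiring error. You translate every right-move into a consuming transition and every stationary move into an $\varepsilon$-transition conditioned on ``the last-read symbol'' held in the finite control. But the symbol a stationary move $\delta(q,\sigma)\ni(q',\downarrow,\lambda)$ must be conditioned on is the symbol currently under $\mathcal{W}$'s head, which under your rules has \emph{not yet} been consumed by $\mathcal{W}'$ --- the last symbol $\mathcal{W}'$ consumed sits on the \emph{previous} square. Concretely, if $\mathcal{W}$ accepts $ab$ by moving right on $a$, staying once on $b$, then moving right on $b$, your simulator holds $a$ in its memory at the moment it must certify a stationary move on $b$: the move is blocked (or, if the conditioning is dropped, fires wrongly), so completeness fails. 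The repair is to make the two rules asymmetric at the entry to each square: either consume eagerly, so that the first simulated move at each square (stationary or not) is the one that consumes and stores the symbol, all later moves at that square becoming $\varepsilon$-moves; or do what the paper does --- enter a $\sigma$-annotated copy of the states by an $\varepsilon$-transition that \emph{guesses} the current symbol, run the stationary moves as $\varepsilon$-moves inside that copy, and validate the guess by consuming $\sigma$ on the exiting right-move, wrong guesses being harmlessly stranded by nondeterminism. Either repair preserves the rest of your argument verbatim.
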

\begin{proof}
Given a 1NFAMW $ \mathcal{W}_1=(Q,\Sigma,\delta,q_1,Q_a,\Lambda) $, we construct $ \mathcal{W}_2 $ from $ \mathcal{W}_1 $ by first removing the transitions which are traversed upon reading $ \sigma \in \Sigma $ and which do not move the tape-head, by using some additional states and $ \varepsilon $-transitions as follows: Let $ Q_s $ be the set of state-symbol pairs of $ \mathcal{W}_1 $ such that $ (q,\sigma) \in Q_s $ if there is no incoming transition to $ q $ that does not move the tape-head and $ \delta(q,\sigma) =(q',\lambda,\downarrow)$ for some state $ q' \in Q$ and $ \lambda \in \Lambda $. For each $ (q,\sigma) $ pair, let $ G_{q,\sigma} $ be the graph obtained from the state transition diagram of $ \mathcal{W}_1 $ by removing all transitions except the ones of the form $ \delta(q,\sigma) =(q',\lambda,\downarrow)$. Let $ r_{q,\sigma} $ be the subgraph of $ G_{q,\sigma} $ induced by the set of reachable vertices from $ q $ in $G_{q,\sigma} $. We create a copy of each $ r_{q,\sigma} $ which we denote by  $ r_{q,\sigma}^c$ and connect it to $ \mathcal{W}_2 $ as follows: From the state $ q$ in the original copy, we add an $ \varepsilon $-transition to the state $ q $ in $r_{q,\sigma}^c $. In $ r_{q,\sigma} $, there should be some states $ t $ satisfying $ \delta(t,\sigma)= (u,\lambda,\rightarrow) $ for some $u \in Q $ and $ \lambda \in \Lambda $, since otherwise the rest of the string cannot be scanned. We remove those transitions from $ \mathcal{W}_2 $ and add a $ \sigma $-transition to each original $ u $ from the copy of the state $ t $ in $ r_{q,\sigma}^c $. The inherited transitions from $ \mathcal{W}_1 $ which do not move the tape-head are removed from $ \mathcal{W}_2 $. 

Next we remove the $ \dollar $-transitions from $ \mathcal{W}_2$. Let $ Q_{\dollar} $ be the set of states of $\mathcal{W}_1$ that don't have any incoming $ \dollar $-transition and have an outgoing $ \dollar  $-transition. After finishing reading the string, $ \mathcal{W}_1 $ should enter a state from $ Q_{\dollar} $, read the $\dollar  $ symbol and possibly make some transitions without changing the tape-head and eventually end in an accept state, to accept any string. Let $ G_{\dollar} $ be the graph obtained from the transition diagram of $ \cal W$, by removing all transitions except the $\dollar  $-transitions. Let $ r_q $ be the subgraph of $G_{\dollar}$, induced by the set of reachable vertices from $q $ in $ G_{\dollar} $, for each $q \in Q_{\dollar}  $. We create a copy $ r_q^c $ of each subgraph $r_q  $, replace the $ \dollar $ symbols with $ \varepsilon $ and connect it to $ \mathcal{W}_2$: For each incoming transition to $ q $ in $ \mathcal{W}_1 $, we create a copy of the transition and connect it to the copy of $ q $ in $ r_q^c $. The $ \dollar $-transitions inherited from $ \mathcal{W}_1 $ are removed from $\mathcal{W}_2  $ and any accept state of $ \mathcal{W}_1 $ is no longer an accept state in $ \mathcal{W}_2 $. $ \mathcal{W}_2 $ simulates the computation of $ \mathcal{W}_1 $ on any non-empty string until scanning the $ \dollar $ and then follows the transitions in the newly added states to reach an accept state. 

We can safely remove any remaining tape-head directions which move the tape-head to the right from $ \mathcal{W}_2 $ and we obtain a 1NFAMW without the tape-head directions and that does not process the end-marker recognizing the same language as $ \mathcal{W}_1 $. 	
\end{proof}
From now on, we may assume that a 1NFAMW is defined without the tape-head directions and does not process the end-marker.

\section{Background on Algebra}

In this section we provide definitions for some basic notions from algebra and group theory. See \cite{Fr03,LS77} for further references.

\subsection{Algebraic Structures}
Let $ A  $ be a set. A \textit{binary operation} $ * $ on a set $ A $ is a function from $ A \times A $ to $ A $. Let $ B $ be a subset of $ A $. The subset $ B $ is \textit{closed under} $ * $ if for all $ a,b \in B $, we also have $ a*b \in B $.

A binary operation $ * $ is called 
\begin{itemize}
	\item \textit{associative} if for all $ a,b,c \in A $, we have $ (a*b)*c =a * (b*c)$,
	\item \textit{commutative} if $ a*b=b*a $ for all $ a,b\in A $. 
\end{itemize}

A set $ A $, together with a binary operation $ * $ is called an \textit{algebraic structure} $(A,*)  $. Let $ (A,*) $ and $ (A',*') $ be binary algebraic structures. An \textit{isomorphism} of $ A $ with $ A' $ is a one-to-one function $ \phi  $ mapping $ A $ onto $ A' $ such that $ \phi(a * b)=\phi(a) *'\phi(b) $ for all $ a,b\in A $. If such a map exists, then $ A $ and $ A' $ are \textit{isomorphic} binary structures which is denoted by $ A \simeq A' $.

For an algebraic structure $ (A,*) $,
\begin{itemize}
	\item an element $ e \in A $ is called the \textit{identity element} if for all $ a \in A $ $ e * a = a * e = a $,
	\item element $ a \in A $ has an \textit{inverse} if there is an element $ a' $ in $ A $ such that $ a * a' = a' * a = e $. 
\end{itemize}

\subsection{Groups, Monoids, Semigroups}

The following are fundamental algebraic structures:

\begin{itemize}
	\item A \textit{semigroup} $ (S,*) $ is an algebraic structure with an associative binary operation.
	\item A \textit{monoid} $ (M,*) $ is a semigroup with an identity element $ e$.
	\item A \textit{group} $ (G,*) $ is a monoid where for every element $ a \in G $, there is a unique inverse of $ a \in G $ denoted by $ a^{-1} $.
\end{itemize}

A group may be referred simply as $ G $ instead of $ (G,*) $ and most of the time the operation sign is omitted and $ a*b $ is simply denoted by $ ab $. The \textit{order} $ |G| $ of $ G $ is the number of elements in $ G $. The \textit{order of an element} $ g $ of a group $ G $ is the smallest positive integer $ m $ such that $ g^m=e $. A group is \textit{Abelian} if its binary operation is commutative. A monoid or a semigroup is called \textit{commutative}, if its binary operation is commutative.

A monoid is called \textit{inverse} if for every $ x \in M $, there exists a unique $ y\in M $ such that $ x =xyx $ and $ y=yxy $. Note that it is not necessary that $ xy $ is equal to the identity of $ M $.

A subset $ H $ of a group $ G $ is called a \textit{subgroup} of $ G $ if
\begin{enumerate}
	\item $ H $ is closed under the binary operation of $ G $,
	\item The identity element $ e $ of $ G $ is in $ H $,
	\item For all $ a \in H $, $ a^{-1} \in H $.
\end{enumerate}
 A subset $ H $ of a monoid $ M $ is a \textit{submonoid} if (i) and (ii) hold and a subset $ H $ of a semigroup $ S $ is a \textit{subsemigroup} if (i) holds. 

  Let $ A \subseteq G $. The \textit{subgroup generated} by $ A $, denoted by $ \langle A \rangle $, is the subgroup of $ G  $ whose elements can be expressed as the finite product of elements from $ A $ and their inverses. If this subgroup is all of $ G $, then $A $ \textit{generates} $ G $ and the elements of $ A $ are called the \textit{generators} of $ G $. If there is a finite set that generates $ G $, then $ G $ is \textit{finitely generated}. The smallest cardinality of a generating set for $ G $ is the \textit{rank} of the group $ G $. The notion of generating sets also applies to monoids and semigroups.

Let $ H $ be a subgroup of a group $ G $. The subset $ aH=\{ah | h \in H\} $ of $ G $ is the \textit{left coset} of $ H $ containing $ a $, and the subset $ Ha = \{ha | h \in H\}$ is the \textit{right coset} of $ H $ containing $ a $. The number of left cosets of $ H $ in $ G $ is the \textit{index} of $ H $ in $ G $.

\newpage
Let $ G_1,G_2,\dots,G_n $ be groups. For $ (a_1,a_2,\dots,a_n) $ and $ (b_1,b_2,\dots,b_n) $ in $ \prod_{i=1}^{n} G_i = G_1 \times G_2 \times \dots \times G_n$, define  $ (a_1,a_2,\dots,a_n)(b_1,b_2,\dots,b_n) $ to be the element $ (a_1b_1,a_2b_2,\dots,$  $a_nb_n) $. Then $ \prod_{i=1}^{n} G_i$ is the \textit{direct product of the groups} $ G_i $ under this binary operation. Direct product of monoids and semigroups can be defined similarly.

\subsection{Groups of Integers, Vectors, Rational Numbers}

The set of integers together with the binary operation addition forms a group denoted by $ (\mathbb{Z},+) $. It can be generated by the set $ \{1\} $ and its identity element is 0. 

The set of $ k $-dimensional integer vectors for some $ k \geq 2 $ under addition also forms a group denoted by $ (\mathbb{Z}^k,+) $ and it is finitely generated by $ k $ vectors, $ i $'th vector having 1 in its $ i $'th entry and 0 in the remaining entries for $i=1\dots k  $.

The set of positive rationals with the binary operation multiplication forms an infinitely generated group denoted by $ (\mathbb{Q}^+,\cdot) $, with the identity element 1.

Note that all groups introduced above are Abelian.

\subsection{Matrix Groups and Monoids}
The set of all $ n \times n $ matrices with integer entries with the operation of matrix multiplication forms a monoid, which is denoted by $ M_n(\mathbb{Z}) $. 

We denote by $GL(n,\mathbb{Z})$ \textit{the general linear group of degree} $ n $
over the field of integers, that is, the
group of $ n\times n $ invertible matrices with integer entries. Note
that these matrices have determinant $\pm 1$. Restricting the matrices
in $GL(n,\mathbb{Z})$ to those that have determinant 1, we obtain the
\textit{special linear group of degree} $ n $ over the field of integers,
$SL(n,\mathbb{Z})$.

Analogously, $GL(n,\mathbb{Q})$ is the group of $ n\times n $ invertible matrices with rational entries and $SL(n,\mathbb{Q})$ is the group of $ n \times n $ invertible matrices with rational entries with determinant 1.

\subsection{Free Monoids and Free Groups}

Let $ A=\{a_1,a_2,\dots,a_n\} $ be a finite set. We think of $ A $ as an alphabet and its elements $ a_i $ as the letters of the alphabet. A \textit{word} is a concatenation of finite elements of $ A $.

The set of all words over $ A $ is denoted by $ A^* $. $ A^*$ together with the binary operation concatenation is called the \textit{free monoid over} $ A $. The empty word $ \varepsilon $, which is obtained by concatenation of zero elements, is the identity element of the free monoid. The set of all nonempty words over $ A $ is called the \textit{free semigroup over} $ A $ and often denoted by $ A^+ $. 

Now assume that for every $ a_i \in A $, there is a corresponding inverse symbol $ {a_i}^{-1} $ and let $ A^{-1}=\{{a_1}^{-1}, {a_2}^{-1}, \dots,{a_n}^{-1}\} $. Consider the set of all words over $ X=A \cup A^{-1} $. We can simplify a word by removing occurrences of $ a_i{a_i}^{-1} $, for each $ i $. A word is called \textit{reduced} if it cannot be further simplified. The set of all reduced words over $ X $
is called the \textit{free group over} $ A $.  The number of elements in $ A $ is called the \textit{rank of the free group}. An arbitrary group $ G $ is called \textit{free}, if it is isomorphic to the free group generated by a subset $ S $ of $ G $. Informally, a group is free if no relation holds among the generators of the group. Two free groups are isomorphic iff they have the same rank.

We will denote the \textit{free group} of rank $r$ by $ \mathbf{F}_r $. Note that $ \mathbf{F}_0 $ is the trivial group and $ \mathbf{F}_1 $ is Abelian and isomorphic to $ (\mathbb{Z},+) $. 2 is the smallest rank of a non-Abelian free group.

The well known Nielsen-Schreier Theorem for free groups states the following. 

\begin{fact}\label{fact: nielsen} \textup{\cite{Sc27,Ni21}}
Every subgroup of a free group is free.
\end{fact}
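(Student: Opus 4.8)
The plan is to use the topological interpretation of free groups, which gives the cleanest argument. A free group $\mathbf{F}_r$ is isomorphic to the fundamental group $\pi_1(X)$ of a bouquet of $r$ circles, that is, a graph $X$ with a single vertex and $r$ loop-edges. The conceptual core is a preliminary lemma: the fundamental group of \emph{any} connected graph is free. To prove it I would fix a spanning tree $T$ of the graph; each edge outside $T$ determines a loop (travel through $T$ to one endpoint, cross the edge, return through $T$), and these loops form a free generating set for $\pi_1$, since contracting $T$ to a point deformation-retracts the graph onto a bouquet. This reduces the whole theorem to showing that a subgroup of such a fundamental group is again realized as the fundamental group of a graph.

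First I would invoke the Galois-type correspondence between subgroups of $\pi_1(X)$ and connected coverings of $X$: to a subgroup $H \leq \pi_1(X)$ there corresponds a covering $p : \tilde{X} \to X$ with $p_*\!\left(\pi_1(\tilde{X})\right) \simeq H$, where $p_*$ is injective, so that $\pi_1(\tilde{X}) \simeq H$. The crucial structural observation is that a covering space of a graph is itself a graph: lifting the vertices and edges of $X$ through $p$ equips $\tilde{X}$ with a graph structure. Applying the preliminary lemma to the connected graph $\tilde{X}$ then shows $H \simeq \pi_1(\tilde{X})$ is free, which is exactly the claim.

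The main obstacle will not be the final assembly, which is immediate once the two supporting facts are in place, but rather establishing those facts rigorously. Proving that $\pi_1$ of a connected graph is free on the non-tree edges demands a careful retraction (or van Kampen) argument, and setting up the subgroup--covering correspondence presupposes that graphs are well behaved enough (locally path-connected and semilocally simply connected) for covering-space theory to apply; verifying these hypotheses for the one-dimensional complexes in question is the delicate point.

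As an alternative matching the classical references of Nielsen and Schreier, I could give a purely combinatorial proof avoiding topology. The idea is to choose a right transversal $T$ for $H$ in $\mathbf{F}_r$ that is closed under taking prefixes (a Schreier transversal, whose existence follows by a maximality argument), form the Schreier generators $t\,x\,\overline{tx}^{-1} \in H$ for each $t \in T$ and each free generator $x$, where $\overline{tx}$ denotes the representative of the coset $H t x$, and show that the nontrivial such generators freely generate $H$. Here the hard part shifts to a no-cancellation lemma: one must verify that a reduced nonempty word in the Schreier generators cannot collapse to the identity in $\mathbf{F}_r$, which requires tracking how the transversal prefixes interlock under reduction.
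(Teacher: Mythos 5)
Your proposal is correct, but note that the paper does not prove this statement at all: it is stated as a Fact and attributed to the original papers of Nielsen and Schreier, serving only as background for later results (e.g., that $\mathbf{F}_2$ contains free subgroups of every finite rank). So there is no in-paper argument to compare against; what can be assessed is whether your sketches are sound, and they are. Your primary route — realizing $\mathbf{F}_r$ as $\pi_1$ of a bouquet of $r$ circles, using the subgroup--covering correspondence, observing that a covering space of a graph is a graph, and quoting that the fundamental group of a connected graph is free on the edges outside a spanning tree — is the standard modern proof, and you correctly identify the two points needing care (the spanning-tree/retraction lemma and the applicability of covering-space theory to one-dimensional complexes, which is unproblematic since graphs are locally contractible). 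Your alternative combinatorial argument via a prefix-closed Schreier transversal and the generators $t\,x\,\overline{tx}^{-1}$ is the proof that actually matches the cited sources \cite{Sc27,Ni21}, and you correctly locate its delicate step in the no-cancellation lemma showing that reduced words in the nontrivial Schreier generators do not collapse. Either route, carried out in full, establishes the Fact; the combinatorial one has the additional virtue of yielding the Nielsen--Schreier index formula (a subgroup of index $j$ in $\mathbf{F}_r$ is free of rank $1 + j(r-1)$), which is the quantitative content implicitly used when one speaks of finite-index free subgroups elsewhere in the paper.
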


Furthermore, $ \mathbf{F}_2 $ contains free subgroups of every finite rank.

\subsection{Free Abelian Groups}

Let $ A $ be a subset of a nonzero Abelian group $ G $ and suppose that each nonzero element $ g  $ in $ G $ can be expressed uniquely in the form $$  g=n_1a_1 + n_2a_2+ \dots + n_ra_r $$ for $ n_i \neq 0 $ in $ \mathbb{Z} $ and distinct $ a_i \in A $. $ G $ is called a \textit{free Abelian group} of rank $ n $ and $ A $ is called a \textit{basis} for the group. Any two free Abelian groups with the same basis are isomorphic.

A free Abelian group of rank $ r $ is isomorphic to $ \mathbb{Z}\times \mathbb{Z}\times \dots \times \mathbb{Z} = \mathbb{Z}^r$. Hence, the group of integers $ \mathbb{Z} $ and integer vectors $ \mathbb{Z}^n $ are finitely generated free Abelian groups. The group of positive rational numbers $ \mathbb{Q}^+ $ is the free Abelian group of infinite rank.

\subsection{Word Problem}
 
For any finitely generated group $ G $ with the set of generators $ A $, we have a homomorphism $ \phi: X^* \rightarrow G $ where  $ X =\{A \cup A^{-1}\}  $. Given a group $ G $ generated by the set $ A $, the \textit{word problem for group $ G $} is the problem of deciding whether $ \phi(w)=1 $ for a given word $ w \in X^*$, where $ 1 $ is the identity element of $ G $. The \textit{word problem language} of $ G $ is the language $ W(G,A) $ over $ X $ which consists of all words that represent the identity element of $ G $, that is $ W(G,A)=\{w \in X^* | \phi(w)=1\} $. Most of the time, the statements about the word problem are independent of the generating set and the word problem language is denoted by $ W(G) $.

\chapter{EXTENDED FINITE AUTOMATA} \label{chap: efa}

In this chapter, we investigate extended finite automata over matrix groups. The theory of extended finite automata has been essentially developed in the
case of free groups and in the case of free
Abelian groups, where strong theorems allow the
characterization of the power of such models and the combinatorial properties
of the languages recognized by these automata. For groups that are not of the types mentioned above, even in the case
of groups of matrices of low dimension, the study of group automata quickly becomes nontrivial, and there are remarkable classes of linear
groups for which little is known about the automaton models that they
define. 

We start with a survey of extended finite automata, and present the basic definitions and observations in Section \ref{sec: efadefn}. 

In Section \ref{sec: efalang}, we present several new results about the classes of
languages recognized by finite automata over matrix groups. We focus
on matrix groups with integer and rational entries. For the case of $
2 \times 2 $ matrices, we prove that the corresponding group automata
for rational matrix groups are more powerful than the corresponding
group automata for integer matrix groups, which recognize exactly the class of context-free languages. We also explore finite
automata over some special matrix groups, such as the discrete
Heisenberg group and the Baumslag-Solitar group. The ``zoo" of
language classes associated with different groups is presented,
visualizing known relationships and open problems.

We also introduce the notion of time complexity for group automata,
and use this additional dimension to analyze the relationships among
the language families recognized by finite automata over various groups. We
develop a method for proving that automata over groups where the growth rate of the group and the time are bounded cannot recognize certain languages, even if one uses a very weak definition of time-bounded computation, and use this to demonstrate
some new relationships between time-bounded versions of our language
classes. The case of linear-time bounds is examined in detail
throughout our repertory of matrix groups. The results are presented in Section \ref{sec: efatime}.

In Section \ref{sec: efadecision}, we make a connection between the membership and identity problems for matrix groups and semigroups and the corresponding extended finite automata. We prove that the decidability of the emptiness and universe problems for extended finite automata are sufficient conditions for the decidability of the subsemigroup membership and identity problems. Using these results, we provide an alternative proof for the decidability of the subsemigroup membership problem for $ GL(2,\mathbb{Z}) $ and the decidability of the identity problem for $ M_2(\mathbb{Z}) $. We show that the emptiness and universe problems for $ SL(4,\mathbb{Z}) $ are undecidable. 

\section{Basic Notions, Definitions and Survey on Extended Finite Automata}\label{sec: efadefn}

This introductory section provides a brief overview of extended finite automata and reviews the literature. 

We gave the definition for finite automaton in Section \ref{sec: back-c-f}. Now let us look from the point of view of combinatorial group theory.

In \cite{Gi96}, a \textit{finite automaton} $ \cal F $ \textit{over a monoid $ M $} is defined as a finite directed graph whose edges are labeled by elements from $ M $. $\cal  F $ consists of a vertex labeled as the initial vertex and a set of vertices labeled as the terminal vertices such that an element of $ M $ is accepted by $ \mathcal{F} $ if it is the product of the labels on a path from the initial vertex to a terminal vertex. A subset of $ M $ is called \textit{rational} if its elements are accepted by some finite automaton over $ M $. The idea of rational subset of a monoid is introduced for the first time in \cite{ES69}.

When $ M $ is a free monoid such as $ \Sigma^* $, then the accepted elements are words over $ \Sigma $ and the set of accepted words is a language over $ \Sigma $. If $ M $ is finitely generated by a set $  A$, then equivalently a subset of $ M $ is called rational if its elements are accepted by some finite automaton over $ A $. By letting $ A $ to be a finite alphabet $ \Sigma $, we see that the two definitions of finite automaton coincide. Rational subsets of a free monoid are called rational (regular) languages.

An $ M $-automaton recognizing a language over the alphabet $ \Sigma $ can be seen as a finite automaton over the monoid $ \Sigma^* \times M $ such that the accepted elements are $ (w,1) $, where $ w \in \Sigma^* $. This is stated explicitly in the following proposition by Corson \cite{Co05}. The proof involves constructing an $ M $-automaton from a finite automaton over $ \Sigma^* \times M $ and vice versa.

\begin{fact} \label{fact: corson}\textup{\cite{Co05}} 
Let $ L $ be a language over an alphabet $ \Sigma $. Then $ L $ is recognized by an $ M $-automaton if and only if there exists a rational subset $ R \subseteq \Sigma^* \times M $ such that $ L = \{w  \in  \Sigma^{∗} | (w,1) \in R\} $.
\end{fact}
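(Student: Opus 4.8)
The plan is to prove both directions by exhibiting explicit translations between the two kinds of automata, exploiting the observation that an $M$-automaton reading inputs over $\Sigma$ is, almost by definition, a finite automaton over the monoid $\Sigma^* \times M$ in which every edge label has its first component restricted to a single letter or the empty word. The product in $\Sigma^* \times M$ multiplies the two coordinates independently, so the $\Sigma^*$-coordinate along a path records the input word consumed, while the $M$-coordinate records the accumulated register value; this is exactly the bookkeeping performed by an $M$-automaton.

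For the forward direction, suppose $L$ is recognized by an $M$-automaton $\mathcal{E}$. I would read off from the transition relation of $\mathcal{E}$ a finite directed graph over $\Sigma^* \times M$: a transition that consumes $\sigma \in \Sigma_{\varepsilon}$ in state $q$, multiplies the register by $m \in M$, and moves to state $q'$ becomes an edge from $q$ to $q'$ labeled by the pair $(\sigma, m)$. Taking the initial vertex to be the start state and the terminal vertices to be the accept states of $\mathcal{E}$, a path from the initial vertex to a terminal vertex carries the label $(w,m)$ precisely when $\mathcal{E}$ has a computation on input $w$ ending in an accept state with final register value $m$. The set $R$ of pairs so accepted is rational by construction, and the acceptance criterion of $\mathcal{E}$ — ending in an accept state with register equal to $1$ — says exactly that $w \in L$ if and only if $(w,1) \in R$.

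For the converse, suppose $R \subseteq \Sigma^* \times M$ is rational, accepted by a finite automaton $\mathcal{F}$ over $\Sigma^* \times M$. Here the first components of edge labels may be arbitrary words $u = \sigma_1 \cdots \sigma_k \in \Sigma^*$, so the main work is a \emph{normalization} step: I would replace each edge labeled $(u,m)$ by a fresh directed path of length $k$ through new intermediate vertices, labeling the consecutive letters $\sigma_1, \ldots, \sigma_k$ and placing the single register multiplier $m$ on one of these edges (say the first), with the identity $1$ of $M$ on the rest; an edge with $u = \varepsilon$ becomes a single $\varepsilon$-transition carrying $m$. The resulting graph has all first components in $\Sigma_{\varepsilon}$ and so is directly readable as an $M$-automaton $\mathcal{E}'$, with start state the initial vertex and accept states the terminal vertices. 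Since splitting an edge preserves both the concatenated word and the product of the $M$-labels along every path, $\mathcal{E}'$ has an accepting computation on $w$ with final register $m$ if and only if $\mathcal{F}$ accepts $(w,m)$; in particular $\mathcal{E}'$ recognizes $\{w : (w,1) \in R\} = L$.

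The step I expect to require the most care is the label normalization in the converse: I must verify that distributing the single monoid element $m$ across the newly created edges does not alter the product taken along any path (which holds because the intermediate edges carry the identity), and that the introduction of $\varepsilon$-transitions and intermediate states creates no spurious accepting paths nor interferes with the componentwise product in $\Sigma^* \times M$. Everything else is a routine verification that the correspondence between paths and computations is a bijection respecting both coordinates.
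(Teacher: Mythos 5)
Your proof is correct and takes essentially the same approach the paper indicates for this fact (which it cites from Corson without giving details, noting only that "the proof involves constructing an $M$-automaton from a finite automaton over $\Sigma^* \times M$ and vice versa"): reading an $M$-automaton directly as a finite automaton over $\Sigma^* \times M$ in one direction, and normalizing long edge labels by edge-splitting in the other. Your care about placing the monoid element $m$ on one new edge and the identity on the rest, so that componentwise products along paths are preserved, is exactly the verification this construction needs.
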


Adopting the same definition, one can define a pushdown automaton as a finite state automaton over $  \Sigma^* \times M_{cf}  $ where $  M_{cf} $ is a certain monoid characterizing the context-free languages \cite{Gi96}. A word $ w \in \Sigma^* $ is accepted by a pushdown automaton if there is a path from initial vertex to terminal vertex with label $ (1,w) $. Replacing $ M_{cf} $ with different monoids, it is possible to recognize other classes of languages. This idea coincides with the formal definition of extended finite automaton, which will be discussed next.

The definition of extended finite automata appeared explicitly for the first time in a series of papers by Dassow, Mitrana and Steibe \cite{MS97,DM00,MS01}. An extended finite automaton is formally defined as follows:

Let $ G $ be a group under the operation denoted by 
$ \circ $ with the neutral element denoted by $ e $. 
An \textit{extended finite automaton} over the group $G$ is a
6-tuple
\[ \mathcal{E} = (Q, \Sigma,G,\delta, q_1,Q_a), \]
where the transition function $\delta$ is defined as
\[\delta: Q \times \Sigma_{\varepsilon} \rightarrow \mathcal{P}(Q\times G).\]

$ \delta(q,\sigma) \ni (q',g) $ means that when $\mathcal{E}$ reads the
symbol (or empty string) $\sigma \in \Sigma_{\varepsilon}$
in state $q \in Q$, it moves to state $q' \in Q$, and writes $ x\circ g $ in the register, 
where $ x $ is the old content of the register and $ g \in G $.

The initial value of the register is the 
neutral element $ e $ of the group $ G $. An input string $ w $ of length $ n$ is accepted if $\mathcal{E}$ enters an accept state with the tape head on the $ n+1 $'st square and the content of the register is equal to the identity element of $ G$. 

The class of languages recognized by an extended finite automaton over $ G $ will be denoted by 
$ \mathfrak{L}(G) $.

An extended finite automaton over $ G $ is also called a finite automaton over $ G $ or a $ G $-automaton and we will use the three terms interchangeably 


Extended finite automata have appeared implicitly throughout the literature as many classical models can be regarded as finite automaton over a particular group. Pushdown automata \cite{Ch62}, blind counter automata \cite{Gr78} and finite automata with multiplication without equality \cite{ISK76} are extended finite automata where the group in consideration is the free group, the additive group of integer vectors $ \mathbb{Z}^k $, and the multiplicative group of nonzero rational numbers $ \mathbb{Q}^+ $, respectively.

Mitrana and Stiebe investigate the language recognition power of finite automata over Abelian groups and conclude the following result.

\begin{fact}\textup{\cite{MS97}}
For an Abelian group $ G $, one of the following relations hold: 
\begin{align*} 
\mathfrak{L}(G) &= \mathsf{REG},\\
\mathfrak{L}(G) &= \mathfrak{L}(\mathbb{Z}^k), \mbox{ for some } k,\\
\mathfrak{L}(G) &= \mathfrak{L}(\mathbb{Q}^+).
\end{align*}
\end{fact}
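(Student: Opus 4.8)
The plan is to reduce the statement to a classification of \emph{finitely generated} abelian groups by their torsion-free rank, exploiting the fact that a single automaton can only touch finitely many group elements. First I would record two elementary principles. If $H$ is a subgroup of $G$, then any $H$-automaton is literally a $G$-automaton (its multipliers lie in $G$ and the identity used for acceptance is the common identity), so $\mathfrak{L}(H) \subseteq \mathfrak{L}(G)$; conversely, the transition function of any $G$-automaton $\mathcal{E}$ names only finitely many elements $g_1,\dots,g_m \in G$, so $\mathcal{E}$ is in fact an automaton over the finitely generated subgroup $H = \langle g_1,\dots,g_m\rangle$, and the register never leaves $H$. Together these give
\[ \mathfrak{L}(G) = \bigcup_{H \leq G \text{ finitely generated}} \mathfrak{L}(H), \]
so the whole question is governed by the finitely generated subgroups of $G$.

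Next I would invoke the fundamental theorem of finitely generated abelian groups to write each such $H \cong \mathbb{Z}^{k} \times T$ with $T$ finite, and argue that the finite torsion part is invisible to the language class, i.e.\ $\mathfrak{L}(\mathbb{Z}^{k}\times T) = \mathfrak{L}(\mathbb{Z}^{k})$. The inclusion $\supseteq$ is the locality principle applied to the subgroup $\mathbb{Z}^k \times \{e\}$. For $\subseteq$, I would simulate a $(\mathbb{Z}^k \times T)$-automaton by a $\mathbb{Z}^k$-automaton that carries the $T$-coordinate of the register inside its finite control (permissible since $T$ is finite) while keeping only the $\mathbb{Z}^k$-coordinate in the actual register; the accepting condition ``register $=e$'' is reproduced by simultaneously demanding that the $\mathbb{Z}^k$-register equal its identity and that the stored $T$-coordinate equal $e_T$. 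As $\mathfrak{L}$ is an isomorphism invariant, this yields $\mathfrak{L}(H) = \mathfrak{L}(\mathbb{Z}^{k_H})$, where $k_H$ is the torsion-free rank of $H$.

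At this point I would introduce the torsion-free rank of $G$, namely $r = \sup\{k_H : H \leq G \text{ finitely generated}\} = \sup\{k : \mathbb{Z}^k \hookrightarrow G\} \in \{0,1,2,\dots\}\cup\{\infty\}$, and combine the previous steps into $\mathfrak{L}(G) = \bigcup_{H} \mathfrak{L}(\mathbb{Z}^{k_H})$. Using $\mathbb{Z}^k \hookrightarrow \mathbb{Z}^{k+1}$ together with locality gives the monotone chain $\mathsf{REG} = \mathfrak{L}(\mathbb{Z}^0) \subseteq \mathfrak{L}(\mathbb{Z}^1) \subseteq \mathfrak{L}(\mathbb{Z}^2) \subseteq \cdots$, and every rank $k \le r$ is realized since $\mathbb{Z}^k \hookrightarrow \mathbb{Z}^{r} \hookrightarrow G$. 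Hence the union collapses to its top term whenever $r$ is finite, producing the three cases: if $r = 0$ (i.e.\ $G$ is torsion) then $\mathfrak{L}(G) = \mathfrak{L}(\mathbb{Z}^0) = \mathsf{REG}$; if $0 < r < \infty$ then $\mathfrak{L}(G) = \mathfrak{L}(\mathbb{Z}^{r})$ with $k = r$; and if $r = \infty$ then every rank occurs and $\mathfrak{L}(G) = \bigcup_k \mathfrak{L}(\mathbb{Z}^k)$.

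It remains to identify this last class with $\mathfrak{L}(\mathbb{Q}^+)$. Since $\mathbb{Q}^+$ is the free abelian group of infinite rank (the primes forming a basis), it is torsion-free and has infinite torsion-free rank, so its finitely generated subgroups are precisely the groups $\mathbb{Z}^k$ by the structure theorem, and the argument above gives $\mathfrak{L}(\mathbb{Q}^+) = \bigcup_k \mathfrak{L}(\mathbb{Z}^k)$, matching the $r = \infty$ case. I expect the main obstacle to be the finite-part absorption step $\mathfrak{L}(\mathbb{Z}^k \times T) = \mathfrak{L}(\mathbb{Z}^k)$: one must verify that folding the $T$-coordinate into the states is compatible with the register-returns-to-identity acceptance criterion and behaves correctly under nondeterminism and $\varepsilon$-moves. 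The remaining ingredients — the structure theorem, the monotonicity of the $\mathbb{Z}^k$ hierarchy, and the description of $\mathbb{Q}^+$ as free abelian of infinite rank — are standard and already recalled in the background chapter.
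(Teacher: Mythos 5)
Your proof is correct. The paper does not actually prove this statement---it imports it as a fact from Mitrana and Stiebe \cite{MS97}---but your argument is essentially the standard one behind that result: localize to finitely generated subgroups (a single automaton only names finitely many group elements), write each such subgroup as $\mathbb{Z}^k \times T$ by the structure theorem, absorb the finite torsion part into the finite control, and classify by torsion-free rank, identifying the infinite-rank case with $\mathfrak{L}(\mathbb{Q}^+)$ via the description of $\mathbb{Q}^+$ as free Abelian of infinite rank. Two streamlining remarks: the absorption step $\mathfrak{L}(\mathbb{Z}^k \times T) = \mathfrak{L}(\mathbb{Z}^k)$ is an instance of Fact~\ref{fact: finite}, since $\mathbb{Z}^k \times \{e\}$ is a finite-index subgroup of the finitely generated group $\mathbb{Z}^k \times T$, so you may cite that rather than redo the state-folding construction; and in the case $0 < r < \infty$ the collapse of the union to $\mathfrak{L}(\mathbb{Z}^r)$ relies on the supremum of the integer ranks $k_H$ being attained, which is immediate but worth stating explicitly.
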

They also discuss the computational power of deterministic extended finite automata, proving that they are less powerful than their nondeterministic variants. Throughout this thesis, we will focus on nondeterministic extended finite automata.

In the case of the free groups, Dassow and Mitrana observe the following characterization for the classes of context-free languages and recursively enumerable languages. Although it is true that $ \mathbf{F}_2 $-automata recognize exactly the class of context-free languages, the proof given in \cite{DM00} is not correct.

\begin{fact}\label{fact: cf} \textup{\cite{DM00}}
$ \mathfrak{L}(\mathbf{F}_2) = \mathsf{CF}$.
\end{fact}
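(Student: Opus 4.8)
The plan is to establish the two inclusions $\mathfrak{L}(\mathbf{F}_2) \subseteq \mathsf{CF}$ and $\mathsf{CF} \subseteq \mathfrak{L}(\mathbf{F}_2)$ by exploiting the observation that a reduced word over the generators $\{a,b,a^{-1},b^{-1}\}$ of $\mathbf{F}_2$ behaves exactly like the contents of a last-in-first-out stack: multiplying on the right by a generator $s$ either cancels the current last letter (when it equals $s^{-1}$) or lengthens the word, which is precisely the pop/push dichotomy of a pushdown store.

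The inclusion $\mathfrak{L}(\mathbf{F}_2) \subseteq \mathsf{CF}$ is the routine direction. Given an $\mathbf{F}_2$-automaton $\mathcal{E}$, I would build a 1NPDA $\mathcal{A}$ that keeps the reduced word currently stored in the register of $\mathcal{E}$ on its stack, with the top cell holding the last letter of the word. Each register multiplication by a generator $s$ is simulated by reading the top stack symbol and popping it if it equals $s^{-1}$, and pushing $s$ otherwise; a multiplication by a general element $g\in\mathbf{F}_2$ is handled by its reduced factorisation through a chain of $\varepsilon$-moves. Since $\mathcal{A}$ may inspect its stack top, this simulation is faithful, and because the register equals the identity exactly when its reduced word, hence the stack, is empty, the empty-stack-plus-accept-state criterion of $\mathcal{A}$ reproduces acceptance by $\mathcal{E}$.

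The reverse inclusion $\mathsf{CF}\subseteq\mathfrak{L}(\mathbf{F}_2)$ is where the difficulty, and the flaw in \cite{DM00}, lies. The tempting construction---simulate a PDA by letting a push of $\gamma_i$ multiply the register by a generator $g_i$, a pop of $\gamma_i$ multiply by $g_i^{-1}$, and nondeterministically guessing which symbol is popped---is \emph{incorrect}, because in a group a mismatched pop is not fatal: the stray syllable $g_jg_i^{-1}$ (with $i\neq j$) it creates can be cancelled by a later push of $\gamma_i$, so a run that violates the stack discipline can still drive the register back to the identity and accept a word outside the language. This ``resurrection'' phenomenon is exactly what a genuine stack (or the polycyclic monoid $P_2$, whose zero makes a mismatch absorbing) forbids but the free group does not. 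I would therefore avoid the direct simulation and argue indirectly: first show that $\mathfrak{L}(\mathbf{F}_2)$ is a full trio, i.e. closed under intersection with regular languages (a product construction that leaves the register untouched), under inverse homomorphism, and under homomorphism (both handled by reading input blocks and guessing preimages); second, observe that the two-sided Dyck language, namely the word problem $W(\mathbf{F}_2)$, lies in $\mathfrak{L}(\mathbf{F}_2)$ via the one-state automaton mapping each input letter to the corresponding generator. Since $\mathbf{F}_n$ embeds in $\mathbf{F}_2$ for every $n$ (Fact~\ref{fact: nielsen} and the remark following it), the same holds for each $W(\mathbf{F}_n)$.

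Finally I would invoke a Chomsky--Schützenberger-type representation to write an arbitrary context-free language as the homomorphic image of the intersection of a Dyck language with a regular language, and conclude membership in $\mathfrak{L}(\mathbf{F}_2)$ from the closure properties above. The step I expect to be the true obstacle is reconciling the \emph{one-sided} Dyck language supplied by the classical Chomsky--Schützenberger theorem with the \emph{two-sided} (cancellative) Dyck language $W(\mathbf{F}_2)$ that the free-group register naturally produces: a free-group automaton cannot detect a prefix in which closing brackets outnumber opening ones, so one must either deploy a version of the representation theorem phrased in terms of $W(\mathbf{F}_n)$, or equivalently prove that $W(\mathbf{F}_2)$ is a generator of the full trio $\mathsf{CF}$. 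Making precise that the rational cone generated by $W(\mathbf{F}_2)$ is all of $\mathsf{CF}$ is the crux, and is exactly the point at which a correct argument must replace the naive simulation of \cite{DM00}.
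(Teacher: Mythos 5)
Your first inclusion $\mathfrak{L}(\mathbf{F}_2) \subseteq \mathsf{CF}$ is correct and coincides with the paper's own Lemma \ref{lemma: f2pda}: the reduced word in the register is a stack, and empty-register acceptance matches empty-stack acceptance. Your diagnosis of the other direction is also exactly right, and it is precisely the flaw the paper alludes to when it says the proof in \cite{DM00} is not correct: in a group a mismatched pop is not absorbing, so the naive simulation accepts strings outside the language. Bear in mind that the paper never proves this Fact itself; it cites \cite{DM00} for the statement and \cite{Co05} (see also \cite{Ka09}) for a correct proof, and the only place it spells out the hard construction is the proof of Theorem \ref{thm: rtF2}, where Kambites's construction is reproduced for the linear-time refinement.

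The genuine gap is that your argument for $\mathsf{CF} \subseteq \mathfrak{L}(\mathbf{F}_2)$ never closes. Granting the full-trio property of $\mathfrak{L}(\mathbf{F}_2)$ (which is standard), the assertion that the rational cone generated by $W(\mathbf{F}_2)$ is all of $\mathsf{CF}$ is not a lemma you may invoke: it is an equivalent restatement of the theorem, and you end by naming it as the crux rather than proving it. Nor is it a formality that some representation theorem will supply, because the rank-one analogue is false: the cone generated by $W(\mathbb{Z})$, the two-sided Dyck language on one pair of brackets, is contained in $\mathfrak{L}(\mathbb{Z})$, which misses context-free languages such as $\{a^nb^n\}^*$ and the one-sided Dyck language (see the proof of Fact \ref{fact: cfzn} and \cite{Gr78}). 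So the passage from one-sided to two-sided Dyck genuinely requires the second pair of generators, and something must be constructed. The missing idea is Kambites's padding construction, reproduced in Theorem \ref{thm: rtF2}: simulate a PDA with stack alphabet $X$ by an automaton over the free group on $X \cup \{\#\}$, where a push of $x$ multiplies the register by $x\#$, a pop of $x$ multiplies it by $x^{-1}\#$, and multiplications by $\#^{-1}$ are available only through $\varepsilon$-moves in a designated family of ``minus'' states from which the only input-consuming moves are pops. Because padding can precede pops but never pushes, a mismatched pop leaves a factor in the reduced register word that no later legal multiplication can cancel; this kills the resurrection phenomenon you identified, and one checks that the product is the identity exactly for computations respecting the stack discipline. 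Combined with $\mathbf{F}_{n+1} \leq \mathbf{F}_2$ (so $\mathfrak{L}(\mathbf{F}_{n+1}) \subseteq \mathfrak{L}(\mathbf{F}_2)$), this yields $\mathsf{CF} \subseteq \mathfrak{L}(\mathbf{F}_2)$. Without this construction, or an equivalent argument such as Corson's in \cite{Co05}, your outline reduces the theorem to itself.
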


\begin{fact} \label{fact: re} \textup{\cite{MS97,MS01}}
	$ \mathfrak{L}(\mathbf{F}_2 \times \mathbf{F}_2) = \mathsf{RE}$.
\end{fact}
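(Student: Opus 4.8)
The plan is to prove the two inclusions $\mathfrak{L}(\mathbf{F}_2 \times \mathbf{F}_2) \subseteq \mathsf{RE}$ and $\mathsf{RE} \subseteq \mathfrak{L}(\mathbf{F}_2 \times \mathbf{F}_2)$ separately, the second being the substantial one.

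For $\mathfrak{L}(\mathbf{F}_2 \times \mathbf{F}_2) \subseteq \mathsf{RE}$, I would rely on the fact that free groups, and hence finite direct products of free groups, have a decidable word problem: a pair of reduced words can be stored on a Turing machine tape, updated under the group multiplication, freely reduced, and tested for equality to $(e,e)$, all effectively. Given an $\mathbf{F}_2 \times \mathbf{F}_2$-automaton $\mathcal{E}$ and an input $w$, a Turing machine can dovetail over all computation paths of $\mathcal{E}$ on $w$ (paths branch through nondeterminism and $\varepsilon$-moves), maintaining the register value as such a pair of reduced words, and accept as soon as some path reaches an accept state at the end of $w$ with register $(e,e)$. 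This is a semi-decision procedure for $L(\mathcal{E})$, so $L(\mathcal{E}) \in \mathsf{RE}$; this direction is routine.

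The core of the argument is $\mathsf{RE} \subseteq \mathfrak{L}(\mathbf{F}_2 \times \mathbf{F}_2)$, and the idea is that each copy of $\mathbf{F}_2$ serves as an unbounded pushdown store, so the two registers together furnish the two stacks of a two-stack machine, which is Turing-complete. Concretely, I would first convert a Turing machine recognizing a given $\mathsf{RE}$ language into an equivalent two-stack machine, storing the reversed left-of-head portion of the tape in stack one and the right-of-head portion in stack two, so that a head move pops one stack and pushes onto the other. Then I would encode a stack over a finite alphabet $\{1,\ldots,m\}$ as an element of $\mathbf{F}_2$, mapping the symbol $i$ to a freely reduced block $c_i$ (for instance $a^i b$ in the generators $a,b$ of $\mathbf{F}_2$), so that a stack $\gamma_1 \cdots \gamma_\ell$ with $\gamma_1$ on top corresponds to the reduced word $c_{\gamma_1} \cdots c_{\gamma_\ell}$. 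This encoding is injective, and $c_i^{-1} c_j$ is trivial exactly when $i=j$, which is all the rank-$2$ free group needs to host any finite stack alphabet (by the remark following Fact \ref{fact: nielsen}, $\mathbf{F}_2$ contains free subgroups of every rank). Pushing $i$ becomes left-multiplication by $c_i$, and popping a symbol the automaton guesses to be $i$ becomes left-multiplication by $c_i^{-1}$; this is exactly the single-register mechanism underlying Fact \ref{fact: cf}, now deployed in each component.

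The $\mathbf{F}_2 \times \mathbf{F}_2$-automaton then simulates the two-stack machine step by step in its finite control, using nondeterminism to guess the top symbol of each stack whenever the simulated machine inspects it and performing the corresponding register multiplications in the two components, with acceptance requiring an accepting configuration and register value $(e,e)$. The crux — and the step I expect to be the main obstacle — is to show that this final identity test exactly certifies a faithful simulation: a guess that mismatches the true top inserts a factor $c_i^{-1} c_j$ with $i \neq j$ that does not cancel, and by the Dyck-like cancellation structure of the free group no sequence of subsequent legitimate push/pop multiplications can reduce such a corrupted word back to the identity. Hence the register returns to $(e,e)$ if and only if every guessed pop matched the symbol actually pushed, i.e.\ if and only if the guesses trace a genuine computation of the two-stack machine. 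Establishing this cancellation invariant rigorously — tracking how the leftmost mismatched block survives all later left-multiplications — is the technical heart of the proof; the remainder is bookkeeping in the finite control, and combining it with the easy inclusion yields $\mathfrak{L}(\mathbf{F}_2 \times \mathbf{F}_2) = \mathsf{RE}$.
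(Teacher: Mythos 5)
The paper itself does not prove this statement (it is quoted as a fact from \cite{MS97,MS01}), so your proposal must be judged against the known correct arguments, and there it has a genuine gap. Your easy direction ($\mathfrak{L}(\mathbf{F}_2\times\mathbf{F}_2)\subseteq \mathsf{RE}$) is fine, and your architecture for the hard direction --- two stacks housed in the two free-group coordinates, popped symbols guessed nondeterministically, faithfulness certified by the register returning to the identity --- is the standard one. But the invariant you yourself call the technical heart is false: in a group, a mismatched pop is not permanent, because a later push of the guessed symbol erases it. Concretely, starting from the empty register, perform push $1$, pop-guess $2$, push $2$, pop-guess $1$. With your left multiplications the register is $c_1^{-1}c_2c_2^{-1}c_1 = e$, yet this is not a legal stack history: the first pop removed a symbol that was never on top. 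Note that this cancellation uses nothing about the particular elements $c_i$; it holds for arbitrary elements of an arbitrary group, so no injective encoding alone can repair it. Since a Turing machine converted to a two-stack machine constantly pops a cell and then pushes something back, such corrupted runs with identity register arise in earnest, and along them the finite control can reach an accept state of a simulated run that does not exist; soundness of your simulation fails, i.e.\ the claim ``register returns to $(e,e)$ iff every guessed pop matched the symbol actually pushed'' does not hold.

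This is precisely the pitfall that Chapter \ref{chap: efa} warns about: the original argument of \cite{DM00} for $\mathfrak{L}(\mathbf{F}_2)=\mathsf{CF}$ (Fact \ref{fact: cf}) was incorrect for the same reason. The correct proofs make a wrong pop permanently fatal. One route is to simulate the stacks by polycyclic monoids, where an illegal pop produces the zero element and kills that computation path forever, and then transfer from the polycyclic monoid to $\mathbf{F}_2$ via Kambites's padding construction \cite{Ka09} --- the extra generator $\#$ and the two-phase $q_+/q_-$ control that the paper reproduces in the proof of Theorem \ref{thm: rtF2}; Corson's proof \cite{Co05} handles the same difficulty in a different way. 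If you replace each push of $x$ by multiplication by $x\#$ and route pops through the padded, phase-split control so that the spurious cancellation between a pop and a later push can no longer occur, your outline becomes a proof; as written, it rests on a false claim.
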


In 2005, Corson modified the definition of extended finite automaton by allowing the register to be multiplied with monoid elements \cite{Co05}. Extended finite automaton over a monoid is also called a monoid automaton or $ M$-automaton. Corson focuses on the connection between the word problem of a group $ G $ and the set of languages recognized by $ G $-automata. He also provides a proof for the fact that $ \mathfrak{L}(\mathbf{F}_2) = \mathsf{CF} $ by extending the work of Gilman \cite{Gi96}.

Another line of research on monoid automata was led by Kambites \textit{et al.} The results concerning the class of languages recognized by monoid automata appear in \cite{Ka09, RK09, Re10}. The connection between a given group $ G $ and the groups whose word problems recognized by $ G$-automata has been studied in \cite{EO04,Ka06,EKO08}.

Recent work on the subject includes that of Corson \textit{et al}. \cite{CR15, RCR17} and Zetzsche \cite{Ze16}. Corson deals with monoid automata recognizing word problems of free products of groups in \cite{CR15}. Real-time $ G $-automata where $ \varepsilon $-transitions are not allowed are analyzed in \cite{RCR17}. Zetzsche investigates the area from a different point of view, not focusing on particular monoids but proving and generalizing various properties of finite automata over a broad class of monoids. 

Even though extensive research has been carried out  
on extended finite automata, no study exists which deal with finite automata over matrix groups. 
Having presented the main findings on the subject, we will move on to discuss finite automata over matrix groups in the next section. 

\section{Languages Recognized by Finite Automata over Matrix Groups}
\label{sec: efalang}

In this section, we are going to prove some new results about the
classes of languages recognized by finite automata over matrix groups. We will start with some observations from the previous studies. In the remaining parts, we will analyze the language recognition power of finite automata over various matrix groups. 

\subsection{Observations}

Let us start by noting the following facts, which are true by the definitions of the machines.

\begin{itemize}
	\item A $ \mathbb{Z}^k $-automaton is equivalent to a one-way nondeterministic blind $k  $-counter automaton (1N$ k $BCA).
	\item A $ \mathbb{Q}^+ $-automaton is equivalent to a one-way finite automaton with multiplication without equality (1NFAMW). 
\end{itemize}

As mentioned earlier, the characterization of context-free languages by $ \mathbf{F}_2 $-automata was
first stated by Dassow and Mitrana \cite{DM00}, and proven in \cite{Co05}. 
Let us recall that $ \mathbf{F}_2 $ contains any free group
of rank $ n \geq 2 $ \cite{LS77}.
%

The relation between the classes of languages recognized by free group
automata is summarized as follows.

\begin{fact}\label{fact: reg} \textup{\cite{DM00}}		
	$ \mathsf{REG} = \mathfrak{L}(\mathbf{F}_0) \subsetneq
	\mathfrak{L}(\mathbf{F}_1) = \mathfrak{L}(\mathbb{Z}) \subsetneq
	\mathfrak{L}(\mathbf{F}_2) = \mathsf{CF} $.
\end{fact}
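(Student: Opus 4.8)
The statement is a chain consisting of one equality at each end, two internal equalities, and two strict inclusions, so the plan is to treat each link separately and then splice them together. Two of the links are essentially definitional, and I would dispose of them first. Since $\mathbf{F}_0$ is the trivial group, an $\mathbf{F}_0$-automaton never changes its register and is therefore just an NFA; hence $\mathfrak{L}(\mathbf{F}_0)=\mathsf{REG}$. Since $\mathbf{F}_1\simeq(\mathbb{Z},+)$ (noted earlier in the text), I would invoke a small \emph{isomorphism-invariance lemma} --- if $G\simeq H$ then $\mathfrak{L}(G)=\mathfrak{L}(H)$, proved by relabelling each register element through the isomorphism --- to conclude $\mathfrak{L}(\mathbf{F}_1)=\mathfrak{L}(\mathbb{Z})$. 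The remaining equality $\mathfrak{L}(\mathbf{F}_2)=\mathsf{CF}$ I would simply cite as Fact~\ref{fact: cf}.

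For the inclusions $\mathsf{REG}\subseteq\mathfrak{L}(\mathbb{Z})\subseteq\mathfrak{L}(\mathbf{F}_2)$ I would prove a general \emph{subgroup-monotonicity lemma}: if $G$ is (isomorphic to) a subgroup of $H$, then $\mathfrak{L}(G)\subseteq\mathfrak{L}(H)$, because any $G$-automaton is literally an $H$-automaton whose labels happen to lie in the copy of $G$, and a product of such labels equals the identity in $G$ precisely when it equals the identity in $H$. Applying this to the subgroup chain $\{e\}\le\mathbb{Z}\le\mathbf{F}_2$ --- the trivial group inside $\mathbb{Z}$, and the cyclic subgroup generated by one free generator being a copy of $\mathbb{Z}$ inside $\mathbf{F}_2$ --- gives both inclusions at once.

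It remains to upgrade these inclusions to strict ones. For $\mathsf{REG}\subsetneq\mathfrak{L}(\mathbb{Z})$ the witness is $L_1=\{a^nb^n : n\ge 0\}$: a $\mathbb{Z}$-automaton adds $+1$ on each $a$ and $-1$ on each $b$ and accepts when the register returns to $0$, while $L_1$ is non-regular by the pumping lemma. For $\mathfrak{L}(\mathbb{Z})\subsetneq\mathfrak{L}(\mathbf{F}_2)=\mathsf{CF}$ I would take a context-free witness that a single commutative counter cannot handle, e.g. $L_2=\{a^nb^nc^md^m : n,m\ge 0\}$ (context-free, hence in $\mathfrak{L}(\mathbf{F}_2)$), and show $L_2\notin\mathfrak{L}(\mathbb{Z})$.

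The \textbf{main obstacle} is exactly this last lower bound, since every other link is soft. Here I would use that a $\mathbb{Z}$-automaton accepts a word iff some accepting path in its finite skeleton carries register increments summing to $0$; that is, acceptance is governed by a \emph{single} linear equation, whereas $L_2$ demands two independent balance conditions $n=n$ for the $a,b$ block and $m=m$ for the $c,d$ block. My plan is a pumping argument: first close $\mathfrak{L}(\mathbb{Z})$ under intersection with regular languages (a product construction) to restrict attention to inputs in $a^*b^*c^*d^*$; then, on a long word $a^Nb^Nc^Nd^N$, locate by pigeonhole a repeatable state-loop inside one homogeneous block and pump it. A loop with zero net increment could be pumped while keeping the total sum $0$ and the path accepting, producing a word outside $L_2$ --- a contradiction; ruling this out forces every block-loop to have nonzero increment, after which balancing the single total sum cannot simultaneously pin down both exponents, again yielding an accepted word that violates one of the two equalities. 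Making the bookkeeping of these independent loop-increments precise is the delicate part of the proof.
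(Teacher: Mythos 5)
The paper never proves this statement at all --- it is imported as a fact, with a citation to \cite{DM00} --- so your proposal has to stand on its own. Most of it does: the identification $\mathfrak{L}(\mathbf{F}_0)=\mathsf{REG}$, the isomorphism-invariance and subgroup-monotonicity lemmas, the appeal to Fact~\ref{fact: cf} for $\mathfrak{L}(\mathbf{F}_2)=\mathsf{CF}$, and the witness $\{a^nb^n : n\ge 0\}$ for $\mathsf{REG}\subsetneq\mathfrak{L}(\mathbb{Z})$ are all correct and routine.

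The genuine gap is exactly where you located the ``main obstacle'': the claim $L_2=\{a^nb^nc^md^m\}\notin\mathfrak{L}(\mathbb{Z})$. The claim is true, but your pumping sketch does not establish it. First, a $\mathbb{Z}$-automaton in this paper's sense may make $\varepsilon$-moves, so a loop found by pigeonhole may consume no letters; pumping such a loop changes the computation but not the word, so it yields nothing --- the pigeonhole must be applied to letter-consuming positions (fixable, but necessary). Second, and decisively, the case ``every block-loop has nonzero increment'' is asserted rather than proved, and it can in fact defeat the argument. If the $a$-block loop has increment $+\alpha$ over $i_a$ letters and the $b$-block loop has increment $-\beta$ over $i_b$ letters with $\beta i_a=\alpha i_b$, then \emph{every} zero-sum combination of these two loops changes the $a$- and $b$-counts by the same amount, so all pumped words stay in $L_2$; and if all available loops in all four blocks have increments of the same sign --- say positive, with values such as $101,102,104,108$, for which no integer combination with coefficients $\ge -1$ sums to zero --- there is no nontrivial zero-sum pumping combination at all. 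Ruling out such hypothetical automata requires a genuinely global argument about counter trajectories; this lower bound is essentially the base case of Greibach's blind-counter hierarchy \cite{Gr78} and is not a one-loop-per-block pumping exercise.

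The cheapest correct repair stays inside the paper's own toolkit. Your monotonicity lemma plus Fact~\ref{fact: cf} already give $\mathfrak{L}(\mathbb{Z})\subseteq\mathfrak{L}(\mathbf{F}_2)=\mathsf{CF}$, and the proof of Fact~\ref{fact: cfzn} cites \cite{Gr78} for the statement that the context-free language $\{a^nb^n\mid n\ge 0\}^*$ lies outside $\mathfrak{L}(\mathbb{Z}^k)$ for every $k\ge 1$, hence in particular outside $\mathfrak{L}(\mathbb{Z})$. Using that language as the witness makes the second inclusion strict with no new lower-bound work, replacing the one step of your proof that is actually hard.
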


The following result states the hierarchy between the classes of languages 
recognized by $ \mathbb{Z}^k $-automata. 

\begin{fact}\textup{\cite{CEO06}}
	$ \mathfrak{L}(\mathbb{Z}^k) \subsetneq
	\mathfrak{L}(\mathbb{Z}^{k+1})$ for $ k \geq 1 $.
\end{fact}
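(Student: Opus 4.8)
The plan is to prove the two directions separately. The inclusion $\mathfrak{L}(\mathbb{Z}^k) \subseteq \mathfrak{L}(\mathbb{Z}^{k+1})$ is immediate: $\mathbb{Z}^k$ is isomorphic to the subgroup $\mathbb{Z}^k \times \{0\}$ of $\mathbb{Z}^{k+1}$, so any $\mathbb{Z}^k$-automaton can be read verbatim as a $\mathbb{Z}^{k+1}$-automaton that never alters its last coordinate; its register reaches $0 \in \mathbb{Z}^{k+1}$ exactly when the original reaches $0 \in \mathbb{Z}^k$. The entire content of the statement is therefore the strictness, for which I would exhibit a witness language in $\mathfrak{L}(\mathbb{Z}^{k+1}) \setminus \mathfrak{L}(\mathbb{Z}^k)$.

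For the witness I would take the bounded counting language $L_{k+1} = \{ a_1^{n} a_2^{n} \cdots a_{k+2}^{n} \mid n \geq 0 \}$ over the $(k+2)$-letter alphabet $\{a_1, \dots, a_{k+2}\}$. Membership in $\mathfrak{L}(\mathbb{Z}^{k+1})$ can be checked directly: using the coordinate vectors $e_1, \dots, e_{k+1}$, the automaton adds $e_1$ on each $a_1$, adds $e_j - e_{j-1}$ on each $a_j$ for $2 \le j \le k+1$, and adds $-e_{k+1}$ on each $a_{k+2}$, while the finite control enforces the block order $a_1^* a_2^* \cdots a_{k+2}^*$. The $j$-th register coordinate then equals $n_j - n_{j+1}$ (where $n_i$ is the exponent of $a_i$), so the register returns to $0$ precisely when $n_1 = n_2 = \cdots = n_{k+2}$. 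Hence $L_{k+1} \in \mathfrak{L}(\mathbb{Z}^{k+1})$.

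The hard direction is the lower bound $L_{k+1} \notin \mathfrak{L}(\mathbb{Z}^k)$, which I would attack with a pumping argument tailored to blind $k$-counter automata (recall that a $\mathbb{Z}^k$-automaton is exactly such a machine). Assume for contradiction that some $\mathbb{Z}^k$-automaton $\mathcal{E}$ with state set $Q$ recognizes $L_{k+1}$. For $n \gg |Q|$ fix an accepting path on $a_1^n \cdots a_{k+2}^n$. Within the portion of the path reading the $i$-th block, states must repeat, so the path contains a cycle reading some $a_i^{\ell_i}$ with $\ell_i \ge 1$ and net register effect a fixed vector $c_i \in \mathbb{Z}^k$; inserting or deleting copies of this cycle changes the exponent $n_i$ by multiples of $\ell_i$ and the register by multiples of $c_i$, leaving the other blocks untouched. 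Since we obtain $k+2$ vectors $c_1, \dots, c_{k+2}$ living in the rank-$k$ group $\mathbb{Z}^k$, they are linearly dependent, yielding a nontrivial integer relation $\sum_i \lambda_i c_i = 0$. I would then realize this relation by adjusting the cycle multiplicities block-by-block, producing an accepting computation on a word $a_1^{m_1} \cdots a_{k+2}^{m_{k+2}}$ whose register is still $0$ but whose exponents $m_i$ are not all equal, contradicting that $\mathcal{E}$ recognizes $L_{k+1}$.

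The main obstacle is precisely this last step: converting the abstract linear dependence among the $c_i$ into a legitimate modification of the computation. The difficulties are that the $\lambda_i$ may be negative (forcing a baseline of extra repetitions so that all cycle multiplicities stay nonnegative), that the lengths $\ell_i$ must be reconciled with the relation so the net register change is genuinely $0$ rather than merely in the span, and that the resulting exponent vector must be provably off the diagonal $m_1 = \cdots = m_{k+2}$. Formalizing this cleanly amounts to a pumping lemma for $\mathbb{Z}^k$-automata on bounded languages, analogous to the bounded-semilinear machinery recorded in Fact~\ref{fact: bounded}: one packages the accepted exponent set as a semilinear set whose periods are forced, by the rank-$k$ register, to admit such an off-diagonal shift. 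This reduction of the hierarchy theorem to a dimension count in $\mathbb{Z}^k$ is the technical heart of the proof.
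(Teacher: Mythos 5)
Your decomposition is the right one, and two of its three pieces are sound: the inclusion $\mathfrak{L}(\mathbb{Z}^k) \subseteq \mathfrak{L}(\mathbb{Z}^{k+1})$ via the embedding $\mathbb{Z}^k \simeq \mathbb{Z}^k \times \{0\}$ is correct, and so is your $\mathbb{Z}^{k+1}$-automaton for the witness $L_{k+1}=\{a_1^n\cdots a_{k+2}^n\}$. Be aware, though, that the paper itself gives no proof of this Fact: it is imported from \cite{CEO06}, where the strictness is obtained by a different route, namely with the word problem $W(\mathbb{Z}^{k+1})$ as witness and a group-theoretic lower bound (the word problem of a finitely generated Abelian group $H$ lies in $\mathfrak{L}(G)$ only if $H$ has a finite-index subgroup embedding into $G$; every finite-index subgroup of $\mathbb{Z}^{k+1}$ is again isomorphic to $\mathbb{Z}^{k+1}$, which does not embed in $\mathbb{Z}^k$). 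So your combinatorial route must stand entirely on its own, and its third piece --- the lower bound, which is the whole content of the statement --- is not established by what you wrote.

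The failure point is exactly where you flag it, and it is not a matter of routine formalization. From a single accepting computation your pigeonhole yields \emph{one} cycle per block, with effects $c_1,\dots,c_{k+2}\in\mathbb{Z}^k$; the dependence $\sum_i \lambda_i c_i=0$ then generically has coefficients of both signs, and a negative $\lambda_i$ can only be realized by deleting $|\lambda_i|$ copies of cycle $i$ --- copies you do not have. You cannot rescale a relation integrally, and the natural repair of inserting a uniform baseline of $N$ copies of every cycle shifts the register by $N\sum_i c_i \neq 0$, so the modified computation is no longer accepting. Manufacturing many disjoint \emph{identical} cycles per block is also not immediate, because $\mathbb{Z}^k$-automata (i.e., 1N$k$BCAs) have $\varepsilon$-transitions, so the cycles between repeated states may have unbounded, pairwise distinct effects. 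Separately, the off-diagonal requirement needs a positive argument (the relation lattice has rank $\geq 2$ while diagonal-preserving relations form a rank-$\leq 1$ sublattice), which you only gesture at. Finally, your hope of packaging all this via the ``bounded-semilinear machinery'' of Fact~\ref{fact: bounded} cannot work as stated: semilinearity of the exponent set is dimension-blind (the diagonal $\{(n,\dots,n)\}$ is itself semilinear, and Fact~\ref{fact: bounded} applies uniformly to all of $\mathfrak{L}(\mathbb{Q}^+) \supseteq \bigcup_k \mathfrak{L}(\mathbb{Z}^k)$), so it can never separate $k$ from $k+1$ counters. A genuine completion has to work with the full semilinear set of (exponent, register) pairs --- e.g., intersect with $a_1^*\cdots a_{k+2}^*$ so that every cycle reads symbols of a single block, extract an increasing chain of pumping vectors, show the whole relation lattice $\{d : \sum_t d_t z_t = 0\}$ maps into the diagonal, and derive a contradiction by a rank count ($k+1$ versus $k$) modulo the diagonal. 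That lattice-rank argument (or the word-problem argument of \cite{CEO06}) is the missing heart of the proof; single-computation surgery of the kind you describe cannot reach it.
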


Let us mention that the class of context-free languages and the class
of languages recognized by nondeterministic blind counter automata are
incomparable.

\begin{fact}\label{fact: cfzn}
	$\mathsf{CF}$ and $\mathfrak{L}(\mathbb{Z}^k)$ are incomparable for
	all $ k \geq 2 $.		
\end{fact}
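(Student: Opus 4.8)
The plan is to prove incomparability by exhibiting two separating witnesses, one on each side. For $\mathfrak{L}(\mathbb{Z}^k)\not\subseteq\mathsf{CF}$ I would use $L_1=\{a^nb^nc^n : n\geq 0\}$, and for $\mathsf{CF}\not\subseteq\mathfrak{L}(\mathbb{Z}^k)$ I would use the even-length palindromes $L_2=\{uu^r : u\in\{a,b\}^*\}$.

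For the first direction I would build a $\mathbb{Z}^2$-automaton for $L_1$: use the finite control to force the input into the regular shape $a^*b^*c^*$, add the vector $(1,0)$ to the register on each $a$, add $(-1,1)$ on each $b$, add $(0,-1)$ on each $c$, and accept exactly when the control is in a final state and the register has returned to $(0,0)$. Returning to the identity forces $\#_a=\#_b$ and $\#_b=\#_c$, so the machine recognizes $L_1$; since the $\mathbb{Z}^k$-classes form a strict hierarchy (so in particular $\mathfrak{L}(\mathbb{Z}^2)\subseteq\mathfrak{L}(\mathbb{Z}^k)$ for every $k\geq 2$), this gives $L_1\in\mathfrak{L}(\mathbb{Z}^k)$. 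That $L_1\notin\mathsf{CF}$ is the textbook application of the context-free pumping lemma. Since $L_2$ is recognized by a pushdown automaton that pushes the first half and matches it against the reversal, the remaining work is entirely to show $L_2\notin\mathfrak{L}(\mathbb{Z}^k)$ for \emph{every} $k$.

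For that nonmembership I would argue by a crossing/splicing step. Suppose a $\mathbb{Z}^k$-automaton $M$ with $s$ states recognized $L_2$. For each $u\in\{a,b\}^m$ fix an accepting computation on $uu^r$ and record the configuration $(q,\mathbf{v})$ (control state together with register value) reached immediately after the first $m$ symbols. Because $M$ is blind, the register never influences which transitions are available, and over a length-$m$ prefix each coordinate of $\mathbf{v}$ changes only boundedly, so the number of reachable midpoint configurations is polynomial in $m$, while there are $2^m$ choices of $u$. Pigeonhole then supplies, for large $m$, two distinct $u\neq u'$ in $\{a,b\}^m$ with identical midpoint configurations $(q,\mathbf{v})$. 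Blindness guarantees the register value after the prefix equals $\mathbf{v}$ in both runs, and acceptance forces the suffix contribution to be $-\mathbf{v}$ in each; hence I can graft the prefix run of $uu^r$ onto the suffix run of $u'u'^r$. The states agree at the splice and the register telescopes back to $0$, so $M$ accepts $uu'^r$, which forces $uu'^r\in L_2$ and therefore $u=u'$, a contradiction.

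The step I expect to be the main obstacle is making that configuration count genuinely polynomial. If $M$ is permitted $\varepsilon$-moves, an $\varepsilon$-cycle with nonzero register effect could inflate $\mathbf{v}$ before the midpoint, destroying the polynomial bound. I would therefore first establish that accepting computations may be taken with register magnitudes bounded linearly in the input length, equivalently reducing to the $\varepsilon$-free (real-time) case, and this normalization is the delicate point on which the whole argument rests.
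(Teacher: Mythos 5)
Your first direction is correct and coincides with the paper's: $\{a^nb^nc^n \mid n\geq 0\}$ is recognized by a $\mathbb{Z}^2$-automaton (hence by a $\mathbb{Z}^k$-automaton for every $k\geq 2$) and is not context-free. For the converse direction the paper takes a different and much shorter route: it uses $\{a^nb^n\}^*$ as the context-free witness and simply cites Greibach \cite{Gr78} for the fact that this language is not recognized by any $\mathbb{Z}^k$-automaton. Your palindrome witness together with the crossing/splicing argument would be an attractive, self-contained alternative --- the splice itself is sound: blindness makes the grafted run valid, the register telescopes to $0$, and $uu'^r\in L_2$ forces $u=u'$ --- but the proposal is incomplete at exactly the point you identify.

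The gap is the configuration count. In this paper a $\mathbb{Z}^k$-automaton is an extended finite automaton, so its transition function has domain $Q\times\Sigma_\varepsilon$ and $\varepsilon$-moves are allowed. Consequently the number of transitions executed before the midpoint of $uu^r$ is not bounded by $m$, and there is no a priori bound --- linear or polynomial --- on the register value at the midpoint: an $\varepsilon$-cycle of nonzero effect can pump the register arbitrarily high, to be cancelled by other cycles only after the midpoint. So the claim ``the number of reachable midpoint configurations is polynomial in $m$'' is unjustified, and with it the pigeonhole step. You acknowledge this and promise a normalization (accepting computations with registers bounded linearly in the input length, i.e.\ reduction to the real-time case), but you never prove it, and it is not a routine preprocessing step: it amounts to the theorem that blind multicounter automata admit (quasi-)real-time normal forms, equivalently that accepting paths can be taken of polynomial length, which requires a cycle-decomposition argument plus small-solution bounds for integer linear systems (or a citation to Greibach's work). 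As written, your argument establishes incomparability only for real-time $\mathbb{Z}^k$-automata; to cover the paper's model you must either prove that lemma or cite it --- at which point citing \cite{Gr78} directly, as the paper does, is the shorter path.
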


\begin{proof} Consider the language ${L}=\{a^nb^n| n\geq 0\}$ which is a
	context-free language. Since context-free languages are closed under star, 
	${L}^*$ is a context-free language whereas it cannot be recognized 
	by any $\mathbb{Z}^k$-automaton for all $ k \geq 1 $ by
	\cite{Gr78}. On the other hand, the non-context-free language 
	${L}'=\{a^nb^nc^n|n \geq 0\}$ can be recognized by a
	$\mathbb{Z}^2$-automaton.
\end{proof}

\subsection{Automata on Groups of $ 2\times2 $ and $ 3\times3 $ Matrices}\label{Section: 23matrices}
Let $ {G_2} $ be the group generated by the matrices
\[
M_{a}=
\mymatrix{rr}{1&2\\
	0&1\\}
~~~\mbox{and}~~~
M_{b}=
\mymatrix{rr}{1&0\\
	2&1\\}.
\]
There exists an isomorphism $ \varphi $ from $\mathbf{F}_2 $ onto
${G_2} $ by \cite{KM79}, meaning that the group $ G_2 $ is isomorphic to $ \mathbf{F}_2	 $. Note that $ M_a $ and $ M_b $ are
integer matrices with determinant 1, which proves that $ \mathbf{F}_2
$ is a subgroup of $ SL(2,\mathbb{Z}) $.

Now the question is whether $ \mathfrak{L}(GL(2,\mathbb{Z}))$ and $
\mathfrak{L}(SL(2,\mathbb{Z}))$ correspond to larger classes of
languages than the class of context-free languages. We are going to
use the following fact to prove that the answer is negative.

\begin{fact}\textup{\cite{Co05}}  \label{fact: finite}
	Suppose $G$ is a finitely generated group and $H$ is a subgroup of
	finite index. Then $\mathfrak{L}(G) = \mathfrak{L}(H)$.
\end{fact}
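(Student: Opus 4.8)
The plan is to prove the two inclusions $\mathfrak{L}(H) \subseteq \mathfrak{L}(G)$ and $\mathfrak{L}(G) \subseteq \mathfrak{L}(H)$ separately. The first is immediate and does not even use the finite index hypothesis: since $H$ is a subgroup of $G$, every $H$-automaton is literally a $G$-automaton whose register happens to take values in $H$, and because $H$ and $G$ share the same neutral element $e$, the acceptance condition ``register $= e$'' is identical in both. Hence any language over $\Sigma$ recognized by an $H$-automaton is recognized by the same machine read as a $G$-automaton.

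For the nontrivial inclusion $\mathfrak{L}(G) \subseteq \mathfrak{L}(H)$, I would exploit the finite coset structure. Write $m = [G:H]$ and fix right coset representatives $t_1 = e, t_2, \dots, t_m$ so that $G$ is the disjoint union of the cosets $H t_1, \dots, H t_m$. For every $g \in G$ and index $i$, the element $t_i \circ g$ lies in a unique coset, giving unique $\tau(i,g) \in \{1,\dots,m\}$ and $\eta(i,g) \in H$ with $t_i \circ g = \eta(i,g) \circ t_{\tau(i,g)}$. Given a $G$-automaton $\mathcal{E} = (Q, \Sigma, G, \delta, q_1, Q_a)$, only finitely many group elements $g$ appear among its transitions, so these quantities are needed for finitely many arguments and can be precomputed. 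I would then build an $H$-automaton $\mathcal{E}'$ whose state set is $Q \times \{1, \dots, m\}$, whose initial state is $(q_1, 1)$, and whose transitions replace each rule $(q',g) \in \delta(q,\sigma)$ by $((q', \tau(i,g)), \eta(i,g)) \in \delta'((q,i), \sigma)$. Intuitively, the second state coordinate records which coset the running $G$-register value currently occupies, while the $H$-register of $\mathcal{E}'$ accumulates the corresponding $H$-part.

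The core of the argument, and the step I expect to require the most care, is the simulation invariant: after reading any prefix along a given computation, if $\mathcal{E}$ has $G$-register value $x$ and $\mathcal{E}'$ is in state $(q,i)$ with $H$-register value $h$, then $x = h \circ t_i$. This holds initially since $x = e = e \circ t_1$, and it is preserved by each transition because $x \circ g = h \circ t_i \circ g = (h \circ \eta(i,g)) \circ t_{\tau(i,g)}$, matching exactly the updated register $h \circ \eta(i,g) \in H$ and coset index $\tau(i,g)$. Granting the invariant, $\mathcal{E}$ accepts iff it reaches some $q \in Q_a$ with $x = e$; by uniqueness of the coset decomposition and $t_1 = e$, the condition $x = e$ is equivalent to $h = e$ together with $i = 1$. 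Therefore taking the accept states of $\mathcal{E}'$ to be $Q_a \times \{1\}$ makes its native acceptance condition (reaching an accept state with $H$-register $= e$) coincide precisely with that of $\mathcal{E}$. Since $\varepsilon$-transitions are permitted in both models, the case $\sigma = \varepsilon$ needs no separate treatment. Finiteness of the index keeps $Q \times \{1,\dots,m\}$ finite, which is exactly what makes $\mathcal{E}'$ a legitimate $H$-automaton; this finite-index hypothesis is the crucial ingredient, while finite generation of $G$ ensures (via the Schreier index formula) that $H$ is itself finitely generated, keeping the $H$-automaton framework on the same footing as the $G$-automaton one.
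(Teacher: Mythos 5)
Your proof is correct, and it takes a genuinely different route from the one the paper relies on. The paper treats this statement as a citation to Corson and, where it needs the construction explicitly (Lemma~\ref{lem: gh}), it argues through the rational-subset characterization of Fact~\ref{fact: corson}: a language $L\in\mathfrak{L}(G)$ is presented as $\{w \mid (w,1)\in R\}$ for a rational $R\subseteq \Sigma^*\times G$, the elements of $G$ are rewritten over generators, the finite-index hypothesis is invoked only to conclude $W(G)\in\mathfrak{L}(H)$, and then closure of rational relations under composition (Gilman's Theorem 5.3) produces the $H$-automaton. Your argument instead inlines the coset bookkeeping directly into the machine: states $Q\times\{1,\dots,m\}$ record the right coset $Ht_i$ of the running register value, the $H$-register accumulates the $H$-part via $t_i\circ g=\eta(i,g)\circ t_{\tau(i,g)}$, and the invariant $x=h\circ t_i$ together with disjointness of cosets makes $Q_a\times\{1\}$ with $h=e$ exactly the right acceptance condition; this is essentially the idea hiding inside Corson's ``$W(G)\in\mathfrak{L}(H)$'' lemma, applied once and for all at the level of the $G$-automaton. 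What each approach buys: yours is self-contained and elementary, makes the cost of the reduction explicit (state blow-up by the factor $[G:H]$), and in fact never uses finite generation of $G$ --- the coset functions $\tau,\eta$ are needed only on the finitely many labels of $\mathcal{E}$ --- whereas the paper's route is modular, reusing general machinery (rational subsets and their compositions) that the thesis also needs elsewhere, e.g.\ to make the conversion effective in the decidability arguments of Section~\ref{sec: efadecision}. One small remark: your appeal to Schreier's lemma to keep $H$ finitely generated is harmless but not needed for your construction; it only matters if one insists that the ambient group of an automaton be finitely generated as in the statement of the Fact.
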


Now we are ready to state our theorem.

\begin{thm}\label{theorem:gl}
	$ \mathsf{CF} =\mathfrak{L}(\mathbf{F}_2) = \mathfrak{L}(SL(2,\mathbb{Z})) = \mathfrak{L}(GL(2,\mathbb{Z}))$. 
\end{thm}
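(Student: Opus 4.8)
The plan is to establish all three equalities by collapsing each unknown class onto the already-proved identity $\mathfrak{L}(\mathbf{F}_2) = \mathsf{CF}$ (Fact~\ref{fact: cf}), using the finite-index transfer principle of Fact~\ref{fact: finite} as the sole engine. Every group in the statement is finitely generated, so Fact~\ref{fact: finite} applies the moment I exhibit a subgroup of finite index. The skeleton is the chain
\[
\mathfrak{L}(\mathbf{F}_2) = \mathfrak{L}(SL(2,\mathbb{Z})) = \mathfrak{L}(GL(2,\mathbb{Z})),
\]
where the first equality rests on a free subgroup of finite index sitting inside $SL(2,\mathbb{Z})$, and the second on the fact that $SL(2,\mathbb{Z})$ has finite index in $GL(2,\mathbb{Z})$.

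The $GL$-versus-$SL$ step is routine. The determinant is a surjective homomorphism $\det : GL(2,\mathbb{Z}) \to \{+1,-1\}$ onto the two-element multiplicative group, with kernel exactly $SL(2,\mathbb{Z})$; hence $SL(2,\mathbb{Z})$ has index $2$ in $GL(2,\mathbb{Z})$. As $GL(2,\mathbb{Z})$ is finitely generated, Fact~\ref{fact: finite} yields $\mathfrak{L}(GL(2,\mathbb{Z})) = \mathfrak{L}(SL(2,\mathbb{Z}))$ at once.

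The heart of the argument is the first equality. Here I would reuse the group $G_2 = \langle M_a, M_b \rangle$ already introduced: by \cite{KM79} it is isomorphic to $\mathbf{F}_2$, so $\mathfrak{L}(G_2) = \mathfrak{L}(\mathbf{F}_2) = \mathsf{CF}$. The structural input I need is that this free subgroup (the Sanov subgroup) has \emph{finite} index in $SL(2,\mathbb{Z})$. I would obtain this by reducing modulo $2$: the principal congruence subgroup $\Gamma(2) = \ker\big(SL(2,\mathbb{Z}) \to SL(2,\mathbb{Z}/2\mathbb{Z})\big)$ has index $|SL(2,\mathbb{Z}/2\mathbb{Z})| = 6$, while $\{\pm I\}$ together with $G_2$ generates $\Gamma(2)$ with $G_2 \cap \{\pm I\} = \{I\}$ (since the free group $G_2$ is torsion-free). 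Thus $G_2$ has index $2$ in $\Gamma(2)$ and index $12$ in $SL(2,\mathbb{Z})$. Applying Fact~\ref{fact: finite} to the finitely generated group $SL(2,\mathbb{Z})$ and its finite-index subgroup $G_2$ gives $\mathfrak{L}(SL(2,\mathbb{Z})) = \mathfrak{L}(G_2) = \mathsf{CF}$, completing the chain.

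The main obstacle is exactly the finiteness of this index: the isomorphism $G_2 \cong \mathbf{F}_2$ supplied earlier only records that $G_2$ is a free subgroup, not that it is large, and Fact~\ref{fact: finite} is powerless without control of the index. Establishing $[SL(2,\mathbb{Z}) : G_2] < \infty$ is the one spot where genuine arithmetic information about $SL(2,\mathbb{Z})$ (its congruence quotient mod $2$, equivalently the virtually free structure $PSL(2,\mathbb{Z}) \cong \mathbb{Z}/2 * \mathbb{Z}/3$) must enter; everything else is formal manipulation. If one prefers to sidestep the exact index, it suffices to cite the classical fact that $SL(2,\mathbb{Z})$ is virtually free, i.e. contains a free subgroup $F$ of finite index and rank at least $2$: any such $F$ satisfies $\mathfrak{L}(F) = \mathsf{CF}$, because a finite-index subgroup of $\mathbf{F}_2$ is again free (Fact~\ref{fact: nielsen}) of some rank $r \geq 2$, and all the classes $\mathfrak{L}(\mathbf{F}_r)$ with $r \geq 2$ coincide with $\mathsf{CF}$ by a further application of Fact~\ref{fact: finite}.
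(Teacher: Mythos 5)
Your proof is correct and takes essentially the same route as the paper: both apply Fact~\ref{fact: finite} twice along the chain $\mathbf{F}_2 \leq SL(2,\mathbb{Z}) \leq GL(2,\mathbb{Z})$ (with indices $12$ and $2$, respectively) and then invoke $\mathfrak{L}(\mathbf{F}_2)=\mathsf{CF}$ from Fact~\ref{fact: cf}. The only difference is that where the paper simply cites \cite{BO08} for $[SL(2,\mathbb{Z}):\mathbf{F}_2]=12$, you derive this index yourself via the principal congruence subgroup $\Gamma(2)$, which is a correct, self-contained justification of the same fact.
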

\begin{proof} We are going to use Fact \ref{fact: finite} to prove the result. 
	Since $ SL(2,\mathbb{Z}) $ has index 2 in $ GL(2,\mathbb{Z})$
	and $GL(2,\mathbb{Z})$ is finitely generated, $\mathfrak{L}(GL(2,\mathbb{Z})) =
	\mathfrak{L}(SL(2,\mathbb{Z}))$. Since $\mathbf{F}_2$ has index 12 in 
	$SL(2,\mathbb{Z})$ \cite{BO08} and $SL(2,\mathbb{Z})$ is finitely generated, $
	\mathfrak{L}(SL(2,\mathbb{Z})) = \mathfrak{L}(\mathbf{F}_2)$ which is equal to 
	the family of context-free languages by Fact \ref{fact: cf}.
\end{proof}

In the next theorem, we prove that allowing the register to be multiplied by integer matrices whose determinants are not $ \pm 1 $ does not increase the computational power.

\begin{thm} \label{thm: M2Z}
	$ \mathfrak{L}(M_2(\mathbb{Z})) = \mathsf{CF} $.
\end{thm}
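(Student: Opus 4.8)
The plan is to establish two inclusions and then invoke Theorem \ref{theorem:gl}. The easy direction is $\mathsf{CF} \subseteq \mathfrak{L}(M_2(\mathbb{Z}))$, and the substantive direction is $\mathfrak{L}(M_2(\mathbb{Z})) \subseteq \mathsf{CF}$. For the lower bound I would simply observe that $GL(2,\mathbb{Z})$ is a submonoid of $M_2(\mathbb{Z})$ sharing the same identity element, namely the identity matrix $I$. Hence any $GL(2,\mathbb{Z})$-automaton is literally an $M_2(\mathbb{Z})$-automaton that happens to write only invertible matrices into its register, and the acceptance condition (register equal to $I$) is unchanged. Combined with Theorem \ref{theorem:gl} this gives $\mathsf{CF} = \mathfrak{L}(GL(2,\mathbb{Z})) \subseteq \mathfrak{L}(M_2(\mathbb{Z}))$.

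The heart of the argument is the reverse inclusion, and the key idea is a determinant obstruction. Take any $M_2(\mathbb{Z})$-automaton $\mathcal{E}$ and any accepting computation on an input $w$: along this computation the register is multiplied by matrices $A_1, A_2, \dots, A_n \in M_2(\mathbb{Z})$, and acceptance forces $A_1 A_2 \cdots A_n = I$. Taking determinants and using multiplicativity, $\prod_{i=1}^{n} \det(A_i) = \det(I) = 1$. Since each $\det(A_i)$ is an integer and a product of integers equals $1$ only when every factor is a unit of $\mathbb{Z}$, we conclude $\det(A_i) = \pm 1$ for every $i$; that is, every matrix appearing in an accepting computation already lies in $GL(2,\mathbb{Z})$.

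This observation lets me discard the ``useless'' part of the machine. I would delete every transition of $\mathcal{E}$ whose matrix label has determinant different from $\pm 1$ — in particular every singular matrix and every matrix with $|\det| \geq 2$. No accepting computation uses such a transition, so the recognized language is unchanged; but the resulting automaton writes only elements of $GL(2,\mathbb{Z})$ into its register and is therefore a genuine $GL(2,\mathbb{Z})$-automaton. Hence $\mathfrak{L}(M_2(\mathbb{Z})) \subseteq \mathfrak{L}(GL(2,\mathbb{Z}))$, and the two inclusions together with Theorem \ref{theorem:gl} yield $\mathfrak{L}(M_2(\mathbb{Z})) = \mathsf{CF}$.

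I do not expect a serious obstacle, since the work is essentially done by Theorem \ref{theorem:gl}; the one genuinely delicate point is that the argument relies on the arithmetic of $\mathbb{Z}$ and does not generalize to arbitrary monoids. In a general monoid a product of non-units can equal the identity (as happens, for instance, in the polycyclic monoid $P_2$), so the reduction to the group of units is special to the matrix setting and hinges precisely on the multiplicativity of the determinant and the fact that the only integer units are $\pm 1$. The remaining care needed is bookkeeping: verifying that removing the high-determinant transitions preserves the exact set of accepting computations, not merely the language up to some transformation.
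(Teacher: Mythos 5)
Your proof is correct and follows essentially the same route as the paper's: the determinant obstruction (integer determinants multiplying to $\det(I)=1$ force every matrix in an accepting computation to have determinant $\pm 1$), deletion of the offending transitions to obtain a $GL(2,\mathbb{Z})$-automaton recognizing the same language, and an appeal to Theorem \ref{theorem:gl}. The only difference is that you spell out the easy inclusion $\mathsf{CF} \subseteq \mathfrak{L}(M_2(\mathbb{Z}))$ explicitly, which the paper leaves implicit.
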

\begin{proof}
Suppose that an $ M_2(\mathbb{Z}) $-automaton $ \mathcal{E} $ is given. When $ \mathcal{E} $ processes an input string, its register is initialized by the identity matrix and multiplied by matrices from $ M_2(\mathbb{Z}) $. Suppose that in a successful computation leading to acceptance, the register is multiplied by some singular matrix whose determinant is 0. Then the product of the matrices multiplied with the register will have determinant 0 and the register cannot be equal to the identity matrix again. Similarly, if the register is multiplied with a nonsingular matrix whose determinant is not equal to $ \pm 1$, then the determinant of the product of the matrices multiplied with the register cannot be equal to 1 again, since $ M_2(\mathbb{Z}) $ does not contain any matrix with non-integer determinants. Any such edges labeled by a matrix whose determinant is not equal to $ \pm 1 $ can be removed from $ \mathcal{E} $ to obtain a $ GL(2,\mathbb{Z}) $-automaton, without changing the accepted language. Since $ \mathfrak{L} (GL(2,\mathbb{Z})) = \mathsf{CF}$, the result follows by Theorem \ref{theorem:gl}. 
\end{proof}

Let us now investigate the group $SL(3,\mathbb{Z})$, the group of 
$3 \times 3 $ integer matrices with determinant~$1$. We start by looking at an important subgroup of $SL(3,\mathbb{Z})$,
the discrete Heisenberg group. The discrete Heisenberg group
$\mathbf{H}$ is defined as $\langle a,b | ab=bac,ac=ca,bc=cb
\rangle$,
where
$c=a^{-1}b^{-1}ab$ is called the \textit{commutator} of $a$ and $b$.
$$ a=
\mymatrix{rrr}{1&1&0\\
	0&1&0\\
	0&0&1}
~~~
b=
\mymatrix{rrr}{1&0&0\\
	0&1&1\\
	0&0&1}
~~~
c=
\mymatrix{rrr}{1&0&1\\
	0&1&0\\
	0&0&1}
$$
Any element $g \in \mathbf{H}$ can be written uniquely as $b^ja^ic^k$.
$$ g=
\mymatrix{rrr}{1&i&k\\
	0&1&j\\
	0&0&1}
= b^ja^ic^k
$$

It is shown in \cite{Re10} that the languages
$\mathtt{MULT}=\{x^py^qz^{pq} | p,q\geq 0\}$,
$\mathtt{COMPOSITE}=\{x^{pq}| p,q >1\}$ and
$\mathtt{MULTIPLE}=\{x^py^{pn}|p \in \mathbb{N}\}$ can be recognized
by $\mathbf{H}$-automata, using the special multiplication property
of the group.

Correcting a small error in \cite{Re10}, we rewrite the multiplication
property of the elements
of $\mathbf{H}$.
\begin{equation*}\label{eq0}
(b^xa^yc^z)(b^{x'}a^{y'}c^{z'})= b^{x+x'}a^{y+y'}c^{z+z'+yx'} 
\end{equation*}
We can make the following observation using the fact that
$\mathfrak{L}(\mathbf{H})$ contains non-context-free languages.

\begin{thm}\label{thm: sl2z3z}
	$\mathfrak{L}(SL(2,\mathbb{Z})) \subsetneq \mathfrak{L}(SL(3,\mathbb{Z}))$.
\end{thm}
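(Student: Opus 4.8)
The plan is to establish the strict inclusion $\mathfrak{L}(SL(2,\mathbb{Z})) \subsetneq \mathfrak{L}(SL(3,\mathbb{Z}))$ by separately proving the inclusion and the strictness.

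For the inclusion $\mathfrak{L}(SL(2,\mathbb{Z})) \subseteq \mathfrak{L}(SL(3,\mathbb{Z}))$, I would exhibit $SL(2,\mathbb{Z})$ as a subgroup of $SL(3,\mathbb{Z})$ and argue that an automaton over a subgroup can always be simulated by an automaton over the ambient group. Concretely, the embedding
\[
\mymatrix{rr}{\alpha&\beta\\\gamma&\delta} \longmapsto \mymatrix{rrr}{\alpha&\beta&0\\\gamma&\delta&0\\0&0&1}
\]
is an injective group homomorphism $SL(2,\mathbb{Z}) \to SL(3,\mathbb{Z})$, since the block-diagonal structure is preserved under multiplication and the determinant (equal to $1$ in the $2\times 2$ block) is carried over. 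Given any $SL(2,\mathbb{Z})$-automaton $\mathcal{E}$, I would build an $SL(3,\mathbb{Z})$-automaton by replacing every register element labeling a transition with its image under this embedding. The register of the new machine equals the identity of $SL(3,\mathbb{Z})$ exactly when the original register equals the identity of $SL(2,\mathbb{Z})$, because the embedding sends identity to identity and is injective; hence the two machines accept the same language.

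For the strictness, the key is the remark made just before the statement: $\mathfrak{L}(\mathbf{H})$ contains non-context-free languages, for instance $\mathtt{MULT}$, $\mathtt{COMPOSITE}$, or $\mathtt{MULTIPLE}$, which are recognized by $\mathbf{H}$-automata. Since the discrete Heisenberg group $\mathbf{H}$ is a subgroup of $SL(3,\mathbb{Z})$ (all three generators $a,b,c$ are displayed as $3\times 3$ integer matrices of determinant $1$), the same simulation argument as above gives $\mathfrak{L}(\mathbf{H}) \subseteq \mathfrak{L}(SL(3,\mathbb{Z}))$. Therefore $\mathfrak{L}(SL(3,\mathbb{Z}))$ contains a non-context-free language. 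On the other hand, by Theorem \ref{theorem:gl} we have $\mathfrak{L}(SL(2,\mathbb{Z})) = \mathsf{CF}$, which consists only of context-free languages. A witness language in $\mathfrak{L}(SL(3,\mathbb{Z})) \setminus \mathfrak{L}(SL(2,\mathbb{Z}))$ then exists, giving the strict inclusion.

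The routine parts are the two simulation arguments, which amount to checking that a subgroup embedding lifts automata faithfully; I expect these to be straightforward. The genuine content all rests on the previously cited fact that $\mathbf{H}$-automata recognize a non-context-free language together with the characterization $\mathfrak{L}(SL(2,\mathbb{Z})) = \mathsf{CF}$. The main obstacle, if any, is simply making sure the subgroup-simulation lemma is stated at the right level of generality (namely, that $H \leq G$ implies $\mathfrak{L}(H) \subseteq \mathfrak{L}(G)$), since this is the one reusable ingredient that both halves of the proof depend on; once that is in hand, the separation follows immediately by pointing to the Heisenberg witness.
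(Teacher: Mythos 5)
Your proposal is correct and follows essentially the same route as the paper: the inclusion comes from simulating an $SL(2,\mathbb{Z})$-automaton by an $SL(3,\mathbb{Z})$-automaton (you merely spell out the block-diagonal embedding the paper treats as obvious), and strictness comes from $\mathfrak{L}(SL(2,\mathbb{Z})) = \mathsf{CF}$ together with the non-context-free language $\mathtt{MULT}$ recognized by an $\mathbf{H}$-automaton, $\mathbf{H}$ being a subgroup of $SL(3,\mathbb{Z})$.
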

\begin{proof}
	It is obvious that an $SL(2,\mathbb{Z})$-automaton can be simulated by
	an $SL(3,\mathbb{Z})$-automaton.
	Note that $\mathfrak{L}(SL(2,\mathbb{Z}))$ is the family of
	context-free languages by Theorem \ref{theorem:gl}. Since
	$\mathfrak{L}(\mathbf{H}) \subseteq \mathfrak{L}(SL(3,\mathbb{Z}))$
	and the non-context-free language $\mathtt{MULT}=\{x^py^qz^{pq} | p,q\geq 0\}$ can be recognized by an $ \mathbf{H} $-automaton \cite{Re10}, the
	result follows.
\end{proof}

The following result
is a direct consequence of Fact \ref{fact: finite}.

\begin{thm}\label{thm: sl3zgl3z}
	$ \mathfrak{L}(SL(3,\mathbb{Z}))=\mathfrak{L}(GL(3,\mathbb{Z})) $.
\end{thm}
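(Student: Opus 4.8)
The plan is to apply Fact \ref{fact: finite} directly, exactly as was done for the $2 \times 2$ case in Theorem \ref{theorem:gl}, with $G = GL(3,\mathbb{Z})$ and $H = SL(3,\mathbb{Z})$. To invoke that fact I need to verify its two hypotheses: that $GL(3,\mathbb{Z})$ is finitely generated, and that $SL(3,\mathbb{Z})$ is a subgroup of finite index in $GL(3,\mathbb{Z})$. Once both are established, Fact \ref{fact: finite} immediately yields $\mathfrak{L}(GL(3,\mathbb{Z})) = \mathfrak{L}(SL(3,\mathbb{Z}))$, which is the claim.

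First I would dispose of the index computation, which is the only substantive point and is entirely elementary. Every matrix in $GL(3,\mathbb{Z})$ is an invertible integer matrix, so its determinant is a unit in $\mathbb{Z}$, i.e.\ $\pm 1$. Thus the determinant restricts to a group homomorphism $\det \colon GL(3,\mathbb{Z}) \to \{-1,+1\}$ whose image is the order-$2$ group $\{-1,+1\}$ and whose kernel is precisely $SL(3,\mathbb{Z})$. By the first isomorphism theorem the quotient has order $2$, so $SL(3,\mathbb{Z})$ has index $2$ in $GL(3,\mathbb{Z})$. This is the same reasoning used in Theorem \ref{theorem:gl}, simply with $3$ in place of $2$.

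The remaining hypothesis, that $GL(3,\mathbb{Z})$ is finitely generated, is a standard fact about general linear groups over $\mathbb{Z}$ and can be cited (the elementary and permutation matrices together with a single sign change generate it). With both hypotheses in hand, the proof is a one-line application of Fact \ref{fact: finite}. I do not anticipate any genuine obstacle here: the entire content of the theorem is the observation that passing from the special linear to the general linear group only enlarges the group by a finite-index factor, and Fact \ref{fact: finite} guarantees that finite-index extensions leave the recognized language class unchanged. The only care needed is to confirm the finiteness of the index, which the determinant argument settles cleanly.
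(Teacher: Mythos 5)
Your proposal is correct and takes essentially the same approach as the paper: the paper's proof is likewise a direct application of Fact \ref{fact: finite}, citing the finite generation of $GL(3,\mathbb{Z})$ and the finite index of $SL(3,\mathbb{Z})$ in it. Your determinant-homomorphism argument merely makes explicit the index-$2$ claim that the paper asserts without elaboration.
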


\begin{proof} Since $ GL(3,\mathbb{Z}) $ is a finitely generated group and $
	SL(3,\mathbb{Z}) $ has finite index in $ GL(3,\mathbb{Z}) $, the
	result follows by Fact \ref{fact: finite}.
\end{proof}

We have talked about the discrete Heisenberg group $ \textbf{H} $. Now let us look at a
subgroup of $\mathbf{H}$ generated by the matrices $B$ and $C$, which
we will call ${H_2}$.
$$ B=
\mymatrix{rrr}{1&0&0\\
	0&1&1\\
	0&0&1}
~~~
C=
\mymatrix{rrr}{1&0&1\\
	0&1&0\\
	0&0&1}
$$	

${H_2}=\langle B,C |BC=CB\rangle $ is a free Abelian group of
rank 2, and therefore it is isomorphic to $\mathbb{Z}^2$.

We conclude the following about the language recognition power of $
\mathbb{Z}^2$ and $\mathbf{H}$.

\begin{thm}\label{theorem: Z2H}
	$\mathfrak{L}(\mathbb{Z}^2) \subsetneq \mathfrak{L}(\mathbf{H})$.
\end{thm}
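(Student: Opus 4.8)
The plan is to prove the two halves of the claim separately: first the inclusion $\mathfrak{L}(\mathbb{Z}^2) \subseteq \mathfrak{L}(\mathbf{H})$, and then its properness by exhibiting a witnessing language.

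For the inclusion I would exploit the subgroup $H_2 = \langle B,C\rangle$ introduced just above the theorem, which is a free Abelian group of rank $2$ and hence isomorphic to $\mathbb{Z}^2$, and which sits inside $\mathbf{H}$. Since isomorphic groups yield the same language class (one simply relabels each register element through the isomorphism), we have $\mathfrak{L}(\mathbb{Z}^2) = \mathfrak{L}(H_2)$. Moreover, exactly as in the remark that an $SL(2,\mathbb{Z})$-automaton is trivially simulated by an $SL(3,\mathbb{Z})$-automaton, any $H_2$-automaton is simulated by an $\mathbf{H}$-automaton: keep the same states and transitions but interpret each register label as an element of $\mathbf{H}$. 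Because $H_2$ is a subgroup, a product of labels equals the identity of $H_2$ if and only if it equals the identity of $\mathbf{H}$, so the accepted language is unchanged. This gives $\mathfrak{L}(\mathbb{Z}^2) = \mathfrak{L}(H_2) \subseteq \mathfrak{L}(\mathbf{H})$.

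For properness I would separate the two classes with the language $\mathtt{MULT} = \{x^p y^q z^{pq} : p,q \geq 0\}$, which is recognized by an $\mathbf{H}$-automaton by \cite{Re10}. It then remains to show $\mathtt{MULT} \notin \mathfrak{L}(\mathbb{Z}^2)$. Here I would use that a $\mathbb{Z}^k$-automaton is exactly a one-way nondeterministic blind $k$-counter automaton, together with the fact that such blind multicounter languages have semilinear Parikh images \cite{Gr78}; hence every language in $\mathfrak{L}(\mathbb{Z}^k)$ is semilinear. The Parikh image $\phi(\mathtt{MULT}) = \{(p,q,pq) : p,q \geq 0\}$ is not semilinear: in any linear set contained in this set, the generic point is an affine function of the multipliers, so the product of the first two coordinates (a quadratic form in the multipliers) would have to coincide with the third coordinate (a linear form), which forces one of the first two coordinates to be constant throughout that linear set. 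A finite union of such linear sets therefore cannot cover all $(p,q,pq)$ in which both $p$ and $q$ are unbounded. Consequently $\mathtt{MULT} \notin \mathfrak{L}(\mathbb{Z}^k)$ for every $k$, in particular for $k=2$, and the inclusion is proper.

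The routine parts are the two simulations in the inclusion step; the crux is ruling out that $\{(p,q,pq)\}$ is a finite union of linear sets. I expect this to be the only step needing genuine care, and I would present the affine-versus-quadratic argument above (equivalently, one may invoke that the semilinear sets are precisely the Presburger-definable ones and that the graph of multiplication is not Presburger-definable).
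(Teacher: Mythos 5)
Your proof is correct, and its skeleton matches the paper's: the inclusion comes from $\mathbb{Z}^2 \cong H_2 \leq \mathbf{H}$ (the paper simply says ``$\mathbb{Z}^2$ is a subgroup of $\mathbf{H}$''), and properness comes from the same witness $\mathtt{MULT}$, recognized by an $\mathbf{H}$-automaton by \cite{Re10}. The only divergence is how $\mathtt{MULT} \notin \mathfrak{L}(\mathbb{Z}^2)$ is certified. The paper passes to the larger class $\mathfrak{L}(\mathbb{Q}^+) \supseteq \mathfrak{L}(\mathbb{Z}^k)$ and invokes Fact \ref{fact: bounded} (a bounded language is 1NFAMW-recognizable iff it is semilinear), leaving the non-semilinearity of $\mathtt{MULT}$ implicit. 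You instead stay inside $\mathfrak{L}(\mathbb{Z}^k)$, use its equivalence with blind $k$-counter automata together with Greibach's theorem \cite{Gr78} that such languages have semilinear Parikh images, and then actually prove that $\{(p,q,pq) : p,q \geq 0\}$ is not semilinear. Both routes rest on the same semilinearity obstruction; yours is more self-contained, since the crux --- that no finite union of linear sets can equal $\{(p,q,pq)\}$ --- is argued rather than absorbed into a cited fact, whereas the paper's is shorter because the bounded-language characterization packages that step away. Your affine-versus-quadratic argument is sound: nonnegativity of the period vectors forces all cross terms $v_{i,1}v_{j,2}$ to vanish, so on each linear piece either the first or the second coordinate is constant, and finitely many such pieces cannot cover the points $(N,N,N^2)$ for $N$ large; this is precisely the detail the paper glosses over.
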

\begin{proof}
	Since $\mathbb{Z}^2$ is a subgroup of $\mathbf{H}$,
	$\mathfrak{L}(\mathbb{Z}^2) \subseteq 
	\mathfrak{L}(\mathbf{H}) $ follows. The inclusion is proper since
	$\mathbf{H}$-automaton can recognize the language 
	$\mathtt{MULT}=\{x^py^qz^{pq} | p,q\geq 0\}$
	\cite{Re10}, whereas any bounded language in $ \mathfrak{L}(\mathbb{Q}^+) $ is semilinear by Fact \ref{fact: bounded}.
\end{proof}

Now let us move on to the discussion about matrix groups with rational entries. We will start with a special subgroup of $GL(2,\mathbb{Q})$.

For two integers $m$ and $n$, the \textit{Baumslag-Solitar group}
$BS(m,n)$ is defined as $BS(m,n)=\langle
a,b | ba^mb^{-1}=a^n\rangle$. We are going to focus on
$BS(1,2)=\langle a,b|bab^{-1}=a^2\rangle$.

Consider the matrix group $G_{BS}$ generated by the matrices
$$A=
\mymatrix{rr}{1&0\\
	-1 &1\\}
~~~\mbox{and}~~~
B=\mymatrix{rr}{1/2&0\\
	0 &1}
.$$

Consider the isomorphism $a \mapsto A$, $b \mapsto B$. The matrices
$A$ and $B$ satisfy the property $BAB^{-1} = A^2$,
\[ 
\mymatrix{rr}{1/2&0\\
	0 &1\\}
\mymatrix{rr}{1&0\\
	-1 &1\\}
\mymatrix{rr}{2&0\\
	0&1\\}
=
\mymatrix{rr}{1&0\\
	-2&1\\},  \]
and we conclude that $G_{BS}  $ is isomorphic to $BS(1,2)$.

We will prove that there exists a $ BS(1,2) $-automaton which
recognizes a non-context-free language.

\begin{thm}\label{theorem: upow}
	$ \mathfrak{L}(BS(1,2)) \nsubseteq \mathsf{CF} $.
\end{thm}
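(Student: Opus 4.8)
The plan is to exhibit a concrete non-context-free language together with a $BS(1,2)$-automaton that recognizes it. A convenient target is $L=\{x^ny^{2^n}\mid n\geq 0\}$ over the alphabet $\{x,y\}$. First I would record that $L$ is not context-free: since $L\subseteq x^*y^*$, its Parikh image is $\{(n,2^n)\mid n\geq 0\}$, which is not semilinear, so $L$ fails to be context-free (equivalently, one rules it out by a direct application of the pumping lemma for context-free languages).

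Next I would design the automaton using the generators $a,b$ of $BS(1,2)=\langle a,b\mid bab^{-1}=a^2\rangle$ together with the key identity $b^nab^{-n}=a^{2^n}$. The machine uses a short chain of states that forces the input to have the shape $x^\ast y^\ast$, and it multiplies the register as follows: an initial $\varepsilon$-transition multiplies by $a^{-1}$; each scanned $x$ multiplies by $b^{-1}$; each scanned $y$ multiplies by $a$; and finally, after the input is consumed, a self-loop on $\varepsilon$ multiplies by $b$ an arbitrary (nondeterministically chosen) number $j\geq 0$ of times before the machine enters its unique accept state. On input $x^ny^m$ the register therefore holds the element $a^{-1}b^{-n}a^mb^{\,j}$, and the string is accepted precisely when this product equals the identity $e$.

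The verification proceeds in two layers. Applying the homomorphism $\pi\colon BS(1,2)\to\mathbb{Z}$ determined by $a\mapsto 0$ and $b\mapsto 1$ (well defined since $\pi(bab^{-1})=0=\pi(a^2)$) to $a^{-1}b^{-n}a^mb^{\,j}$ yields $j-n$, so the register can equal $e$ only when the guessed exponent satisfies $j=n$. With $j=n$ the register equals $e$ iff $b^{-n}a^mb^{\,n}=a$, and conjugating by $b^n$ turns this into $a^m=b^nab^{-n}=a^{2^n}$, hence $m=2^n$ because $a$ has infinite order. Thus the accepted language is exactly $L$, and $\mathfrak{L}(BS(1,2))\nsubseteq\mathsf{CF}$ follows.

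The main obstacle is the blindness of the register: transitions may not test the register during the computation, so the ``level'' accumulated by the $b^{-1}$'s cannot be cancelled by reading it off; it must instead be removed by a guessed block $b^{\,j}$ whose correctness is confirmed only by the final equality test. The delicate point is therefore to certify that no wrong guess produces acceptance, which is exactly what the projection $\pi$ secures by pinning $j=n$ before the defining relation is invoked. As an alternative one could appeal to the word problem: since $BS(1,2)$ is not virtually free, $W(BS(1,2))$ is not context-free by the Muller--Schupp theorem, and it is recognized by a one-state $BS(1,2)$-automaton; I would nevertheless prefer the explicit construction above, as it stays within the tools developed in this chapter.
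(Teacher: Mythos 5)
Your proposal is correct, and it proves the theorem by the same high-level strategy as the paper --- exhibiting a witness automaton for a non-context-free language --- but the witness and, more importantly, the verification technique are genuinely different. The paper uses the \emph{unary} language $\mathtt{UPOW}=\{a^{2^n}\mid n\geq 0\}$ and works entirely inside the concrete matrix group $G_{BS}$: its machine runs an $\varepsilon$-loop on $B^{-1}A^{-1}$ \emph{before} reading (so the register becomes the matrix with entries $2^i$, $0$, $2^i-1$, $1$), multiplies by $A$ on each scanned symbol, and runs a final $\varepsilon$-loop on $B$; explicit matrix arithmetic shows the register returns to the identity iff the input length is $2^i$, and non-context-freeness is then immediate because every unary context-free language is regular. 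You instead take the binary witness $\{x^ny^{2^n}\mid n\geq 0\}$, certify non-context-freeness via Parikh's theorem, and verify the automaton abstractly in the presentation: the homomorphism $\pi\colon BS(1,2)\to\mathbb{Z}$ killing $a$ pins the number of final $\varepsilon$-loop traversals to $j=n$, after which the iterated relation $b^nab^{-n}=a^{2^n}$ and the infinite order of $a$ force $m=2^n$. Your route is algebraically cleaner and avoids all matrix computation, and the $\pi$-trick for controlling how often a blind $\varepsilon$-loop is used is a nice reusable device; its cost is that it leans on two group-theoretic facts (the iterated relation and torsion-freeness of $\langle a\rangle$) that the paper gets for free from arithmetic --- though both are instantly checked in the representation the chapter already provides, e.g. $A^m\neq I$ for $m\neq 0$. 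Two small points to tighten: spell out the state chain enforcing the $x^*y^*$ input shape, and justify (or cite the matrix representation for) the infinite order of $a$. Your closing alternative via Muller--Schupp is also sound but invokes far heavier machinery than either construction, so your preference for the explicit automaton is the right call.
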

\begin{proof}
	Let us construct a $BS(1,2)$-automaton $\mathcal{E}$ recognizing the
	language $\mathtt{UPOW}=\{a^{2^n}|n\geq 0\}$. The state diagram of 
	$ \mathcal{E} $ and the matrices are given in Figure \ref{fig:upow}. 
	Without scanning any input symbol, $\mathcal{E}$ multiplies its register 
	with the matrix $A_1$ successively. $\mathcal{E}$
	nondeterministically moves to the next state reading the first input
	symbol without modifying the register.
	After that point, $\mathcal{E}$ starts reading the string and
	multiplies its register with the matrix $A_2$
	for each scanned $a$. At some point, $\mathcal{E}$
	nondeterministically stops reading
	the rest of the string and multiplies its register with the element
	$A_3$. After successive multiplications
	with $A_3$, $\mathcal{E}$ nondeterministically decides to move to an
	accept state. 
	\begin{figure}[!htb]
		\centering
		\includegraphics[width=0.7\linewidth]{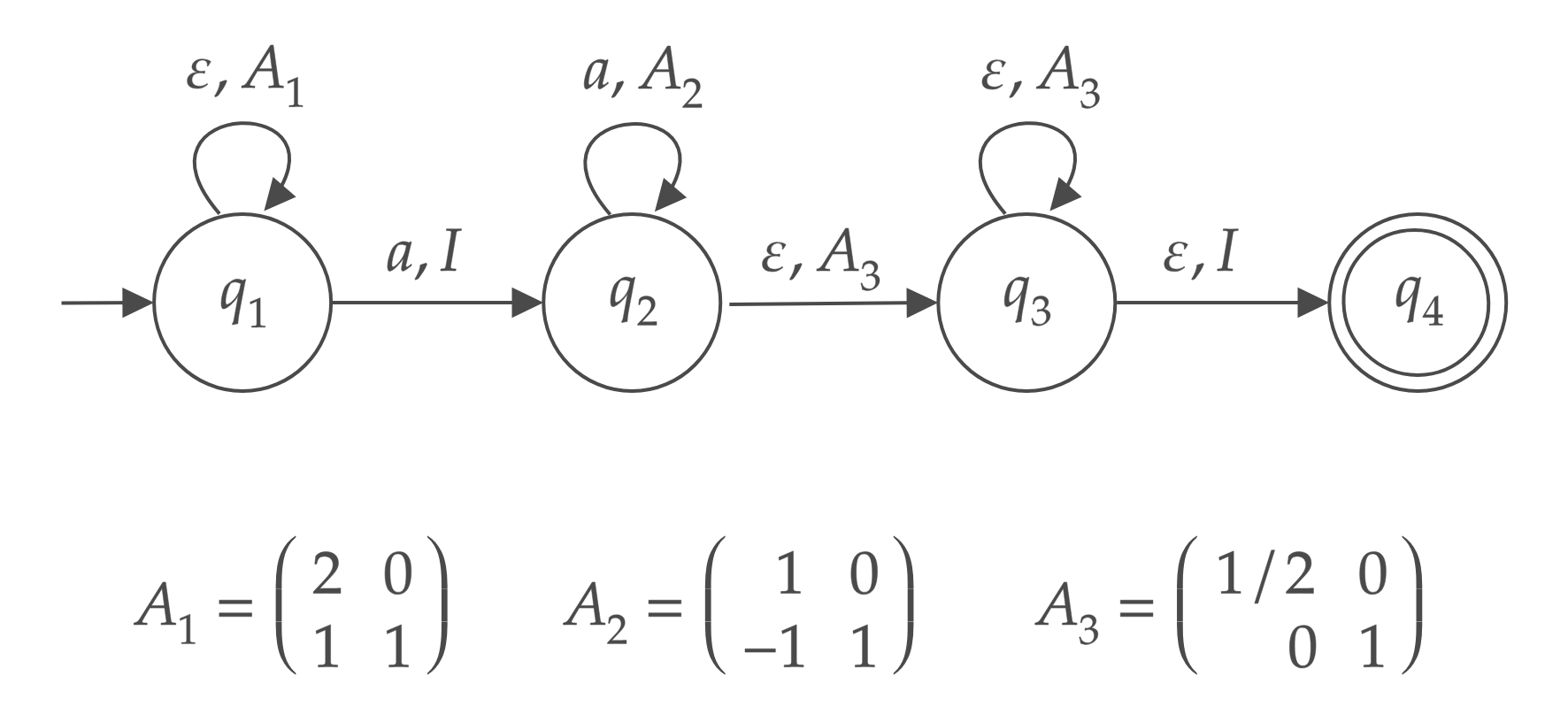}
		\caption{State transition diagram of $ \mathcal{E} $ recognizing $\mathtt{UPOW}$}
			\label{fig:upow}
			
	\end{figure}

	As a result of $ i $ multiplications with $ A_1 $, the register has the value 
	$$
	\mymatrix{rr}{	2^i&0\\
		2^i -1 &1\\}
$$
	before reading the first input symbol. Multiplication
	with each $A_2$ leaves $2^i$ unchanged while subtracting 1 from $2^i -
	1$ for each scanned $a$. The register will have the value 
	$$
	\mymatrix{rr}{	2^i&0\\
		2^i -1 - j &1\\}
$$
	as a result of $ j $ multiplications with the matrix $ A_2 $.
	
	For the rest of the computation, $\mathcal{E}$ will multiply its
	register, say $ k $  times, with $A_3$ resulting in the register value 
	
	$$
	\mymatrix{rr}{	\frac{2^i}{2^k}&0\\
		2^i -1 - j &1\\}
	$$
	since each multiplication with $ A_3 $ divides $ 2^i $ by 2.
	
	The register contains the identity matrix at the end of the computation if $ i=k $ and $ j=2^i-1 $ which is possible if the input string is of the form $ a^{1+2^{i}-1} = a^{2^i} $. In the successful
	branch, the register will be equal to the
	identity matrix and $\mathcal{E}$ will end up in the final state
	having successfully read the input string.
	
	For input strings which are not members of $\mathtt{UPOW}$, either the
	computation will end before reading the
	whole input string or the final state will be reached with the
	register value being different from the
	identity matrix.
	Note that $A_1=B^{-1}A^{-1}$, $A_2=A$ and $A_3=B$, where $A$ and $B$
	are the generators of the group $G_{BS}$ and recall that $G_{BS}$ is
	isomorphic to $BS(1,2)$. Since $ \mathtt{UPOW} $ is a unary nonregular
	language, it is not context-free and we conclude the result.
\end{proof}

Note that $\mathfrak{L}(\mathbb{Z}) \subsetneq \mathfrak{L}(BS(1,2)) $ 
since the subgroup generated by $ a $ in $ BS(1,2) $ is isomorphic to 
$ \mathbb{Z} $ and $\mathfrak{L} (BS(1,2)) $ contains a unary nonregular language.

We showed that allowing rational entries enlarges the
class of languages recognized by $ 2 \times 2 $ matrices. What about the group of $2 \times 2  $ rational matrices with determinant 1? We give a positive answer for the question, by constructing an $ SL(2,\mathbb{Q}) $-automaton recogizing a unary non-context-free langauge.

\begin{thm}\label{theorem: UPOWODD}
	$ \mathfrak{L}(SL(2,\mathbb{Z})) \subsetneq \mathfrak{L}(SL(2,\mathbb{Q}))  $.
\end{thm}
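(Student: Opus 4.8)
The plan is to first dispatch the easy inclusion and then reduce the theorem to a single automaton construction. Since $SL(2,\mathbb{Z})$ is a subgroup of $SL(2,\mathbb{Q})$, every $SL(2,\mathbb{Z})$-automaton is in particular an $SL(2,\mathbb{Q})$-automaton, so $\mathfrak{L}(SL(2,\mathbb{Z})) \subseteq \mathfrak{L}(SL(2,\mathbb{Q}))$; by Theorem \ref{theorem:gl} the left-hand side equals $\mathsf{CF}$. Hence it suffices to exhibit a language in $\mathfrak{L}(SL(2,\mathbb{Q}))$ that is not context-free. As in the proof of Theorem \ref{theorem: upow}, I would aim for a unary nonregular language, since every unary context-free language is regular, so any unary nonregular witness is automatically non-context-free.

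For the witness I would take $\mathtt{UPOW}_4=\{a^{4^n}\mid n\ge 0\}$ and build an $SL(2,\mathbb{Q})$-automaton $\mathcal{E}$ from the two determinant-$1$ matrices $S=\mymatrix{rr}{2&0\\0&1/2}$ and $T=\mymatrix{rr}{1&1\\0&1}$, which satisfy $STS^{-1}=T^4$ and hence $S^{n}TS^{-n}=T^{4^{n}}$. The machine would, via $\varepsilon$-moves, first multiply its register by $S$ some $n_1$ times, then by $T$ once, then by $S^{-1}$ some $n_2$ times, and finally read the input, multiplying by $T^{-1}$ for each scanned $a$. On an input $a^m$ the register along such an accepting path holds $S^{n_1}TS^{-n_2}T^{-m}$, which a direct computation evaluates to the upper-triangular matrix with diagonal entries $2^{\,n_1-n_2}$ and $2^{\,n_2-n_1}$ and upper-right entry $2^{\,n_1+n_2}-m\,2^{\,n_1-n_2}$. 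This equals the identity iff $n_1=n_2=:n$ and $m=4^{n}$, so $L(\mathcal{E})=\mathtt{UPOW}_4$, which is unary and nonregular, and therefore not context-free.

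The main obstacle is the determinant constraint, and it is precisely what forces the shift from $\{a^{2^n}\}$ to $\{a^{4^n}\}$. In the $GL(2,\mathbb{Q})$ construction of Theorem \ref{theorem: upow} the scaling matrix had determinant $1/2$, giving a doubling step and the language $\mathtt{UPOW}$; inside $SL(2,\mathbb{Q})$ the only rational diagonal conjugators have the form $\mathrm{diag}(t,1/t)$ and scale a rational unipotent by $t^2$, so the attainable integer scaling factors are perfect squares and $2$ is unreachable. The delicate point, which the matrix computation above settles, is to verify that requiring the register to return to the identity simultaneously forces the number of $S$-steps and $S^{-1}$-steps to agree ($n_1=n_2$, from the diagonal entries) and pins the input length to $4^n$ (from the off-diagonal entry); both conditions must emerge from the single test ``register $=I$'', since the automaton cannot otherwise remember the unbounded value $n$ across the computation.
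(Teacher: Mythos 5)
Your proof is correct and follows essentially the same route as the paper's: the paper also establishes properness by building a nondeterministic $SL(2,\mathbb{Q})$-automaton that uses $\varepsilon$-loops and the diagonal-conjugation trick to recognize a unary nonregular language (there $\mathtt{UPOW_{odd}}=\{a^{2^{2n+1}}\mid n\geq 0\}$, i.e.\ $a^{2\cdot 4^n}$, rather than your $\{a^{4^n}\}$), and then concludes exactly as you do from $\mathfrak{L}(SL(2,\mathbb{Z}))=\mathsf{CF}$ together with the fact that unary context-free languages are regular. The remaining differences are cosmetic: your explicit matrices $S$ and $T$ and the check that the identity test forces $n_1=n_2$ and $m=4^{n}$ play exactly the role of the paper's matrices $A_1,\dots,A_4$ and its step-by-step trace of the register values.
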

\begin{proof} It is obvious that $ \mathfrak{L}(SL(2,\mathbb{Z}))
	\subseteq \mathfrak{L}(SL(2,\mathbb{Q}))  $. We will prove that the
	inclusion is proper.
	
	Let us construct an $SL(2,\mathbb{Q})$-automaton $ \mathcal{E}$
	recognizing the language $\mathtt{UPOW_{odd}}=\{a^{2^{2n+1}}|$  $ n \geq 0 \}$. 
	The state diagram of $ \mathcal{E} $ and the matrices are given in 
	Figure \ref{fig:uoddpow}. Without scanning
	any input symbol, $\mathcal{E}$ first multiplies its register with
	the matrix $A_1$. $\mathcal{E}$ then multiplies its register with the
	matrix $ A_2 $ successively until nondeterministically moving to the
	next state. After that point, $\mathcal{E}$ starts reading the string
	and multiplies its register with the matrix $A_3$ for each scanned
	$a$. At some point, $\mathcal{E}$ nondeterministically stops reading
	the rest of the string and multiplies its register with the matrix
	$A_4$. After successive multiplications with $A_4$, $\mathcal{E}$
	nondeterministically decides moving to an accept state. 
	
	\begin{figure}[h]
		\centering
		\includegraphics[width=0.9\linewidth]{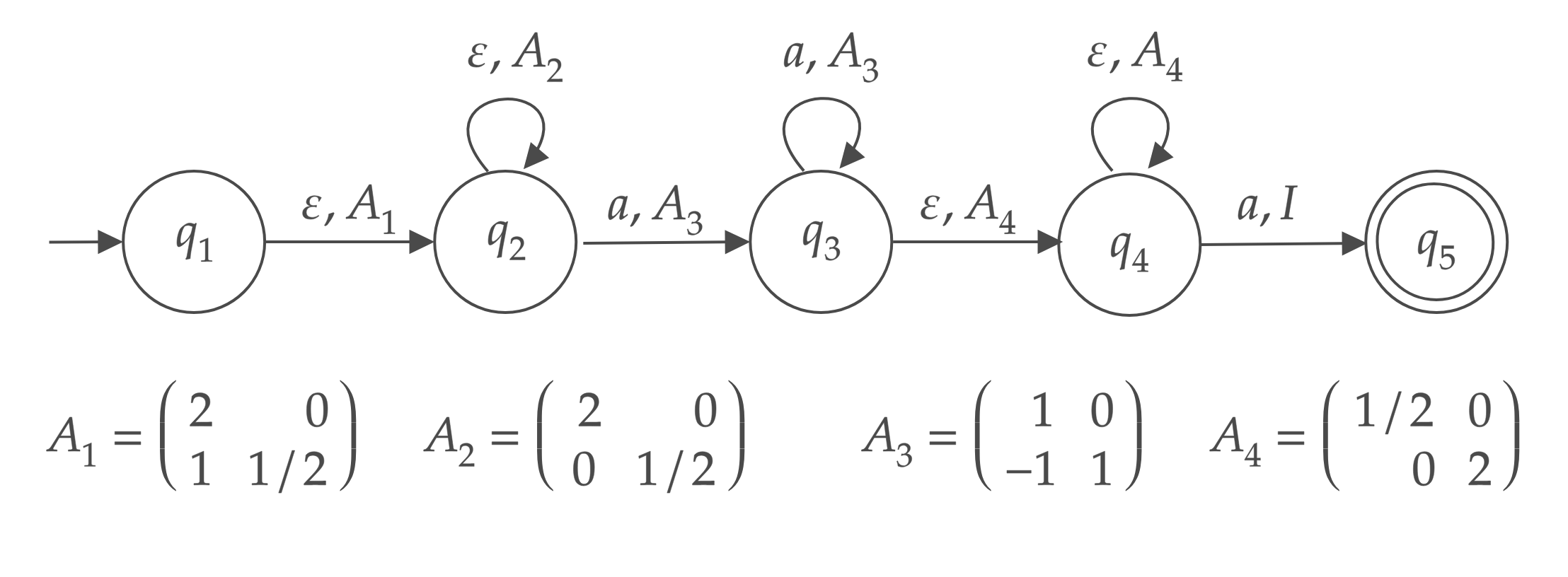}

		\caption{State transition diagram of $\mathcal{E} $ recognizing $ \mathtt{UPOW_{odd}} $ }
		\label{fig:uoddpow}	
	\end{figure}

	Let us trace the value of the register at different stages of the
	computation. Before reading the first input symbol, the register has
	the value
	$$
	\mymatrix{rr}{	2^{x+1}&0\\
		2^x  &\frac{1}{2^{x+1}}\\}
$$	
	\noindent as a result of the multiplications with the matrix $A_1$
	and $ x $ times the matrix $ A_2 $. Multiplication
	with each $A_3$ leaves $2^{x+1}$ and $ \frac{1}{2^{x+1}} $ unchanged
	while subtracting $ \frac{1}{2^{x+1}} $ from $2^x$ for each scanned
	$a$. As a result of $ y $  multiplications with $ A_3 $, the register
	will have the value
	$$
	\mymatrix{rr}{2^{x+1}&0\\
		2^x-\frac{y}{2^{x+1}} &\frac{1}{2^{x+1}}\\}
	.$$
	
	For the rest of the computation, $\mathcal{E}$ will multiply its
	register with $A_4$ until
	nondeterministically moving to the final state. As a result of $ z $
	multiplications with $ A_4 $, the register will have the value
	$$
\mymatrix{rr}{	\frac{2^{x+1}}{2^z}&0\\
	\bigl( 2^x-\frac{y}{2^{x+1}}\bigr)\frac{1}{2^z} &\frac{2^z}{2^{x+1}}\\}	
.$$
	
	The final value of the register is equal to the identity matrix when
	$ y=2^{2x+1} $ and $ z=x+1 $, which is possible only when the length
	of the input string is $ 2^{2x+1} $ for some $ x\geq 0 $. In the
	successful branch, the register will be equal to the identity matrix
	and $\mathcal{E}$ will end up in the final state having successfully
	read the input string. For input strings which are not members of
	${L}$, either the computation will end before reading the whole
	input string, or the final state will be reached with the register
	value not equaling the identity matrix.
	
	Since the matrices used during the computation are 2 by 2 rational
	matrices with determinant 1, $ {L} \in
	\mathfrak{L}(SL(2,\mathbb{Q})) $. $ \mathfrak{L}(SL(2,\mathbb{Q})) $
	contains a unary nonregular language, which is not true for $
	\mathfrak{L}(SL(2,\mathbb{Z})) $ by Theorem \ref{theorem:gl} and we
	conclude the result.
\end{proof}

Let us note that the set of languages recognized by $
\mathbb{Q}^+ $-automata is a proper subset of the set of languages
recognized by $ SL(2,\mathbb{Q}) $-automata.

\begin{thm}\label{thm: qsl2q}
	$ \mathfrak{L}(\mathbb{Q}^+) \subsetneq \mathfrak{L}(SL(2,\mathbb{Q}))  $.
\end{thm}

\begin{proof} Let $ {L} \in \mathfrak{L}(\mathbb{Q}^+) $ and let $
	\mathcal{E} $ be a $ \mathbb{Q}^+ $-automaton recognizing $ {L}
	$. We will construct an $ SL(2,\mathbb{Q}) $-automaton  $
	\mathcal{E}' $ recognizing $ {L} $. Let $ S =\{s_1,\dots,s_n\}
	$ be the set of elements multiplied with the register during the
	computation of $ \mathcal{E} $. We define the mapping $ \varphi $ as
	follows.
	\[ \varphi: s_i 	\mapsto
	\mymatrix{rr}{	s_i&0\\
		0&\frac{1}{s_i}\\}
	\]
	The elements $\varphi(s_i) $ are $ 2 \times 2 $ rational matrices with
	determinant 1. Let $ \delta $ and $ \delta' $ be the transition
	functions of $ \mathcal{E} $ and $ \mathcal{E}' $ respectively. We let
	$$ (q',s_i) \in \delta(q,\sigma) \iff (q',\varphi(s_i)) \in
	\delta'(q,\sigma)$$ for every $ q,q' \in Q$, $ \sigma \in \Sigma $ and
	$ s_i \in S $. The resulting $ \mathcal{E}' $ recognizes $ {L}$.
	
	The inclusion is proper since  $\mathtt{UPOW_{odd}} =\{a^{2^{2n+1}} | n \geq 0
	\} \in \mathfrak{L}(SL(2,\mathbb{Q}))$ by Theorem \ref{theorem:
		UPOWODD}, and $ \mathfrak{L}(\mathbb{Q}^+) $ does not contain any
	unary nonregular language by Fact \ref{fact: unary}, noting that 
	$ \mathbb{Q}^+ $-automata are equivalent to 1NFAMW's.
\end{proof}
\subsection{Automata on Matrices of Higher Dimensions}

As pointed out in Section \ref{sec: efadefn}, $\mathbf{F}_2 \times \mathbf{F}_2 $-automata are as powerful as Turing machines. Using this fact, we make the following observation.

\begin{thm}\label{thm: sl4z}
	$\mathsf{RE} =  \mathfrak{L} ( \mathbf{F}_2 \times \mathbf{F}_2 ) =
	\mathfrak{L} (SL(4,\mathbb{Z}) )$.
\end{thm}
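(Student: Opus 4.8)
The goal is to show $\mathsf{RE} = \mathfrak{L}(\mathbf{F}_2 \times \mathbf{F}_2) = \mathfrak{L}(SL(4,\mathbb{Z}))$.

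The first equality is already available: by Fact~\ref{fact: re} we have $\mathfrak{L}(\mathbf{F}_2 \times \mathbf{F}_2) = \mathsf{RE}$. So the real content is the second equality, and the plan is to prove it by sandwiching. Since every $G$-automaton is a nondeterministic device whose register multiplications take place in a group of matrices over $\mathbb{Z}$ (or more generally a recursively presented group), the inclusion $\mathfrak{L}(SL(4,\mathbb{Z})) \subseteq \mathsf{RE}$ is routine: a Turing machine can simulate the $SL(4,\mathbb{Z})$-automaton by tracking the (finite) state together with the $4\times 4$ integer matrix in the register, multiplying exactly as the automaton does, and accepting when the automaton reaches an accept state with the register equal to the identity matrix. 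Hence $\mathfrak{L}(SL(4,\mathbb{Z})) \subseteq \mathsf{RE} = \mathfrak{L}(\mathbf{F}_2 \times \mathbf{F}_2)$.

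For the reverse inclusion $\mathfrak{L}(\mathbf{F}_2 \times \mathbf{F}_2) \subseteq \mathfrak{L}(SL(4,\mathbb{Z}))$, the key algebraic fact I would invoke is that $\mathbf{F}_2 \times \mathbf{F}_2$ embeds as a subgroup of $SL(4,\mathbb{Z})$. From Section~\ref{Section: 23matrices} we already have an explicit embedding $\varphi$ of $\mathbf{F}_2$ into $SL(2,\mathbb{Z})$ via the matrices $M_a, M_b$. Given two copies of this, I would build a block-diagonal embedding of $\mathbf{F}_2 \times \mathbf{F}_2$ into $SL(4,\mathbb{Z})$: map a pair $(u,v)$ to the block-diagonal matrix $\mathrm{diag}(\varphi(u), \varphi(v))$, whose two diagonal $2\times 2$ blocks each lie in $SL(2,\mathbb{Z})$, so that the whole matrix has determinant $1$ and lies in $SL(4,\mathbb{Z})$. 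This is a group homomorphism, and it is injective because each coordinate embedding is injective; thus $\mathbf{F}_2 \times \mathbf{F}_2$ is isomorphic to a subgroup of $SL(4,\mathbb{Z})$. Since any automaton over a subgroup $H \le G$ is trivially simulated by a $G$-automaton (just relabel each register element of $H$ by its image in $G$, leaving states and input transitions untouched), we obtain $\mathfrak{L}(\mathbf{F}_2 \times \mathbf{F}_2) = \mathfrak{L}(H) \subseteq \mathfrak{L}(SL(4,\mathbb{Z}))$.

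Combining the two inclusions gives $\mathfrak{L}(SL(4,\mathbb{Z})) = \mathsf{RE}$, and together with Fact~\ref{fact: re} this yields the stated chain of equalities. The only step requiring genuine care is verifying the embedding: I need the block-diagonal construction to land in $SL(4,\mathbb{Z})$ (determinant exactly $1$, integer entries) and to be a faithful realization of the \emph{direct product} rather than some quotient, which follows because block-diagonal multiplication multiplies corresponding blocks independently and the projection onto each block recovers the original coordinate. I expect this to be the main (though mild) obstacle; the simulation arguments in both directions are mechanical once the embedding is in hand.
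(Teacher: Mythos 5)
Your proposal is correct and follows essentially the same route as the paper: the paper likewise embeds $\mathbf{F}_2 \times \mathbf{F}_2$ into $SL(4,\mathbb{Z})$ via the block-diagonal map $(g_1,g_2)\mapsto \mathrm{diag}(\varphi(g_1),\varphi(g_2))$ built from the $SL(2,\mathbb{Z})$-representation of $\mathbf{F}_2$, and combines this subgroup simulation with Fact~\ref{fact: re}. Your explicit Turing-machine simulation for the inclusion $\mathfrak{L}(SL(4,\mathbb{Z}))\subseteq\mathsf{RE}$ is left implicit in the paper's proof (it is spelled out only in the theorem that follows), but it is the same standard observation.
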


\begin{proof} The first equality is Fact \ref{fact: re}.
	Recall from Section \ref{Section: 23matrices} that $ \varphi $ is an isomorphism from $ \mathbf{F}_2$ onto $
	{G_2} $, the matrix group generated by the matrices $ M_a $ and $
	M_b $.
	Let $ {G}' $ be the following group of matrices
	\[ \left \{
	\left (
	\begin{array}{clll}
	\multicolumn{2}{l}
	{\multirow{2}{*}{$M_1$}} & 0 & 0 \\
	& & 0 & 0 \\
	0&0 & \multicolumn{2}{c}{\multirow{2}{*}{ $M_2$}} \\
	0&0 & & \\
	\end{array}
	\right ), \ M_1, \ M_2 \in {G_2} \right \}
	.\]
	
	We will define the mapping $ \psi:\mathbf{F}_2 \times \mathbf{F}_2
	\rightarrow {G}' $ as $ \psi(g_1,g_2) =
	(\varphi(g_1),\varphi(g_2)) $ for all $ (g_1,g_2)\in \mathbf{F}_2
	\times \mathbf{F}_2 $ which is an isomorphism from $\mathbf{F}_2
	\times \mathbf{F}_2  $ onto $ {G}' $.
	
	This proves that $\mathbf{F}_2 \times \mathbf{F}_2  $ is isomorphic to
	a subgroup of $ SL(4,\mathbb{Z}) $. The fact that $  \mathfrak{L} (
	\mathbf{F}_2 \times \mathbf{F}_2 ) $ is the set of recursively
	enumerable languages lets us conclude that $
	\mathfrak{L}(SL(4,\mathbb{Z})) $  is the set of recursively enumerable
	languages.
\end{proof}

Let us also state that the classes of languages recognized by automata
over supergroups of $ SL(4,\mathbb{Z}) $ such as $ GL(4,\mathbb{Z})
$ or $ SL(4,\mathbb{Q}) $ are also identical to the class of
recursively enumerable languages.
\begin{thm}\label{thm: RE}
	$ \mathfrak{L}(G)  = \mathsf{RE} $, where $ G $ is any matrix group whose matrix entries are computable numbers and contains $ SL(4,\mathbb{Z})$ as a subgroup.
\end{thm}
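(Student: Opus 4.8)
The plan is to establish the two inclusions $\mathsf{RE} \subseteq \mathfrak{L}(G)$ and $\mathfrak{L}(G) \subseteq \mathsf{RE}$ separately. The first is the easy direction and essentially repeats the subgroup-simulation idea already used in the proof of Theorem \ref{thm: sl2z3z}: since $SL(4,\mathbb{Z})$ is by hypothesis a subgroup of $G$, every $SL(4,\mathbb{Z})$-automaton is literally a $G$-automaton over the same finite set of matrices, and a product of these matrices equals the identity of $SL(4,\mathbb{Z})$ exactly when it equals the identity of $G$ (both being the $4\times 4$ identity matrix). Hence $\mathfrak{L}(G) \supseteq \mathfrak{L}(SL(4,\mathbb{Z}))$, and the latter equals $\mathsf{RE}$ by Theorem \ref{thm: sl4z}. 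So the whole content lies in the reverse inclusion.

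For $\mathfrak{L}(G) \subseteq \mathsf{RE}$, I would fix a $G$-automaton $\mathcal{E}$ and build a Turing machine $\mathcal{T}$ that semidecides $L(\mathcal{E})$. The automaton has only finitely many transitions, so only a fixed finite set of matrices $g_1,\dots,g_m \in G$ ever multiplies the register; each $g_i$ has computable entries, so $\mathcal{T}$ can carry, for each $g_i$, a procedure producing rational approximations to any desired precision. On input $w$, $\mathcal{T}$ dovetails over all computation paths of $\mathcal{E}$ on $w$ (these are countably many because of $\varepsilon$-transitions, but they can be enumerated). A path is a candidate only if it consumes all of $w$ and ends in an accept state; for each such candidate $\mathcal{T}$ forms the corresponding product $g_{i_1}\cdots g_{i_t}$ and tests whether it equals the identity matrix $I$, accepting if any candidate passes. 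Since acceptance of $w$ by $\mathcal{E}$ is by definition the existence of exactly such a successful path, this procedure accepts precisely $L(\mathcal{E})$ and witnesses $L(\mathcal{E}) \in \mathsf{RE}$.

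The hard part will be the equality test ``is $g_{i_1}\cdots g_{i_t} = I$?''. Computing the product to arbitrary precision is routine given the computability of the entries, but confirming that a computable real entry is \emph{exactly} $0$ or $1$ is the genuine difficulty, since equality of computable reals is not decidable in general. The point to be argued is that what is really needed is that the word problem of $G$ with respect to the generators $g_1,\dots,g_m$ be (at least semi-)decidable, equivalently that one can effectively recognize when a finite product of the label matrices collapses to $I$. For the groups actually of interest here --- $GL(4,\mathbb{Z})$, $SL(4,\mathbb{Q})$ and $GL(4,\mathbb{Q})$ --- the entries are integers or rationals, exact arithmetic is available, and equality to $I$ is decidable, so the argument goes through cleanly and yields $\mathfrak{L}(G) = \mathsf{RE}$. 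I would therefore read the phrase ``computable numbers'' in the statement as carrying enough effectiveness (exact arithmetic with decidable equality on the matrix entries) to make this identity test effective.
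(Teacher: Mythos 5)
Your proof takes essentially the same route as the paper: the inclusion $\mathsf{RE} \subseteq \mathfrak{L}(G)$ comes from $\mathfrak{L}(SL(4,\mathbb{Z})) = \mathsf{RE}$ (Theorem \ref{thm: sl4z}) together with the subgroup hypothesis, and the reverse inclusion comes from a nondeterministic Turing machine that tracks the register by multiplying the label matrices and tests the final product against the identity, exactly as in the paper's proof. Your closing caveat --- that exact equality testing is not available for arbitrary computable reals, so ``computable numbers'' must be read as supplying effective arithmetic with decidable equality to make the identity test work, as it does for the integer and rational matrix groups to which the theorem is actually applied --- is a legitimate refinement of a point the paper's one-line argument silently glosses over, not a deviation from its approach.
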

\begin{proof}
	Note that any finite automaton over a matrix group can be simulated by a nondeterministic Turing machine which keeps track of the register simply by multiplying the matrices and checking whether the identity matrix is reached at the end of the computation, provided that the matrix entries are computable numbers. Since $ \mathsf{RE}=\mathfrak{L} (SL(4,\mathbb{Z}) ) $ and $ G $ contains $ SL(4,\mathbb{Z})$ as a subgroup, we conclude that $ \mathfrak{L}(G) $ is the set of recursively enumerable languages.
\end{proof}

We summarize the results in Figure \ref{fig: diagram}. Solid arrows
represent proper inclusion, dashed arrows represent inclusion and
dashed lines represent incomparability. 
\begin{figure}[h!]
	\centering
	\includegraphics[width=1\linewidth]{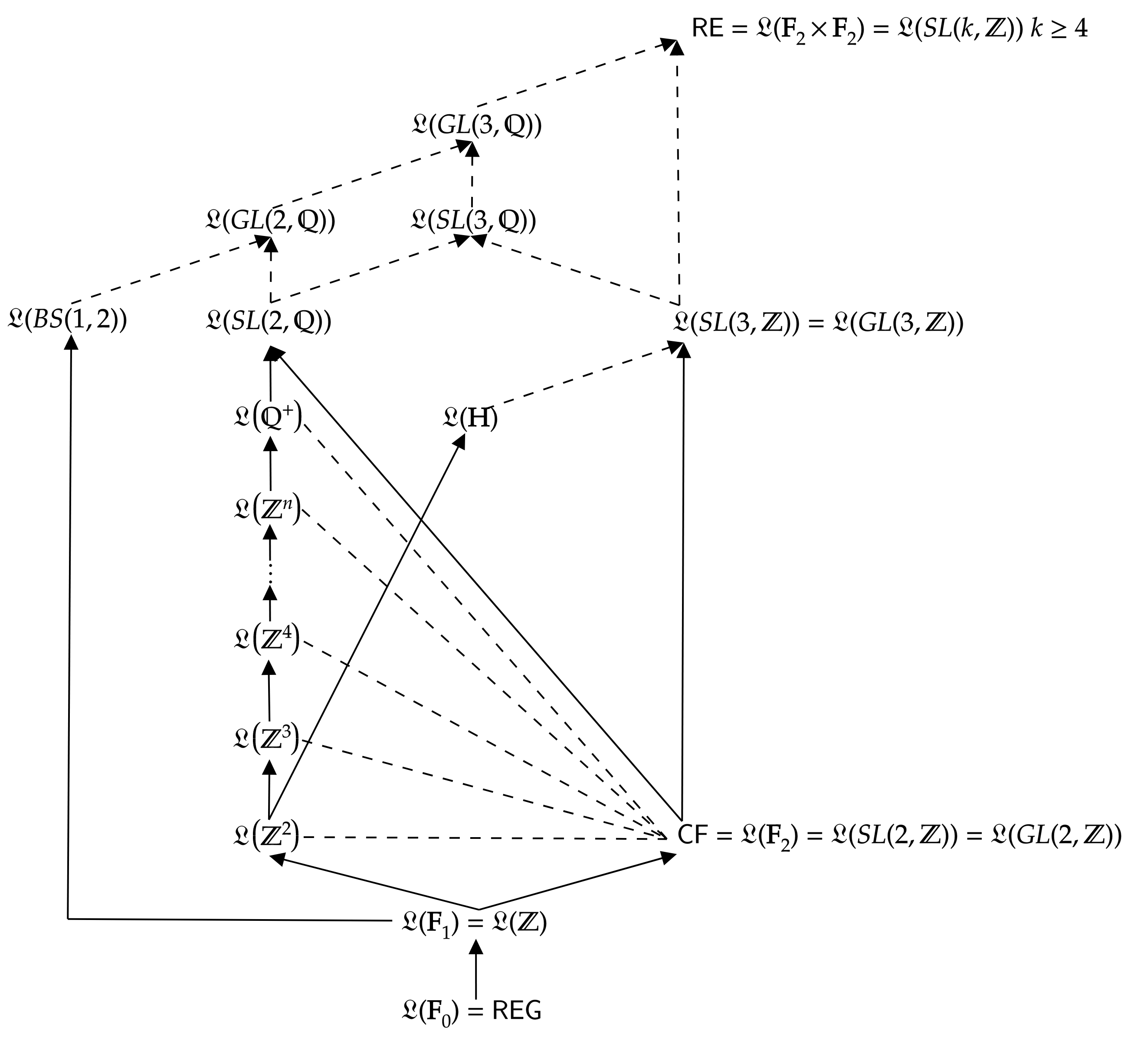}
	\caption{Language classes associated with groups}
	\label{fig: diagram}
\end{figure}

\section{Time Complexity}
\label{sec: efatime}
In the previous section, we compared various automaton models solely on the basis of the groups they employed as a computational resource. The theory of computational complexity deals with various different types of such resources, the allowed runtime of the machines being the most prominent among them. Some of the automata we saw in Section \ref{sec: efalang} (e.g. Figure \ref{fig:upow}) have arbitrarily long computations, and it is a legitimate question to ask whether our results, for instance, the relationships in Figure \ref{fig: diagram}, would still hold if one imposed common time bounds on the automata. We study such questions in this section.

\subsection{Definitions}

Before moving on with our discussion, we have to define some new concepts.

A $ G $-automaton $ \mathcal{E} $ recognizing language ${L} $ is said to be \textit{strongly $ t(n) $ time-bounded} if for any input string $ x $ with $ |x|=n $, every computation of $ \mathcal{E} $ on $ x $ takes at most $ t(n) $ steps. We will denote the set of languages recognized by strongly $ t(n) $-time bounded $ G $-automata by $ \mathfrak{L}(G)_{t(n)}^s $.

Although the strong mode of recognition defined above is standard in studies of time complexity, we will be able to prove the impossibility results of the next subsection even when the machines are subjected to the following, looser requirement:
A $ G $-automaton $ \mathcal{E} $ recognizing language ${L} $ is said to be \textit{weakly $ t(n) $ time-bounded} if for each accepted input string $ x \in {L} $ with $ |x|=n $, $ \mathcal{E} $ has a successful computation which takes at most $ t(n) $ steps. So any input string is allowed to cause longer computations, as long as none of those are accepting for inputs which are not members of ${L} $. We will denote the set of languages recognized by weakly $ t(n) $-time bounded $ G
$-automata by $ \mathfrak{L}(G)_{t(n)}^w $. Note that the statement $ \mathfrak{L}(G)_{t(n)}^s \subseteq \mathfrak{L}(G)_{t(n)}^w  $ is true by definition.

Let $ A $ be a generator set for the group $ G $. The \textit{length} of $ g \in
G $, denoted $|g|_A$, is the length of the shortest representative for $
g $ in $ (A \cup A^{-1})^* $. Let $$B^A_G(n)= \{g \in G,|g|_A\leq n \} $$ be the set of all elements in 
$ G $ which can be represented by a word of length at most $ n $. The \textit{growth function of a group} $ G $ with
respect to a generating set $ A $, denoted $ g^{A}_G(n)$, is the cardinality 
of the set $ B^A_G(n) $, that is $ g^{A}_G(n) = | B^A_G(n)|$. The growth function is asymptotically independent of the generating set, and we will denote the growth function of a group $ G $ by $ g_G(n)$.

For a positive integer $ n $, two strings $ w,w' \in \Sigma^* $ are $
n $-\textit{dissimilar for $ {L} $}  if $ |w|\leq n $, $ |w'|\leq n $, and there
exists a string $ v \in \Sigma^*  $ with $ |wv|\leq n
$, $ |w'v|\leq n $ such that $ wv \in {L} $ iff $w'v \notin
{L}  $. Let $ A_{L}(n) $ be the maximum $ k $ such that
there exist $ k $ distinct strings that are pairwise $n $-dissimilar.

A finite set of strings $ S $ is said to be a set of \textit{uniformly $ n $-dissimilar strings for $ {L} $} if for each string $ w\in S $, there exists a string $ v $ such that
$ |wv|\leq n$ and $wv \in {L} $ and for any string $ w'\in S $ such that $w\neq w' $,  $ |w'v|\leq n$ and $ w'v \notin {L} $. Let $ U_{L}(n) $ be the maximum $ k $ such that there exist $ k $ distinct strings that are uniformly $n $-dissimilar.

Note that the following is always true by definition, since the strings in a uniformly $ n $-dissimilar set are pairwise $ n $-dissimilar.

\begin{lem}\label{lemma: uniform}
	$ U_{L}(n) \leq A_{L}(n)$ for all $ n\geq 0 $. 
\end{lem}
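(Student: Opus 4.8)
The plan is to show that any set of strings that is uniformly $n$-dissimilar for $L$ is, in particular, pairwise $n$-dissimilar; the desired inequality then follows immediately from the maximality built into the definitions of $U_L(n)$ and $A_L(n)$.

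First I would fix $n \geq 0$ and take a set $S$ of $k = U_L(n)$ distinct strings that are uniformly $n$-dissimilar for $L$, which exists by the definition of $U_L(n)$. It then suffices to verify that the strings in $S$ are pairwise $n$-dissimilar: since $A_L(n)$ is by definition the maximum size of a set of pairwise $n$-dissimilar strings, establishing that $S$ is such a set yields $A_L(n) \geq k = U_L(n)$.

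To check pairwise dissimilarity, I would take any two distinct $w, w' \in S$. Applying the uniform $n$-dissimilarity condition to $w$ produces a single witness string $v$ with $|wv| \leq n$ and $wv \in L$, while simultaneously $|w''v| \leq n$ and $w''v \notin L$ for every other member $w''$ of $S$, and in particular for $w'$. This same $v$ then witnesses the $n$-dissimilarity of the pair $w, w'$: the length bounds $|w| \leq n$ and $|w'| \leq n$ are inherited from $|w| \leq |wv| \leq n$ and $|w'| \leq |w'v| \leq n$, and the biconditional $wv \in L \iff w'v \notin L$ holds because both $wv \in L$ and $w'v \notin L$ are true.

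There is no substantial obstacle here; the argument is a direct unwinding of the two definitions, which is exactly the remark anticipated just before the statement. The only point requiring care is to observe that the witness $v$ guaranteed for $w$ by uniform dissimilarity serves simultaneously as the (existentially quantified) witness demanded in the definition of pairwise $n$-dissimilarity for the pair $w, w'$, and to confirm that all the length constraints descend from $|wv| \leq n$ and $|w'v| \leq n$.
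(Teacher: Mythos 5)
Your proof is correct and matches the paper's justification exactly: the paper dispenses with the lemma by remarking that it holds ``by definition, since the strings in a uniformly $n$-dissimilar set are pairwise $n$-dissimilar,'' and your argument is precisely the careful unwinding of that remark, using the single witness $v$ guaranteed for $w$ by uniform dissimilarity as the pairwise witness for each pair $(w,w')$. No gaps; the length bounds and the biconditional are handled correctly.
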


\subsection{Limitations of Machines on Slow Groups Running in Short Time}\label{section: mainthm}

In this section, we are going to present a method for proving that certain languages cannot be recognized by finite automata over matrix groups when the growth rate of the group and the time are bounded.

\begin{thm}\label{thm: growth2}
	Let $ G $ be a group with growth function $ g_G(n)$. $ {L} \notin  \mathfrak{L}(G)_{t(n)}^w $ if $g_G(t(n)) \in o( U_{L}(n)) $. 
\end{thm}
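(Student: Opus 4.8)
The plan is to argue by contradiction with a counting argument in the spirit of the Myhill--Nerode technique, where the role of residual classes is played by pairs consisting of a state and a register value. Suppose, toward a contradiction, that $L \in \mathfrak{L}(G)_{t(n)}^w$, witnessed by some $G$-automaton $\mathcal{E} = (Q,\Sigma,G,\delta,q_1,Q_a)$. Since $\mathcal{E}$ has finitely many transitions, only a finite set $G_0 \subseteq G$ of group elements is ever multiplied into the register. I would fix a finite generating set $A$ of $G$ with $G_0 \subseteq A \cup A^{-1}$, so that after any computation of at most $t(n)$ steps the register value is a product of at most $t(n)$ elements of $A \cup A^{-1}$ and hence lies in $B^A_G(t(n))$; there are exactly $g_G(t(n)) = |B^A_G(t(n))|$ such values.

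Next I would fix, for each sufficiently large $n$, a set $S$ of $U_L(n)$ strings that are uniformly $n$-dissimilar for $L$. For every $w \in S$ let $v_w$ be its distinguishing suffix, so that $|w v_w| \le n$, $w v_w \in L$, and $w' v_w \notin L$ (with $|w' v_w| \le n$) for every other $w' \in S$. Because $w v_w \in L$ and $\mathcal{E}$ is weakly $t(n)$-time-bounded, there is an accepting computation $\pi_w$ on $w v_w$ using at most $t(n)$ steps. Let $(q_w, r_w) \in Q \times G$ be the configuration that $\pi_w$ reaches immediately after the prefix $w$ has been fully consumed, just before reading $v_w$. By the choice of $A$ we have $r_w \in B^A_G(t(n))$, so the pair $(q_w,r_w)$ ranges over a set of size at most $|Q| \cdot g_G(t(n))$.

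The crux is to show that $w \mapsto (q_w, r_w)$ is injective on $S$. Suppose $w \neq w'$ but $(q_w,r_w) = (q_{w'},r_{w'})$. The part of $\pi_w$ that reads $v_w$ is a sequence of transitions leading from $(q_w,r_w)$ to an accepting configuration, that is, an accept state with register equal to the identity of $G$. Since the moves of $\mathcal{E}$ depend only on the current state and the scanned symbol, this very same sequence of transitions may be appended to the prefix of $\pi_{w'}$ that reads $w'$ --- which by assumption terminates in the identical configuration $(q_{w'},r_{w'})$ --- yielding an accepting computation on $w' v_w$. This forces $w' v_w \in L$, contradicting uniform $n$-dissimilarity. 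Hence the map is injective, giving $U_L(n) = |S| \le |Q| \cdot g_G(t(n))$.

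Finally, since $|Q|$ is a constant independent of $n$ while $g_G(t(n)) \in o(U_L(n))$, for all large enough $n$ we have $|Q| \cdot g_G(t(n)) < U_L(n)$, contradicting the inequality just derived; therefore no such $\mathcal{E}$ exists and $L \notin \mathfrak{L}(G)_{t(n)}^w$. I expect the main obstacle to be bookkeeping rather than conceptual: pinning down the ``configuration after the prefix $w$'' unambiguously in the presence of $\varepsilon$-moves and nondeterminism, and absorbing the multiplier set $G_0$ into the generating set so that the count of reachable register values is genuinely bounded by $g_G(t(n))$ (the choice of $A$ only affects the growth function up to the standard asymptotic equivalence). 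The essential idea --- that uniform dissimilarity supplies a \emph{single} suffix distinguishing each string from all others, matching exactly an injective map into the finite set of (state, short register value) pairs --- is what makes the counting succeed, and is precisely the reason the uniform notion $U_L(n)$, rather than the pairwise quantity $A_L(n)$, is the right measure here.
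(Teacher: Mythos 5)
Your proof is correct and follows essentially the same route as the paper's: both bound the reachable configurations $(q,r)$ after the distinguishing prefix by $|Q|\cdot g_G(O(t(n)))$ (absorbing the transition labels into the generating set, up to asymptotic equivalence of growth functions) and then exploit the single distinguishing suffix supplied by uniform $n$-dissimilarity. Your injectivity claim for $w \mapsto (q_w,r_w)$ is just the contrapositive of the paper's pigeonhole step, which finds a collision and accepts the cross-string $w'v_w \notin L$, so the two arguments are the same in substance.
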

\begin{proof}
	Suppose for a contradiction that there exists a weakly $ t(n) $ time-bounded $ G $-automaton $ \mathcal{E} $ recognizing $ {L}  $ in time $ t(n) $. For a sufficiently large $ n $, let $S $ be the set of uniformly $ n $-dissimilar strings such that $ |S|=U_{L}(n) $. For every string $ w_i \in S $, there exists a string $ v_i $ such that $ w_iv_i\in {L} $ and $ w_jv_i\notin {L} $ for all $ w_j \in S $ with $ i \neq j $ .
	
	Let $ S_{acc} $ be the set of accepted extended strings of the form $ w_iv_i \in {L} $ with $ |w_iv_i|\leq n $ where $ w_i \in S $ and $ w_jv_i \notin {L} $ for all $ w_j \in S $ with $ i\neq j $ and $ |w_jv_i|\leq n $. Let $ C $ be the set of $ t(n) $ time bounded accepting computation paths for the strings in $ S_{acc} $. The computation $ c_{w_iv_i } \in C $ on the string $ w_iv_i $  can be written as 
	\[ c_{w_iv_i} =c_{w_iv_i}^{w_i} c_{w_iv_i}^{v_i}\] where $ c_{w_iv_i}^{w_i} $ represents the computation up to the end of the prefix $ w_i $ and $ c_{w_iv_i}^{v_i} $ represents the rest of the computation on the string $ v_i $. 
	
	A configuration of a group automaton is a pair consisting of a state
	and a group element. Let us count the number of configurations that can be reached at the end of the computation $ c_{w_iv_i}^{w_i} $. Since the number of states is constant, the number of configurations that can be reached is dependent on the number of different group elements that can appear in
	the register. After reading a prefix $ w_i $ with $|w_i|= m\leq n $, the product of
	the labels on the edges can be given by $ l=g_{i_1}g_{i_2}\dots g_{i_{k}} $ for some $ k \leq t(m) $, since the computation in consideration is time bounded. $ l $ can be 
	expressed as a product of $ \kappa $ generators, where $ \kappa $ is at 
	most $C\cdot k $ for some constant $ C $, since each group
	element labeling a transition in $ \mathcal{E} $ is composed of at most some constant number of
	generators, which is independent of the length of the string. The number of 
	elements in $ G $ which can be represented as a product of at most $ \kappa$ 
	generators is given by $ g_G(\kappa) $ by the definition of the growth function 
	of $ G $. Hence, the number of different values that can appear in the register 
	after reading a string of length exactly $ m $ is less than or equal to 
	$ g_G(\kappa) $. Since  $ \kappa \leq  C\cdot k $ and $ k \leq t(m) $ and $g_G(t(n)) \in o( U_{L}(n)) $, we can 
	conclude that $$ g_G(\kappa)  \leq  g_G(C \cdot t(m)) \in o(U_{L}(n)). $$	
	
	Now it is easy to see that the number of different configurations that can be reached at the end of a computation $  c_{w_iv_i}^{w_i} $ is $  o(U_{L}(n)) $. Note that the cardinality of the set $ S $, and thus that of $ S_{acc} $, is equal to $ U_L(n) $. Due to the pigeonhole principle, the same configuration must be reached at the end of two computations $  c_{w_{i}v_i}^{w_{i}} $ and $  c_{w_jv_j}^{w_j} $ for some $ i\neq j $. This will result in the acceptance of the strings $ w_{i}v_j $ and $ w_{j}v_{i} $, which are not members of $ {L} $. We arrive at a contradiction and conclude that ${L} $ cannot be recognized by any weakly $ t(n) $ time-bounded $ G $-automaton.
\end{proof}  

In the next lemma, we set a lower bound on maximum cardinality of the set of uniformly $ n $-dissimilar strings in the word problem language of some group $ G $.

\begin{lem}\label{lemma: growth2}
	Let $ G $ be a finitely generated group with growth function $ g_G(n)
	$. Then $ U_{W(G)}(n)\geq g_G(\lfloor \frac{n}{2}\rfloor)$.
\end{lem}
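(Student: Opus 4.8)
The plan is to exhibit an explicit set of $g_G(\lfloor n/2 \rfloor)$ strings and verify that it is uniformly $n$-dissimilar for $W(G)$, which by the definition of $U_{W(G)}(n)$ immediately yields the claimed lower bound. Set $m = \lfloor n/2 \rfloor$ and let $X = A \cup A^{-1}$. For every group element $g \in B^A_G(m)$, I would fix a single shortest representative word $w_g \in X^*$, so that $\phi(w_g) = g$ and $|w_g| = |g|_A \leq m$. Let $S = \{w_g : g \in B^A_G(m)\}$. Since $\phi$ is a well-defined map from words to group elements, distinct elements are represented by distinct words, so $|S| = |B^A_G(m)| = g_G(m)$.

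For each $w_g \in S$, I would take the distinguishing suffix $v_g$ to be the formal inverse word $w_g^{-1}$, obtained by reversing $w_g$ and replacing each letter by its inverse symbol in $X$; this satisfies $\phi(v_g) = g^{-1}$ and $|v_g| = |w_g| \leq m$. Then $w_g v_g$ represents $g g^{-1} = 1$, hence $w_g v_g \in W(G)$, and its length is at most $2m \leq n$. The heart of the argument is the separation condition: for any other $w_{g'} \in S$ with $g' \neq g$, the word $w_{g'} v_g$ represents $g' g^{-1}$, which equals the identity if and only if $g' = g$; since the elements are distinct, $w_{g'} v_g \notin W(G)$, while again $|w_{g'} v_g| \leq m + m \leq n$. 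Thus the single suffix $v_g$ simultaneously certifies membership of $w_g v_g$ and non-membership of $w_{g'} v_g$ for all $g' \neq g$, which is exactly the requirement for $S$ to be uniformly $n$-dissimilar for $W(G)$.

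Assembling these observations shows that $S$ is a set of $g_G(m)$ uniformly $n$-dissimilar strings, so $U_{W(G)}(n) \geq g_G(m) = g_G(\lfloor n/2 \rfloor)$. The only place where care is genuinely needed --- and where I expect the main difficulty to lie, though it is really a bookkeeping point rather than a deep obstacle --- is the choice of the factor $\lfloor n/2 \rfloor$: working with the ball of radius $m$ forces both $|w_g| \leq m$ and $|v_g| = |w_g| \leq m$, and it is precisely the bound $2m \leq n$ guaranteed by $m = \lfloor n/2 \rfloor$ that keeps every concatenation within the length budget $n$ demanded by the definition of $n$-dissimilarity. The inverse-word construction is what lets one suffix $v_g$ do the separating work against all of $S$ at once, and this is exactly the feature that distinguishes uniform dissimilarity (as measured by $U_{L}$) from mere pairwise dissimilarity.
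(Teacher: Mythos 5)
Your proposal is correct and follows essentially the same argument as the paper: both take the (shortest) representative words of all elements in the ball $B^A_G(\lfloor n/2 \rfloor)$, use each word's formal inverse as the distinguishing suffix so that $w_g w_g^{-1} \in W(G)$ while $w_{g'} w_g^{-1} \notin W(G)$ for $g' \neq g$, and conclude uniform $n$-dissimilarity of a set of size $g_G(\lfloor n/2 \rfloor)$. Your write-up is in fact slightly more careful than the paper's on the bookkeeping points (injectivity of $g \mapsto w_g$ and the explicit length budget $2\lfloor n/2 \rfloor \leq n$), but the underlying idea is identical.
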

\begin{proof}
	Let $ A $ be the generator set of $ G $. The number of distinct elements $ g $ in $ G $ which can be represented by a word of length less than or equal to
	$\lfloor \frac{n}{2}\rfloor $ is $ g_G(\lfloor \frac{n}{2}\rfloor)$, which is the cardinality of the set 
	$B_G^A(\lfloor \frac{n}{2}\rfloor)=\{g \in G,|g|_A\leq \lfloor \frac{n}{2}\rfloor\}$. Let $ T $ be the set containing the string representations of the elements in $ B_G^A(\lfloor \frac{n}{2}\rfloor) $. Every $ w_i \in T $ can be extended with $w_i^{-1}$ so that the extended string represents the identity element of $ G $ and has length less than or equal to $n $. 
	Since the strings in $ W(G) $ are those which belong to $ (A \cup A^{-1})^* $ and represent the identity element of $ G $, the extended string $ w_iw^{-1}_{i} \in W(G) $. For every string $ w_j\in T$ such that  $i \neq j$, $ w_jw_i^{-1} \notin W(G)$ since it is not possible for $ w_jw_i^{-1} $ to represent the identity element of $ G $. We conclude that the set $ S $ is uniformly $ n $-dissimilar. Since $ |T|=|B_G^A(\lfloor \frac{n}{2}\rfloor)| =  g_G(\lfloor \frac{n}{2}\rfloor)  $, it follows that $ U_{W(G)}(n)\geq g_G(\lfloor \frac{n}{2}\rfloor)$.
\end{proof}

The following theorem is about the language recognition power of finite automata over polynomial-growth groups which are weakly polynomial time-bounded.
\begin{thm}
	Let $ G $ and $ H $ be groups with polynomial and exponential growth
	functions $ g_G(n)$ and $  g_H(n) $, respectively. For any polynomial $ t(n) $, $\mathfrak{L}(H)
	\nsubseteq \mathfrak{L}(G)_{t(n)}^w  $. 
\end{thm}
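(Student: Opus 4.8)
The plan is to exhibit a single witness language that lies in $\mathfrak{L}(H)$ but escapes $\mathfrak{L}(G)_{t(n)}^w$ for every polynomial $t(n)$; the natural candidate is the word problem language $W(H)$, since Lemma \ref{lemma: growth2} already ties its uniform-dissimilarity measure to the growth of $H$.

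First I would verify that $W(H) \in \mathfrak{L}(H)$, with no time restriction imposed on this side. This is the standard construction: since $H$ is finitely generated (as presupposed by the existence of a growth function), fix a generator set $A$ and build an $H$-automaton over the alphabet $X = A \cup A^{-1}$ that, upon reading a symbol, multiplies its register by the corresponding generator or inverse, consuming one input symbol per step, and accepts exactly when the register equals the identity after the whole input has been read. By the definition of $W(H)$ this machine recognizes precisely $W(H)$, so $W(H) \in \mathfrak{L}(H)$ unconditionally.

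Next I would invoke Lemma \ref{lemma: growth2} to obtain $U_{W(H)}(n) \geq g_H(\lfloor \frac{n}{2}\rfloor)$. Because $H$ has exponential growth, the right-hand side is exponential in $n$. Meanwhile $G$ has polynomial growth and $t$ is a polynomial, so $g_G(t(n))$ is a polynomial composed with a polynomial, hence itself polynomial in $n$. Since any polynomial is $o$ of any exponential, I get $g_G(t(n)) \in o\bigl( g_H(\lfloor \tfrac{n}{2}\rfloor) \bigr) \subseteq o\bigl( U_{W(H)}(n) \bigr)$, where the inclusion uses the lower bound just established. This is exactly the hypothesis of Theorem \ref{thm: growth2} applied with ${L} = W(H)$, which therefore yields $W(H) \notin \mathfrak{L}(G)_{t(n)}^w$. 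Combining this with the previous paragraph gives $\mathfrak{L}(H) \nsubseteq \mathfrak{L}(G)_{t(n)}^w$, as required.

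The argument is essentially an assembly of the two preceding results, so there is no single heavy computation; the only points requiring care are (i) confirming that $W(H)$ is recognized by an $H$-automaton with no time bound, and (ii) making the asymptotic comparison precise — writing exponential $g_H$ as $\geq c\,b^{\lfloor n/2\rfloor}$ with $b>1$, writing the polynomial $g_G\circ t$ as $O(n^k)$, and checking that the relation $o(\cdot)$ is preserved when we pass from $g_H(\lfloor \frac{n}{2}\rfloor)$ up to the larger quantity $U_{W(H)}(n)$. The only genuinely conceptual subtlety, already absorbed into Lemma \ref{lemma: growth2}, is that dissimilarity must be measured in the \emph{uniform} sense $U_{W(H)}$ rather than the merely pairwise sense $A_{W(H)}$, since Theorem \ref{thm: growth2} is phrased in terms of $U_{L}$; Lemma \ref{lemma: uniform} confirms these are compatible but it is the stronger uniform bound that the proof actually needs.
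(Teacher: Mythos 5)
Your proposal is correct and follows essentially the same route as the paper: the witness language is $W(H)$, the lower bound $U_{W(H)}(n) \geq g_H(\lfloor n/2 \rfloor)$ from Lemma \ref{lemma: growth2} gives at-least-exponential uniform dissimilarity, while $g_G(t(n))$ stays polynomial, so Theorem \ref{thm: growth2} excludes $W(H)$ from $\mathfrak{L}(G)_{t(n)}^w$. The paper simply states this more tersely (noting $W(H)$ is trivially in $\mathfrak{L}(H)$), whereas you spell out the $H$-automaton construction and the asymptotic comparison explicitly.
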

\begin{proof}
	Since $ U_{W(H)}(n) \geq g_H(\lfloor \frac{n}{2}\rfloor) $ by Lemma $ \ref{lemma: growth2} $, and $ g_H(n) $ is an exponential function, $ U_{W(H)}(n)$ is also at least exponential.  $g_G(t(n))  $ is a polynomial function, since both $ g_G(n)$ and $ t(n) $  are polynomial. Hence, $ W(H) \notin \mathfrak{L}(G)_{t(n)}^w $ by Theorem \ref{thm: growth2}, and the result follows since $ W(H) $ is trivially in $ \mathfrak{L}(H) $.
\end{proof}

\begin{thm}\label{thm: polycf}
	Let $ G $ be a group with a polynomial growth function. For any polynomial $ t(n) $,
	$ \mathsf{CF} \nsubseteq \mathfrak{L}(G)_{t(n)}^w $.
\end{thm}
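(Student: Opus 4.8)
The plan is to apply Theorem~\ref{thm: growth2} to a single, well-chosen context-free witness whose number of uniformly $n$-dissimilar strings grows exponentially, so that it outpaces the necessarily polynomial quantity $g_G(t(n))$. Viewed this way, the statement is essentially the specialization of the preceding theorem to the case of an exponential-growth group $H = \mathbf{F}_2$, exploiting that $\mathfrak{L}(\mathbf{F}_2) = \mathsf{CF}$ by Fact~\ref{fact: cf}.

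Concretely, I would take $L = W(\mathbf{F}_2)$, the word problem language of the free group of rank $2$. First I would argue $L \in \mathsf{CF}$: a single-state $\mathbf{F}_2$-automaton that, on each scanned symbol of $X = A \cup A^{-1}$, multiplies its register by the corresponding generator or inverse generator and accepts exactly when the final register value is the identity recognizes precisely $W(\mathbf{F}_2)$. Hence $W(\mathbf{F}_2) \in \mathfrak{L}(\mathbf{F}_2)$, which equals $\mathsf{CF}$ by Fact~\ref{fact: cf}. Next, since $\mathbf{F}_2$ is finitely generated, Lemma~\ref{lemma: growth2} gives $U_{W(\mathbf{F}_2)}(n) \geq g_{\mathbf{F}_2}(\lfloor \frac{n}{2}\rfloor)$, and because $\mathbf{F}_2$ has exponential growth (the number of reduced words of length $m$ is $4 \cdot 3^{m-1}$ for $m \geq 1$), the right-hand side is exponential in $n$. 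Thus $U_L(n)$ is at least exponential.

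Finally I would compare growth rates: both $g_G$ and $t$ are polynomial, and the composition of two polynomials is again polynomial, so $g_G(t(n))$ is polynomial. Any polynomial lies in $o(\cdot)$ of an exponential function, whence $g_G(t(n)) \in o(U_L(n))$. Theorem~\ref{thm: growth2} then yields $L \notin \mathfrak{L}(G)_{t(n)}^w$, and since $L \in \mathsf{CF}$ we conclude $\mathsf{CF} \nsubseteq \mathfrak{L}(G)_{t(n)}^w$. Given the machinery already in place, there is no serious obstacle here; the only points requiring care are the justification that $W(\mathbf{F}_2) \in \mathsf{CF}$ and the observation that composing the polynomial $g_G$ with the polynomial $t$ keeps the bound polynomial, so that the little-$o$ comparison against the exponential $U_L(n)$ indeed goes through.
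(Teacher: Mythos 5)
Your proposal is correct and follows essentially the same route as the paper: the paper also takes $W(\mathbf{F}_2)$ as the context-free witness, invokes the exponential growth of $\mathbf{F}_2$ (via Lemma~\ref{lemma: growth2}, which links group growth to $U_{W(\mathbf{F}_2)}(n)$), notes that $g_G(t(n))$ stays polynomial, and applies Theorem~\ref{thm: growth2}. Your write-up merely makes explicit the details (the one-state $\mathbf{F}_2$-automaton showing $W(\mathbf{F}_2)\in\mathsf{CF}$ and the polynomial-composition step) that the paper leaves implicit.
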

\begin{proof}
	It is known that the word problem of the free group of rank 2, $
	W(\mathbf{F}_2) $, has an exponential growth function \cite{Gi90}. Assuming that  $
	G $ is a group with polynomial growth function, $ W(\mathbf{F}_2) $
	cannot be recognized by any weakly $ t(n) $ time-bounded $ G $-automaton by
	Theorem \ref{thm: growth2}. Since $ W(\mathbf{F}_2) $ is a
	context-free language, the proof is complete.
\end{proof}

\subsection{Group Automata Under Linear Time Bounds}\label{sec: linear}

Having discussed methods for proving that certain languages can not be recognized by group automata under time restrictions, in this section will we focus on linear-time computation. 

Let $ X $ be a finite alphabet and let $X^*$ be the free monoid of words over $X$. For each symbol $ x\in X $, let $P_x$ and $Q_x$ be functions from $X^*$ into $X^*$
defined as follows:  for every $u\in X^*$,
\begin{align*}
P_x(u)= ux,\quad 
Q_x(ux)=u.
\end{align*}
Note that $Q_x$ is a partial function from $X^*$ into $X^*$ whose domain is the language $X^*x$. The submonoid of the monoid of all partial functions on $X^*$ generated by the
set of functions $\{P_x,\, Q_x \ | x \in X   \}$ turns out to be an inverse monoid, denoted by $P(X)$, called the \textit{polycyclic monoid on} $X$. Polycyclic monoids were explicitly studied by Nivat and Perrot in \cite{NP70} and have several applications in formal language theory and, in particular, 
define an interesting storage model of computation for the recognition of formal languages \cite{Co05,Gi96,Ka09,Ni70,NP70}. 

For any element $ x \in X, $ $P_xQ_x=1 $ where $ 1 $ is the identity element of $ P(X) $ and for any two distinct elements $ x, y \in X $, $ P_xQ_y $ is the empty partial function which represents the zero element of $ P(X) $. The partial functions $ \{P_x,Q_x\} $ model the operation of pushing and popping $ x$ in a PDA, respectively. In order to model popping and pushing the empty string, let us define $ P_{\varepsilon}$ and $Q_{\varepsilon}$ as $ P_{\varepsilon}=Q_{\varepsilon}=1 $. The equivalence between PDA with stack alphabet $ X $ and $ P(X) $-automata is due to the nature of the functions $ P_x $ and $ Q_x $, and investigated in various papers \cite{Co05,Gi96,Ka09}. The resemblance between the free group and $ P(X) $ is used to prove that $ \mathfrak{L}(\mathbf{F}_2)=\mathsf{CF} $ in  \cite{Co05} and \cite{Ka09}.

Our aim is to show that $ \mathbf{F}_2 $-automata working in linear time can recognize all context-free languages. It is stated in \cite{Ze13} that $P(X)$-automata which consume at least one input symbol at each step are as powerful as $P(X)$-automata without any time bound. However, it is not straightforward to see whether the same is true for $ \mathbf{F}_2 $-automata.  

\begin{thm}\label{thm: rtF2}
	$ \mathfrak{L}(\mathbf{F}_2)_{O(n)}^w =  \mathsf{CF}$.
\end{thm}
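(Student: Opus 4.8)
The statement is an equality of language classes, so I would split it into two inclusions. The inclusion $\mathfrak{L}(\mathbf{F}_2)_{O(n)}^w \subseteq \mathsf{CF}$ is immediate and carries no content: by definition $\mathfrak{L}(\mathbf{F}_2)_{O(n)}^w \subseteq \mathfrak{L}(\mathbf{F}_2)$, and $\mathfrak{L}(\mathbf{F}_2) = \mathsf{CF}$ by Fact \ref{fact: cf}. Hence the whole burden is the reverse inclusion $\mathsf{CF} \subseteq \mathfrak{L}(\mathbf{F}_2)_{O(n)}^w$, that is, to show that every context-free language is recognized by an $\mathbf{F}_2$-automaton which, for each accepted word of length $n$, possesses \emph{some} accepting computation of length $O(n)$.

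The plan is to route through the polycyclic monoid. Given $L \in \mathsf{CF}$, I would first invoke the result attributed to \cite{Ze13} to obtain a $P(X)$-automaton $A$ recognizing $L$ that consumes at least one input symbol at every step; such an automaton performs at most $n$ steps on an input of length $n$, so it is already linear-time. Since $\mathbf{F}_2$ contains free subgroups of every finite rank (and a stack over a larger alphabet can be recoded over two symbols), the fixed finite rank of $X$ is not an obstacle. Next I would turn $A$ into an $\mathbf{F}_2$-automaton $\mathcal{E}$ by exactly the resemblance between $P(X)$ and $\mathbf{F}_2$ that underlies the proof of $\mathfrak{L}(\mathbf{F}_2)=\mathsf{CF}$ in \cite{Co05,Ka09}, realizing each push/pop of $A$ by multiplying the register by a \emph{bounded-length} element of $\mathbf{F}_2$. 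Because an $\mathbf{F}_2$-transition may be labelled by an arbitrary group element and each simulated stack step costs only $O(1)$ transitions, a length-$\ell$ computation of $A$ becomes a length-$O(\ell)$ computation of $\mathcal{E}$. Applying this to the accepting computation of $A$ on a word $x \in L$ (of length at most $n$) yields an accepting computation of $\mathcal{E}$ of length $O(n)$, which is precisely what the \emph{weak} $O(n)$ time bound demands; combined with the easy inclusion this gives the theorem.

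The main obstacle is the \emph{correctness} of the $P(X)\to\mathbf{F}_2$ simulation together with its bounded per-step cost, not the step-counting itself. The polycyclic monoid has a zero, so popping a symbol not on top of the stack annihilates the computation, whereas $\mathbf{F}_2$ has no zero; a naive homomorphic encoding such as $P_x \mapsto a$, $Q_x \mapsto a^{-1}$ is therefore unsound, since an illegal pop can be silently undone later and still return the register to the identity (for example the operation word $P_x Q_y P_y Q_x$ maps to $a b^{-1} b a^{-1} = e$ in $\mathbf{F}_2$ although it is $0$ in $P(X)$). Consequently $\mathcal{E}$ must be built from the genuine, nontrivial translation of \cite{Co05,Ka09}, which rules out such spurious identity-returning paths while still keeping every generator's image of bounded length. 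The delicate point—exactly the gap flagged in the paragraph preceding the theorem, where it is noted that the real-time equivalence for $P(X)$ is ``not straightforward'' for $\mathbf{F}_2$—is to verify that this construction can be carried out with only constant overhead per input symbol, so that the linear-time bound on $A$ survives as a linear-time bound on $\mathcal{E}$ without requiring an unbounded number of $\varepsilon$-transitions between consecutive symbol reads.
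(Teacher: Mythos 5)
Your architecture is the same as the paper's (easy inclusion from Fact~\ref{fact: cf}, then simulate a real-time $P(X)$-automaton by a free-group automaton via the translation of \cite{Ka09}), and you correctly rule out the naive homomorphic encoding $P_x \mapsto x$, $Q_x \mapsto x^{-1}$. But the step that carries all the weight---``each simulated stack step costs only $O(1)$ transitions, so a length-$\ell$ computation of $A$ becomes a length-$O(\ell)$ computation of $\mathcal{E}$''---is both unproven and, for the construction you invoke, false. Kambites' translation is not a per-symbol homomorphism with bounded-length images: a stack content $x_1 x_2 \cdots x_k$ is carried in the register as a word interleaved with marker symbols $\#$, each push multiplies by $x\#$, and each pop must first strip the padding sitting above the popped symbol by traversing an $\varepsilon$-loop labelled $\#^{-1}$, then multiply by $x^{-1}\#$. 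The amount of padding above a given symbol grows with the history of the computation (after $j$ intervening push--pop pairs, popping the symbol beneath them requires $j+1$ traversals of the $\varepsilon$-loop), so a single simulated stack operation can cost $\Theta(n)$ transitions. There is no constant $C$ bounding the overhead per input symbol, which is exactly why the paragraph before the theorem warns that the real-time result for $P(X)$ does not transfer to $\mathbf{F}_2$ in a straightforward way; your proposal restates that difficulty as something ``to verify'' rather than resolving it.

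The paper closes the gap with a global accounting argument instead of a per-step one. In the constructed $\mathbf{F}_{|X|+1}$-automaton, the register is multiplied by $\#$ only on transitions that consume an input symbol, hence at most $n$ times; since an accepting computation must end with the register equal to the identity, the number of $\#^{-1}$-multiplications (equivalently, of $\varepsilon$-loop traversals) must exactly balance the number of $\#$-multiplications, hence is also at most $n$; and the remaining $(\varepsilon,1)$ transitions are separated by symbol-consuming moves, so they too number at most $n$. Thus \emph{every accepting} computation has length $O(n)$---which is all the weak time bound requires---even though individual steps have unbounded cost and rejecting branches may run forever. Your outline becomes a proof once the ``constant overhead per step'' claim is replaced by this balance argument on the marker $\#$; the remaining ingredients you use (real-time $P(X)$-automata for $\mathsf{CF}$, and passing from $\mathbf{F}_{|X|+1}$ to $\mathbf{F}_2$ by relabelling transitions through the subgroup embedding, which leaves the number of steps unchanged) are sound and match the paper.
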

\begin{proof}
	We are going to use the construction of Kambites \cite{Ka09} to prove that any context-free language can be recognized by a weakly linear-time bounded $ \mathbf{F}_2 $-automaton.  
	
	Let $ {L} $ be a context-free language and let $ \mathcal{E}=\{Q,\Sigma, P(X), \delta,q_1,Q_a\} $ be a polycyclic monoid automaton recognizing $ {L} $. $ P(X) $ is the polycyclic monoid on $ X $ where the cardinality of the set $ X $ is $ n $ for some $ n \geq 2 $. The construction of Kambites provides an $ \mathbf{F}_{n+1}$-automaton $ \mathcal{E}'=\{Q',\Sigma,\mathbf{F}_{n+1}, \delta',q_{1}',Q_a'\} $ recognizing the language $ {L} $. The generator set for $ \mathbf{F}_{n+1} $ is $ X' $, where $ X'=X\cup \# $. 
	
	Let us analyze the construction in more detail.
	
	\begin{itemize}
		\item $ Q'=Q_{-} \cup Q_+ $ where $ Q_{-}=\{q_-|q \in Q\} $ and  $ Q_{+}=\{q_+|q \in Q\} $ 
		\item $ q_1' $=$ q_+ $ where $ q=q_1 $. 
		\item $ Q_a'=\{q_-|q\in Q_a \} $.
		\item$  \delta'(p_+,\sigma)=(q_+,x\#) $ if $ \delta(p,\sigma)=(q,x\#) $ where $ x $ is a positive generator for all $ \sigma \in \Sigma $.
		\item$ \delta'(p_-,\sigma)=(q_+,x'\#) $ if $ \delta(p,\sigma)=(q,x'\#) $ where $ x' $ is a negative generator for all $ \sigma \in \Sigma $.
		\item $ \delta'(p_+,\sigma)=(q_+,1) $ if $ \delta(p,\sigma)=(q,1) $ for all $ \sigma \in \Sigma $.
		\item $ \delta'(q_+,\varepsilon)=(q_-,1) $ for each $ q\in Q $.
		\item $ \delta'(q_-,\varepsilon)=(q_-,\#^{-1}) $ for each $ q\in Q $.
	\end{itemize}
	
	We  will prove that $ \mathcal{E}' $ actually runs in linear time. There are two transitions where the automaton is allowed to move without consuming any input symbols.
	
	For each state $q\in Q $, there are two states $ q_+ $ and $ q_- $ in $ \mathcal{E}'  $ which are connected with an edge labeled $ (\varepsilon, 1) $. These transitions do not change the register value, and cannot contribute more than half of the runtime of the machine, since at least one input symbol has to be consumed between any two executions of such transitions.
	
	$ \varepsilon $-loops exist in the machine $ \mathcal{E}' $ for each state $ q_- $ where the loop is labeled by $ (\varepsilon, \#^{-1}) $. Although this looks worrisome at first for the purpose of bounding the runtime, the number of times these loops are traversed is actually bounded, as the following argument suggests. Suppose that the register is multiplied with $ l_1 $, $ l_2 $, $ \cdots $, $ l_m $ while reading some input string $ w$ of length $ n $, resulting in the register value $l=l_1l_2\cdots l_m(\#^{-1})^k$, where $ k \in \mathbb{N} $, at the end of the computation. If  $w $ is accepted by the machine, $ l $ should satisfy the following, as well as being equal to the identity element:
	\[
	l_i = \Biggl\{\begin{array}{lr}
	(\#^{-1})^px_i\# \mbox{ for some } p \in \mathbb{N}, & \mbox{if $ x_i $ is a negative generator}\\
	x_i\# , & \mbox{if $ x_i $ is a positive generator}\\
	\end{array}
	\]

	This is called a \textit{permissible padding} in \cite{Ka09}. By looking at the transition function of $ \mathcal{E}' $, one can see that the register is multiplied by a $ \# $ only when an input symbol is consumed. Hence, the number of $ \# $'s that occur in $ l $ is less than or equal to the length of the string. The register is multiplied with $ \#^{-1} $ without consuming any input symbol. In order for the $ \# $'s and $ \#^{-1} $'s to cancel each other, they should be equal in number. Therefore, it can be concluded that the $ \varepsilon $-loops are traversed at most $ n $ times.

	We can conclude that any context-free language can be recognized by a weakly linear-time bounded free group automaton. Since $ \mathbf{F}_2 $ contains every free group of countable rank, the proof is complete.
\end{proof}	

We state the following theorem, which is the linear-time equivalent of Fact \ref{fact: finite} \cite{Co05}.

\begin{thm}\label{thm: rtindex}
	Suppose $ G $ is a finitely generated group and $ H $ is a subgroup of finite index. Then $ \mathfrak{L}(G)_{O(n)}^w = \mathfrak{L}(H)_{O(n)}^w	 $.
\end{thm}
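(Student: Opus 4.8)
The plan is to re-examine the proof of Fact \ref{fact: finite} from \cite{Co05} and verify that the coset-simulation it provides is \emph{step-for-step}, so that the argument respects the weak linear-time bound without any modification to the computation lengths. Since the statement asserts an equality of two classes, I would establish the two inclusions separately.

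First, the inclusion $\mathfrak{L}(H)_{O(n)}^w \subseteq \mathfrak{L}(G)_{O(n)}^w$ is immediate. Since $H$ is a subgroup of $G$, the identity element and the multiplication of $H$ agree with those of $G$, so any $H$-automaton $\mathcal{E}$ is literally a $G$-automaton recognizing the same language via identical computations. In particular a weakly $O(n)$ time-bounded $H$-automaton is, unchanged, a weakly $O(n)$ time-bounded $G$-automaton.

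For the reverse inclusion I would reconstruct the simulating $H$-automaton used to prove Fact \ref{fact: finite}. Given a $G$-automaton $\mathcal{E}=(Q,\Sigma,G,\delta,q_1,Q_a)$, fix right-coset representatives $t_1=e,t_2,\dots,t_m$ with $G=\bigsqcup_{i=1}^m Ht_i$, where $m=[G:H]$. The states of $\mathcal{E}'$ are pairs $(q,t_i)$, and the invariant is that whenever $\mathcal{E}$ sits in state $q$ with register value $r$, the automaton $\mathcal{E}'$ sits in state $(q,t_i)$ with register value $h\in H$, where $r=ht_i$ is the unique factorization. A transition $(q',g)\in\delta(q,\sigma)$ is simulated by writing $t_ig=h't_j$ with unique $h'\in H$ and representative $t_j$, and letting $\mathcal{E}'$ move from $(q,t_i)$ to $(q',t_j)$ while multiplying its register by the fixed element $h'\in H$. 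Only finitely many such $h'$ arise (finitely many representatives, finitely many transition labels), so $\mathcal{E}'$ is a legitimate $H$-automaton. Acceptance of $\mathcal{E}$, namely $r=e$ (equivalently $h=e$ and $i=1$), corresponds exactly to $\mathcal{E}'$ halting in some $(q,t_1)$ with $q\in Q_a$ and register $e$, whence $L(\mathcal{E}')=L(\mathcal{E})$.

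The key observation — and the only point requiring genuine checking for the time bound — is that this simulation is length-preserving: each single transition of $\mathcal{E}$, including each $\varepsilon$-transition, is mimicked by exactly one transition of $\mathcal{E}'$, since the coset bookkeeping is absorbed entirely into the finite control and into the single edge label, with no auxiliary $\varepsilon$-moves inserted. Hence computation paths of $\mathcal{E}$ and $\mathcal{E}'$ on a common input are in length-preserving bijection. Consequently, if $\mathcal{E}$ is weakly $O(n)$ time-bounded, then for every accepted $x$ with $|x|=n$ its short accepting computation maps to an accepting computation of $\mathcal{E}'$ of the same length, so $\mathcal{E}'$ is weakly $O(n)$ time-bounded as well. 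This gives $\mathfrak{L}(G)_{O(n)}^w\subseteq\mathfrak{L}(H)_{O(n)}^w$ and completes the equality. The main obstacle is thus not the construction itself but the bookkeeping verification that it introduces no extra steps; once this is confirmed, the linear-time bound transfers verbatim.
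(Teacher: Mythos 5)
Your proof is correct, and it takes a genuinely different route from the paper. The paper's proof is a one-line appeal to authority: it cites Fact \ref{fact: finite} (Corson) and asserts that ``the proof in \cite{Co05} still works'' under linear time. Corson's argument, as sketched in Lemma \ref{lem: gh}, goes through rational subsets: one expresses $L$ via a rational subset $R_0 \subseteq \Sigma^* \times X^*$, uses $W(G) \in \mathfrak{L}(H)$, and composes rational relations to get an $H$-automaton; verifying that this composition introduces no blow-up in computation length is exactly the bookkeeping the paper leaves to the reader. You instead build an explicit coset simulation: states $(q,t_i)$ tracking the right-coset representative, the unique factorization $t_ig = h't_j$ absorbed into a single edge label, and acceptance matching because $r = e$ iff $h = e$ and $t_i = t_1 = e$. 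This is step-for-step by construction, so the weak linear-time bound transfers verbatim, with no hidden $\varepsilon$-moves to worry about; it is also self-contained (it never invokes Corson) and does not actually use finite generation of $G$, only finiteness of the index. What the paper's approach buys is brevity and reuse of an established result; what yours buys is that the time-preservation claim --- the entire content of the theorem beyond Fact \ref{fact: finite} --- becomes transparent rather than deferred to an external construction. One small point worth making explicit if you write this up: the correspondence between computations of $\mathcal{E}$ and $\mathcal{E}'$ from their initial configurations is a length-preserving bijection in both directions, which simultaneously gives $L(\mathcal{E}') = L(\mathcal{E})$ and the transfer of the bound; you state this, and it is exactly right.
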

\begin{proof}
	We know that the statement is true in general when there is no time bound by \cite{Co05}. The proof in \cite{Co05} still works when all automata in the constructions are required to work in linear time.
\end{proof}

Now we can show that Theorem \ref{theorem:gl} also holds for linear-time bounded group automaton.

\begin{thm}\label{thm: cfrt}
	$  \mathsf{CF}= \mathfrak{L}(\mathbf{F}_2)_{O(n)}^w= \mathfrak{L}(SL(2,\mathbb{Z}))_{O(n)}^w=\mathfrak{L}(GL(2,\mathbb{Z}))_{O(n)}^w  $.
\end{thm}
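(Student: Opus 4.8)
The plan is to obtain this statement as the linear-time refinement of Theorem \ref{theorem:gl}, by chaining together the two linear-time tools already established: Theorem \ref{thm: rtF2}, which pins down the free-group case, and Theorem \ref{thm: rtindex}, which transfers linear-time recognition power across subgroups of finite index. Since the groups involved are exactly those appearing in the proof of Theorem \ref{theorem:gl}, together with the index data recorded there, the argument reduces to checking that the finite-generation and finite-index hypotheses of Theorem \ref{thm: rtindex} are met at each step, after which the four classes collapse by transitivity of equality.

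First I would invoke Theorem \ref{thm: rtF2} to obtain $\mathfrak{L}(\mathbf{F}_2)_{O(n)}^w = \mathsf{CF}$, which supplies the leftmost equality. Next, recalling from the proof of Theorem \ref{theorem:gl} that $\mathbf{F}_2$ sits as an index-$12$ subgroup of the finitely generated group $SL(2,\mathbb{Z})$, I would apply Theorem \ref{thm: rtindex} with $G = SL(2,\mathbb{Z})$ and $H = \mathbf{F}_2$ to get $\mathfrak{L}(SL(2,\mathbb{Z}))_{O(n)}^w = \mathfrak{L}(\mathbf{F}_2)_{O(n)}^w$. Finally, since $SL(2,\mathbb{Z})$ has index $2$ in the finitely generated group $GL(2,\mathbb{Z})$, a second application of Theorem \ref{thm: rtindex} with $G = GL(2,\mathbb{Z})$ and $H = SL(2,\mathbb{Z})$ yields $\mathfrak{L}(GL(2,\mathbb{Z}))_{O(n)}^w = \mathfrak{L}(SL(2,\mathbb{Z}))_{O(n)}^w$. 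Stringing these three equalities together establishes the full chain of equal classes.

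Unlike the untimed Theorem \ref{theorem:gl}, where the free-group step merely rests on the standard fact $\mathfrak{L}(\mathbf{F}_2)=\mathsf{CF}$, here the genuine work has already been absorbed into Theorem \ref{thm: rtF2}: the real content is that Kambites' simulation of a polycyclic-monoid automaton by a free-group automaton runs in linear time despite the $\varepsilon$-loops multiplying the register by $\#^{-1}$. I therefore expect the only point deserving care to be the observation (inherited from the proof of Theorem \ref{thm: rtindex}) that the finite-index constructions of \cite{Co05} can be carried out while preserving the linear-time bound; once this is granted, the present theorem is an immediate corollary and requires no new combinatorial estimates of its own.
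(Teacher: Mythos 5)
Your proposal is correct and follows essentially the same route as the paper: the paper's proof simply states that the argument of Theorem \ref{theorem:gl} goes through verbatim once Fact \ref{fact: finite} is replaced by its linear-time counterpart, Theorem \ref{thm: rtindex}, with Theorem \ref{thm: rtF2} supplying the equality $\mathfrak{L}(\mathbf{F}_2)_{O(n)}^w = \mathsf{CF}$ in place of Fact \ref{fact: cf}. You have merely made explicit the two finite-index applications (index $12$ for $\mathbf{F}_2 \le SL(2,\mathbb{Z})$ and index $2$ for $SL(2,\mathbb{Z}) \le GL(2,\mathbb{Z})$) that the paper leaves implicit.
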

\begin{proof}
	The proof is identical with the proof of Theorem \ref{theorem:gl} by using Theorem \ref{thm: rtindex}.
\end{proof}

By using the results proven in Subsection \ref{section: mainthm}, we can demonstrate the language recognition power of weakly linear-time bounded $ \mathbf{H} $-automata.

\begin{thm}
	$ \mathfrak{L}(\textbf{H})_{O(n)}^w \subsetneq  \mathfrak{L}(SL(3,\mathbb{Z}))_{O(n)}^w $.
\end{thm}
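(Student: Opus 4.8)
The plan is to prove the inclusion first and then separate the two classes with a single witness language, namely the word problem $W(\mathbf{F}_2)$ of the free group of rank $2$.

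For the inclusion $\mathfrak{L}(\mathbf{H})_{O(n)}^w \subseteq \mathfrak{L}(SL(3,\mathbb{Z}))_{O(n)}^w$, I would simply observe that $\mathbf{H}$ is a subgroup of $SL(3,\mathbb{Z})$: the generators $a,b,c$ exhibited earlier are $3 \times 3$ integer matrices of determinant $1$. Any weakly $O(n)$ time-bounded $\mathbf{H}$-automaton can then be reinterpreted verbatim as an $SL(3,\mathbb{Z})$-automaton, by regarding each register element of $\mathbf{H}$ as the corresponding matrix in $SL(3,\mathbb{Z})$. This reinterpretation changes neither the computation tree nor the number of steps on any input, so the weak linear-time bound is preserved unchanged.

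For properness, I would exhibit $W(\mathbf{F}_2)$ as a language lying in the larger class but not in the smaller one. On one hand, $W(\mathbf{F}_2)$ is context-free (the word problem of a free group is context-free), and by Theorem \ref{thm: cfrt} we have $\mathsf{CF} = \mathfrak{L}(SL(2,\mathbb{Z}))_{O(n)}^w$; since $SL(2,\mathbb{Z})$ embeds into $SL(3,\mathbb{Z})$ via the block map $M \mapsto \mymatrix{cc}{M & 0 \\ 0 & 1}$, which again sends generators to matrices in $SL(3,\mathbb{Z})$ and acts step-for-step on the register, we obtain $\mathsf{CF} \subseteq \mathfrak{L}(SL(3,\mathbb{Z}))_{O(n)}^w$, and in particular $W(\mathbf{F}_2) \in \mathfrak{L}(SL(3,\mathbb{Z}))_{O(n)}^w$. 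On the other hand, $\mathbf{H}$ is a finitely generated nilpotent group and hence has polynomial growth (its growth degree is $4$ by the Bass–Guivarc'h formula), so Theorem \ref{thm: polycf} applies with $G = \mathbf{H}$ and the linear bound $t(n) = O(n)$, giving $W(\mathbf{F}_2) \notin \mathfrak{L}(\mathbf{H})_{O(n)}^w$.

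Combining these, $W(\mathbf{F}_2) \in \mathfrak{L}(SL(3,\mathbb{Z}))_{O(n)}^w \setminus \mathfrak{L}(\mathbf{H})_{O(n)}^w$, so the inclusion is strict. The only substantive ingredient is the polynomial growth of $\mathbf{H}$, which is exactly the hypothesis required to invoke Theorem \ref{thm: polycf}; since this is a classical fact about finitely generated nilpotent groups, I expect no real obstacle. The one point demanding care is checking that both embeddings, $\mathbf{H} \hookrightarrow SL(3,\mathbb{Z})$ and $SL(2,\mathbb{Z}) \hookrightarrow SL(3,\mathbb{Z})$, genuinely preserve the weak linear-time bound, which they do because each acts on the register one group element per step without introducing any $\varepsilon$-moves or extra transitions.
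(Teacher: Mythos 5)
Your proposal is correct and follows essentially the same route as the paper: inclusion via the subgroup embedding $\mathbf{H} \le SL(3,\mathbb{Z})$, and properness via the polynomial growth of $\mathbf{H}$ together with Theorem \ref{thm: polycf} and the identity $\mathsf{CF}=\mathfrak{L}(SL(2,\mathbb{Z}))_{O(n)}^w$ from Theorem \ref{thm: cfrt}. The only difference is cosmetic: you name the witness $W(\mathbf{F}_2)$ explicitly (which is precisely the language used inside the paper's proof of Theorem \ref{thm: polycf}), whereas the paper invokes that theorem as a black box asserting the existence of such a context-free language.
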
		

\begin{proof}
	$ \mathfrak{L}(\mathbf{H})^w_{O(n)} \subseteq \mathfrak{L}(SL(3,\mathbb{Z}))_{O(n)}^w $ since $ \mathbf{H} $ is a subgroup of $ SL(3,\mathbb{Z}) $.  
	Since the Heisenberg group has polynomial growth function \cite{Ha00}, there exists a context-free language which cannot be recognized by any $ \textbf{H} $-automaton in polynomial time by Theorem  \ref{thm: polycf}. 	Since $ \mathsf{CF}=\mathfrak{L}(SL(2,\mathbb{Z}))_{O(n)}^w$ by Theorem \ref{thm: cfrt}, the result follows. 
\end{proof}

\begin{thm}\begin{enumerate}
		\item For $ k\geq 5 $, $ \mathfrak{L}(\mathbf{H})^w_{O(n)} $ and $ \mathfrak{L}(\mathbb{Z}^k)^w_{O(n)}$  are incomparable.
		\item $ \mathfrak{L}(\mathbf{H})^w_{O(n)} $  and $ \mathsf{CF} $ are incomparable.
	\end{enumerate}	
\end{thm}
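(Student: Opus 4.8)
The plan is to prove each of the four required non-inclusions by exhibiting a single separating language, using the growth machinery of Subsection \ref{section: mainthm} for the negative directions and explicit linear-time automata for the positive ones. The two facts that drive everything are that $ \mathbf{H} $ has polynomial growth of degree $ 4 $ (so $ g_{\mathbf{H}}(n) = \Theta(n^4) $), whereas $ \mathbb{Z}^k $ has polynomial growth of degree $ k $; the hypothesis $ k \geq 5 $ is precisely the regime in which $ \mathbb{Z}^k $ grows strictly faster than $ \mathbf{H} $. I will also use the elementary fact that $ \mathfrak{L}(G)^w_{O(n)} \subseteq \mathfrak{L}(G) $, so that membership in the full class can be refuted using results proved without time bounds.

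For the directions that eject a language from $ \mathfrak{L}(\mathbf{H})^w_{O(n)} $, I would apply Theorem \ref{thm: growth2}. For part (i), take $ L = W(\mathbb{Z}^k) $: by Lemma \ref{lemma: growth2}, $ U_{W(\mathbb{Z}^k)}(n) \geq g_{\mathbb{Z}^k}(\lfloor n/2 \rfloor) = \Theta(n^k) $, while for a linear bound $ t(n) = O(n) $ we have $ g_{\mathbf{H}}(t(n)) = O(n^4) $; since $ n^4 \in o(n^k) $ for $ k \geq 5 $, Theorem \ref{thm: growth2} yields $ W(\mathbb{Z}^k) \notin \mathfrak{L}(\mathbf{H})^w_{O(n)} $. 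Conversely $ W(\mathbb{Z}^k) $ is recognized in real time by the $ \mathbb{Z}^k $-automaton that multiplies its register by the generator or inverse named by each input letter and accepts on the identity, so $ W(\mathbb{Z}^k) \in \mathfrak{L}(\mathbb{Z}^k)^w_{O(n)} $, settling one half of (i). For part (ii), the corresponding half follows directly from Theorem \ref{thm: polycf}: since $ \mathbf{H} $ has polynomial growth and $ O(n) $ is a polynomial bound, some context-free language (e.g.\ $ W(\mathbf{F}_2) $, whose growth is exponential) lies outside $ \mathfrak{L}(\mathbf{H})^w_{O(n)} $.

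For the reverse directions I must exhibit languages inside $ \mathfrak{L}(\mathbf{H})^w_{O(n)} $ that escape $ \mathfrak{L}(\mathbb{Z}^k) $ (for (i)) and $ \mathsf{CF} $ (for (ii)). For (ii) the simplest witness is $ \{a^n b^n c^n \mid n \geq 0\} $, which is not context-free yet is recognized in real time by a $ \mathbb{Z}^2 $-automaton (adding $ (1,0) $, $ (-1,1) $, $ (0,-1) $ on $ a $, $ b $, $ c $ respectively, with the order forced by the states); since $ \mathbb{Z}^2 $ is a subgroup of $ \mathbf{H} $, this is simultaneously a real-time, hence weakly $ O(n) $-time bounded, $ \mathbf{H} $-automaton. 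For (i) I take $ \mathtt{MULT} = \{x^p y^q z^{pq} \mid p,q \geq 0\} $: it is a bounded language whose Parikh image $ \{(p,q,pq)\} $ is not semilinear, so by Fact \ref{fact: bounded} (using $ \mathfrak{L}(\mathbb{Z}^k) \subseteq \mathfrak{L}(\mathbb{Q}^+) $, as $ \mathbb{Z}^k $ embeds in $ \mathbb{Q}^+ $) it lies outside $ \mathfrak{L}(\mathbb{Q}^+) $, hence outside $ \mathfrak{L}(\mathbb{Z}^k) $ and a fortiori outside $ \mathfrak{L}(\mathbb{Z}^k)^w_{O(n)} $.

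The main obstacle is verifying that $ \mathtt{MULT} \in \mathfrak{L}(\mathbf{H})^w_{O(n)} $; its recognizability by an $ \mathbf{H} $-automaton is known \cite{Re10}, but I need a construction whose accepting runs are linear. Writing the register as $ b^B a^A c^C $ and using $ (b^B a^A c^C)(b^{x'}a^{y'}c^{z'}) = b^{B+x'}a^{A+y'}c^{C+z'+Ax'} $, the automaton reads $ x^p $ multiplying by $ a $ (reaching $ a^p $), reads $ y^q $ multiplying by $ b $ (reaching $ b^q a^p c^{pq} $, since each $ b $ adds the current $ a $-exponent to $ C $), then in two $ \varepsilon $-phases guesses and cancels the leftover $ a^p $ and $ b^q $ by multiplying by $ a^{-1} $ and then $ b^{-1} $, and finally reads $ z^r $ multiplying by $ c^{-1} $. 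The delicate points I would check are that the leftover powers must be cancelled in the order $ a $ before $ b $, so that the cross term $ Ax' $ does not corrupt $ C $, and that with the phase order fixed by the states the register returns to the identity exactly when $ r = pq $, so that no non-member is accepted on any branch. On the accepting branch the two $ \varepsilon $-phases contribute only $ p + q \leq n $ extra steps, giving an accepting computation of length at most $ 2n $, which establishes the required linear bound. Assembling the four witnesses proves both incomparabilities; note that the $ \mathtt{MULT} $ direction of (i) in fact holds for every $ k \geq 2 $, and it is only the $ W(\mathbb{Z}^k) $ direction that forces $ k \geq 5 $.
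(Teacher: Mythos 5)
Your proposal is correct, and its overall architecture matches the paper's: four non-inclusions, with Theorem \ref{thm: growth2} plus the degree-$4$ growth of $\mathbf{H}$ doing the work for the negative direction in (i), Theorem \ref{thm: polycf} giving $\mathsf{CF} \nsubseteq \mathfrak{L}(\mathbf{H})^w_{O(n)}$, and $\mathtt{MULT}$ together with Fact \ref{fact: bounded} giving $\mathfrak{L}(\mathbf{H})^w_{O(n)} \nsubseteq \mathfrak{L}(\mathbb{Z}^k)^w_{O(n)}$. You diverge from the paper in two of the four witnesses, and both divergences are legitimate. First, for $\mathfrak{L}(\mathbb{Z}^k)^w_{O(n)} \nsubseteq \mathfrak{L}(\mathbf{H})^w_{O(n)}$ the paper uses the language $L_5=\{0^{a_1}1\dots 0^{a_5}10^{a_1}1\dots 0^{a_5}1\}$, importing from \cite{GS98} the fact that it admits a uniformly $n$-dissimilar set of size $\Theta(n^5)$, whereas you use $W(\mathbb{Z}^k)$ and get the dissimilarity bound $U_{W(\mathbb{Z}^k)}(n)\geq g_{\mathbb{Z}^k}(\lfloor n/2\rfloor)=\Theta(n^k)$ from the paper's own Lemma \ref{lemma: growth2}; your choice is more self-contained (no external dissimilarity result needed), scales transparently with $k$, and makes explicit where the hypothesis $k\geq 5$ enters. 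Second, for $\mathfrak{L}(\mathbf{H})^w_{O(n)} \nsubseteq \mathsf{CF}$ the paper economically reuses $\mathtt{MULT}$ (already shown to lie in $\mathfrak{L}(\mathbf{H})^w_{O(n)}$ and known to be non-context-free), while you introduce $\{a^nb^nc^n\}$ via the embedding $\mathbb{Z}^2\leq \mathbf{H}$; both work, the paper's being slightly leaner. Finally, where the paper simply cites \cite{Re10} for a weakly linear-time $\mathbf{H}$-automaton recognizing $\mathtt{MULT}$, you reconstruct the machine and verify the two points that actually matter: that the $\varepsilon$-cancellation phases must eliminate the $a$-exponent before the $b$-exponent (so the cross term $Ax'$ in the multiplication law does not corrupt the $c$-exponent), and that accepting runs cost at most $p+q\leq n$ extra steps, giving the $O(n)$ bound. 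That verification is a genuine strengthening of the exposition, since the linear-time bound is exactly what the statement requires and is not obvious from recognizability alone.
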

\begin{proof}
	\textit{ i.} 	In \cite{Re10}, a weakly linear-time bounded $ \mathbf{H} $-automaton which recognizes the language $\mathtt{MULT}=\{x^py^qz^{pq} | p,q\geq 0\}$ is constructed. The language $\mathtt{MULT}$ cannot be recognized by any $ \mathbb{Z}^k $-automaton, since any bounded language in $ \mathfrak{L}(\mathbb{Q}^+) $ is semilinear by Fact \ref{fact: bounded}. 
	
	In \cite{GS98}, it is implicitly proven there exists a uniformly $ n $-dissimilar set of size $ \Theta(n^k) $ for the language $ {L}_k=\{0^{a_1}10^{a_2}1\dots 0^{a_k} 10^{a_1}10^{a_2}1\dots 0^{a_k} 1\}  $ for all integers $ k $. For $k=5 $, there exists a uniformly $ n $-dissimilar set of size $ \Theta(n^5) $ for the language $ {L}_5$ and $ U_{{L}_5}(n) \geq n^5 $. Since $ g_{\mathbf{H}}(n)$ is a polynomial of order 4 \cite{Ha00} and $ t(n)=O(n) $, $ g_{\mathbf{H}}(t(n)) \in o(U_{{L}_5}(n) )$. By Theorem \ref{thm: growth2}, we conclude the result.

	\noindent $ ii. $ The language $\mathtt{MULT}=\{x^py^qz^{pq} | p,q\geq 0\}$ is not a context-free language. Since $ \mathbf{H} $ has a polynomial growth function \cite{Ha00}, there exists a context-free language which cannot be recognized by any $ \mathbf{H} $-automaton in polynomial-time by Theorem \ref{thm: polycf}.
\end{proof}

Let us note that $ {L}_5 $ can be recognized by a $ \mathbb{Z}^5 $-automaton in real time. The existence of the languages $ {L}_k $ can be used to prove the linear-time nondeterministic counter hierarchy, with the help of Theorem \ref{thm: growth2}.

\begin{thm}
	$ \mathfrak{L}(\mathbb{Z}^k )^w_{O(n)} \subsetneq \mathfrak{L}(\mathbb{Z}^{k+1} )^w_{O(n)} 	 $ for $ k\geq 1 $.
\end{thm}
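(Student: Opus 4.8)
The plan is to prove the two directions separately. The inclusion $\mathfrak{L}(\mathbb{Z}^k)^w_{O(n)} \subseteq \mathfrak{L}(\mathbb{Z}^{k+1})^w_{O(n)}$ is immediate: since $\mathbb{Z}^k$ is a subgroup of $\mathbb{Z}^{k+1}$, any $\mathbb{Z}^k$-automaton is simulated by a $\mathbb{Z}^{k+1}$-automaton that keeps its last coordinate fixed at $0$, and this simulation does not alter the number of steps, so any weakly linear-time bound carries over verbatim. It then remains to exhibit a language that separates the two classes.

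For the separating language I would use $ {L}_{k+1}=\{0^{a_1}10^{a_2}1\dots 0^{a_{k+1}}10^{a_1}10^{a_2}1\dots 0^{a_{k+1}}1\} $, the $(k{+}1)$-block analogue of the language $ {L}_k $ appearing in the proof of the preceding theorem. First I would show $ {L}_{k+1}\in\mathfrak{L}(\mathbb{Z}^{k+1})^w_{O(n)} $ by constructing a real-time $\mathbb{Z}^{k+1}$-automaton (equivalently, a real-time blind $(k{+}1)$-counter automaton) for it. Using its finite-state control to keep track of the block structure, the automaton adds $a_i$ to coordinate $i$ of its register while scanning the $i$-th block of the first half, and subtracts the corresponding count from coordinate $i$ while scanning the $i$-th block of the second half. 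The register returns to the identity of $\mathbb{Z}^{k+1}$ exactly when the two halves agree block by block, so acceptance (identity register in an accept state) recognizes $ {L}_{k+1} $. Since exactly one input symbol is consumed per step, this machine is real-time, hence strongly, and therefore weakly, $O(n)$ time-bounded.

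The core of the argument is showing $ {L}_{k+1}\notin\mathfrak{L}(\mathbb{Z}^k)^w_{O(n)} $ via Theorem \ref{thm: growth2}. Two ingredients are needed. On the one hand, $\mathbb{Z}^k$ has a polynomial growth function of degree $k$, so $g_{\mathbb{Z}^k}(n)=\Theta(n^k)$ and, for any linear $t(n)=O(n)$, $g_{\mathbb{Z}^k}(t(n))=O(n^k)$. On the other hand, as established in \cite{GS98} and already used in the preceding theorem, there is a uniformly $n$-dissimilar set for $ {L}_{k+1} $ of size $\Theta(n^{k+1})$, so $U_{{L}_{k+1}}(n)=\Omega(n^{k+1})$. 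Comparing the two gives $g_{\mathbb{Z}^k}(t(n))=O(n^k)\in o(n^{k+1})\subseteq o(U_{{L}_{k+1}}(n))$, since $n^k/n^{k+1}\to 0$. Theorem \ref{thm: growth2} then yields $ {L}_{k+1}\notin\mathfrak{L}(\mathbb{Z}^k)^w_{t(n)} $ for every fixed linear $t(n)$; taking the union over all such $t$ gives $ {L}_{k+1}\notin\mathfrak{L}(\mathbb{Z}^k)^w_{O(n)} $, which together with the first paragraph establishes the strict inclusion.

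The step I expect to require the most care is matching the degrees: the separation works precisely because the growth degree $k$ of $\mathbb{Z}^k$ lies strictly below the exponent $k{+}1$ of the dissimilarity count of $ {L}_{k+1} $, and a linear time bound cannot inflate the former past the latter. I would therefore double-check that the uniformly $n$-dissimilar set of \cite{GS98} genuinely has order $\Theta(n^{k+1})$ for the $(k{+}1)$-block language, and that the membership construction is truly real-time rather than merely linear-time, since it is this alignment of the two polynomial orders that makes both the upper and lower bounds fit together cleanly.
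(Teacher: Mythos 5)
Your proposal is correct and follows essentially the same route as the paper: the same witness language $L_{k+1}$, the same real-time blind $(k{+}1)$-counter construction for membership, and the same application of Theorem \ref{thm: growth2} comparing the degree-$k$ growth of $\mathbb{Z}^k$ against the $\Theta(n^{k+1})$ uniformly $n$-dissimilar set from \cite{GS98}. The only difference is that you spell out the trivial inclusion $\mathfrak{L}(\mathbb{Z}^k)^w_{O(n)} \subseteq \mathfrak{L}(\mathbb{Z}^{k+1})^w_{O(n)}$, which the paper leaves implicit.
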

\begin{proof}
	The language $ {L}_{k+1}=\{0^{a_1}10^{a_2}1\dots 0^{a_{k+1}} 10^{a_1}10^{a_2}1\dots 0^{a_{k+1}} 1\}  $	can be recognized by a $ \mathbb{Z}^{k+1}  $-automaton in real time. While scanning the first $ k+1 $ segments of $ 0 $'s, the $ i $'th counter is increased for each scanned $ 0 $ as $ 0^{a_i} $ is read. In the remainder of the computation, the $ i $'th counter is decreased for each scanned $ 0 $ when $ 0^{a_i} $ is read.
	
	There exists a uniformly $ n $-dissimilar set of size $ \Theta(n^{k+1}) $ for the language $ {L}_{k+1} $, so $ U_{{L}_{k+1}}(n) \geq n^{k+1} $. Since $ t(n)= O(n) $ and $ g_{\mathbb{Z}^{k}}(n) = n^k$ \cite{Gi90}, $ g_{\mathbb{Z}^{k}}(t(n)) \in o(U_{{L}_5}(n) )$. We conclude by Theorem \ref{thm: growth2}.
\end{proof}

A celebrated result of the field of computational complexity, the nondeterministic time hierarchy theorem, will enable us to demonstrate that the computational power $ \mathbf{F}_2 \times \mathbf{F}_2$-automata is dependent on the time allotted for their execution. 

\begin{fact}\label{fact: nth} \textup{\cite{Zak83}}		
	If $g(n)$ is a time-constructible function, and $f(n+1) = o(g(n))$, then there exists a language which cannot be recognized by any nondeterministic Turing machine in time $f(n)$, but can be recognized by a nondeterministic Turing machine in time $g(n)$.
\end{fact}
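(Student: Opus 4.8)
The plan is to establish this nondeterministic time hierarchy by \emph{lazy} (delayed) \emph{diagonalization}. The naive diagonalization that proves the deterministic hierarchy fails here, because a nondeterministic machine cannot, within a comparable time bound, compute the \emph{complement} of another nondeterministic computation; so one cannot simply build a machine that directly flips the answer of each $M_i$. First I would fix an effective enumeration $M_1, M_2, \dots$ of nondeterministic Turing machines in which every machine appears infinitely often, together with a universal nondeterministic machine that simulates $M_i$ on an input of length $n$ with only a modest (quasi-linear, or polynomial in the description) time overhead. Using the time-constructibility of $g$, I would equip the diagonalizer with a clock that halts any simulation after $g(n)$ steps, guaranteeing the $g(n)$ time bound unconditionally.

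Next I would carve the input lengths into consecutive blocks $[a_i,b_i]$, assigning block $i$ to machine $M_i$, where the boundaries grow according to a fast, time-constructible schedule so that $b_i$ is astronomically larger than $a_i$. The diagonalizer $D$ is then defined on the unary input $1^n$ as follows. If $a_i \le n < b_i$, then $D$ \emph{copies $M_i$ one length down}: it nondeterministically simulates $M_i$ on $1^{n+1}$ for at most $g(n)$ steps and accepts iff that branch accepts. This step is affordable precisely because $f(n+1)=o(g(n))$ --- note that the $+1$ in the hypothesis is exactly the one-length shift used here. If instead $n=b_i$, then $D$ \emph{flips} $M_i$ at the bottom of the block: it \emph{deterministically} decides, by exhaustively exploring all computation paths, whether $M_i$ accepts $1^{a_i}$ within $f(a_i)$ steps, and accepts iff $M_i$ does not. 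This exhaustive search costs roughly $2^{O(f(a_i))}$ deterministic time, which fits inside $g(b_i)$ by choosing $b_i$ large enough once $a_i$ is fixed.

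For correctness, suppose some $M_i$ running in time $f(n)$ recognized $L(D)$; since the enumeration repeats, I may take $i$ as large as needed for all the time estimates above to hold. Agreement of $M_i$ with $D$ on every length in the block gives $[M_i \text{ accepts } 1^n] \Leftrightarrow [M_i \text{ accepts } 1^{n+1}]$ for every $a_i \le n < b_i$, and chaining these equivalences yields $[M_i \text{ accepts } 1^{a_i}] \Leftrightarrow [M_i \text{ accepts } 1^{b_i}]$. But the flip at the top gives $[M_i \text{ accepts } 1^{b_i}] \Leftrightarrow [M_i \text{ does not accept } 1^{a_i}]$, and combining the two yields $[M_i \text{ accepts } 1^{a_i}] \Leftrightarrow [M_i \text{ does not accept } 1^{a_i}]$, a contradiction. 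Hence no $f(n)$-time nondeterministic machine recognizes $L(D)$, while $D$ itself runs in time $g(n)$. The main obstacle, and the whole point of the construction, is the delayed-flip design: since the answer cannot be complemented cheaply in the nondeterministic setting, it must be propagated across an enormous block by cheap nondeterministic copy steps and negated only once, at the top, by an affordable deterministic exhaustive search --- and making the block schedule, the universal-simulation overhead, and the $f(n+1)=o(g(n))$ budget all line up is the delicate part.
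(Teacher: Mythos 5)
The paper itself contains no proof of this statement---it is quoted as a fact with a citation to Žák---so your proposal can only be measured against the cited technique, and on that score you are on target: lazy (delayed) diagonalization with blocks of input lengths, nondeterministic ``copy one length down'' inside a block, a single deterministic brute-force flip at the top of the block, and the chaining argument $M_i(1^{a_i}) \Leftrightarrow M_i(1^{b_i}) \Leftrightarrow \neg M_i(1^{a_i})$ is exactly the structure of the standard proof of the nondeterministic time hierarchy. Your use of the enumeration repeating each machine infinitely often to absorb the simulation overhead, and your identification of where $f(n+1)=o(g(n))$ enters (the one-length shift in the copy step), are both correct.

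There is, however, one genuine gap relative to the statement as given: the hypothesis makes $g$ time-constructible but says nothing about $f$ being computable, yet your machine $D$ consults $f$ in two places---the flip step is clocked ``within $f(a_i)$ steps,'' and the block end $b_i$ is chosen ``large enough once $a_i$ is fixed'' so that $2^{O(f(a_i))}\le g(b_i)$. If $f$ is not computable, the diagonalizer you describe is not a Turing machine, so as written you prove only the weaker version in which both $f$ and $g$ are time-constructible (the usual textbook hypothesis), not the sharper statement quoted here. The repair is routine but must be said: derive every clock from $g$ alone. Since $f(n+1)=o(g(n))$, for all sufficiently large $a_i$ one has $f(a_i)\le g(a_i-1)$; so let the flip step exhaustively search all computations of $M_i$ on $1^{a_i}$ of length at most $g(a_i-1)$ (or, equivalently, run the search until $D$'s own $g(b_i)$-clock expires), and define $b_i$ as the least $m$ with $g(m)\ge 2^{c\,g(a_i-1)}$, which is computable using the time-constructibility of $g$. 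Under the assumption that $M_i$ runs in time $f$, this $g$-clocked answer coincides with the true acceptance behaviour of $M_i$ on $1^{a_i}$, so the same chain of equivalences yields the contradiction, now with a machine $D$ that is well defined for arbitrary $f$.
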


Assume that any recursively enumerable language can be recognized by some linear-time $ \mathbf{F}_2 \times \mathbf{F}_2$-automaton. One can easily build a nondeterministic Turing machine that simulates such a  $ \mathbf{F}_2 \times \mathbf{F}_2$-automaton with only a polynomial slowdown. But this would mean that any recursively enumerable language can be recognized by  some nondeterministic TM in polynomial time, contradicting Fact \ref{fact: nth}, which implies that there exist languages which can only be recognized by nondeterministic Turing machines which run in at least exponential time. We have proven the following theorem.

\begin{thm}
	$ 	\mathfrak{L}(\mathbf{F}_2 \times \mathbf{F}_2)_{O(n)}^w  \subsetneq \mathsf{RE} $.
\end{thm}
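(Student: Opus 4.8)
The plan is to prove the two directions of the strict inclusion separately. The inclusion $\mathfrak{L}(\mathbf{F}_2 \times \mathbf{F}_2)_{O(n)}^w \subseteq \mathsf{RE}$ is immediate from the simulation argument already used in the proof of Theorem \ref{thm: RE}: a nondeterministic Turing machine can track the register of any $\mathbf{F}_2 \times \mathbf{F}_2$-automaton, since the group elements are representable as pairs of reduced words over a finite alphabet and the identity test reduces to checking that both words are empty. All the content therefore lies in establishing that the inclusion is proper.

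First I would argue by contradiction, assuming $\mathfrak{L}(\mathbf{F}_2 \times \mathbf{F}_2)_{O(n)}^w = \mathsf{RE}$. The key step is to show that any weakly $O(n)$ time-bounded $\mathbf{F}_2 \times \mathbf{F}_2$-automaton $\mathcal{E}$ recognizing a language $L$ can be simulated by a nondeterministic Turing machine running in polynomial time. I represent the register of $\mathcal{E}$ as a pair $(u,v)$ of reduced words over the generators of the two free factors. The simulating machine guesses a computation path of $\mathcal{E}$, and at each step multiplies the appropriate coordinate of the register by the fixed, bounded-length element labeling the chosen transition, then restores reduced form by cancelling adjacent inverse pairs. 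Since $\mathcal{E}$ is weakly $cn$ time-bounded for some constant $c$ baked into the machine, every accepted input of length $n$ admits an accepting computation of at most $cn$ steps, so it suffices to explore paths of this bounded length. Along such a path the register words never exceed length $O(n)$, each reduction costs $O(n)$, and the whole simulation runs in time $O(n^2)$.

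The resulting machine accepts exactly $L$: if $x \in L$, a short accepting path exists and is found on some nondeterministic branch; if $x \notin L$, then $\mathcal{E}$ has no accepting computation at all, hence none of bounded length, so every branch rejects. The assumption would thus place every recursively enumerable language in nondeterministic polynomial time. This contradicts the nondeterministic time hierarchy theorem (Fact \ref{fact: nth}), which guarantees the existence of languages --- for instance in $\mathsf{NTIME}(2^n) \setminus \bigcup_k \mathsf{NTIME}(n^k)$ --- that are recursively enumerable yet cannot be recognized by any nondeterministic Turing machine running in polynomial time. The strict inclusion follows.

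I expect the main obstacle to be the careful accounting in the simulation. One must verify that, under the \emph{weak} time bound, it genuinely suffices for the Turing machine to truncate its search at $O(n)$ steps --- exploiting the fact that rejected inputs admit \emph{no} accepting computation of any length --- and that maintaining reduced words keeps the per-step cost linear, so that the overall slowdown remains polynomial rather than exponential. Once this is pinned down, the appeal to Fact \ref{fact: nth} is routine.
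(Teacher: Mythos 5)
Your proof is correct and follows essentially the same route as the paper: the paper also argues by contradiction, simulating a weakly linear-time bounded $\mathbf{F}_2 \times \mathbf{F}_2$-automaton by a nondeterministic Turing machine with polynomial slowdown and then invoking the nondeterministic time hierarchy theorem (Fact \ref{fact: nth}). The only difference is that the paper leaves the simulation as a one-line sketch, whereas you spell out the reduced-word representation of the register, the $O(n)$ truncation justified by the weak time bound, and the $O(n^2)$ cost accounting --- all of which are consistent with the paper's intent.
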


Using the ability of Turing machines to simulate any finite automaton over a computable matrix group, the statement of the above theorem can be extended as follows.

\begin{thm}
	$ \mathfrak{L}(G)_{O(n)}^w  \subsetneq \mathsf{RE} $ for any matrix group $ G $ whose matrix entries are computable numbers.
\end{thm}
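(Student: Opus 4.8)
The plan is to mirror the argument just given for $\mathbf{F}_2 \times \mathbf{F}_2$, replacing the specific group by an arbitrary matrix group $G$ with computable entries and invoking the Turing-machine simulation from the proof of Theorem \ref{thm: RE}. The statement splits into an inclusion and a strictness claim, and only the latter requires real work.

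For the inclusion $\mathfrak{L}(G)_{O(n)}^w \subseteq \mathsf{RE}$, I would simply observe that $\mathfrak{L}(G)_{O(n)}^w \subseteq \mathfrak{L}(G)$ holds by definition, while $\mathfrak{L}(G) \subseteq \mathsf{RE}$ is exactly the simulation established in Theorem \ref{thm: RE}: a nondeterministic Turing machine guesses a computation path of the $G$-automaton, tracks the register by multiplying the computable-entry matrices labeling that path, and accepts precisely when the product is the identity.

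For the strictness, I would argue by contradiction. Suppose $\mathfrak{L}(G)_{O(n)}^w = \mathsf{RE}$, so that every recursively enumerable language $L$ is recognized by some weakly $O(n)$-time bounded $G$-automaton $\mathcal{E}$. I would simulate $\mathcal{E}$ by a nondeterministic Turing machine $\mathcal{T}$ that guesses a computation path and maintains the register. The crucial point is the slowdown: since $\mathcal{E}$ is linear-time bounded, on an input of length $n$ it performs $O(n)$ transitions, so the register is a product of $O(n)$ matrices drawn from the \emph{fixed finite} set of labels of $\mathcal{E}$. Because these generator matrices are fixed and of constant size, each multiplication increases the bit-length of the entries by only a constant, so after $O(n)$ steps the entries have bit-length $O(n)$; hence every matrix operation, and the final test for equality with the identity, runs in time polynomial in $n$. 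Therefore $\mathcal{T}$ runs in polynomial time, and every recursively enumerable language would be recognized by a nondeterministic Turing machine in polynomial time. This contradicts the nondeterministic time hierarchy theorem (Fact \ref{fact: nth}), which guarantees recursively enumerable languages requiring at least exponential nondeterministic time. Combining the two parts yields $\mathfrak{L}(G)_{O(n)}^w \subsetneq \mathsf{RE}$.

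The main obstacle will be justifying the polynomial slowdown for matrices with arbitrary computable entries rather than rational ones: one must ensure both that entry sizes stay polynomially bounded through the $O(n)$ multiplications and that the final equality-to-identity test is effective within the time budget. The linear bit-growth estimate above settles this cleanly when the entries are rational, tracking numerators and denominators separately. For general computable entries I would lean on the same fixed-finite-generator structure, noting that the reachable register values range only over products indexed by words of length $O(n)$ over a fixed alphabet of matrices, so that the relevant arithmetic and comparison stay effective and polynomially bounded under the standing computability assumption already used in Theorem \ref{thm: RE}.
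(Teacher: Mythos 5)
Your proposal is correct and takes essentially the same route as the paper's own proof, which likewise combines the nondeterministic Turing-machine simulation of $G$-automata from Theorem \ref{thm: RE} with the nondeterministic time hierarchy theorem (Fact \ref{fact: nth}) to conclude that weakly linear-time $G$-automata cannot capture all of $\mathsf{RE}$. Your write-up is in fact more explicit than the paper's two-sentence argument, both in spelling out the polynomial-slowdown bound via the bit-length growth of the register entries and in flagging the delicacy of handling arbitrary computable (rather than rational) entries, a point the paper glosses over.
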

\begin{proof}
	In Theorem \ref{thm: RE}, we have mentioned that Turing machines can simulate any finite automaton over a computable matrix group. By the nondeterministic time hierarchy theorem, it can be shown that there exist some languages which cannot be recognized by any finite automata over matrix groups in linear time. 
\end{proof}

\begin{thm}
	$ 	\mathfrak{L}(\mathbf{F}_2)_{O(n)}^w \subsetneq	\mathfrak{L}(\mathbf{F}_2 \times \mathbf{F}_2)_{O(n)}^w  $.
\end{thm}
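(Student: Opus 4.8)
The plan is to prove the two halves of the strict inclusion separately, relying on the identification $\mathfrak{L}(\mathbf{F}_2)_{O(n)}^w = \mathsf{CF}$ from Theorem \ref{thm: rtF2}. For the inclusion $\mathfrak{L}(\mathbf{F}_2)_{O(n)}^w \subseteq \mathfrak{L}(\mathbf{F}_2\times\mathbf{F}_2)_{O(n)}^w$, I would note that $\mathbf{F}_2$ embeds into $\mathbf{F}_2\times\mathbf{F}_2$ via $g \mapsto (g,e)$. Given a weakly $O(n)$ time-bounded $\mathbf{F}_2$-automaton, replacing each register label $g$ by $(g,e)$ yields an $\mathbf{F}_2\times\mathbf{F}_2$-automaton with an identical state set and transition structure; its register equals the identity $(e,e)$ exactly when the original register equalled $e$, so it recognizes the same language, and since no transitions are added, the linear time bound is preserved verbatim.

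For strictness, I would exhibit a non-context-free language that lies in $\mathfrak{L}(\mathbf{F}_2\times\mathbf{F}_2)_{O(n)}^w$. The natural candidate is $L' = \{a^nb^nc^n \mid n\ge 0\}$, which is recognized by a $\mathbb{Z}^2$-automaton as observed in the proof of Fact \ref{fact: cfzn}. That automaton operates in real time: reading each $a$ increments the first counter, each $b$ decrements the first while incrementing the second, and each $c$ decrements the second, with the admissible order $a^*b^*c^*$ enforced by the finite control; no $\varepsilon$-moves are used, so it is strongly, hence weakly, $O(n)$ time-bounded. I would then embed $\mathbb{Z}^2$ into $\mathbf{F}_2\times\mathbf{F}_2$ by sending the two free generators of $\mathbb{Z}^2$ to $(a,e)$ and $(e,a)$, where $a$ is a fixed generator of $\mathbf{F}_2$; since the cyclic subgroup $\langle a\rangle \le \mathbf{F}_2$ is isomorphic to $\mathbb{Z}$, the register reaches $(e,e)$ iff both counters return to zero. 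Simulating the $\mathbb{Z}^2$-automaton through this homomorphism gives a real-time $\mathbf{F}_2\times\mathbf{F}_2$-automaton for $L'$, whence $L' \in \mathfrak{L}(\mathbf{F}_2\times\mathbf{F}_2)_{O(n)}^w \setminus \mathsf{CF}$.

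Combining the two halves with Theorem \ref{thm: rtF2} yields $\mathfrak{L}(\mathbf{F}_2)_{O(n)}^w = \mathsf{CF} \subsetneq \mathfrak{L}(\mathbf{F}_2\times\mathbf{F}_2)_{O(n)}^w$, as required. I do not expect a serious obstacle here: the entire argument reduces to checking that subgroup embeddings transfer the identity-acceptance test faithfully and introduce no extra computation steps, so that the witnessing real-time counter automaton survives the passage into the free-group setting with its time bound intact. The only point demanding care is verifying that the time bound genuinely persists under the simulation — but since each simulated step corresponds to exactly one original step under a group homomorphism, the bound is preserved without the polynomial slowdowns that appear in the Turing-machine simulations elsewhere in this section.
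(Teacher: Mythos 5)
Your proof is correct and takes essentially the same approach as the paper: the paper also establishes the inclusion by direct simulation (which it calls obvious, and which your embedding $g \mapsto (g,e)$ makes explicit), invokes $\mathfrak{L}(\mathbf{F}_2)_{O(n)}^w = \mathsf{CF}$, and witnesses properness with the non-context-free language $\{a^nb^nc^n \mid n \geq 0\}$ recognized in real time by an $\mathbf{F}_2 \times \mathbf{F}_2$-automaton ``using the two registers as two counters.'' Your explicit $\mathbb{Z}^2$-embedding via $(a,e)$ and $(e,a)$ is just a careful spelling-out of that phrase.
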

\begin{proof}
	It is obvious that an $ \mathbf{F}_2 $-automaton can be simulated by an $ \mathbf{F}_2 \times \mathbf{F}_2 $-automaton. $ \mathfrak{L}(\mathbf{F}_2)_{O(n)}^w = \mathsf{CF} $ by Theorem \ref{thm: cfrt}. The inclusion is proper since the non-context-free language $ {L}=\{a^nb^nc^n|n\geq 0\} $ can be recognized by an $ \mathbf{F}_2 \times \mathbf{F}_2 $-automaton in real time by using the two registers as two counters. 
\end{proof}

In the rest of the section, the linear-time counterparts of the relationships in Figure \ref{fig: diagram} will be stated.

\begin{thm}
	\begin{enumerate}
		\item $ \mathfrak{L}(\mathbb{Q}^+)^w_{O(n)} \subsetneq \mathfrak{L}(SL(2,\mathbb{Q}))^w_{O(n)} $.
		\item $ \mathfrak{L}(\mathbb{Z})^w_{O(n)} \subsetneq \mathfrak{L}(BS(1,2))^w_{O(n)} \nsubseteq \mathsf{CF}$.
		\item $ \mathfrak{L}(SL(2,\mathbb{Z}))^w_{O(n)} \subsetneq  \mathfrak{L}(SL(3,\mathbb{Z}))^w_{O(n)}$.
		\item $\mathfrak{L}(\mathbb{Z}^2)^w_{O(n)} \subsetneq \mathfrak{L}(\mathbf{H})^w_{O(n)}$.
		\item 	$\mathsf{CF}$ and $\mathfrak{L}(\mathbb{Z}^k)^w_{O(n)}$ are incomparable for all $ k \geq 2 $.
		\item $ \mathfrak{L}(SL(3,\mathbb{Z}))^w_{O(n)} = \mathfrak{L}(GL(3,\mathbb{Z}))^w_{O(n)}$.
		\item 	$ \mathsf{REG} = \mathfrak{L}(\mathbf{F}_0)^w_{O(n)} \subsetneq
		\mathfrak{L}(\mathbf{F}_1)^w_{O(n)} = \mathfrak{L}(\mathbb{Z})^w_{O(n)} \subsetneq
		\mathfrak{L}(\mathbf{F}_2)^w_{O(n)} $.
		
	\end{enumerate}
\end{thm}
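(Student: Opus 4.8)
The plan is to show that each clause is the weak linear-time shadow of the corresponding relationship in Figure \ref{fig: diagram}, reusing the same embeddings and witness languages and checking only that the step count stays linear. Two uniform principles carry most of the weight. First, whenever $H$ is a subgroup of $G$, an $H$-automaton is literally a $G$-automaton running the same computations, so $\mathfrak{L}(H)^w_{O(n)} \subseteq \mathfrak{L}(G)^w_{O(n)}$ with no change in running time; the per-step substitution $\varphi$ of Theorem \ref{thm: qsl2q} is time-preserving in exactly the same way. This delivers every ``$\subseteq$'' appearing in clauses (i)--(iv) and (vii), using $\mathbb{Z}\le\mathbb{Q}^+$, $\mathbb{Z}\le BS(1,2)$, $SL(2,\mathbb{Z})\le SL(3,\mathbb{Z})$, $\mathbb{Z}^2\simeq H_2\le\mathbf{H}$, and $\mathbf{F}_1\le\mathbf{F}_2$. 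Second, the finite-index collapse is already available under a linear bound via Theorem \ref{thm: rtindex}: this settles clause (vi) at once, since $GL(3,\mathbb{Z})$ is finitely generated and $SL(3,\mathbb{Z})$ has finite index in it, and it also underlies the characterizations $\mathfrak{L}(SL(2,\mathbb{Z}))^w_{O(n)}=\mathsf{CF}$ and $\mathfrak{L}(\mathbf{F}_2)^w_{O(n)}=\mathsf{CF}$ from Theorems \ref{thm: cfrt} and \ref{thm: rtF2}, which I will invoke throughout.

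For each strict inclusion I would produce a witness already known to lie outside the smaller \emph{unbounded} class, so that $\mathfrak{L}(G)^w_{O(n)}\subseteq\mathfrak{L}(G)$ makes non-membership automatic, and then show it is recognized in linear time over the larger group. In clause (i) the witness is $\mathtt{UPOW_{odd}}$ and in clause (ii) it is $\mathtt{UPOW}$; both are unary and nonregular, hence outside $\mathfrak{L}(\mathbb{Q}^+)\supseteq\mathfrak{L}(\mathbb{Z})$ by Fact \ref{fact: unary}, and $\mathtt{UPOW}$ also lies outside $\mathsf{CF}$, which yields $\mathfrak{L}(BS(1,2))^w_{O(n)}\nsubseteq\mathsf{CF}$. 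In clauses (iii) and (iv) the witness is $\mathtt{MULT}$, which is non-context-free and, being bounded but non-semilinear, lies outside $\mathfrak{L}(\mathbb{Q}^+)\supseteq\mathfrak{L}(\mathbb{Z}^2)$ by Fact \ref{fact: bounded}; its weakly linear-time $\mathbf{H}$-automaton is supplied by \cite{Re10} (as used in the preceding theorems), and $\mathbf{H}\le SL(3,\mathbb{Z})$ gives clause (iii). For the incomparability in clause (v), $\{a^nb^nc^n\}$ is recognized in real time by a $\mathbb{Z}^2$-automaton (hence lies in every $\mathfrak{L}(\mathbb{Z}^k)^w_{O(n)}$ for $k\ge 2$) but is non-context-free, while $\{a^nb^n\}^*$ is context-free yet outside $\mathfrak{L}(\mathbb{Z}^k)$ for all $k$ by \cite{Gr78}. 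The two strict inclusions of clause (vii) use $\{a^nb^n\}$ (a nonregular language with a real-time $\mathbb{Z}$-automaton) and again $\{a^nb^n\}^*$ (context-free, outside $\mathfrak{L}(\mathbb{Z})$ by \cite{Gr78}).

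The one genuine verification, and where I expect the only real obstacle, is confirming that the power-language automata of Theorems \ref{theorem: upow} and \ref{theorem: UPOWODD} are \emph{weakly} linear-time. The key observation is that their $\varepsilon$-driven phases (the repeated multiplications by $A_1$, $A_3$, $A_4$) are logarithmically short relative to the input: the register accumulates a value exponential in the number of these padding steps, so an accepting branch on an input of length $n$ needs only $O(\log n)$ padding steps beyond the $n$ symbol-consuming steps. Concretely, an accepting run of the $\mathtt{UPOW}$ automaton on $a^{2^i}$ has length $2^i + 2i + O(1) = n + O(\log n)$, and the analogous count for $\mathtt{UPOW_{odd}}$ on $a^{2^{2x+1}}$ is $n + 2x + O(1) = n + O(\log n)$, so both witnesses lie in the relevant weakly linear-time classes. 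The remaining bookkeeping is routine: $\mathfrak{L}(\mathbf{F}_0)^w_{O(n)}=\mathsf{REG}$ because a trivial register turns the machine into an ordinary finite automaton and every regular language has a real-time NFA, while $\mathfrak{L}(\mathbf{F}_1)^w_{O(n)}=\mathfrak{L}(\mathbb{Z})^w_{O(n)}$ since $\mathbf{F}_1$ and $\mathbb{Z}$ are the same group up to relabeling. I anticipate no difficulty beyond the linear-time accounting for the unary power languages.
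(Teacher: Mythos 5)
Your proposal is correct and takes essentially the same approach as the paper: every clause is settled by reusing the witnesses and constructions of the unbounded-time results (Theorems \ref{thm: qsl2q}, \ref{theorem: upow}, \ref{thm: sl2z3z}, \ref{theorem: Z2H}, \ref{thm: sl3zgl3z}, \ref{thm: cfrt}, \ref{thm: rtindex} and Facts \ref{fact: cfzn}, \ref{fact: reg}) and observing that the relevant automata run in weakly linear time. If anything, you are more careful than the paper, which merely asserts that the witness automata ``operate in weakly linear time in all cases,'' whereas you supply the explicit $n + O(\log n)$ accounting of the $\varepsilon$-padding phases for the $\mathtt{UPOW}$ and $\mathtt{UPOW_{odd}}$ machines.
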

\begin{proof}
		(i,ii,iii,iv) Analogous results where no time bound was imposed on the machines were proven in Theorems \ref{thm: qsl2q}, \ref{theorem: upow}, \ref{thm: sl2z3z}, and \ref{theorem: Z2H}, respectively. The group automata recognizing 
	the witness languages ${L}=\{a^{2^{2n+1}} | n \geq 0 \}$,   $\mathtt{UPOW}=\{a^{2^n}|n\geq 0\}$ and $\mathtt{MULT}=\{x^py^qz^{pq} | p,q\geq 0\}$ operate in weakly linear time in all cases. 
	
	\begin{enumerate}\addtocounter{enumi}{4}
	 
		\item The equivalent result for the general case is given in Fact \ref{fact: cfzn}. The non-context-free language $ {L}'=\{a^nb^nc^n|n\geq 0\} $ can be recognized by a $ \mathbb{Z}^2 $-automaton in real time. 
		
		\item The equivalent result for the general case is given in Theorem \ref{thm: sl3zgl3z}. The result follows by Theorem \ref{thm: rtindex}.
		
		\item The equivalent result for the general case is given in Fact \ref{fact: reg}. $ \mathbf{F}_0 $ is the trivial group, and any regular language can be recognized by a deterministic finite automaton, which can be seen as finite automaton over $ \mathbf{F}_0 $, in real time. Since $ \mathbf{F}_1 $ is isomorophic to $ \mathbb{Z} $, the equality is obvious. Since the nonregular language $ {L}=\{a^nb^n|n\geq 0\}$ can be recognized by a $ \mathbb{Z} $-automaton in real time, the proper inclusion follows. Lastly, since $ \mathfrak{L}(\mathbf{F}_2)^w_{O(n)}  $ is equivalent to $ \mathsf{CF} $ by Theorem \ref{thm: cfrt}, the last proper inclusion is still valid.
	\end{enumerate}

\end{proof}

The results are summarized in Figure \ref{fig:diagram-time}. 

\begin{figure}[h!]
	\centering
	\includegraphics[width=1\linewidth]{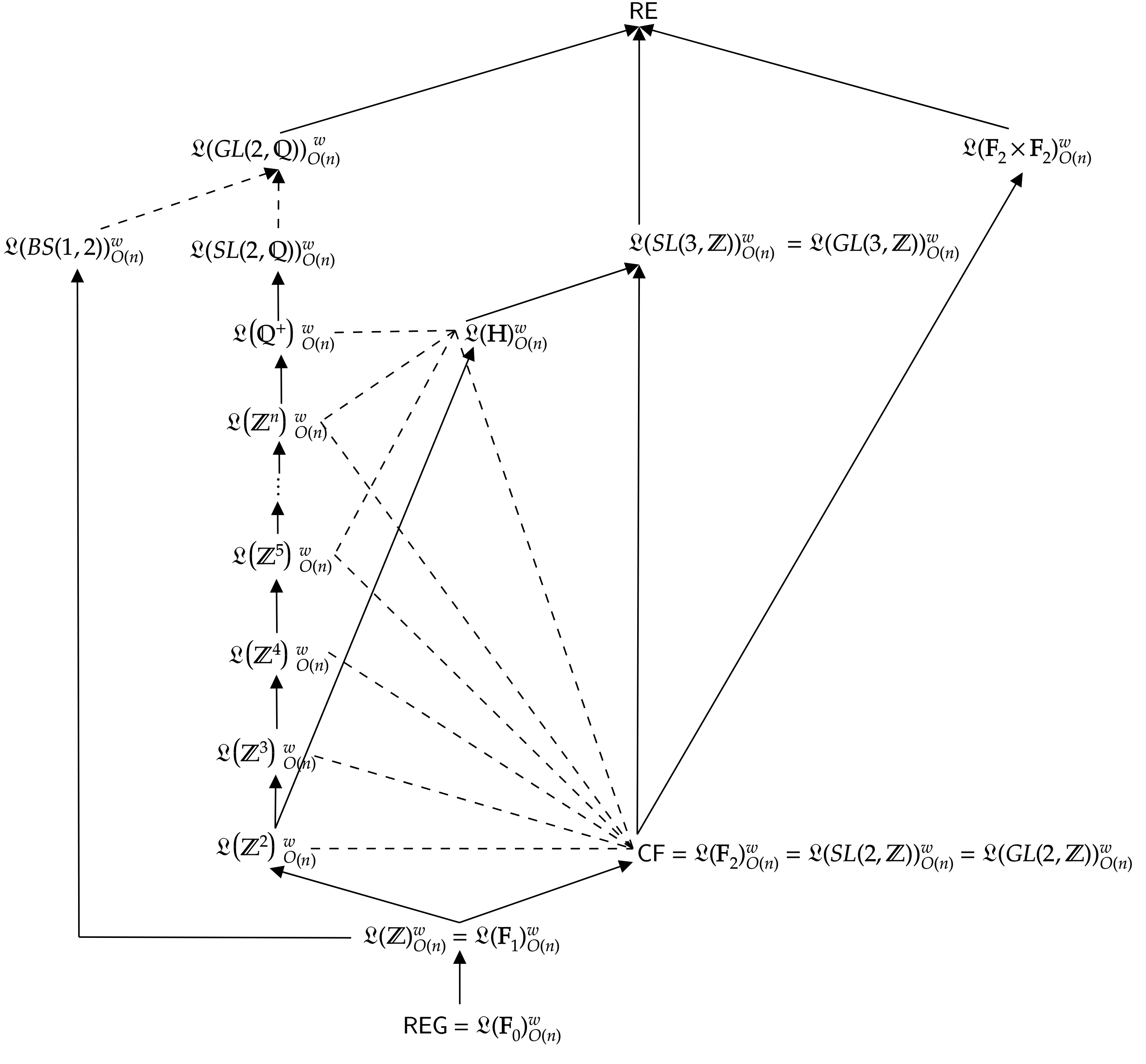}
	\caption{Language classes recognized by weakly linear-time bounded group automata }
	\label{fig:diagram-time}
\end{figure}

\section{Decision Problems for Matrix Semigroups} \label{sec: efadecision}

So far, we have focused on the language recognition power of extended finite automata over matrix groups. In this section, our aim is to make a connection between the theory of extended finite automata and the decision problems for matrix semigroups. Matrices play an important role in various areas of computation, which makes it interesting to study decision problems on matrices. Even for integer matrices of low dimension, many decision problems become non-trivial for finitely generated infinite semigroups.

For our purposes, we define $ S $-automata or extended finite automata over semigroups, generalizing the notion of $ M $-automata from monoids to semigroups. The emptiness problem is defined as the problem of deciding whether a given machine accepts any string. By using the decidability of the emptiness problem of the corresponding extended finite automata, we provide an alternative proof for the decidability of subsemigroup membership problem for $ GL(2,\mathbb{Z}) $ and the decidability of the identity problem for $ M_2(\mathbb{Z}) $. We show that the emptiness and universe problems for extended finite automata over $ SL(4,\mathbb{Z}) $ are undecidable, using the fact that the subgroup membership problem for $ SL(4,\mathbb{Z}) $ is undecidable. We also prove some results on the the decidability of the universe problem for extended finite automata, the problem of deciding whether a given machine accepts every string.

\subsection{Background}

Before proceeding to discuss our results, it is necessary to talk about some key definitions and previous studies on the decidability problems for matrix semigroups.

In the following sequel, let $ G  $ be a finitely generated group. 

Decidability is one of the popular topics of combinatorial group theory. In 1911, Dehn proposed several decision problems for groups including the famous word problem \cite{De11}. Let us recall the definition for the word problem of a group $ G $.

\textit{Word problem for $ G $:}  Given an element $ g \in G $, the problem is to decide whether $ g $ represents the identity element.

Explicitly introduced for the first time by Mikhailova \cite{Mi58}, a generalization of the word problem which is also known as the \textit{generalized word problem} is the following:

\textit{Subgroup membership problem for $ G $:} Given $ \{g_1,g_2,\dots,g_n\} \in G $ and an element $ g \in G $, the problem is to decide whether $ g $ belongs to the subgroup generated by the elements $ g_1, g_2, \dots,g_n $.

In \cite{Mi58}, it is proven that subgroup membership problem for $ \mathbf{F}_2 \times \mathbf{F}_2 $ is undecidable, which yields the undecidability of the subgroup membership problem for $ SL(4,\mathbb{Z}) $.

One can also consider \textit{submonoid membership problem for group $ G $} and \textit{subsemigroup membership problem for group $ G $}, in which case the problem is to decide whether $ g $ belongs to the submonoid and subsemigroup generated by $ \{g_1,g_2,\dots,g_n\} $, respectively.

A further generalization is the rational subset membership problem, as the notion of rational subset generalizes subgroups, submonoids and subsemigroups. 

\textit{Rational subset membership problem for $ G $}: Given a rational subset $ R $ of $ G $ specified using a finite automaton and $ g \in G $, the problem is to decide whether $ g $ belongs to $ R $.

Note that the hardness of the problems are in increasing order and the decidability of rational subset problem implies the decidability of the other problems. Similarly, the undecidability of the subgroup membership problem implies the undecidability of the remaining problems.

When we talk about the membership problems for matrices, it is more natural to consider matrix monoids or semigroups. Hence in the above definitions, we may replace the group $ G $ with a semigroup $ S $ or a monoid $ M $.

For matrices, the well studied decision problem is the subsemigroup membership problem. For $ 3 \times 3 $ matrices with integer entries, the subsemigroup membership problem for $ M_3(\mathbb{Z}) $ is known to be undecidable due to a result by Paterson \cite{Pa70}. The problem remains open for $ GL(3,\mathbb{Z}) $ and recently it has been proven that it is decidable for a subgroup of $ GL(3,\mathbb{Z}) $, the Discrete Heisenberg group \cite{COSW19}.  

An extensive study has been carried out for the matrices from $ M_2({\mathbb{Z}}) $. In \cite{CK05}, it is proven that subsemigroup membership problem for $ GL(2,\mathbb{Z}) $ is decidable. The result is extended by Potapov and Semukhin in \cite{PS17p} to subsemigroups of matrices from $ GL(2,\mathbb{Z}) $ extended by singular matrices and in \cite{PS17} to subsemigroups of nonsingular matrices from $ M_2({\mathbb{Z}}) $. The problem is still open for the general case of $M_2({\mathbb{Z}} ) $. 

When the subsemigroup membership problem is asked for the identity matrix, we obtain the identity problem.

\textit{Identity problem for $ S $:} Given matrices $ \{Y_1,Y_2, \dots ,Y_n\} \in S $, the problem is to decide whether the identity matrix $ I $ belongs to the semigroup generated by the elements $ Y_1, Y_2, \dots,Y_n $.

Note that the identity problem is a special case of the subsemigroup membership problem. The identity problem is also equivalent to the \textit{group problem}, i.e. the problem of deciding whether a subset of a given semigroup generates a nontrivial group.

The decidability of the identity problem for $ M_3(\mathbb{Z}) $ is still open. In \cite{KNP17}, it was proven that the identity problem for the discrete Heisenberg group $ \mathbf{H} $ is decidable (decidability of the membership problem for $ \mathbf{H} $ was unknown at the time). The decidability of the identity problem for $ SL(4,\mathbb{Z}) $, which was open for a long time, was established in \cite{BP10,KNP17}.

\subsection{$ S $-automaton}

An $ S $-automaton is an extended finite automaton where the group/monoid condition is loosened to a semigroup. In order to define the initialization and acceptance steps, we need an identity element. If $ S $ is a monoid or a group, then an identity element already exists and belongs to $ S $. Otherwise, we define 1 to be the identity element of $ S $.  Note that when $ S $ is not a monoid nor a group, then $ \mathcal{E} $ can accept only the empty string. Nevertheless, we define the concept of $ S $-automaton so that the machines in the proofs of Theorem \ref{theorem: ie} and \ref{theorem: iu} are constructed properly.

The \textit{Emptiness problem} for an automaton is the problem of deciding whether the language recognized by the machine is empty. \textit{Universe problem} is the problem of deciding whether the automaton accepts every string.

\subsection{Decidability of the Subsemigroup Membership Problem for $ GL(2,\mathbb{Z})$  }

It is proven that the subsemigroup membership problem for $ GL(2,\mathbb{Z})$ is decidable in \cite{CK05}. We are going to provide an alternative, automata theoretic proof. We will start by proving a series of lemmas.

For a finite index subgroup $ H $ of some finitely generated group $ G $, it is known that $ \mathfrak{L}(H)=\mathfrak{L}(G) $ by Fact \ref{fact: corson}. We will go over the proof details and use Fact \ref{fact: corson} to show that given a $ G $-automaton, one can construct an $ H $-automaton recognizing the same language.

\begin{lem}\label{lem: gh}
	Let $ G $ be a finitely generated group and let $ H $ be a subgroup of finite index. Any $ G $-automaton can be converted into an $ H $-automaton recognizing the same language.
\end{lem}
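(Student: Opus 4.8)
The plan is to build the $H$-automaton by a right-coset construction that records, in its finite-state component, the coset of the register of the given $G$-automaton, while keeping only the ``$H$-part'' of that register in its own register. Since $H$ has finite index, write $m = [G:H]$ and fix a right transversal $T = \{t_1,\dots,t_m\}$ with $t_1 = e$, so that $G$ is the disjoint union of the right cosets $Ht_1,\dots,Ht_m$. Every $x \in G$ then has a unique factorization $x = h\,t_i$ with $h \in H$ and $t_i \in T$ (namely $t_i$ represents the coset $Hx$ and $h = x t_i^{-1}$). The key observation is how right multiplication interacts with this factorization: if $x = h\,t_i$ and we multiply by $g \in G$, then $xg = h(t_i g)$, and writing $t_j \in T$ for the representative of $H t_i g$, we have $t_i g t_j^{-1} \in H$, so $xg = \bigl(h \cdot t_i g t_j^{-1}\bigr) t_j$. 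Thus the new coset representative $t_j$ is obtained from $t_i$ and $g$ by a right action on the finite set $T$, and the new $H$-part is the old one multiplied on the right by $t_i g t_j^{-1} \in H$.

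Given $\mathcal{E} = (Q,\Sigma,G,\delta,q_1,Q_a)$, I would define the $H$-automaton $\mathcal{E}' = (Q \times T, \Sigma, H, \delta', (q_1,t_1), Q_a')$ as follows. For each transition $(q',g) \in \delta(q,\sigma)$ and each $t_i \in T$, let $t_j$ be the representative of $H t_i g$ and put $\bigl((q',t_j),\, t_i g t_j^{-1}\bigr) \in \delta'\bigl((q,t_i),\sigma\bigr)$; this is a legitimate $H$-automaton transition since $t_i g t_j^{-1} \in H$, and the $\varepsilon$-transitions of $\mathcal{E}$ are treated identically. I set $Q_a' = \{(q,t_1) : q \in Q_a\}$.

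The correctness follows from an induction on the length of a computation, showing that $\mathcal{E}'$ is in state $(q,t_i)$ with register $h$ after reading a prefix exactly when $\mathcal{E}$ is in state $q$ with register $h\,t_i$ after reading the same prefix. Because $t_1 = e$ and the factorization $x = h\,t_i$ is unique, the full register equals $e$ if and only if the coset part is $t_1$ and the $H$-part is $e$. Hence $\mathcal{E}'$ ends in an accept state with register $e$ precisely when $\mathcal{E}$ ends in $Q_a$ with register $e$, giving $L(\mathcal{E}') = L(\mathcal{E})$.

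The state set stays finite precisely because $[G:H]=m$ is finite, so $T$ and hence $Q \times T$ are finite; finite generation of $G$ ensures there are only finitely many transitions, so only finitely many labels $t_i g t_j^{-1}$ (all lying in $H$) ever occur. I expect the point needing the most care to be the configuration invariant relating the two machines — in particular the well-definedness of the coset action $t_i \mapsto t_j$, the membership $t_i g t_j^{-1} \in H$, and the uniqueness of the factorization $x = h\,t_i$ — since it is exactly this that makes the acceptance conditions line up. This reproduces, with explicit automata, the content of Fact~\ref{fact: finite}.
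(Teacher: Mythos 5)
Your construction is correct, and it takes a genuinely different route from the paper. The paper derives Lemma~\ref{lem: gh} from the rational-subset framework: it invokes Fact~\ref{fact: corson} to view $L$ as $\{w \mid (w,1)\in R\}$ for a rational subset $R\subseteq \Sigma^*\times G$, uses the fact that $W(G)\in\mathfrak{L}(H)$ when $H$ has finite index (Lemma~2.4 of \cite{Co05}), and then composes the two rational relations via Gilman's composition theorem \cite{Gi96} to obtain a rational subset of $\Sigma^*\times H$ and hence an $H$-automaton. You instead give a direct, self-contained coset construction: the finite-state component tracks the right coset $Ht_i$ of the register of the $G$-automaton, the register of the new machine keeps the $H$-part, and the identity $xg=\bigl(h\cdot t_i g t_j^{-1}\bigr)t_j$ shows the update is well defined with labels in $H$; the configuration invariant and the uniqueness of the factorization $x=ht_i$ (with $t_1=e$) then make the acceptance conditions coincide. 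What your approach buys: it is elementary and explicit (the new machine has exactly $|Q|\cdot[G:H]$ states), it avoids the black-box results of \cite{Co05} and \cite{Gi96}, and it in fact never uses the hypothesis that $G$ is finitely generated --- only finiteness of the index --- so it proves a slightly more general statement. What the paper's route buys: given the machinery already imported for Fact~\ref{fact: corson} and Fact~\ref{fact: finite}, the proof is a few lines, and it keeps the argument inside the rational-transduction formalism that the rest of the chapter (e.g.\ the decidability results building on Lemma~\ref{lem: gh}) reuses; in effect, your coset argument is the combinatorial content hidden inside the cited lemma of \cite{Co05}, which you have inlined.
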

\begin{proof}
	Let $ A $ be the generator set for $ G $ and let $ X=A \cup A^{-1} $. Let $ \mathcal{E} $ be a $ G $-automaton recognizing language $ L $ over alphabet $ \Sigma $. Then there exists a rational subset $ R \subseteq \Sigma^* \times G $ such that $ L=\{w \in \Sigma^* |wR1 \} $. One can define the elements of $ G $ in terms of $ X $ to obtain a rational subset $ R_0 \subseteq \Sigma^* \times X^* $.
	
	Since $ H $ has finite index in $ G $, $ W(G) \in \mathfrak{L}(H)$ (\cite{Co05} Lemma 2.4). It follows that there exists a rational subset $ S \subseteq X^* \times H $ such that $ W(G)=\{w \in X^* | wS1\} $. 
	
	Then the composition $ R_0 \circ S $ is a rational subset of $ \Sigma^* \times H $ and it follows that $ L=\{w \in \Sigma^* | w(R_0 \circ S)1\} $ (\cite{Co05}, Theorem 3.1). The detailed construction of the finite automaton recognizing the composition is given in \cite{Gi96} (Theorem 5.3). Hence a finite automaton $\mathcal{ F} $ over $ \Sigma^* \times H $ recognizing $ L $ exists, from which an $ H $-automaton $ \mathcal{E}' $ recognizing $ L $ can be constructed.  
\end{proof}

The following construction of a pushdown automaton simulating an $ \mathbf{F}_2 $-automaton is left as an exercise in \cite{Ka06}. We present here some details of the construction.

\begin{lem}\label{lemma: f2pda}
	Any $ \mathbf{F}_2 $-automaton can be converted into a pushdown automaton recognizing the same language.  
\end{lem}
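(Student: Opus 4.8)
The lemma asserts that any $\mathbf{F}_2$-automaton can be converted into a pushdown automaton recognizing the same language. Since we already know $\mathfrak{L}(\mathbf{F}_2) = \mathsf{CF}$ (Fact \ref{fact: cf}) and PDAs recognize exactly $\mathsf{CF}$, the result is true on abstract grounds; but the point of the lemma is to give an \emph{explicit effective construction}, which is what the decidability arguments in this section will need. So my plan is constructive rather than merely existential.

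Let me think about this carefully.

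The key insight is that $\mathbf{F}_2$ is the free group on two generators, say $\{a, b\}$, with $X = \{a, a^{-1}, b, b^{-1}\}$. The register of the $\mathbf{F}_2$-automaton holds a reduced word over $X$, starting at the empty word (identity $e$) and ending at $e$ for acceptance. The crucial observation is that reduction in the free group is exactly stack behavior: a reduced word is a string where no symbol is immediately followed by its inverse. Multiplying the register on the right by a generator $g$ either pushes $g$ onto the end of the reduced word, or — if $g$ is the inverse of the last symbol — cancels (pops) it. This is precisely the push/pop discipline of a stack.

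The plan is to build a PDA $\mathcal{A}$ whose stack alphabet is the generator set $X = \{a, a^{-1}, b, b^{-1}\}$ and whose stack content encodes the reduced word currently held in the register. The state set of $\mathcal{A}$ mirrors that of the given $\mathbf{F}_2$-automaton $\mathcal{E}$. For each transition of $\mathcal{E}$ reading $\sigma \in \Sigma_{\varepsilon}$, moving from $q$ to $q'$, and multiplying the register by a generator $g \in X$ (a general group element factors into finitely many such single-generator steps, handled via intermediate states), the PDA simulates the right-multiplication by $g$ as follows: it inspects the top stack symbol, and if that symbol is $g^{-1}$ it pops it (cancellation), otherwise it pushes $g$. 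To perform this inspection-and-branch cleanly I would use $\varepsilon$-moves and a few auxiliary states per transition, so that popping $g^{-1}$ when it is on top and pushing $g$ when it is not are mutually exclusive choices dictated by the actual top symbol. The register equals $e$ exactly when the reduced word is empty, i.e. when the stack is empty; so the PDA accepts when it reaches (a copy of) an accept state of $\mathcal{E}$ with empty stack, which matches the standard empty-stack acceptance condition for PDAs.

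The main technical obstacle — and the reason this is ``left as an exercise'' rather than trivial — is handling the top-of-stack inspection correctly, since a PDA transition must commit to a pop before it sees what it is popping. The clean way is to split each register-multiplication into an auxiliary gadget: pop the top symbol $\gamma$, then in the resulting auxiliary state decide, based on $\gamma$ and the multiplier $g$, whether to push back $\gamma g$ (two symbols, when no cancellation occurs) or push nothing (when $g = \gamma^{-1}$), also treating the empty-stack case where $\gamma = \varepsilon$ so that $g$ is simply pushed. One must verify this gadget faithfully maintains the invariant that the stack content (read bottom-to-top) is always the reduced word equal to the current register value, and that $\varepsilon$-transitions introduced do not create spurious accepting paths. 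Once the invariant is established by induction on the number of steps, correctness of the language equivalence follows immediately, and the construction is manifestly effective, giving the decidability consequences needed later.
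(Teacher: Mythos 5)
Your construction is correct, and at the skeleton level it matches the paper's proof: mirror the states of $\mathcal{E}$, factor each transition's group element $f=f_1\cdots f_n$ into single generators handled through $n$ auxiliary states, keep the register on the stack, and accept exactly in (copies of) accept states with an empty stack. But the two proofs part ways at the one step that carries the real content, and there your version is the stronger one. The paper takes the stack alphabet to be $A=\{a,b\}$ and translates every inverse generator into a mandatory matching pop ($a^{-1}$ pops $a$, $b^{-1}$ pops $b$, the move being disabled when the required symbol is not on top). That only tracks register values that are positive words, so it cannot follow a computation in which multiplication by an inverse generator lengthens the reduced word: for a cycle whose multipliers read $a,\,b^{-1},\,b,\,a^{-1}$ the product is $e$ and $\mathcal{E}$ accepts, yet the paper's PDA is stuck at the second step (the top is $a$, and it can only pop $b$). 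Read literally, the paper's machine accepts only those inputs some of whose multiplier words are Dyck words over the pairs $(a,a^{-1})$, $(b,b^{-1})$, which is in general a proper subset of $L(\mathcal{E})$. Your choice of stack alphabet $X=\{a,a^{-1},b,b^{-1}\}$ together with ``pop $g^{-1}$ if it is on top, otherwise push $g$'' handles exactly this case (push $b^{-1}$, cancel it later), and is the standard, correct solution of the exercise from \cite{Ka06}; it is what makes the later decidability applications of the lemma sound.

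One implementation point in your write-up deserves care. In the PDA formalism of this paper a transition that pops $\varepsilon$ is always enabled, so ``the stack is empty'' cannot be tested directly; to make your three branches genuinely mutually exclusive you need a bottom-of-stack marker (popped just before accepting). Alternatively you can drop top-inspection entirely and let the machine choose nondeterministically, for each generator $g$, between pushing $g$ and popping $g^{-1}$: then the stack word always represents the current register value (possibly unreduced), so an empty stack still certifies that the register equals $e$, while the run that cancels whenever possible keeps the stack reduced and hence empties it whenever the register returns to $e$. Either patch completes your induction on the invariant, which is the right one.
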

\begin{proof}
	Let $\mathcal{E} $ be an $ \mathbf{F}_2 $-automaton recognizing language $ L $ over $ \Sigma $ with the state set $ Q $ and let $ A=\{a,b\} $ be the generator set for $ \mathbf{F}_2 $. Let us construct a pushdown automaton $ \mathcal{A} $ recognizing the same language with the stack alphabet $ A $. Let $ (q',f) \in \delta (q,\sigma) $ be a transition of $ \mathcal{E} $ where $ q,q' \in Q $, $ \sigma \in \Sigma_{\varepsilon} $ and $ f \in \mathbf{F}_2 $ such that $$ f=f_1f_2\dots f_n \mbox{ where } f_i \in X=A \cup A^{-1} \mbox{ for } i=1\dots n. $$ In $ \mathcal{A}, $ we need additional $ n$ states $ q_1\dots q_{n} \notin Q$ to mimic each given transition of $ \mathcal{E} $. If $ f_i =a  $ or $ f_i =b $, then this corresponds to pushing $ a $ or $ b $ to the stack, respectively. Similarly, if $ f_i =a^{-1}  $ or $ f_i =b^{-1} $, then $ \mathcal{A} $ pops $ a $ or $ b $ from the stack. Each single transition of $ \mathcal{E} $ is accomplished by the pushdown automaton $ \mathcal{A} $ by going through the additional states and pushing and popping symbols. Initially, the register of $ \mathcal{E} $ is initialized with the identity element of $ \mathbf{F}_2 $, which corresponds to the stack of $ \mathcal{A} $ being empty. The acceptance condition of $ \mathcal{E} $, which is ending in an accept state with the register being equal to the identity element is realized in $\mathcal{A} $ by starting with an empty stack and accepting with an empty stack in an accept state. We conclude that $\mathcal{A}$ recognizes language $ L $.   
\end{proof}

\begin{lem}\label{lem: gl2z}
	The emptiness problem for $ GL(2,\mathbb{Z}) $-automaton is decidable. 
\end{lem}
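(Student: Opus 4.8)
The plan is to reduce the emptiness problem for $GL(2,\mathbb{Z})$-automata to the emptiness problem for pushdown automata, which is a classical decidable problem for context-free languages. The reduction proceeds by effectively transforming a given $GL(2,\mathbb{Z})$-automaton, through the chain of finite-index subgroups $\mathbf{F}_2 \leq SL(2,\mathbb{Z}) \leq GL(2,\mathbb{Z})$, into a pushdown automaton recognizing exactly the same language. Since emptiness depends only on the recognized language, deciding emptiness for the resulting PDA answers the question for the original machine.

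First I would apply Lemma \ref{lem: gh} with $G = GL(2,\mathbb{Z})$ and $H = SL(2,\mathbb{Z})$: since $SL(2,\mathbb{Z})$ has index $2$ in the finitely generated group $GL(2,\mathbb{Z})$, the given $GL(2,\mathbb{Z})$-automaton can be converted into an $SL(2,\mathbb{Z})$-automaton recognizing the same language. Next I would apply Lemma \ref{lem: gh} again, now with $G = SL(2,\mathbb{Z})$ and $H = \mathbf{F}_2$; as $\mathbf{F}_2$ has index $12$ in $SL(2,\mathbb{Z})$ (the same fact used in the proof of Theorem \ref{theorem:gl}), this yields an $\mathbf{F}_2$-automaton recognizing the same language. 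Finally, by Lemma \ref{lemma: f2pda}, this $\mathbf{F}_2$-automaton is converted into a pushdown automaton recognizing the same language.

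It then remains to invoke the standard fact that the emptiness problem for pushdown automata (equivalently, for context-free languages) is decidable. Chaining the three effective constructions above with this decision procedure gives a decision procedure for the emptiness problem of $GL(2,\mathbb{Z})$-automata. The main point to verify --- and the only real obstacle --- is that each step is genuinely algorithmic rather than a mere existence statement: the conversions in Lemma \ref{lem: gh} rely on effectively computing rational subset compositions (via the finite-automaton construction referenced there) and on the effective solvability of the word problem in $\mathbf{F}_2$, while Lemma \ref{lemma: f2pda} is explicitly constructive. Once effectiveness is confirmed throughout the chain, the result follows.
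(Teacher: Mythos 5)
Your proposal is correct and follows essentially the same route as the paper's proof: convert the $GL(2,\mathbb{Z})$-automaton to an $\mathbf{F}_2$-automaton via Lemma \ref{lem: gh}, convert that to a pushdown automaton via Lemma \ref{lemma: f2pda}, and invoke decidability of PDA emptiness. The only cosmetic difference is that you factor the finite-index conversion into two steps through $SL(2,\mathbb{Z})$, whereas the paper applies Lemma \ref{lem: gh} once, using directly that $\mathbf{F}_2$ has finite index in $GL(2,\mathbb{Z})$.
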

\begin{proof}
	Let $ \mathcal{E} $ be a $ GL(2,\mathbb{Z}) $-automaton. Since $ \mathbf{F}_2 $ has finite index in $ GL(2,\mathbb{Z}) $, one can construct an $ \mathbf{F}_2 $-automaton recognizing $ L(\mathcal{E}) $ by Lemma $ \ref{lem: gh} $. The $ \mathbf{F}_2 $-automaton can be converted to a pushdown automaton $ \mathcal{A} $ using the procedure described in Lemma \ref{lemma: f2pda}. Since the emptiness problem for pushdown automata is known to be decidable, we conclude that the emptiness problem for $ GL(2,\mathbb{Z}) $-automata is also decidable since any $ GL(2,\mathbb{Z}) $-automaton can be converted to a pushdown automaton.
\end{proof}

Now we prove the main theorem of the section and establish the connection between the emptiness problem for $ G $-automata and the subsemigroup membership problem for $ G $.

\begin{thm}\label{theorem: me}
	Let $ G $ be a finitely generated group. If the emptiness problem for $ G $-automata is decidable, then the subsemigroup membership problem for $ G $ is decidable.
\end{thm}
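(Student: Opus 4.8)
The plan is to reduce subsemigroup membership directly to the emptiness problem by building, from an instance of the former, a $G$-automaton whose language is nonempty precisely when the answer to the membership question is ``yes''. Given generators $g_1,\dots,g_n \in G$ and a target element $g \in G$, I take the alphabet $\Sigma=\{\sigma_1,\dots,\sigma_n\}$ with the intended correspondence $\sigma_i \leftrightarrow g_i$, so that reading a nonempty word $\sigma_{i_1}\cdots\sigma_{i_k}$ multiplies the register by the product $g_{i_1}\cdots g_{i_k}$. The subsemigroup generated by the $g_i$ is exactly the set of such products with $k \geq 1$, so $g$ lies in it iff some nonempty word yields register value $g$.

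Since the $G$-automaton acceptance criterion asks for the register to equal the identity $e$ at the end, I convert ``product equals $g$'' into ``register equals $e$'' by appending a multiplication by $g^{-1}$, which exists because $G$ is a group. Concretely, I use three states: an initial state $q_1$, a working state $p$, and an accept state $q_a$. The transitions are $\delta(q_1,\sigma_i)=\{(p,g_i)\}$ and $\delta(p,\sigma_i)=\{(p,g_i)\}$ for each $i$, which force at least one input symbol to be consumed while accumulating the product in the register, together with a single $\varepsilon$-transition $\delta(p,\varepsilon)=\{(q_a,g^{-1})\}$ that finalizes the computation. A string $\sigma_{i_1}\cdots\sigma_{i_k}$ is then accepted iff $g_{i_1}\cdots g_{i_k}g^{-1}=e$, that is, iff $g_{i_1}\cdots g_{i_k}=g$; hence $L(\mathcal{E})\neq\emptyset$ iff $g$ belongs to the subsemigroup.

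Having built $\mathcal{E}$, I invoke the hypothesis that the emptiness problem for $G$-automata is decidable: running that decision procedure on $\mathcal{E}$ answers whether $L(\mathcal{E})$ is empty, and by the equivalence above this settles the subsemigroup membership question, completing the reduction. Note that the construction is effective, since the inputs $g_1,\dots,g_n,g$ are supplied in some effective form and inverses are computable in $G$.

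The construction itself is routine, so the only point that needs care is the insistence on \emph{nonempty} products: the separate initial state $q_1$, with no outgoing $\varepsilon$-transition to $q_a$, guarantees that the empty input is never accepted. This matters precisely in the borderline case $g=e$, where the question ``is $e$ a nonempty product of the generators?'' is the nontrivial identity problem rather than a vacuous yes; collapsing $q_1$ and $p$ into one state would wrongly accept $\varepsilon$ and report membership whenever $g=e$.
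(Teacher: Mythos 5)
Your proof is correct, and the core reduction is the same as the paper's: build a $G$-automaton whose transitions multiply the register by the generators of the subsemigroup, use one extra multiplication by $g^{-1}$ (which exists since $G$ is a group) to convert ``the product equals $g$'' into the identity acceptance condition, and then invoke the assumed emptiness decision procedure. The differences are in packaging and in one substantive detail. The paper's automaton $\mathcal{E}_1$ has a unary alphabet and two states, with $\delta(q_1,a)=(q_2,g^{-1})$ and $\delta(q_2,a)\ni(q_2,h_i)$ chosen nondeterministically, so $g^{-1}$ is applied first; you use an $n$-letter alphabet and append $g^{-1}$ last via an $\varepsilon$-transition. Those choices are interchangeable. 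The substantive point is the one you flag at the end: in the paper's construction the one-letter string $a$ is accepted precisely when $g^{-1}=e$, i.e.\ the empty product of generators is permitted, so for $g=e$ the paper's reduction declares membership even when the identity is not a nonempty product of the generators (take $G=\mathbb{Z}$ and the subsemigroup generated by $1$: the paper's automaton accepts $a$, yet $0$ is not in that subsemigroup). Since the $g=e$ instance of subsemigroup membership is exactly the nontrivial identity problem, this is a genuine defect in the paper's argument for that borderline case. Your third state, which forces at least one generator symbol to be consumed before $g^{-1}$ is applied, makes the reduction faithful to subsemigroup (as opposed to submonoid) membership; in this respect your construction is tighter than the paper's own.
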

\newpage
\begin{proof}
	Let $ H $ be a finitely generated subsemigroup of $ G $ and let $ g \in G $ be given. We are going to construct a $ G $-automaton $ \mathcal{E}_1 $ and show that $g \in H $ iff $ L(\mathcal{E}_1) $ is nonempty. The state transition diagram of $ \mathcal{E}_1 $ is given in Figure $ \ref{fig:me} $. The transition labeled by $ (a,h_i) $ stands for each one of the transitions that multiply the register with $ h_i $ for $ i=1\dots n $.
	
		$ \mathcal{E}_1 $ has two states: the initial state $ q_1 $ and the accept state $ q_2 $. The transition function of $ \mathcal{E}_1 $ is defined as $ \delta(q_1,a)=(q_2,g^{-1}) $ and $ \delta(q_2,a)=(q_2,h_i) $ for each $ i=1\dots n $, where the set $\{h_1,\dots,h_n\}  $ generates $ H $.

	\begin{figure}[h]
		\centering
				\includegraphics[width=1\linewidth]{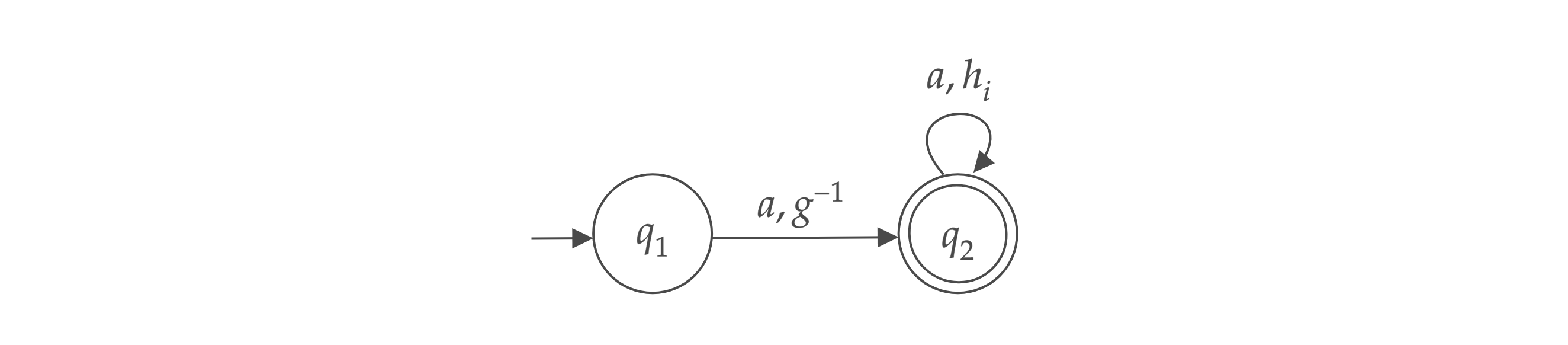}

		\caption{State transition diagram of $ \mathcal{E}_1 $}
		\label{fig:me}
	\end{figure}
	
	Note that $ g^{-1} $ exists and belongs to $ G $ since $ G $ is a group. If $ g \in H $, then there exists an integer $ k \geq 1 $ and $ i_1, i_2, \dots , i_k \in \{1, \dots , n\} $ such that $ h_{i_1} h_{i_2} \cdots h_{i_k} = g $. The string $ a^{k+1} $ is accepted by $\mathcal{E}_1$ as the register is initially multiplied by $ g^{-1}$ and there exists a product of elements yielding $ g $, from which we can conclude that the identity element can be obtained through a series of transitions of the machine $ \mathcal{E}_1 $. Hence, we can conclude that $ L(\mathcal{E}_1) $ is nonempty. 
	
	For the other direction, assume that $ L(\mathcal{E}_1) $ is nonempty, which means that some input string is accepted by $ \mathcal{E}_1 $. Since the acceptance condition requires that the product of the elements multiplied by the register of $ \mathcal{E}_1 $ is equal to the identity element and the register is initially multiplied by $ g^{-1} $, we can conclude that $ H $ contains $ g$. 
	
	Now suppose that the emptiness problem for $ G $-automaton is decidable. Then one can check if $ g $ is an element of $ H $ by constructing $ \mathcal{E}_1 $ and checking if $ L(\mathcal{E}_1) $ is nonempty. Hence, the subsemigroup membership problem $ G $ is also decidable.   	
\end{proof}

\newpage 
\textbf{Remark:} After obtaining the results reported above, we learned that a stronger version of Theorem \ref{theorem: me} was already proven in \cite{KSS07} and addressed later in \cite{Lo13 ,Ze15}.

\begin{fact}\textup{\cite{KSS07}}
	Let $ G $ be a finitely generated group, and $ M $ a finitely generated monoid. Then the following are equivalent:
	\begin{enumerate}
		\item The rational subset problem for $ G\times M $ is decidable;
		 \item The membership is decidable for $ G $-automaton subsets of M.
	\end{enumerate}
\end{fact}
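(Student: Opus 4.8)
The plan is to phrase both conditions in the common language of rational subsets and then show that each reduces to the other. Recall from Fact~\ref{fact: corson} that a $G$-automaton over an alphabet $\Sigma$ recognizes exactly the languages of the form $\{w \in \Sigma^* : (w,1) \in R\}$ for a rational subset $R \subseteq \Sigma^* \times G$. The natural generalization, replacing the free monoid $\Sigma^*$ by the finitely generated monoid $M$, is what is meant by a \emph{$G$-automaton subset of $M$}: a set $S \subseteq M$ of the form $S = \{m \in M : (m,1) \in R\}$ for some rational subset $R \subseteq M \times G$, where $1$ denotes the identity of $G$. Since $M \times G \simeq G \times M$ and this reordering is effective at the level of the defining automata, a rational subset of one yields a rational subset of the other by swapping edge labels. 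I will use throughout that rational subsets are effectively closed under left concatenation with a singleton: if $R$ is rational and $t \in G \times M$, then $\{t\} \cdot R$ is rational, witnessed by prepending a new initial state joined to the old one by a single edge labelled $t$.

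For the direction $(1) \Rightarrow (2)$, suppose the rational subset problem for $G \times M$ is decidable. Given a $G$-automaton subset $S = \{m : (m,1) \in R\}$ of $M$, with $R \subseteq M \times G$ rational, and an element $m_0 \in M$, membership $m_0 \in S$ holds iff the single element $(m_0,1)$ lies in $R$. Swapping coordinates turns this into the question of whether $(1,m_0)$ lies in the rational subset $R' \subseteq G \times M$ obtained from $R$, which is literally an instance of the rational subset problem for $G \times M$. Hence it is decidable, and this direction is immediate.

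The content lies in the converse $(2) \Rightarrow (1)$, which is the step I expect to require the most care. Suppose membership is decidable for $G$-automaton subsets of $M$, and let a rational subset $R \subseteq G \times M$ and an element $(g,m) \in G \times M$ be given; the goal is to decide $(g,m) \in R$. The key trick is to normalize the group coordinate to the identity: since $G$ is a group, $g^{-1}$ exists and is computable as a word over the generators, so I form the rational subset $R_g = \{(g^{-1},1)\} \cdot R$. Then $(g,m) \in R$ iff $(1,m) \in R_g$, because $(1,m) \in R_g$ requires an element $(g',m') \in R$ with $g^{-1}g' = 1$ and $m' = m$, that is $(g',m') = (g,m)$. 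Crucially, this only inverts the group coordinate, so the non-invertibility of $m$ in the monoid $M$ never obstructs the argument. Finally, swapping coordinates of $R_g$ yields a rational subset $R_g'' \subseteq M \times G$, and the associated $G$-automaton subset $S = \{m' \in M : (m',1) \in R_g''\}$ satisfies $m \in S$ iff $(1,m) \in R_g$ iff $(g,m) \in R$. Deciding $m \in S$ is exactly a membership query for a $G$-automaton subset of $M$, which is decidable by hypothesis, completing the reduction. The main obstacle is thus conceptual rather than computational: correctly identifying a ``$G$-automaton subset of $M$'' with the identity-fiber of a rational subset of $M \times G$, and verifying that the translation and coordinate-swap constructions preserve rationality effectively; once these are in place the equivalence follows by routine bookkeeping.
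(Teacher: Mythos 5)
Your proof is correct, so the main thing to flag is that the thesis itself never proves this statement: it appears only as a quoted Fact from \cite{KSS07}, inserted in a remark to acknowledge that a stronger version of Theorem~\ref{theorem: me} was already known in the literature. Measured against what the paper does prove, your argument is both a strengthening and a close relative. The key move in your direction $(2)\Rightarrow(1)$ --- left-translating $R$ by the singleton $\{(g^{-1},1)\}$ so that membership of $(g,m)$ in $R$ becomes membership of $(1,m)$ in $R_g$, an identity-fiber question that a $G$-automaton over $M$ can answer --- is precisely the trick used in the paper's proof of Theorem~\ref{theorem: me}, where the two-state automaton $\mathcal{E}_1$ first multiplies its register by $g^{-1}$ and then by generators of the subsemigroup $H$, so that $g\in H$ iff $L(\mathcal{E}_1)\neq\emptyset$. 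The difference is packaging and scope: the paper instantiates the trick once, for a single implication (decidable emptiness implies decidable subsemigroup membership) with the rational set generated by a subsemigroup, whereas you run the same idea uniformly inside the rational-subset formalism, add the coordinate swap between $M\times G$ and $G\times M$ (the analogue of the Fact~\ref{fact: corson} identification), and also supply the easy converse $(1)\Rightarrow(2)$, recovering the full equivalence. Your bookkeeping is sound: rational subsets are effectively closed under one-sided concatenation with a singleton, $g^{-1}$ is computable from a word representing $g$ because $G$ is a group, and the monoid coordinate is only ever multiplied by the identity, so non-invertibility in $M$ never intervenes. The one convention worth stating explicitly is that clause (ii) must be read as the \emph{uniform} membership problem, with the $G$-automaton part of the input; your reduction builds a different automaton for each query $(g,m)$, so a non-uniform reading would not suffice --- but that uniform reading is the intended one in \cite{KSS07}, so this is a clarification, not a gap.
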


Now we are ready to state our main result.

\begin{thm}
	The subsemigroup membership problem for $ GL(2,\mathbb{Z}) $ is decidable.	
\end{thm}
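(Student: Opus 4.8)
The plan is to obtain this result as an immediate consequence of the machinery built up in the preceding lemmas, treating it as a corollary of the correspondence between emptiness of group automata and subsemigroup membership. First I would observe that $GL(2,\mathbb{Z})$ is finitely generated, so that Theorem \ref{theorem: me} applies to it directly: the decidability of the emptiness problem for $GL(2,\mathbb{Z})$-automata is a sufficient condition for the decidability of the subsemigroup membership problem for $GL(2,\mathbb{Z})$. This reduces the entire burden to establishing that emptiness is decidable.

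That decidability is exactly the content of Lemma \ref{lem: gl2z}, so the remaining step is simply to invoke it. For the reader's benefit I would recall the chain of reductions underlying that lemma: given a $GL(2,\mathbb{Z})$-automaton, one uses the fact established in Section \ref{Section: 23matrices} that $\mathbf{F}_2$ sits as a finite-index subgroup of $GL(2,\mathbb{Z})$, together with Lemma \ref{lem: gh}, to produce an equivalent $\mathbf{F}_2$-automaton; Lemma \ref{lemma: f2pda} then converts this into a pushdown automaton recognizing the same language; and since the emptiness problem for pushdown automata is decidable, so is emptiness for the original $GL(2,\mathbb{Z})$-automaton.

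Combining the two ingredients yields the procedure: to decide whether a given $g$ lies in the subsemigroup generated by $\{h_1,\dots,h_n\} \subseteq GL(2,\mathbb{Z})$, one constructs the two-state $GL(2,\mathbb{Z})$-automaton $\mathcal{E}_1$ of Theorem \ref{theorem: me}, pushes it through the above chain to obtain a pushdown automaton, and tests that automaton for emptiness. I do not expect a genuine obstacle here once Lemma \ref{lem: gl2z} and Theorem \ref{theorem: me} are in hand; the only point demanding care is confirming that each conversion in the chain preserves the recognized language exactly, which is precisely what Lemmas \ref{lem: gh} and \ref{lemma: f2pda} are designed to guarantee. Hence the proof amounts to citing Lemma \ref{lem: gl2z} and applying Theorem \ref{theorem: me}.
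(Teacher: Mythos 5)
Your proposal is correct and follows exactly the paper's own route: invoke Lemma \ref{lem: gl2z} for the decidability of the emptiness problem for $GL(2,\mathbb{Z})$-automata, then apply Theorem \ref{theorem: me}, noting that $GL(2,\mathbb{Z})$ is finitely generated. The additional recap of the reduction chain ($\mathbf{F}_2$ of finite index, conversion to a pushdown automaton, decidability of PDA emptiness) is accurate but already contained in the cited lemmas.
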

\begin{proof}
	From Lemma $ \ref{lem: gl2z} $, the emptiness problem for $ GL(2,\mathbb{Z}) $-automaton is decidable. Then by Theorem \ref{theorem: me}, the result follows.
\end{proof}

Note that we cannot extend this result to $ M_2(\mathbb{Z}) $. Even though the emptiness problem for $ M_2(\mathbb{Z}) $-automata is decidable, the construction in Theorem $ \ref{theorem: me} $ works only for group automata.

\subsection{Decidability of the Identity Problem for $ M_2(\mathbb{Z}) $}

In this section we provide an alternative proof for the decidability of the identity problem for $ M_2(\mathbb{Z}) $, which was originally proven in \cite{CK05}.

We should first show that the emptiness problem for $ M_2(\mathbb{Z}) $-automaton is decidable.

\begin{lem}\label{lem: m2z}
	The emptiness problem for $ M_2(\mathbb{Z}) $-automaton is decidable.	
\end{lem}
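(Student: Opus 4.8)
The plan is to reduce the emptiness problem for $M_2(\mathbb{Z})$-automata to the emptiness problem for $GL(2,\mathbb{Z})$-automata, which we have already shown to be decidable in Lemma \ref{lem: gl2z}. The reduction reuses the key determinant observation from the proof of Theorem \ref{thm: M2Z}. Given an $M_2(\mathbb{Z})$-automaton $\mathcal{E}$, I would first note that its register starts at the identity matrix $I$ (determinant $1$) and that acceptance requires the register to equal $I$ again at the end of the computation. Since the determinant is multiplicative, the product of the determinants of all matrices used along an accepting computation must equal $1$. As every such determinant is an integer, and a product of integers equals $1$ only when each factor is $\pm 1$, no accepting computation can ever traverse an edge labeled by a matrix whose determinant differs from $\pm 1$ (a single determinant of modulus $0$ or $\geq 2$ can never be compensated by integer cofactors).

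The second step is therefore to delete from $\mathcal{E}$ every edge whose label lies outside $GL(2,\mathbb{Z})$, exactly as in the proof of Theorem \ref{thm: M2Z}. This yields a $GL(2,\mathbb{Z})$-automaton $\mathcal{E}'$, and the determinant argument above guarantees $L(\mathcal{E}') = L(\mathcal{E})$, so in particular $L(\mathcal{E})$ is empty if and only if $L(\mathcal{E}')$ is. Finally I would invoke Lemma \ref{lem: gl2z}: the emptiness problem for $GL(2,\mathbb{Z})$-automata is decidable (via the chain $GL(2,\mathbb{Z})$-automaton $\to$ $\mathbf{F}_2$-automaton $\to$ pushdown automaton, whose emptiness is decidable). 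Decidability for $M_2(\mathbb{Z})$-automata follows immediately, since the edge-deletion step is effective.

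The construction itself is entirely routine; the only substantive point, and the one I would state carefully, is the justification that deleting the determinant-$\neq\pm 1$ edges does not alter the recognized language. This rests squarely on the multiplicativity of the determinant together with the integrality of all entries in $M_2(\mathbb{Z})$, which together force the absence of any non-unit-determinant matrix from an accepting run. I do not anticipate any genuine obstacle beyond making this invariant precise, since the heavy lifting (decidability for $GL(2,\mathbb{Z})$) has already been carried out in Lemma \ref{lem: gl2z}.
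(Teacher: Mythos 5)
Your proposal is correct and follows essentially the same route as the paper's proof: both remove every transition labeled by a matrix of determinant different from $\pm 1$ (justified by the multiplicativity of the determinant and integrality of the entries, as in Theorem \ref{thm: M2Z}), obtaining a $GL(2,\mathbb{Z})$-automaton recognizing the same language, and then invoke Lemma \ref{lem: gl2z}. Your write-up simply spells out the determinant invariant in more detail than the paper does.
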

\begin{proof}
	Let $ \mathcal{E} $ be an $ M_2(\mathbb{Z}) $-automaton. Any transition labeled by a matrix whose determinant is not equal to $ \pm 1 $ can be safely removed from $ \mathcal{E} $, since after multiplication with such a matrix, it is not possible that the register is equal to identity matrix again and what we obtain is a $ GL(2,\mathbb{Z}) $-automaton. The emptiness problem for $ GL(2,\mathbb{Z}) $-automata is decidable by Lemma \ref{lem: gl2z}.
	
\end{proof}

In the next theorem, we make a connection between the identity problem for a semigroup $ S $ and the emptiness problem for the corresponding $ S $-automaton.

\begin{thm}\label{theorem: ie}
	Let $ S $ be a finitely generated semigroup. If the emptiness problem for $ S $-automata is decidable, then the identity problem for $ S $ is decidable.
\end{thm}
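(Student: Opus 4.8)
The plan is to reproduce the reduction used in Theorem \ref{theorem: me}, but to observe that, because we are now asking about the identity element itself, no group inverse is ever needed and the argument survives the passage from a group to an arbitrary semigroup. Concretely, given generators $Y_1,\dots,Y_n$ of a subsemigroup of $S$, I would construct an $S$-automaton $\mathcal{E}$ whose nonempty accepted strings correspond exactly to the products of the $Y_i$ that evaluate to the identity, and then transfer decidability of the identity problem from the assumed decidability of emptiness.

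First I would fix the one-letter alphabet $\Sigma=\{a\}$ and build $\mathcal{E}$ with two states: an initial, non-accepting state $q_1$ and an accept state $q_2$. The transitions are $\delta(q_1,a)\ni (q_2,Y_i)$ and $\delta(q_2,a)\ni (q_2,Y_i)$ for each $i=1,\dots,n$, and the register is initialized to the designated identity $1$ of $S$ (which is the identity matrix $I$ in the matrix setting). Then on an input $a^k$ with $k\geq 1$, after the first letter the machine sits in $q_2$ with register $Y_{i_1}$, and after reading all $k$ letters the register holds a product $Y_{i_1}\cdots Y_{i_k}$ for some choice of indices; the computation is accepting precisely when this product equals $1$. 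The empty string is rejected because $q_1$ is not an accept state, which is exactly what forces the accepted products to have length at least one, matching the semigroup (as opposed to monoid) generation in the statement of the identity problem.

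The correctness claim I would verify is that $L(\mathcal{E})\neq\emptyset$ if and only if there exist $k\geq 1$ and indices $i_1,\dots,i_k\in\{1,\dots,n\}$ with $Y_{i_1}\cdots Y_{i_k}=I$, that is, if and only if $I$ lies in the subsemigroup generated by $Y_1,\dots,Y_n$. One direction is immediate from the trace of the computation described above; for the converse, every register value reachable in state $q_2$ is by construction a product of length at least one of the $Y_i$, so an accepting computation witnesses exactly such a product equal to $I$. Consequently, if the emptiness problem for $S$-automata is decidable, one decides the identity problem by building $\mathcal{E}$ from the given generators and testing whether $L(\mathcal{E})=\emptyset$.

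The only delicate point, and the one spot where the semigroup setting differs from the group setting of Theorem \ref{theorem: me}, is the bookkeeping around the identity element. I would be careful to keep the initial state distinct from the accept state so that the vacuous product, whose register is still the initialization value $1$, is not counted; otherwise $L(\mathcal{E})$ would be nonempty for trivial reasons and the reduction would collapse. I would also appeal to the convention fixed in the definition of $S$-automaton that the same element $1$ serves both for initialization and for the acceptance test, so that the condition ``register equals $1$'' is literally ``the scanned product equals $I$.'' With these conventions the reduction is faithful, and, unlike in Theorem \ref{theorem: me}, no use of $g^{-1}$ is made, so finite generation of $S$ rather than a group structure suffices.
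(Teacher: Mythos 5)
Your proof is correct and takes essentially the same approach as the paper: the paper's proof builds exactly the same two-state unary $S$-automaton (a non-accepting initial state $q_1$, an accepting sink $q_2$, with transitions $\delta(q_1,a)=(q_2,s_i)$ and $\delta(q_2,a)=(q_2,s_i)$ for each generator $s_i$) and then transfers decidability of emptiness to the identity problem in the same way. Your explicit remarks---that the empty string must be rejected so that only products of length at least one are witnessed, and that no inverses are needed so the group hypothesis of Theorem~\ref{theorem: me} can be dropped---simply make precise the conventions the paper's construction implicitly relies on.
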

\begin{proof}
	We are going to construct an $ S $-automaton $ \mathcal{E}_2 $ and show that $ S $ contains the identity element iff $ L(\mathcal{E}_2) $ is nonempty. The state transition diagram of $ \mathcal{E}_2 $ is given in Figure $ \ref{fig:ie} $. The transitions labeled by $ (a,s_i) $ stand for each one of the transitions that multiply the register with $ s_i $ for $ i=1\dots n $.
	
	 $ \mathcal{E}_2$ has two states: the initial state $ q_1 $ and the accept state $ q_2 $. The transition function of $ \mathcal{E}_2$ is defined as $ \delta(q_1,a)=(q_2,s_i) $ and $ \delta(q_2,a)=(q_2,s_i) $  for each $ i=1\dots n $ where $ \{s_1,s_2,\dots s_n \} $ is the generator set for $ S $. 
	
	\begin{figure}[h]
		\centering
			\includegraphics[width=1\linewidth]{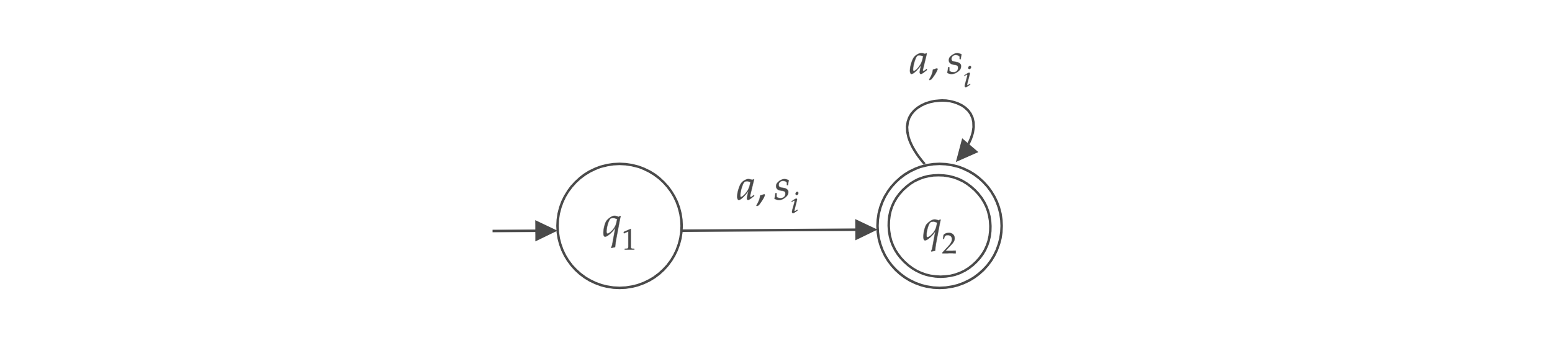}
	
		\caption{State transition diagram of $ \mathcal{E}_2 $}
		\label{fig:ie}
	\end{figure}
	
	If $ S $ contains the identity element, then there exists an integer $ k \geq 1 $ and $ i_1, i_2, \dots , i_k \in \{1, \dots , n\} $ such that $ s_{i_1} s_{i_2} \cdots s_{i_k} = 1 $. Then the string $ a^k $ is accepted by $ \mathcal{E}_2$ as there exists a product of elements yielding the identity element and this product can be obtained by a series of transitions. Hence, we can conclude that $ L(\mathcal{E}_2) $ is nonempty. For the converse, suppose that $ L(\mathcal{E}_2)$ is nonempty, which means that some input string is accepted by $ \mathcal{E}_2 $. Since the acceptance condition requires that the product of the elements multiplied by the register of $ \mathcal{E}_2 $ is equal to the identity element, we can conclude that $ S $ contains the identity element.    
	
	Now suppose that the emptiness problem for $ S $-automata is decidable. Then one can check if $ S $ contains the identity element by constructing $ \mathcal{E}_2 $ and checking if $ L(\mathcal{E}_2) $ is nonempty. Hence, the identity problem for $ S $ is also decidable.   	
\end{proof}

We connect the two results and state the following.

\begin{thm}
	The identity problem for $ M_2(\mathbb{Z}) $ is decidable.
\end{thm}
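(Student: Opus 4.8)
The plan is to combine the reduction established in Theorem \ref{theorem: ie} with the decidability of emptiness from Lemma \ref{lem: m2z}. Given an instance of the identity problem for $M_2(\mathbb{Z})$ --- that is, a finite set of matrices $\{Y_1,\dots,Y_n\} \subseteq M_2(\mathbb{Z})$ --- I would let $S$ denote the subsemigroup of $M_2(\mathbb{Z})$ generated by these matrices. Since $S$ is finitely generated, Theorem \ref{theorem: ie} applies directly: it produces an $S$-automaton $\mathcal{E}_2$ whose language is nonempty precisely when $S$ contains the identity matrix.

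The observation that glues the two prior results together is that this $\mathcal{E}_2$ is in fact an $M_2(\mathbb{Z})$-automaton, because every matrix labeling a transition belongs to $S \subseteq M_2(\mathbb{Z})$. Hence its emptiness is decidable by Lemma \ref{lem: m2z}, and feeding this fact into Theorem \ref{theorem: ie} yields the decidability of the identity problem for $S$. As the input matrices were arbitrary, the identity problem for $M_2(\mathbb{Z})$ is decidable.

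The only genuine point to verify --- and the one place the argument could conceivably stumble --- is the transfer of emptiness decidability from the ambient monoid to the particular generated subsemigroup. This is harmless here: Lemma \ref{lem: m2z} first removes every transition labeled by a matrix whose determinant is not $\pm 1$ (since no such factor can ever be undone on the way back to the identity) and thereby reduces $\mathcal{E}_2$ to a $GL(2,\mathbb{Z})$-automaton, whose emptiness is decidable through the pushdown simulation built in Lemmas \ref{lem: gh} and \ref{lemma: f2pda}. That entire procedure is insensitive to whether the transition labels happen to generate all of $M_2(\mathbb{Z})$ or merely the subsemigroup $S$, so no extra case analysis is needed and the result follows immediately.
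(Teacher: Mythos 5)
Your proposal is correct and follows essentially the same route as the paper: the paper's proof likewise consists of invoking Lemma \ref{lem: m2z} for decidability of emptiness and then concluding via Theorem \ref{theorem: ie}. The only difference is that you spell out explicitly the (harmless) point that the automaton built in Theorem \ref{theorem: ie} over the generated subsemigroup $S$ is itself an $M_2(\mathbb{Z})$-automaton, a detail the paper leaves implicit.
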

\begin{proof}
	By Lemma \ref{lem: m2z}, the emptiness problem for $ M_2(\mathbb{Z}) $-automaton is decidable. The result follows by Theorem \ref{theorem: ie}.
\end{proof}

\subsection{Undecidability of the Decision Problems for $SL(4,\mathbb{Z})$-automata}

In this section we are going to prove undecidability results for the emptiness and universe problems of $SL(4,\mathbb{Z})$-automata. 

To prove the undecidability of the universe problem for $ SL(4,\mathbb{Z}) $-automata, we first prove the following theorem.

\begin{thm}\label{theorem: mu}
	Let $ G $ be a finitely generated group. If the universe problem for $ G $-automata is decidable, then the subsemigroup membership problem for $ G $ is decidable.
\end{thm}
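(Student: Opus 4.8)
The plan is to mirror the construction used in the proof of Theorem \ref{theorem: me}, but to arrange the reduction so that it is the \emph{universe} problem, rather than the emptiness problem, that decides membership. Given a finitely generated subsemigroup $H$ of $G$ with generating set $\{h_1,\dots,h_n\}$ and a target element $g \in G$, I would construct a $G$-automaton $\mathcal{E}_3$ over the unary alphabet $\{a\}$ with the property that $L(\mathcal{E}_3)=\{a\}^*$ if and only if $g \in H$. The automaton has two states, the initial state $q_1$ and the sole accept state $q_2$, and $g^{-1}$ exists since $G$ is a group.

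The essential new idea compared with $\mathcal{E}_1$ is to \emph{decouple} the consumption of input from the manipulation of the register, so that acceptance depends only on whether $g \in H$ and not on the particular input word. To this end I would let every input symbol be read by an identity-multiplying self-loop at each state, i.e. $(q_1,e)\in\delta(q_1,a)$ and $(q_2,e)\in\delta(q_2,a)$, while the register is driven entirely by $\varepsilon$-transitions: a single $\varepsilon$-move $(q_2,g^{-1})\in\delta(q_1,\varepsilon)$ together with a family of $\varepsilon$-loops $(q_2,h_i)\in\delta(q_2,\varepsilon)$ for $i=1,\dots,n$. On any input $a^m$ the machine may read all $m$ symbols through the identity loops (advancing the head to the $(m+1)$-st square) while independently using the $\varepsilon$-moves to build the register value $g^{-1}h_{i_1}\cdots h_{i_k}$; this value equals the identity exactly when $h_{i_1}\cdots h_{i_k}=g$.

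Both directions then follow. If $g\in H$, fix a factorization $g=h_{i_1}\cdots h_{i_k}$; for every word $a^m$ (including $\varepsilon$) the machine reads the input via the identity loops and applies the corresponding $\varepsilon$-moves, ending in $q_2$ with register $e$, so $L(\mathcal{E}_3)=\{a\}^*$ and the universe property holds. Conversely, the only route to $q_2$ passes through the $g^{-1}$ edge exactly once and is thereafter followed solely by identity-reading loops and $h_i$-loops, so the register at any accepting configuration has the form $g^{-1}h_{i_1}\cdots h_{i_k}$; if $g\notin H$ this is never the identity, no word is accepted, and $L(\mathcal{E}_3)=\emptyset\neq\{a\}^*$. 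Hence deciding the universe problem for $\mathcal{E}_3$ decides whether $g\in H$, and decidability of the universe problem for $G$-automata yields decidability of the subsemigroup membership problem for $G$.

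I expect the only delicate point to be precisely this targeting of the universe problem: one must ensure that acceptance of \emph{every} word, not merely of some word, is equivalent to $g\in H$, which is exactly what the identity-reading loops together with the $\varepsilon$-driven register computation guarantee. Some care is also needed in the $g\notin H$ direction, where the language collapses to $\emptyset$ and so fails universality, so that the two possible outcomes of the universe test are cleanly separated.
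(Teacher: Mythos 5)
Your proposal is correct and takes essentially the same route as the paper: the paper likewise builds a two-state $G$-automaton $\mathcal{E}_3$ over the unary alphabet $\{a\}$, with an initial multiplication by $g^{-1}$ and with $\varepsilon$-transitions multiplying the generators $h_i$, so that $L(\mathcal{E}_3)=\Sigma^*$ if and only if $g\in H$, and then invokes decidability of the universe problem (the rest being as in the proof of the emptiness-based reduction). Your only deviation is in the bookkeeping---consuming input exclusively through identity-multiplying loops so that all register work happens on $\varepsilon$-moves---which is a clean way of making explicit exactly the decoupling that the paper's $(\varepsilon,h_i)$-transitions are there to provide.
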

\begin{proof}
	We are going to construct a $ G $-automaton $ \mathcal{E}_3 $ such that $g \in H $ iff $ L(\mathcal{E}_3)=\Sigma^* $ where $ \Sigma=\{a\} $. $\{h_1,h_2,\dots h_n \} $ is the generator set for $ H $. The state transition diagram of $ \mathcal{E}_3  $ is given in Figure \ref{fig:mu}. The transition labeled by $ (a,h_i) $ and $ (\varepsilon,h_i) $ stands for each one of the transitions that multiply the register with $ h_i $ for $ i=1\dots n $.
	
	\begin{figure}[h]
		\centering

			\includegraphics[width=0.9\linewidth]{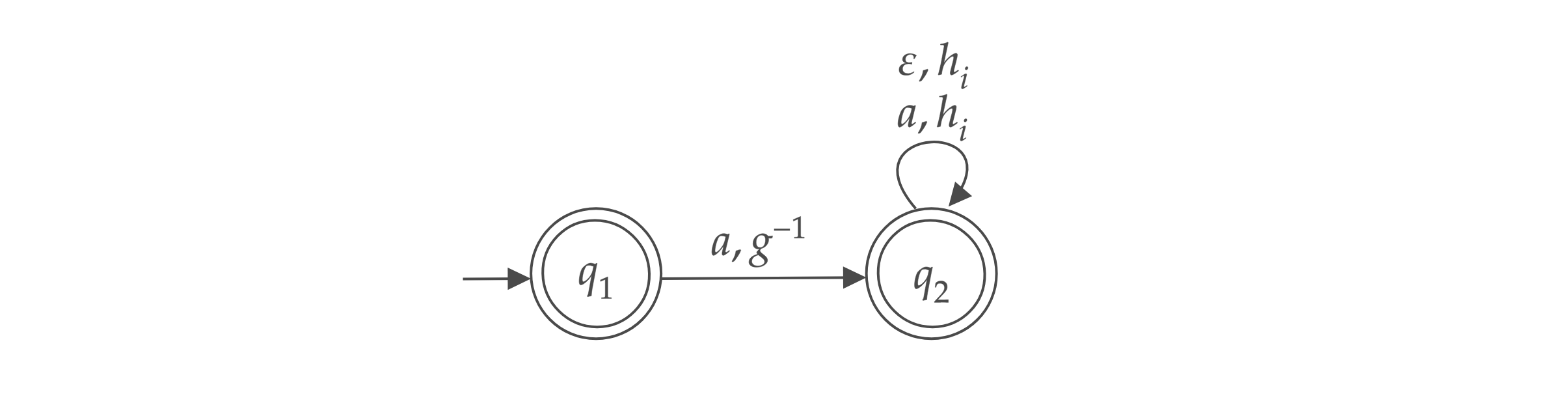}
		\caption{State transition diagram of $ \mathcal{E}_3 $}
		\label{fig:mu}
	\end{figure}
	
	The rest of the proof is similar to the proof of Theorem \ref{theorem: me} and omitted here.
\end{proof}

Now we state the undecidability of the emptiness and universe problems for $SL(4, \mathbb{Z})$-automata.

\begin{thm}\label{corollary: sl4z}
	Let $ S $ be a subsemigroup of $SL(4, \mathbb{Z})$. The emptiness and universe problems for $ S $-automata is undecidable.
\end{thm}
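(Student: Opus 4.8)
The plan is to obtain this as a corollary of the reductions already established in Theorems~\ref{theorem: me} and~\ref{theorem: mu}, together with the known undecidability of the subsemigroup membership problem for $SL(4,\mathbb{Z})$. First I would recall that $\mathbf{F}_2 \times \mathbf{F}_2$ embeds as a subgroup of $SL(4,\mathbb{Z})$ (Theorem~\ref{thm: sl4z}), and that by Mikhailova \cite{Mi58} the subgroup membership problem for $\mathbf{F}_2 \times \mathbf{F}_2$, and hence for $SL(4,\mathbb{Z})$, is undecidable. Since a subgroup membership instance asking whether $g$ lies in $\langle g_1,\dots,g_n\rangle$ is equivalent to the subsemigroup membership instance asking whether $g$ lies in the subsemigroup generated by $\{g_1,\dots,g_n,g_1^{-1},\dots,g_n^{-1}\}$ (in a group this subsemigroup coincides with the subgroup, as noted in the hardness ordering of the Background subsection), subgroup membership reduces to subsemigroup membership. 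Therefore the subsemigroup membership problem for $SL(4,\mathbb{Z})$ is undecidable.

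Next I would invoke the contrapositives of Theorems~\ref{theorem: me} and~\ref{theorem: mu}. Theorem~\ref{theorem: me} states that decidability of the emptiness problem for $G$-automata implies decidability of the subsemigroup membership problem for $G$; taking $G = SL(4,\mathbb{Z})$ and using the undecidability just established, the emptiness problem for $SL(4,\mathbb{Z})$-automata must be undecidable, for otherwise subsemigroup membership would be decidable. The identical argument applied to Theorem~\ref{theorem: mu} yields undecidability of the universe problem.

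To match the statement, which speaks of $S$-automata for a subsemigroup $S$, I would observe that the reduction automaton $\mathcal{E}_1$ built in the proof of Theorem~\ref{theorem: me} multiplies its register only by the finitely many matrices $g^{-1}, h_1, \dots, h_n \in SL(4,\mathbb{Z})$, and is therefore an $S$-automaton for the finitely generated subsemigroup $S = \langle g^{-1}, h_1, \dots, h_n \rangle \leq SL(4,\mathbb{Z})$ (and likewise $\mathcal{E}_3$ for the universe problem). Hence any algorithm deciding emptiness (respectively, universe) uniformly over $S$-automata for subsemigroups $S \leq SL(4,\mathbb{Z})$ would decide subsemigroup membership for $SL(4,\mathbb{Z})$, which is impossible.

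I expect no deep obstacle here, since the constructive reductions have already been carried out and the underlying group-theoretic undecidability is cited. The only points requiring care are the type-matching between the group automata used in Theorems~\ref{theorem: me} and~\ref{theorem: mu} and the subsemigroup automata named in the statement, and confirming that adjoining the inverses of the generators faithfully transfers undecidability from the subgroup membership problem to the subsemigroup membership problem.
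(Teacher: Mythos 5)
Your proposal is correct and takes essentially the same route as the paper: the paper's own proof simply cites Mikhailova's undecidability of subgroup membership for $SL(4,\mathbb{Z})$ \cite{Mi58} and applies Theorems \ref{theorem: me} and \ref{theorem: mu} in contrapositive form, exactly as you do. Your extra steps---making explicit the reduction from subgroup to subsemigroup membership by adjoining inverse generators, and checking that the reduction automata are $S$-automata for a finitely generated subsemigroup $S$---only spell out details the paper leaves implicit (the first is asserted in its background section on the hardness ordering of membership problems).
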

\begin{proof}
	Since the subgroup membership problem for $SL(4, \mathbb{Z})$ is undecidable \cite{Mi58}, by Theorem \ref{theorem: me} and by Theorem \ref{theorem: mu}, the emptiness and universe problems for $ SL(4,\mathbb{Z}) $-automata are undecidable.
\end{proof}

So far, we have established some connections between decision problems for groups and semigroups and the corresponding automata. Let us also state the following theorem, which links the identity problem for semigroups and the universe problem for the corresponding automata, for the sake of completeness.  

\begin{thm}\label{theorem: iu}
	Let $ S $ be a finitely generated semigroup. If the universe problem for $ S $-automata is decidable, then the identity problem for $ S $ is decidable. 
\end{thm}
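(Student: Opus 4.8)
The plan is to mirror the construction used in Theorem \ref{theorem: me}, but replacing the emptiness-based reduction with a universe-based one, exactly as Theorem \ref{theorem: mu} adapted Theorem \ref{theorem: me} to the universe problem. The goal is to build, from the generators $\{s_1,\dots,s_n\}$ of the finitely generated semigroup $S$, an $S$-automaton $\mathcal{E}_4$ over the unary alphabet $\Sigma=\{a\}$ such that $S$ contains the identity element if and only if $L(\mathcal{E}_4)=\Sigma^*$. Since the universe problem for $S$-automata is assumed decidable, deciding whether $L(\mathcal{E}_4)=\Sigma^*$ then decides the identity problem for $S$.

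First I would take the two-state machine $\mathcal{E}_2$ from the proof of Theorem \ref{theorem: ie} (initial state $q_1$, accept state $q_2$, transitions $\delta(q_1,a)=(q_2,s_i)$ and $\delta(q_2,a)=(q_2,s_i)$ for each $i$) and augment it with $\varepsilon$-transitions in the same spirit as $\mathcal{E}_3$ in Theorem \ref{theorem: mu}: alongside each symbol-consuming transition $(a,s_i)$ I would add a corresponding $\varepsilon$-transition $(\varepsilon,s_i)$ multiplying the register by $s_i$ without advancing the tape-head. The effect is that whenever the identity can be produced as a product $s_{i_1}\cdots s_{i_k}$ of generators, the machine can realize that product on \emph{any} input string: it consumes the input symbols while simultaneously being free (via $\varepsilon$-moves) to insert the remaining generator multiplications needed to reach the identity. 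This is precisely what forces $L(\mathcal{E}_4)$ to be all of $\Sigma^*$ rather than merely nonempty.

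The forward direction is routine: if $1\in S$, write $1=s_{i_1}\cdots s_{i_k}$; then for an arbitrary input $a^m$ the machine can read all $m$ symbols and, using $\varepsilon$-moves to supply any multiplications not covered by the symbol-reading transitions, drive the register to the identity in an accept state, so every string is accepted. The converse direction is where I would be most careful, and it is the main obstacle: I must argue that if $L(\mathcal{E}_4)=\Sigma^*$ (in particular some string is accepted), then the register having returned to the identity in an accepting computation exhibits an actual product of generators equal to $1$, hence $1\in S$. Because $S$ is only a semigroup, the identity $1$ is the formal element adjoined in the definition of $S$-automaton, so I must ensure the accepting computation uses a genuinely nonempty product of the $s_i$'s rather than the empty product; the $\varepsilon$-transition structure must guarantee at least one generator multiplication occurs on any accepted string.

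Since the remainder of the argument is essentially identical to the proofs of Theorems \ref{theorem: ie} and \ref{theorem: mu}, I would present $\mathcal{E}_4$ together with its transition diagram and then defer the detailed verification by noting that it parallels those earlier proofs, concluding that decidability of the universe problem for $S$-automata yields decidability of the identity problem for $S$.
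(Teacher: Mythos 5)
Your proposal takes essentially the same route as the paper: the paper's proof constructs exactly this two-state $S$-automaton $\mathcal{E}_4$ over $\Sigma=\{a\}$ with both $(a,s_i)$ and $(\varepsilon,s_i)$ transitions, establishes that $1\in S$ iff $L(\mathcal{E}_4)=\Sigma^*$, and likewise defers the verification by noting it parallels Theorems \ref{theorem: ie} and \ref{theorem: mu}. Your added care about forcing a genuinely nonempty product of generators (handled because the initial state is non-accepting and every transition multiplies by a generator) and about padding long inputs with repetitions of the identity product only fills in details the paper leaves implicit.
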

\begin{proof}
	We are going to construct an $ S $-automaton $ \mathcal{E}_4 $ such that $ S $ contains the identity element iff $ L(\mathcal{E}_4)=\Sigma^* $ where $ \Sigma=\{a\} $. $ \{s_1,s_2,\dots s_n \} $ is the generator set for $ S $. The state transition diagram of $ \mathcal{E}_4$ is given in Figure $ \ref{fig:iu} $. The transition labeled by $ (a,s_i) $ and $ (\varepsilon,s_i) $ stands for each one of the transitions that multiply the register with $ s_i $ for $ i=1\dots n $.
	
	\begin{figure}[h]
		\centering
		
	\includegraphics[width=0.8\linewidth]{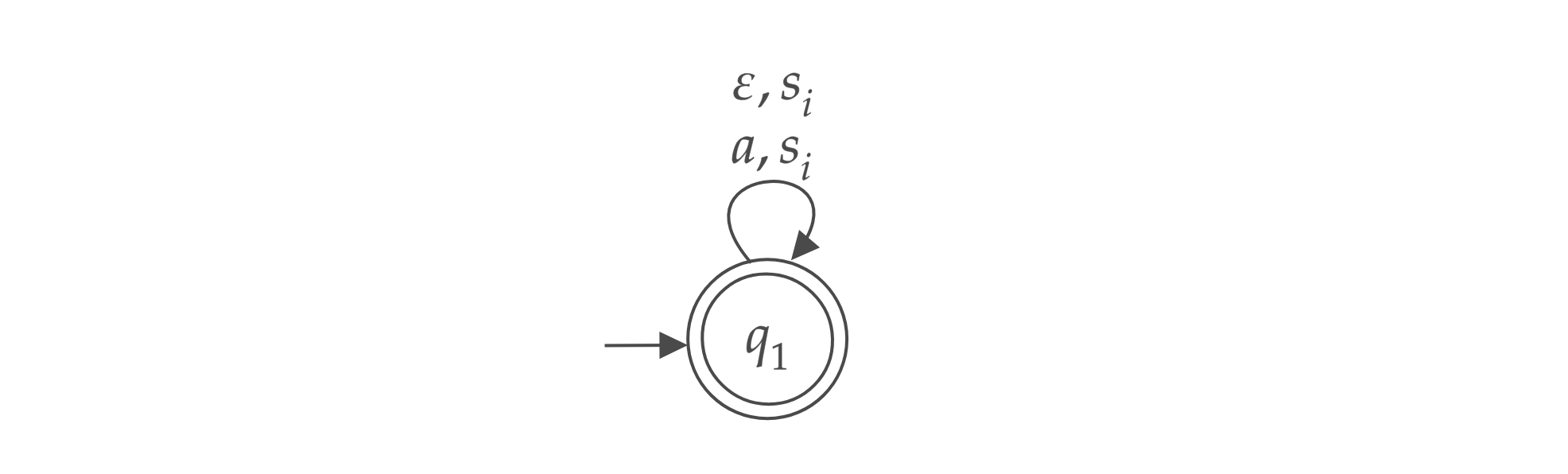}
		\caption{State transition diagram of $ \mathcal{E}_4 $}
		\label{fig:iu}
	\end{figure}
	The rest of the proof is similar to the proof of Theorem \ref{theorem: ie} and omitted here.
\end{proof}

Note that the converses of Theorem \ref{theorem: mu} and \ref{theorem: iu} are not true. For a given pushdown automaton, an $\mathbf{F}_2 $-automaton recognizing the same language can be constructed \cite{Ka06}. It is a well known fact that the universe problem for pushdown automata is undecidable, from which we can conclude that the universe problem for $\mathbf{F}_2 $-automaton is undecidable. On the other hand, $ \mathbf{F}_2 $ is a subgroup of $ SL(2,\mathbb{Z}) $ and the membership problem for $ SL(2,\mathbb{Z}) $ and thus the identity problem are known to be decidable \cite{KNP17}.

\section{Open Questions }\label{Section: open}

In this section, we are going to list some questions in need of further investigation.

\begin{itemize}
	\item Does there exist an $ SL(3,\mathbb{Z}) $-automaton recognizing $ W(\mathbb{Z}^3) $?
	Corollary 2 of \cite{CEO06} states that the word problem of a finitely generated 
	Abelian group $H$ is recognized by a $G$-automaton if and only if $H$ has a finite index 
	subgroup isomorphic to a subgroup of $G$. That corollary could be used to give an 
	affirmative answer to this open question. Unfortunately, the corollary is wrong: 
	Let $ H $ be an Abelian group and let $ G=\mathbf{F}_2 \times \mathbf{F}_2 $. 
	$ \mathfrak{L}(\mathbf{F}_2 \times \mathbf{F}_2 )$ contains the word problem of any 
	finitely generated Abelian group. Since $ \mathbf{F}_2 \times \mathbf{F}_2 $ is finitely 
	generated, any finite index subgroup of $ \mathbf{F}_2 \times \mathbf{F}_2 $ is also 
	finitely generated. Any finite index subgroup of $ \mathbf{F}_2 \times \mathbf{F}_2 $ is 
	either free or has a subgroup of finite index that is a direct product of free groups 
	\cite{BR84}. Any subgroup of an Abelian group is again Abelian. Hence, it is not 
	possible that $ G$ has a finite index subgroup isomorphic to a subgroup of $ H $.
	
	\item Can we describe the necessary properties of a group $ G $ so that 
	$ \mathfrak{L}(G) $ contains $ W(\textbf{F}_2) $?
	
	\item Little is known about $ BS(1,2) $-automata. Does $  \mathfrak{L}(BS(1,2)) $ 
	contain every context-free language?
	
\item Which, if any, of the subset relationships in Figure \ref{fig: diagram} are proper inclusions? 
	
	\item Can we add other classes above $ \mathsf{RE} $ in Figure \ref{fig: diagram} by examining groups on matrices with uncomputable entries?
	
	\item Theorem \ref{thm: growth2} uses the definition of uniform $ n $-dissimilarity requiring that $ g_G(t(n)) $  $\in o( U_{L}(n))  $. Would the theorem be still true if we replace $ U_{L}(n) $ by $ A_{L}(n) $ ? The gap between  $ U_{L}(n) $ and $ A_{L}(n) $ might be large as mentioned in \cite{GS98}. Consider the language $ {L}=\{a^ib^j|i\neq j\} $. It is stated in \cite{GS98} that a set of uniformly $ n $-dissimilar strings for $ {L} $ cannot contain more than two strings. However, $ A_{L}(n) \notin O(1) $, since $ {L} $ is not a regular language.
	
	\item Can real-time $ \mathbf{F}_2 $-automata recognize every context-free language?
	
		\item 	Can we prove a stronger version of Theorem \ref{thm: polycf}, which is independent 
	of the time component? For instance, for the case of $ \mathbf{F}_2 $, is it true that 
	$ W(\mathbf{F}_2) \notin \mathfrak{L}(\mathbf{H})$ in general?
	
	\item The decidability of the membership problem for $ GL(3,\mathbb{Z}) $ and the identity problem for $ M_3(\mathbb{Z}) $ are still open. We propose that investigating the decidability of the emptiness and universe problems for extended finite automata defined over $3 \times  3$ integer matrices is one possible way for obtaining results about the decision problems on these matrix semigroups.
\end{itemize}

\chapter{HOMING VECTOR AUTOMATA} \label{chap: hva}

The idea of augmenting the classical finite automaton model with an
external storage unit that can hold unlimited amounts of information,
yet can be accessed in a limited mode, is a celebrated topic of
automata theory. In this chapter we introduce homing vector automaton, a finite automaton equipped with a vector which can multiply its vector with an
appropriate matrix at each step and can check the entire vector for
equivalence to the initial value of the vector.

Matrices are fundamental objects in mathematics and computer science. They are also crucial in automata theory as many finite automaton models such as probabilistic and quantum can be simulated by vector matrix multiplications. Likewise, the vector matrix multiplication view of programming forms the basis of the computation process of homing vector automata. 

Homing vector automata are also closely linked to finite automata over matrix groups which we have discussed in Chapter \ref{chap: efa}. Although in both models the computation is carried out by a series of matrix multiplications, the nature of the registers and the acceptance conditions differentiate the two models. 

We examine homing vector automata under several different regimes, enabling us to determine the effect of definitional parameters such as whether the input is scanned in real-time or pausing the head on an input symbol for several steps is allowed, whether the machine can read its register during computation or is blind, with acceptance possible only if the register has returned to its initial value at the end, and whether nondeterminism confers any additional recognition power over deterministic programs.

Another way in which one can examine the nature of the computational power of homing vector automata is by examining models in which the matrices used at each step for transforming the vectors are restricted in some way.
Although the definition allows arbitrary rational matrices, one may constrain the matrix entries to belong to a particular set. In most automaton algorithms in this chapter, the entries of the matrices belong to the set $ \{-1,0,1\} $, as this basic set will be seen to already capture many capabilities of homing vector automata. Let us note that multiplications with matrices whose entries belong to this set can be used to perform additions, subtractions, resets, and swaps between the vector entries. It is possible to recognize some of the languages in the following discussion with homing vector automata of lower dimension when a larger set of matrix entries is allowed. 

The rest of this chapter is structured in the following way:

In Section \ref{sec: hva-def}, we introduce homing vector automaton, giving the definitions for one-way, real-time, blind and non-blind versions. We begin with some observations about homing vector automata in Section \ref{sec: hva-obs}. A method we use for encoding strings on an alphabet of arbitrary size in a blind homing vector automaton, based on Stern-Brocot tree
\cite{St58,Br61}, may be of independent interest and is presented in Section \ref{sec: hva-SB}. In Section \ref{sec: hva-ca}, we investigate the relationship between counter automata and HVAs. In Section \ref{sec: hva-efa}, we establish a connection between the nondeterministic one-way blind version
of the HVA model and the extended finite automata, and use this
link to prove that these machines can recognize any Turing
recognizable language, even when the vector dimension is restricted to
four. We then focus on HVAs with real-time access to their input in Section \ref{sec: hva-rt}. We analyze the relationships between different versions of HVAs, present some closure properties, and analyze their stateless versions. Section \ref{sec: hva-end} lists some open questions.

\section{Definitions}\label{sec: hva-def}

Generalizing the idea of finite automaton equipped with a register, we
have previously introduced in \cite{SYS13} the vector automaton, a finite automaton which is endowed with a
vector, and which can multiply this vector with an appropriate matrix
at each step. We give the definition for the real-time deterministic version.
\newpage
A \textit{ real-time $k$-dimensional deterministic vector automaton} \textup{(DVA($k$))}
	\textup{\cite{SYS13}} is a 7-tuple	
	\[
\mathcal{	V} = (Q,\Sigma,\mathrm{M},\delta,q_1,Q_a,v),
	\]
where $ \mathrm{M} $ is a finite set of $k \times k$-dimensional rational-valued matrices, $ v $ is the initial ($k$-dimensional, rational-valued) row vector, and $ \delta $ is the transition function defined as
	\[
	\delta:Q \times \Sigma_{\dollar} \times \Omega \rightarrow Q \times \mathrm{M}.
	\]

Let $ w\in \Sigma^* $ be a given input. The automaton $ \mathcal{V} $ reads the sequence $ w\dollar $ from left to right symbol by symbol. It uses its states and its vector to store and process the information. In each step, it can check whether the first entry of the vector is equal ($=$) to 1 or not ($ \neq $). We call this feature the ``status" of the first entry and represent it by the set $ \Omega = \{=,\neq\}$. 
	
	The details of the transition function are as follows. When $ \mathcal{V} $ is in state $ q \in Q $, reads symbol  $ \sigma \in \Sigma_{\dollar} $, and the first entry status is $ \omega \in \Omega $, the transition $ \delta(q,\sigma,\omega) = (q',A) $ results in $\mathcal{V}$ entering state $ q' \in Q $, and its vector being multiplied by $ A \in \mathrm{M} $ from the right.
	
	At the beginning of the computation, $ \mathcal{V} $ is in state $ q_1 $ and the vector is $ v $. The initial vector is freely chosen by the designer of the automaton. Then, after reading each symbol, the state and vector are updated according to the transition function as described above. Thus the vector $  {v}^{(i)} $ at step $ i $ is obtained by multiplying the vector $  {v}^{(i-1)} $ at step $ i-1 $ by a specified matrix $ A $ so that $  { v}^{(i)}= { v}^{(i-1)}A $. The input $ w $ is accepted if the final state is an accept state and the first entry of the final vector is 1 after processing the right end-marker $\dollar$. Otherwise, the input is rejected. The set of all accepted strings is said to be the language recognized by $ \mathcal{V} $.
\newpage

 As the acceptance condition for many of the classical models requires that the register is equal to its initial value at the end of the computation, we adopt the same requirement for vector automata and propose homing vector automata. 
 
 A \textit {$ k $-dimensional homing vector automaton} (HVA($ k $)) is a 7-tuple
 \[{\mathcal{V}} =(Q,\Sigma,\mathrm{M},\delta,q_1,Q_a, { v} ),\]
 where $ \mathrm{M} $ is a set of $ k \times k $ rational valued matrices and $ v $ is an initial row vector with rational entries, as in the definition of vector automaton.
 
 A HVA is different from a vector automaton in two ways: (1) Homing vector automata do not read the right end-marker after reading the input, so there is no chance of postprocessing and, (2) instead of checking the status of the first entry, a homing vector automaton checks whether the complete current vector is identical to the initial vector or not. 

Formally, the transition function of a \textit{one-way $k$-dimensional deterministic homing vector automaton} (1DHVA($ k $)) is defined as
	\[\delta: Q \times
	\Sigma \times \Omega \ \rightarrow Q\times  \mathcal{D}  \times \mathrm{M},\]
	such that $\Omega$ is the set $\{=,\neq\}$, where $ = $  indicates
	equality to the initial vector $ { v} $, and $ \neq $ otherwise, $ \cal D $ is the set of head directions $ \{\downarrow,\rightarrow\} $ and $\mathrm{M}$ is a set of $k \times k$ rational-valued matrices. The initial vector is freely chosen by the designer of the automaton.

Specifically, $\delta(q,\sigma,\omega)= (q',d,A)  $ means that when ${\mathcal{V}}$ consumes $\sigma \in \Sigma$
in state $q\in Q$, with its current vector corresponding to $\omega \in \Omega$ ($\omega$ having the value = if and only if the current vector equals the initial vector),
it switches to state $q'\in Q$, multiplying its current vector with the matrix $A
\in \mathrm{M}$ on the right and moving the tape-head in direction $ d \in \mathcal{D}$.

A \textit{one-way $k$-dimensional deterministic blind homing vector automaton} (1DBHVA(\textit{k})) is a
restricted 1DHVA(\textit{k}) which is not allowed to check the vector until the end of the computation. The transition function $\delta$ is defined as
$$ \delta: Q \times \Sigma \rightarrow Q\times \mathcal{D} \times \mathrm{M}, $$
so that the next move of the machine does not depend on the current status of the vector.
%

By omitting the tape-head directions and assuming that the tape-head moves right at each step, we obtain the \textit{real-time $k$-dimensional deterministic homing vector automaton} (DHVA($ k $)) and \textit{real-time $k$-dimensional deterministic blind homing vector automaton} (DBHVA($ k $)) models. The ranges of the corresponding transition functions are replaced with $Q \times \mathrm{M}  $.  

Now we are going to define the nondeterministic versions of homing vector automata. Formally, the transition function of a \textit{one-way $ $k-$dimensional $ nondeterministic homing vector automaton} (1NHVA(\textit{k})) is defined as
\[\delta: Q \times
\Sigma_{\varepsilon} \times \Omega \rightarrow \mathcal{P}(Q\times \mathrm{M}).\]

The blind version, \textit{one-way k-dimensional nondeterministic blind homing vector automaton} (1NBHVA(\textit{k})) is a 1NHVA(\textit{k}) which is not allowed to check the vector until the end of the computation. The transition function of a 1NBHVA($ k $) is defined as
$$ \delta: Q \times \Sigma_{\varepsilon} \rightarrow {\mathcal{P}}(Q\times \mathrm{M}).
$$

By not allowing $ \varepsilon $-moves, we obtain the real-time versions, \textit{real-time $k$-dimensional nondeterministic homing vector automaton} (NHVA(\textit{k})) and \textit{real-time $k$-dimensional nondeterministic blind homing vector automaton} (NBHVA(\textit{k})). The domains of the transition functions are replaced with $  Q \times \Sigma \times \Omega $ and  $ Q \times \Sigma $ respectively, by replacing $ \Sigma_{\varepsilon} $ with $ \Sigma $.

An input string $ w $ of length $ n $ is accepted by a homing vector automaton if 
there exists a computation in which the machine enters an accept state with the tape-head on the $ n+1 $'st tape square and the vector is equal to the initial value $ { v} $.

The abbreviations used for homing vector automata variants discussed so far are given in Table \ref{tbl: hva}.

		 \begin{table}[!h] \label{tbl: hva}
		 	\centering
		 	\vskip\baselineskip 
		 	 \caption{The abbreviations for HVA variants.}
		 	 \vspace{0.1in}
		 	\begin{tabular}{|p{0.3\textwidth}|p{0.15\textwidth}|p{0.15\textwidth}|}
		 		\hline 
		 		& Real-time & One-way \\
		 		\hline 
		 		Deterministic & DHVA$\displaystyle ( k)$ & 1DHVA$\displaystyle ( k)$ \\
		 		\hline 
		 		Deterministic blind & DBHVA$\displaystyle ( k)$ & 1DBHVA$\displaystyle ( k)$ \\
		 		\hline 
		 		Nondeterministic & NHVA$\displaystyle ( k)$ & 1NHVA$\displaystyle ( k)$ \\
		 		\hline 
		 		Nondeterministic blind & NBHVA$\displaystyle ( k)$ & 1NBHVA$\displaystyle ( k)$ \\
		 		\hline
		 	\end{tabular}
		 	
	 \end{table}

Given a homing vector automaton $\mathcal{V}=(Q,\Sigma,\mathrm{M},\delta,q_1,Q_a,v )$, we abbreviate it by $\textup{HVA}(k)_{\mathrm{M}} $ when we want to specify the set of matrices $ \mathrm{M } $ used by $ \mathcal{V} $. When we want to specify the number of states of a machine, we add an $ n $- (or $(n)$- to avoid any confusion) to the front of the model name, where $ n=|Q|$ is the number of the states. 

We will denote the set of $ k \times k $ matrices whose entries belong to the set $  \{-m,-m+1,\dots,\allowbreak 0,\dots,m-1,m\} $ for some positive integer $m$ by $ S_k(m) $.  

\section{Some Observations} \label{sec: hva-obs}

Homing vector automata are not allowed to perform postprocessing by definition, since the computation ends once they reach the right end-marker. We start by observing that allowing postprocessing does not bring any additional power to NBHVAs and 1NBHVAs. 

HVAs using end-marker will be denoted by the abbreviation $\textup{HVA}_\$ $.

\begin{lem}\label{lem: rtNBend}
Let $ L $ be a language recognized by a $\textup{XBHVA}_\$(k)$ $ \mathcal{V} $, where $X \in \{\textup{N,1N}\}  $. Then, $ L $ is also recognized by a $\textup{XBHVA}(k)$ $ \mathcal{V}' $.
\end{lem}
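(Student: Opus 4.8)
The plan is to exploit the one distinguishing feature of the blind model: since the machine's moves never depend on the contents of the vector, the transitions performed while scanning the end-marker $\$$ (the postprocessing) merely multiply the vector by some matrices and change the state, in a way completely determined by the nondeterministic choices and not by the register. Thus postprocessing is pure bookkeeping of a matrix product together with a state path that is required to occur \emph{after} the whole input has been consumed. My goal is to reroute this bookkeeping so that it takes place at the right edge of the input, using no end-marker, while guaranteeing that it cannot be triggered before the input is exhausted. I would treat the real-time case ($X=\textup{N}$) and the one-way case ($X=\textup{1N}$) separately, because the structure of postprocessing differs in the two models.

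For $X=\textup{N}$ the head advances at every step, so postprocessing is a single $\$$-transition. Here I would \emph{fold} that transition into the reading of the last input symbol: whenever $\mathcal{V}$ has a transition $(r,A)\in\delta(q,\sigma)$ with $\sigma\in\Sigma$ followed by a $\$$-transition $(q_a,B)\in\delta(r,\$)$ into an accept state $q_a$, I give $\mathcal{V}'$ a transition on $\sigma$ from $q$ to a fresh, outgoing-edge-free accept state $f$ that multiplies the vector by $AB$. The ordinary $\Sigma$-transitions of $\mathcal{V}$ are kept (staying among the original states, which are \emph{not} accepting in $\mathcal{V}'$) so that the interior of the word is processed as before. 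A fold performed too early strands the computation at the dead state $f$ with input still unread, so it can never reach an accepting configuration (head on square $n+1$); a fold on the genuine last symbol reproduces exactly the final vector $\mathcal{V}$ would obtain. The empty input is handled by a fresh initial state declared accepting precisely when $\varepsilon\in L$; having no incoming edges, this choice only affects $\varepsilon$. The result has no $\varepsilon$-moves and reads no end-marker, so it is a genuine NBHVA($k$).

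For $X=\textup{1N}$ the head may pause, so postprocessing is an entire sequence of $\$$- and $\varepsilon$-transitions, and because of loops the accumulated matrix product can range over an infinite set; folding into finitely many transitions is therefore impossible. Instead I would duplicate the states into a \emph{reading phase} $Q$ and a \emph{postprocessing phase} $\widehat{Q}=\{\widehat{q}:q\in Q\}$. Each $\$$-transition of $\mathcal{V}$ becomes an $\varepsilon$-transition of $\mathcal{V}'$ that either passes from $Q$ into $\widehat{Q}$ (the act of first scanning $\$$) or moves within $\widehat{Q}$; the $\varepsilon$-transitions of $\mathcal{V}$ are copied into $\widehat{Q}$ as well; and crucially no input-consuming transition is allowed inside $\widehat{Q}$. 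The accept states of $\mathcal{V}'$ are the copies $\widehat{q_a}$ of the accept states of $\mathcal{V}$. Phase separation forces the whole input to be consumed before any postprocessing step is taken, mirroring the fact that $\mathcal{V}$ reaches $\$$ only after reading all of $w$, while the $\varepsilon$-moves, which leave the head fixed on square $n+1$, faithfully imitate staying on the end-marker. Since the one-way model permits $\varepsilon$-moves, $\mathcal{V}'$ is a legitimate 1NBHVA($k$).

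In both constructions the matrices multiplied along a computation of $\mathcal{V}'$ are identical, symbol for symbol and postprocessing step for postprocessing step, to those multiplied by $\mathcal{V}$, so the final vectors coincide and the acceptance condition ``in an accept state with the vector equal to $v$'' is preserved; blindness is maintained because every newly introduced transition is still independent of the vector. I expect the main obstacle to be exactly the point these constructions are engineered around: ruling out spurious acceptances in which the simulated postprocessing fires in the middle of the input rather than at its end. The dead accept state in the real-time case and the phase separation in the one-way case are what enforce this, and verifying their correctness, together with the separate treatment of $\varepsilon$, is where the bulk of the argument lies.
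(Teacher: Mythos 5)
Your proposal is correct and takes essentially the same route as the paper's own proof: for the real-time case the paper performs exactly your fold, adding nondeterministic transitions that account for the last input symbol and the end-marker $\dollar$ at once and lead to fresh accept states (all inherited states being made non-accepting), with the same no-incoming-edge initial state to handle $\varepsilon$; for the one-way case the paper likewise duplicates the postprocessing portion of the machine, replaces $\dollar$ by $\varepsilon$ inside the copy, forbids input consumption there, and makes only the copied accept states accepting. The only cosmetic difference is the entry mechanism into the postprocessing copy—the paper duplicates the incoming transitions of each state having an outgoing $\dollar$-transition, whereas you convert the $\dollar$-transition itself into the entering $\varepsilon$-move—but both enforce the same phase separation and yield the same correctness argument.
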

\begin{proof}
First of all, we assume that $\mathcal{V} $ does not contain any $ \dollar $-transitions that do not lead to an accept state, since any such transitions may be removed from $ \mathcal{V} $ without changing the language. We are going to analyze the cases of real-time and one-way computation and in each case, we will start constructing $\mathcal{V}'$ such that $\mathcal{V}'$ mimics the transitions of $\mathcal{V}$ on every possible symbol $\sigma \in \Sigma_{\varepsilon}$. 

If the computation is real-time, then it ends as soon as the right end-marker is processed. We create new transitions to handle the postprocessing, which emulate $ \mathcal{V} $'s last action before reading the end-marker (which would end up in an accept state) and the end-marker ($ \sigma \dollar $) at once: At any point during the scanning, if reading $ \sigma $ would cause $\mathcal{V}$ to switch to a state from which the end-marker $ \dollar $ would lead to an accept state, a new nondeterministic transition takes $ \mathcal{V}' $ to the additional state, which is an accept state. During this transition, the register is updated so that the update accounts for both reading $ \sigma $ and $ \dollar $. All other states of $\mathcal{V}'$, which are inherited from $\mathcal{V}$, are designated to be non-accept states. Thus, $ \mathcal{V}' $ simulates the computation of $ \mathcal{V} $ on any non-empty string, and accepts the input in some computational path if and only if $ \mathcal{V} $ accepts it. 

Now suppose that the computation is one-way. Let $ Q_{\dollar} $ be the set of states of $ \cal V $ that have an outgoing $ \dollar  $-transition. After finishing reading the string, $ \cal V $ should enter a state from $ Q_{\dollar} $, read the $\dollar  $ symbol and possibly make some $ \varepsilon $-transitions and eventually end in an accept state, to accept any string. Let $ G_{\dollar} $ be the graph obtained from the transition diagram of $ \cal V $, by removing all transitions except the $\dollar  $-transitions and $ \varepsilon $-transitions. Let $ r_q $ be the subgraph of $G_{\dollar}$, induced by the set of reachable vertices from $q$ in $G_{\dollar}$, for each $ q \in Q_{\dollar}  $. We create a copy of each subgraph $r_q  $ and denote it by $ r_q^c $, replace the $ \dollar $ symbols in $ r_q^c $ with $ \varepsilon $ and connect it to $ \cal{V} '$: For each incoming transition to $ q $ in $ \mathcal{V} $, we create a copy of the transition and connect it to the copy of $ q $ in $ r_q^c $. The $ \dollar $-transitions inherited from $ \cal V $ are removed from $\cal{V}'  $ and any accept state of $ \mathcal{V} $ is no longer an accept state in $ \mathcal{V}' $. $ \mathcal{V}' $ simulates the computation of $ \mathcal{V} $ on any non-empty string until scanning the $ \dollar $ and then follows the transitions in the newly added states to reach an accept state. 

If $ L $ contains the empty string, we add one more  state that has the following properties: (i) it becomes the new initial state of the resulting machine, (ii) it is an accept state, (iii) it causes the machine to behave (i.e. transition) like the original initial state of $ \mathcal{V} $ upon reading the first symbol, and (iv) there is no transition coming in to this state.
\end{proof}

The idea given in the proof of Lemma \ref{lem: rtNBend} does not apply for non-blind models since the status of the vector may be changed after reading the last symbol of the input (just before reading the right end-marker). In fact, one can show that DHVAs using end-marker are more powerful than ordinary DHVAs in terms of language recognition by the witness language $\mathtt{NEQ}=\{ a^ib^j | i \neq j\}$ .

\begin{thm} \label{thm: DHVAendmarker}
$ \bigcup_k\mathfrak{L}(\textup{DHVA($ k $)}) \subsetneq \bigcup_k \mathfrak{L}(\textup{DHVA($ k $)}_\$)$.
\end{thm}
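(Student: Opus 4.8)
My plan is to prove the two inclusions separately: the easy containment $\bigcup_k\mathfrak{L}(\textup{DHVA}(k)) \subseteq \bigcup_k\mathfrak{L}(\textup{DHVA}(k)_\$)$, and then strictness via the witness language $\mathtt{NEQ}=\{a^ib^j\mid i\neq j\}$ announced before the statement. The containment is immediate: a $\textup{DHVA}(k)$ is simulated by a $\textup{DHVA}(k)_\$$ that copies its transitions on $\Sigma$ and, on the end-marker $\dollar$, makes a single do-nothing transition (multiplying by the identity matrix) into an accept state exactly when the current state is accepting. Hence I only need to exhibit a language in the larger class but not in the smaller one.

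First I would place $\mathtt{NEQ}$ in $\mathfrak{L}(\textup{DHVA}(2)_\$)$. The machine would use initial vector $v=(1,0)$ and a two-state control enforcing the regular shape $a^*b^*$; reading $a$ multiplies the vector by $\bigl(\begin{smallmatrix}1&1\\0&1\end{smallmatrix}\bigr)$ and reading $b$ by $\bigl(\begin{smallmatrix}1&-1\\0&1\end{smallmatrix}\bigr)$, so after $a^ib^j$ the vector is $(1,i-j)$. On the end-marker I would branch on the status: if the status is $\neq$ (that is, $i\neq j$), multiply by the singular matrix $\bigl(\begin{smallmatrix}1&0\\0&0\end{smallmatrix}\bigr)$, which sends $(1,c)\mapsto(1,0)=v$ for every $c$, and enter an accept state; if the status is $=$ (that is, $i=j$), enter a rejecting state. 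This accepts exactly $\mathtt{NEQ}$, and it is precisely the ability to test the vector on the end-marker (non-blindness together with post-processing) that is being exploited, matching the intuition given just before the statement.

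The core of the argument is to show $\mathtt{NEQ}\notin\mathfrak{L}(\textup{DHVA}(k))$ for every $k$. Suppose for contradiction that a $\textup{DHVA}(k)$ $\mathcal{V}$ with state set $Q$ recognizes $\mathtt{NEQ}$, and write $(p_{i,j},u_{i,j})$ for the configuration reached after reading $a^ib^j$, with $v$ the initial vector. Since $a^ib^j\in\mathtt{NEQ}$ for every $j\neq i$, acceptance forces $u_{i,j}=v$ for all $j\neq i$. Consequently, while the $b$-block is scanned the vector sits at $v$, every such step is taken in status $=$, and the state evolves under a single fixed map $\tau\colon Q\to Q$ (the ``read $b$ with vector $=v$'' transition), so that $p_{i,j}=\tau^{\,j}(p_{i,0})$ for $0\le j\le i$ and each step applies a fixed matrix $B_q$ satisfying $vB_q=v$.

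I expect the main obstacle to be ruling out that $a^ib^i$ is rejected because of its vector rather than its state. This I would handle by periodicity: the $\tau$-orbit of $p_{i,0}$ is eventually periodic with period $p\le|Q|$, so for large $i$ the state $\tau^{\,i-1}(p_{i,0})$ governing the last $b$-step already occurred at some earlier position $j'\le i-2$, where the identical matrix $B_{\cdot}$ kept the vector at $v$; hence $u_{i,i}=v$ as well. With the vector pinned to $v$ at every position, $p_{i,j}=\tau^{\,j}(p_{i,0})$ for all $j\ge 0$, and the defining requirement becomes: this state is accepting for every $j\neq i$ but non-accepting at $j=i$. Taking $i\ge 2|Q|$ places both $i$ and $i-p$ in the periodic part of the orbit, so $\tau^{\,i}(p_{i,0})=\tau^{\,i-p}(p_{i,0})$ while $i-p\neq i$; the state at position $i$ must then share the acceptance status of position $i-p$, which is impossible. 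This contradiction yields $\mathtt{NEQ}\notin\bigcup_k\mathfrak{L}(\textup{DHVA}(k))$ and completes the separation.
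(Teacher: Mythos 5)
Your proposal is correct and follows essentially the same route as the paper: the same witness language $\mathtt{NEQ}$, the same trivial inclusion, a near-identical end-marker construction (branch on the vector status at $\dollar$ and collapse the vector back to its initial value), and a lower bound driven by the same mechanism --- determinism plus finitely many states, with acceptance forcing the vector back to $v$. The only difference is bookkeeping: the paper pigeonholes two accepted strings $a^mb^n$ and $a^mb^o$ into the same configuration and appends $b^{m-n}$, whereas you track the orbit of the single map $\tau$ and use its eventual periodicity, which additionally obliges you to verify (as you correctly do) that the vector equals $v$ after reading $a^ib^i$.
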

\begin{proof}
The subset inclusion is immediate, since the postprocessing may very well involve multiplication with the identity matrix. For the inequality, consider the language $\mathtt{NEQ}=\{ a^ib^j | i \neq j\}$. Suppose for a contradiction that there exists a DHVA($ k $) $\mathcal{V}$ recognizing $\mathtt{NEQ}$ for some $k$. Let $v$ be the initial vector of $\mathcal{V}$. There exist sufficiently long strings $w_1=a^mb^n$ and $w_2=a^mb^o$, $m \neq n$, $m \neq o$, $n < o$ such that $\mathcal{V}$ is in the same accept state
after reading $w_1$ and $w_2$ and the vector is equal to $v$, since the strings belong to $\mathtt{NEQ}$. When both strings are extended with $b^{m-n}$, $a^mb^nb^{m-n} \notin \mathtt{NEQ}$ whereas  $a^mb^o b^{m-n} \in \mathtt{NEQ}$. Since the same vector is being multiplied with the same matrices associated with the same states during the processing of the string  $b^{m-n}$, it is not possible for $\mathcal{V}$ to give different responses.

Now let us prove that the language $\mathtt{NEQ}$ can be recognized by a $\textup{DHVA}_\$(2)$ $\mathcal{V}'$. The initial vector of $\mathcal{V}'$ is $v'=\mypar{1~~1}  $, and the state diagram of $ \mathcal{V}' $ is given in Figure \ref{fig: neq}.

\begin{figure}[h]
	\centering
	\includegraphics[width=0.7\linewidth]{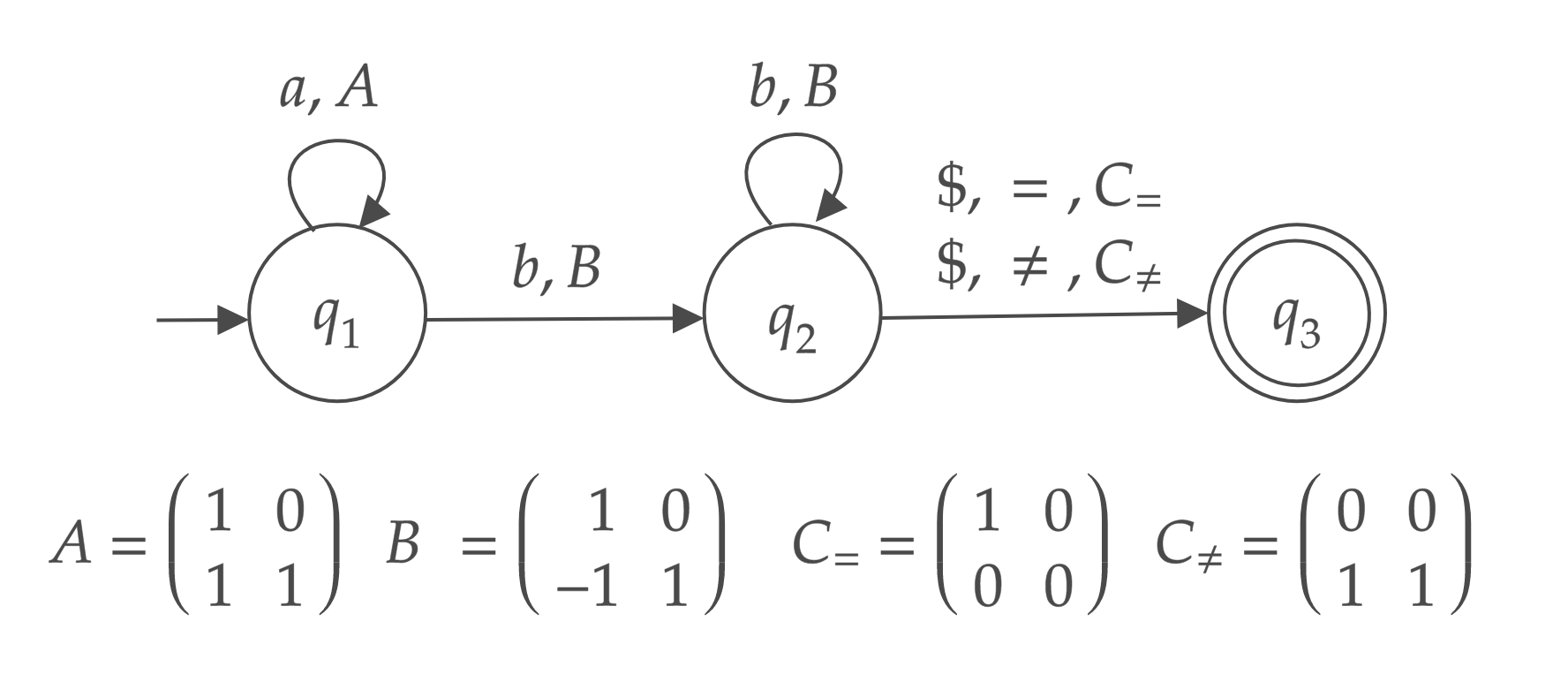}
	\caption{State diagram of $ \mathcal{V}' $ recognizing $ \mathtt{NEQ} $}
	\label{fig: neq}
\end{figure}

For each $a$, the first entry is increased by 1 and for each $b$ the first entry is decreased by 1 using the second entry of the vector, which is equal to 1 throughout the computation. The increment and decrement operations are performed by the matrices $A$ and $B$.

When reading the end-marker, if the value of the vector is equal to its initial value,  meaning that the number of $a$'s and $b$'s were equal to each other,  the vector is multiplied with $C_=$, which sets the second entry to 0, so that the input string is not accepted. Otherwise, if the vector is not equal to its initial value, meaning that the number of $a$'s and $b$'s were not equal, the vector is multiplied with $C_{\neq}$, which sets the first entry to 1. This returns the vector to its initial value, and the input 
string is accepted.
\end{proof}

Any BHVA with end-marker whose matrices are rational valued can be simulated by a BHVA with end-marker and integer valued matrices in the cost of increasing the size of the vector by 2. The proof is due to Abuzer Yakaryılmaz and can be found in \cite{SYS19}.
\begin{lem}
	\label{thm: ratint}
	For any given rational-valued $(n)$-$ \textup{XBHVA}_\$(k)$ $ \mathcal{V} $, where $ X \in \{\textup{D,1D,N,1N}\} $, there exists an integer-valued $(n)$-$ \textup{XBHVA}_\$(k+2)$ $ \mathcal{V}' $  that recognizes the same language.
\end{lem}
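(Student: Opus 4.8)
The plan is to clear all denominators with a single global scaling factor, while using the two extra coordinates to make the homing test \emph{scale-invariant}. Let $d$ be a positive integer that is a common denominator of all entries of the initial vector $v$ and of every matrix in $\mathrm{M}$, so that $dv$ and $dA$ are integral for each $A\in\mathrm{M}$. Since $\mathcal{V}$ is blind, its computation on an input of length $n$ is just the sequence of right-multiplications $v^{(i)}=vA_1\cdots A_i$, and acceptance means reaching an accept state with $v^{(n)}A_{\dollar}=v$ after the end-marker matrix $A_{\dollar}$ is applied. The elementary observation driving the construction is that if the first $k$ coordinates of $\mathcal{V}'$ carry $d^{\,i+1}v^{(i)}$ (an integer vector, updated by the integer matrix $dA$ in place of $A$), then multiplying by $d$ at every step keeps a separate ``scale'' coordinate equal to $d^{\,i+1}$; these two quantities stay synchronized no matter how many (possibly $\varepsilon$-) steps have been taken, which is what makes the argument survive one-way and nondeterministic computation.

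Concretely, $\mathcal{V}'$ has dimension $k+2$ and initial vector $(dv,\,d,\,1)$. Each $A\in\mathrm{M}$ is replaced by the block matrix acting as $dA$ on the first $k$ coordinates, as multiplication by $d$ on coordinate $k+1$, and as the identity on the constant coordinate $k+2$; thus after any number of steps the vector equals $(d^{\,i+1}v^{(i)},\,d^{\,i+1},\,1)$. The end-marker of $\mathcal{V}$ is handled by a single integer matrix $E$ written as the product of two integer matrices: the first applies $dA_{\dollar}$ to the working block and one further factor $d$ to the scale coordinate; the second performs a comparison-and-reset. Writing the post-$A_{\dollar}$ working block as $x$ and the scale as $s$, the reset sends the working block to $dv\cdot(\text{constant coord})+\big(d\,x-s\,(dv)\big)$, sets coordinate $k+1$ to $d\cdot(\text{constant coord})$, and leaves the constant coordinate fixed. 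The correction term equals $d^{\,n+3}\big(v^{(n)}A_{\dollar}-v\big)$, so it vanishes exactly when $v^{(n)}A_{\dollar}=v$; hence the full $(k+2)$-vector returns to its initial value $(dv,d,1)$ if and only if the register of $\mathcal{V}$ returns to $v$. States, accept states, the deterministic/nondeterministic choice, and the real-time/one-way distinction are all inherited verbatim, so the construction applies uniformly to every $X\in\{\textup{D},\textup{1D},\textup{N},\textup{1N}\}$, and the empty-string case falls out of the same formula with $n=0$.

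The step I expect to be the main obstacle is guaranteeing that $E$ has integer entries. The naive reset would subtract $s\,v$ from the working block, forcing the rational coefficients $-v_j$ into the matrix; the remedy is to compare $d\,x$ against $s\,(dv)$ instead, paying one extra factor of $d$ so that every coefficient ($d$, $(dv)_j$, and $-(dv)_j$) is an integer. One then checks that the constant coordinate genuinely supplies the additive constants $dv$ and $d$ demanded by the reset: this is precisely what lets the homing condition, which can only test equality to a \emph{fixed} initial vector, emulate the comparison $v^{(n)}A_{\dollar}=v$ in spite of the unbounded scale $d^{\,i+1}$. Confirming that the working and scale coordinates remain synchronized across $\varepsilon$-moves, and that the product of the two integer matrices keeps $E$ integral, then completes the argument.
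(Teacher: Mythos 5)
Your core construction is the natural one for this lemma, and for the real-time variants it is correct: the block matrices that act as $dA$ on the first $k$ coordinates, as $d$ on coordinate $k+1$, and as $1$ on coordinate $k+2$ preserve the invariant $(d^{\,i+1}v^{(i)},\,d^{\,i+1},\,1)$; your reset factor $E_2$ (whose nonzero entries are $d$, $\pm(dv)_j$, $d$, and $1$) is genuinely integral; and the final vector equals $(dv,d,1)$ exactly when $v^{(n)}A_{\dollar}=v$. Since states, accept states and determinism are inherited verbatim, the claim, including the preservation of the number of states, holds for $X\in\{\textup{D},\textup{N}\}$.

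The gap is in the one-way cases, which the statement also asserts. Your reset only works as the \emph{last} factor of the product: after $E_2$ the scale coordinate equals $d$, and every further block matrix multiplies it by $d$ again; since $\dollar$ is read once, any transition performed after the end-marker forces the final scale to be $d^{\,j+1}\neq d$ and makes acceptance impossible. But one-way end-marker machines may keep moving after scanning $\dollar$ --- the proof of Lemma \ref{lem: rtNBend} in this thesis explicitly describes 1N machines that ``read the $\dollar$ symbol and possibly make some $\varepsilon$-transitions and eventually end in an accept state.'' Concretely, consider the two-state $\textup{1NBHVA}_{\dollar}(1)$ with initial vector $(1)$ that multiplies by $(2)$ on each $a$ in its non-accepting initial state, moves to its accept state with matrix $(1)$ on $\dollar$, and multiplies by $(1/2)$ on each subsequent $\varepsilon$-move in the accept state: it recognizes $a^*$, yet your $\mathcal{V}'$ (here $d=2$) accepts only the empty string, because every accepting path on $a^n$ with $n\geq 1$ uses $n$ post-$\dollar$ $\varepsilon$-moves, after which the scale coordinate is $2^{\,n+1}\neq 2$. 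The repair is not routine. For 1D it suffices to attach the reset only to $\dollar$-transitions entering accept states (where such machines halt by convention), but for 1N one would have to attach nondeterministic reset copies to $\varepsilon$-transitions entering accept states, and such copies can then also fire in the middle of the input; two resets separated by matrix products $Q_0,Q_1$ leave an error term of the form $d^{\,c}(vQ_1-v)+d^{\,c'}(vQ_0-v)Q_1$, which can vanish even when $vQ_0Q_1\neq v$, i.e., it can create false acceptance. So, as written, your argument proves the lemma for the real-time variants only; the one-way half needs an additional idea.
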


For nondeterministic HVAs, we can state the following corollary.

\begin{cor}
	Rational-valued \textup{XBHVA}s and integer-valued \textup{XBHVA}s where $X \in \{\textup{N,1N}\}  $ recognize  the same class of languages.
\end{cor}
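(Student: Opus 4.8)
The plan is to obtain the corollary as a direct composition of Lemma~\ref{thm: ratint} and Lemma~\ref{lem: rtNBend}, observing that the nondeterministic blind hypothesis $X \in \{\textup{N},\textup{1N}\}$ is precisely the condition under which both lemmas are simultaneously applicable.

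One inclusion is immediate. Since $\mathbb{Z} \subseteq \mathbb{Q}$, every integer-valued $\textup{XBHVA}$ is already a rational-valued $\textup{XBHVA}$, so the class of languages recognized by integer-valued machines is contained in the class recognized by rational-valued machines. The content of the corollary therefore lies in the reverse inclusion.

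For the reverse inclusion, I would begin with an arbitrary rational-valued $\textup{XBHVA}(k)$ $\mathcal{V}$ with $X \in \{\textup{N},\textup{1N}\}$ recognizing a language $L$. A machine that does not read the end-marker can be regarded as one that reads it trivially (multiplying by the identity matrix and passing to an accept state exactly when already in an accept state), as noted in the proof of Theorem~\ref{thm: DHVAendmarker}; hence $\mathcal{V}$ may be viewed as a rational-valued $\textup{XBHVA}_\$(k)$ recognizing $L$. Applying Lemma~\ref{thm: ratint} produces an integer-valued $\textup{XBHVA}_\$(k+2)$ recognizing the same language $L$. Finally, because $X \in \{\textup{N},\textup{1N}\}$, Lemma~\ref{lem: rtNBend} converts this integer-valued machine with end-marker into an integer-valued $\textup{XBHVA}(k+2)$ without end-marker that still recognizes $L$. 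This establishes the reverse inclusion and completes the proof.

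The main point to watch---and the reason the statement is restricted to $X \in \{\textup{N},\textup{1N}\}$---is the end-marker-removal step: Lemma~\ref{lem: rtNBend} is valid only for the nondeterministic blind models, and Theorem~\ref{thm: DHVAendmarker} shows that the analogous removal genuinely fails in the deterministic case, where postprocessing strictly increases recognition power. Everything else is bookkeeping, with the harmless $+2$ growth in dimension inherited from Lemma~\ref{thm: ratint}.
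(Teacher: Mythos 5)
Your proposal is correct and follows essentially the same route as the paper: it composes Lemma~\ref{thm: ratint} (rational to integer, at the cost of end-marker and two extra dimensions) with Lemma~\ref{lem: rtNBend} (end-marker removal, valid precisely for the nondeterministic blind models). The only additions are bookkeeping the paper leaves implicit---the trivial inclusion of integer-valued machines in rational-valued ones, and viewing an end-marker-free machine as one that processes the end-marker trivially---both of which are fine.
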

\begin{proof}
	By using Lemma \ref{thm: ratint}, we can conclude that any rational-valued XBHVA can be simulated by an integer-valued XBHVA using the end-marker. Then, by using Lemma \ref{lem: rtNBend}, we can remove the end-marker.
\end{proof}

Note that although they recognize the same classes of languages, a rational valued HVA can be simulated by an integer valued HVA in the cost of increasing the vector size by 2 and using some additional states.

\section{Encoding Strings with Homing Vector Automata}\label{sec: hva-SB}
While recognizing certain languages, it may be necessary to hold information about the string that is read so far in the entries of the vector. We call this notion ``encoding the string''.  

In this section, we are going to discuss some encoding techniques that can be performed by homing vector automata. The methods are applicable by the most restricted, real-time deterministic and blind version and therefore can be carried out by any homing vector automata. In the first part, we present the generalized Stern-Brocot encoding which can be performed by $ k $-dimensional homing vector automata using only matrices belonging to the set $ S_k(1) $, for any string belonging to a $ k $-letter alphabet. In the second part, we present another encoding method which can be performed by 2-dimensional HVAs, regardless of the size of the alphabet. The method also can be used for base conversion, which may be necessary while recognizing some specific languages.

\subsection{Stern-Brocot Encoding}\label{sec:binary encoding}
The Stern-Brocot tree is an infinite complete binary tree whose nodes correspond one-to-one to positive rational numbers \cite{St58,Br61}. Crucially for our purposes, the Stern-Brocot tree provides a basis for representing strings as vectors of integers, as suggested for binary alphabets in \cite{GKP89}. 
The fractions in the Stern-Brocot tree can be stored as vectors of dimension 2, where the vector entries are the denominator and the numerator of the fraction. This representation allows us to perform the binary encoding easily in homing vector automata, as follows.

The empty string is represented by $\mypar{1 ~~ 1}$. Now suppose that we want to encode a binary string $ w $ of length $ n $. For $ i=1 $ to $ n $, if $ w[i]=0 $, we add the value of the first entry to the second one, and if $ w[i]=1 $, we add the value of the second entry to the first one, multiplying the vector with the appropriate one of the following matrices $ M_0 $ and $ M_1 $:
$$ 
M_{0}= \mymatrix{rr}{1&1\\0&1\\} ~~~~~
M_{1}= \mymatrix{rr}{1&0 \\ 1&1}
$$ 
A list of some binary strings and their encodings is as follows. A proof on the uniqueness of the encoding can be found in \cite{GKP89}. 
\begin{align*}
0 & \hspace{0.1in} \mypar{1~~2}
& 00 & \hspace{0.1in} \mypar{1~~3}
& 10 & \hspace{0.1in} \mypar{2~~3}
& 000 & \hspace{0.1in} \mypar{1~~4}
& 010 & \hspace{0.1in} \mypar{3~~5}   \\
1 & \hspace{0.1in} \mypar{2~~1}
 & 01 & \hspace{0.1in}  \mypar{3~~2}
 & 11 & \hspace{0.1in} \mypar{3~~1} 
 & 001 &\hspace{0.1in} \mypar{4~~3}  
 & 011 &\hspace{0.1in} \mypar{5~~2}   
\end{align*}   

Given the vector representation $  v_w $ of a string $ w $, it is also possible to decode the string with the following procedure: Let $ |w|=n $ and $  v_w= \mypar{
a~~ b} $. Set $ w[n]=0 $ if $ b>a $, and $ w[n]=1 $ otherwise. Subtract the smaller entry from the larger one to obtain $  v_w^{n-1} $ and repeat this routine until you obtain the vector $  \mypar{
1~~ 1} $. When the given vector is not a valid representation of a string, then it is not possible to obtain  $ \mypar{
1~~1} $. The  matrices required for this procedure are $ N_0 $, which has the effect of subtracting the value of the first entry of the vector from the second entry, and $ N_1 $, for the symmetric action. Note that $ N_0 = ({M_0})^{-1}$  and $ {N_1} = ({M_1})^{-1} $.
$$ 
N_{0}=\mymatrix{rr}{1&-1\\0&1\\}~~~~~
N_{1}=\mymatrix{rr}{1&0\\-1&1\\}
$$ 
%
%


We generalize the scheme mentioned above to strings on alphabets of arbitrary size and present a new method for encoding strings. Let $ \Sigma=\{a_1,a_2,\dots,a_k \} $, and $ w \in \Sigma^*$. With the \textit{generalized Stern-Brocot encoding} method described below, it is possible to uniquely encode $ w $ using a vector of size $ k $ and $ k \times k $ matrices whose entries belong to the set $ \{-1,0,1\} $. 


We start with the $ k $ dimensional vector $\mypar{1~~1~~\cdots ~~1}  $, which represents the empty string. Suppose that  $|w|=n  $. To encode $w$, for $ i=1 $ to $ n $, if $ w_i=a_j $, the vector is multiplied with the matrix $E^k_j  $, the $ k $ dimensional identity matrix whose $ j $'th column has been replaced with a column of $ 1 $'s. Multiplication with $ E^k_j $ causes the  $ j $'th entry of the vector to be replaced by the sum of all the entries in the vector. 

Among the different generalizations of the Stern-Brocot fractions, one that appears in \cite{Ga13} under the name of ``Stern's triatomic sequence'' is similar to the encoding we propose for the case $ k=3 $. The similarity lies in the construction of the sequence, but that sequence is not used for the purpose of encoding. As far as we know, no such generalization exists for the case $ k>3 $.

In the following lemma, we prove the uniqueness of this generalized encoding.

\begin{lem}\label{lem:unique}
	No two distinct strings on $\Sigma$ \textup{ ($ |\Sigma|=k $)} can be represented by the same  vector of size $ k $ using the generalized Stern-Brocot encoding.
\end{lem}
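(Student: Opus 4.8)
The plan is to prove that the encoding map $w \mapsto v_w$ is injective by showing that each encoding vector uniquely determines \emph{both} the final symbol of the string it encodes \emph{and} the encoding of the prefix obtained by deleting that symbol. Iterating this recovery reads off the whole string deterministically, so two distinct strings cannot share an encoding. Here I write $v_w$ for the vector encoding $w$, with $v_\varepsilon = \mypar{1~~1~~\cdots~~1}$ and $v_{xa_j} = v_x E^k_j$.

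First I would record an invariant: for every $w \in \Sigma^*$ the vector $v_w$ has strictly positive integer entries. This follows by induction on $|w|$, since $v_\varepsilon$ is positive and multiplication by $E^k_j$ replaces the $j$-th coordinate by the sum of all coordinates (a sum of positive integers) while leaving the others unchanged. The heart of the argument is then the following claim: if $w$ is nonempty with last symbol $a_j$, then $v_w$ attains its maximum at coordinate $j$, and only there. Indeed, writing $w = x a_j$, we have $v_w[j] = \sum_{\ell} v_x[\ell]$ and $v_w[m] = v_x[m]$ for $m \neq j$; since all entries of $v_x$ are positive and (for $k \geq 2$) the sum $\sum_\ell v_x[\ell]$ contains at least one further positive term beyond $v_x[m]$, we get $v_w[j] > v_w[m]$ for every $m \neq j$. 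The same computation shows $v_w \neq \mypar{1~~1~~\cdots~~1}$ whenever $w$ is nonempty, so $\varepsilon$ is the unique preimage of the all-ones vector.

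Injectivity would then follow by induction on $|w| + |w'|$. Suppose $v_w = v_{w'} = u$. If $u = \mypar{1~~1~~\cdots~~1}$, both strings are empty by the previous paragraph. Otherwise the position of the unique maximal coordinate of $u$ forces $w$ and $w'$ to end in the same symbol $a_j$. Inverting the final multiplication is now forced: the predecessor vector $v$ must satisfy $v[m] = u[m]$ for $m \neq j$ and $v[j] = u[j] - \sum_{m \neq j} u[m]$, so the prefixes of $w$ and $w'$ both encode to this same $v$. By the induction hypothesis the prefixes coincide, whence $w = w'$.

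The main obstacle is the maximality claim of the second paragraph, since it is exactly what makes the last operation reversible, and everything rests on the positivity invariant guaranteeing that the freshly overwritten coordinate strictly dominates the others. This is also where I would take care to note the hypothesis $k \geq 2$: over a unary alphabet the encoding is constant and the statement degenerates, so the intended setting is a genuine alphabet with at least two letters.
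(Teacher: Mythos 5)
Your proof is correct, but it follows a genuinely different route from the paper's. The paper proves the lemma by a forward induction comparing an arbitrary pair of strings of length at most $n$: it argues that if the two predecessor vectors disagree in more than two entries, a single multiplication cannot reconcile them, and then disposes of the one-entry and two-entry disagreement cases by direct computation (the two-entry case ending in the impossibility $c+b+2x=0$), plus a separate case for equal prefixes followed by different final symbols. You instead prove that decoding is well-defined: the positivity invariant plus the unique-maximum claim show that any encoding vector determines its string's last symbol (the position of the strict maximum) and the predecessor vector (subtract the sum of the other entries), so injectivity falls out of a clean induction on $|w|+|w'|$. Notably, the paper itself states exactly this right-to-left reconstruction procedure in the paragraph \emph{after} the lemma, but asserts it without proof and does not use it in the proof; your argument simultaneously proves the lemma and the correctness of that decoding algorithm, which the paper's case analysis does not. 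Your version also makes explicit two things the paper leaves implicit: the positivity of all entries (which the paper's subcases silently rely on, e.g.\ when concluding that a freshly overwritten entry must exceed an untouched one) and the hypothesis $k\geq 2$, without which the statement degenerates over a unary alphabet. The paper's approach buys a self-contained, if more laborious, elimination of all ways two vectors could collide after one step; yours buys brevity, reversibility, and a reusable structural invariant.
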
 
\begin{proof}
	We will prove by induction on $n$ that if a $k$-dimensional vector $ v $ is the generalized Stern-Brocot encoding of a string of length $n$, then $ v$ is not the encoding of any other string of length at most $n$. 
	
	The empty string is represented by the $ k $-dimensional vector of 1's.
	The claim clearly holds for $n=0$, since no other strings of at most this length exist.
	Now assume that the claim holds for all natural numbers up to $ n-1 $. Let $ w $ be a string of length $ n $. The vector $  v_w $ representing $ w $ is obtained by multiplying the vector $  v_w^{n-1} $, representing the first $ n-1 $ symbols of $ w $, with $ E^k_j $ if $ w[n]=a_j $. We will examine various possibilities regarding this final multiplication. Note that at a single step, it is possible to modify only a single entry of each vector. Now consider any string $ u \neq w $ with $ |u|=l $ and $ l \leq n $. If $ w $ and $ u $ have the same first $ n-1 $ symbols, then $  v_w^{n-1}= v_u^{l-1} $, the last symbols of the two strings are unequal, and it is not possible to obtain $  v_w= v_u $ since the same vector is multiplied by different matrices. In the remaining case, we know by the induction hypothesis that $  v_w^{n-1}\neq  v_u^{l-1} $. If these vectors disagree in more than two entries, there is no way that one can obtain the same vector by multiplying 
	them once with some matrices of the form $E^k_j$. So we consider the case of the two vectors disagreeing in at most two entries.
	
	Suppose that $  v_w^{n-1}$ and $ v_u^{l-1} $ differ only in the  $ i $'th entry. If the final multiplications both work on the $i$'th entries, they will be adding the same number to them, resulting again in vectors differing in their $ i $'th entries. If one or more of the final multiplications deals with another entry, then the final vectors will surely disagree in that entry. It is not possible in any case to end up with equal vectors,
	
	Now suppose that $  v_w^{n-1}$ and $ v_u^{l-1} $ differ in two entries.
	If the final multiplications work on the same entry, then the final vectors will disagree in at least one entry.
	In the only remaining case, each one of the vectors is multiplied by a matrix updating a different one of the disagreeing entries. Let us represent the disagreeing entries of the vectors $  v_w^{n-1} $ and $  v_u^{n-1} $ by the pairs $(a~~ b)$ and $(c~~ d)$, respectively. Let $ x $ be the sum of the remaining $k-2$ entries in which the vectors agree. Without loss of generality, say that the entries become $(a~~~a+b+x)$ and $(c+d+x~~~d)$ after the final multiplication.
	But if the final vectors are equal, these pairs should also be equal, implying  $ c+b+2x=0 $, an impossibility. 
	
	We therefore conclude that it is not possible to have $  v_w= v_u $ for any string $ u $ of length at most $n$.
\end{proof}

Like in the binary case, given the vector representation  of a string, it is possible to reconstruct the string. The all-ones vector corresponds to the empty string. Any other vector $  v_w$ encoding a string $w$ of length $n$ in this encoding has a unique maximum entry, say at position $j$. Then $ w[n]$ is $a_j $, and we obtain $  v_w^{n-1} $ by subtracting the sum of the other entries from the greatest entry. One repeats this procedure, reconstructing the string from right to left, until  one ends up with the all-ones vector. In terms of matrices, multiplications with the inverses of $ E^k_j $s capture this process. 

We demonstrate the use of generalized Stern-Brocot encoding in the following example.

\begin{ex} \label{ex: stern}
$ \mathtt{MPAL_l}=\{w\#w^r|w\in\{a_1,a_2,\dots,a_l\}^*\} \in \mathfrak{L}($\textup{DHVA($ l $)}$_{S_l(1)} )$.
\end{ex}
Let us construct a DHVA($ l $)$ _{S_l(1)} $ $ {\mathcal{V}} $ recognizing $\mathtt{MPAL_l}  $. The input alphabet is $ \{a_1,a_2,\dots,a_l\} $,  and the corresponding matrices are $ \{E^l_1,E^l_2,\dots,E^l_l\}$. Starting with the $ l $ dimensional vector of 1's,  $ {\mathcal{V}} $ encodes the string by multiplying its vector with the matrix $ E^l_j $ whenever it reads an $ a_j $ until it encounters a $ \# $ . After reading the $ \# $, $ {\mathcal{V}} $ starts decoding by multiplying the vector with matrix $ ({E^l_j}) ^{-1}$ whenever it reads an $ a_j $.

If the string is of the form $ w\# w^r $, the vector will be multiplied with the inverse matrices in the correct order and the resulting value of the vector will be $\mypar{1~~1~~\cdots ~~1}$.

We also need to show that the input string is not accepted when it is not of the form $ w\#w^r $. Consider an input string $ x\#y^r $ and suppose that it is accepted by $ {\mathcal{V}} $. Let $ v' $ denote the vector after reading $ x\# $ and let $ Y $ denote the product of the matrices the vector is multiplied while reading $ y^r $. Since the string is accepted, $ { v'} Y=\mypar{1~~1~~\dots~~1} $ must be true. Since the matrices $ {(E^l_j)}^{-1} $ are invertible, $ Y $ is also invertible, which implies that $  v' $ must be unique. Since $ y\#y^r \in \mathtt{MPAL}$, then $ v' $ must be the vector obtained after reading $ y $ . From Lemma \ref{lem:unique}, we know that every string has a unique representation and we conclude that $ x $ and $ y $ are identical.
\subsection{Base-$ m $ Encoding}

In this subsection, we discuss a well-known encoding technique, which can be easily adopted to homing vector automata.

%

%

%

For any $ w \in \{1,2,\ldots,m-1\}^+ $ and $ m \leq 10 $, let $e_m(w)$ be the base-10 number encoded by $ w $ in base-$m$:
\[
e_m(w) = m^{|w|-1} w[1] + m^{|w|-2} w[2] + \cdots + m^1 w[|w|-1] + m^0 w[|w|]  .
\]
The encoding $e_m(w)$ can be easily obtained by using vector-matrix multiplications. Starting with the initial vector $v= (1~~0 )$, and multiplying the vector with 
\[
A^m_i = \mymatrix{rr}{ 1 &  i \\ 0 &  m}
\]
for each symbol $i$, $ e_m(w) $ is obtained in the first entry of the vector.

When the vector is multiplied by $ A^m_i $, the second entry is multiplied by $ m $ and then incremented by $ i $. As a result, this process ends up with $ e_m(w) $ appearing in the second entry of the vector.

Given $ w \in  \{1,2,\ldots,m-1\}^+  $, it is also possible to obtain the encoding for $ w^r $, that is $ e_m(w^r) $ in the second entry of the vector. This can be accomplished starting with the initial vector $ \mypar{1~~0} $ and multiplying the vector with the matrices $ B^m_i $ for each symbol in $ \Sigma $.
\[ 
B^m_i=\mymatrix{rr}{m&i\\0&1}
\]
\newpage
Multiplication by $ B^m_i $ increments the second entry by $ i $ times the first entry and multiplies the second entry by $ m $. After reading $ k $ symbols, the first entry holds $ m^k $. 

By the base-$ m $ encoding, any number in base-$ m $ that does not contain a 0 digit can be converted to its base-$10 $ equivalent. One should be careful if 0 is included in the alphabet, since appending 0s at the beginning of the string wouldn't change its encoding and the encoding wouldn't be unique in such a case. Nevertheless, by letting the alphabet to be $ \{0,1,\dots,m-1\} $ and using the same matrices as above, any number in base-$ m $ can be converted to its base-$ 10 $ equivalent.  

For instance, for the case where the alphabet has size 2, the matrices used for the conversion have the following form:
\[ 
A^2_0=\mymatrix{rr}{1&0\\0 &2}~~~~
A^2_1=\mymatrix{rr}{1&1\\0 &2}.
\]

Hence given $ w \in 1\{0,1\}^* $, one obtains the base-10 integer corresponding to the binary number represented by $ w $ in the second entry of the vector upon multiplication by the matrices $ A^2_0 $ and $ A^2_1 $.

When $ m=10 $ and $ w $ is a string containing at most $ 9 $ different symbols, then note that one can obtain $ w $ in the first entry of the vector, by multiplying the vector with $ A^{10}_i $ for each symbol $ i \in \Sigma=\{0,1,\dots,9\}$.
\[ 
A^{10}_i=\mymatrix{rr}{1&i\\0 & 10}
\]

\section{Relationship with Counter Automata}\label{sec: hva-ca}

In this section, we are going to analyze the relationship between counter automata and homing vector automata. In the first part, we will examine the blind case and show that homing vector automata outperform counter automata. In the second part, we will present some incomparability results between real-time non-blind counter automata and homing vector automata.

\subsection{Blind Counter Automata}
We are going to start this section by showing that BHVA(1)'s and $ k $BCA's are equivalent in power.

\begin{thm}\label{thm: counters}
	$\bigcup_k\mathfrak{L}( \textup{X$ k $BCA}) = \mathfrak{L}(\textup{XBHVA(1)}) $ where $ X \in \{\textup{D,N,1D,1N}\} $.
\end{thm}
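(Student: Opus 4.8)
We need to show
$$\bigcup_k\mathfrak{L}(\textup{X}k\textup{BCA}) = \mathfrak{L}(\textup{XBHVA(1)})$$
for each mode $X \in \{\textup{D,N,1D,1N}\}$. A blind $k$-counter automaton has $k$ integer counters, initialized to zero, updated additively, checked only at the end (must all be zero). A one-dimensional blind homing vector automaton has a single rational-valued register (the "vector" of dimension 1), initialized to some $v$, multiplied by rational scalars at each step, checked only at the end (must equal $v$). So this is an equivalence between additive counting in $\mathbb{Z}^k$ and multiplicative counting in $\mathbb{Q}^+$.

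The plan is to prove two inclusions, handling both directions by a direct simulation that leaves the finite-state control essentially unchanged and only reinterprets the storage updates. Let me sketch both.

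**Counters $\to$ vector.** Given a $k$BCA, I would pick $k$ distinct primes $p_1,\dots,p_k$ and set the initial register value to $v=1$. A counter move that adds the integer vector $(c_1,\dots,c_k)$ to the counters is simulated by multiplying the register by $\prod_{j=1}^k p_j^{c_j}$, a positive rational that belongs to a finite set (since there are finitely many distinct update vectors appearing in the transition function — by Fact \ref{fact: counter} we may even assume updates lie in $\{-1,0,1\}^k$). By unique factorization, the product of all the scalars applied along a computation equals $\prod_j p_j^{\Theta_j}$ where $\Theta_j$ is the final value of counter $j$; this equals $1=v$ if and only if every $\Theta_j=0$. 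The states, transitions, nondeterminism/determinism, and head-movement discipline are copied verbatim, so the mode $X$ and the real-time/one-way distinction are preserved, and the accepted language is unchanged.

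**Vector $\to$ counters.** Given an XBHVA(1) with register values drawn from the finite multiplier set $\Lambda\subseteq\mathbb{Q}$, the first point is that any scalar that is zero or negative is useless in an accepting path: multiplying by $0$ makes the register $0$, which can never return to a nonzero $v$, and a negative factor cannot be cancelled to recover a positive $v$ unless matched, but more simply we may note that the register value is $v$ times a product of the $\Lambda$-elements, so on accepting paths the relevant factors lie in $\mathbb{Q}^+$ (transitions using other multipliers can be pruned as in Theorem \ref{thm: M2Z}). Collect all primes appearing in numerators or denominators of the finitely many elements of $\Lambda$; call them $p_1,\dots,p_k$. Each multiplier $\lambda$ then has a unique exponent vector $\exp(\lambda)=(e_1,\dots,e_k)\in\mathbb{Z}^k$ with $\lambda=\prod_j p_j^{e_j}$. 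I would build a $k$BCA with the same control that, on a transition multiplying by $\lambda$, adds $\exp(\lambda)$ to its counters. The final register equals $v$ iff the product of multipliers is $1$ iff the summed exponent vector is $\mathbf{0}$ iff all counters are zero; this is exactly the $k$BCA acceptance condition. By Fact \ref{fact: counter} the (possibly large) integer counter updates can be realized by an ordinary $k$BCA without changing the time bound or counter number, so again mode and timing are preserved.

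**Main obstacle.** The routine part is the simulation bookkeeping; the only genuinely delicate point is the register-equality test in the non-blind-looking definition versus the blind guarantee. Here everything is blind by hypothesis (the $B$ in both abbreviations), so the storage is never consulted mid-computation and the translation of "scanned-status" transitions does not arise — this is what makes the two sides line up cleanly, and it is worth stating explicitly that the correspondence works precisely because both models defer the single equality/zero test to the end. The other care point is ensuring the prime-coding is injective on the finite multiplier set and that negative/zero multipliers are safely eliminated on accepting computations; once these are dispatched, the equivalence of acceptance conditions follows from unique factorization, and the equality of language classes in each mode $X$ is immediate.
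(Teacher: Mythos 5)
Your counters-to-register direction matches the paper's: prime encoding, with an increment/decrement of counter $i$ realized as multiplication by $p_i$ or $1/p_i$, and blindness on both sides making the acceptance conditions line up via unique factorization. The gap is in the register-to-counters direction, specifically your treatment of negative multipliers. You claim that on accepting paths the relevant factors lie in $\mathbb{Q}^+$ and that transitions with negative multipliers ``can be pruned as in Theorem \ref{thm: M2Z}.'' This is false: an accepting computation may use negative multipliers, as long as it uses an even number of them in total --- multiplying by $-2$ and then by $-1/2$ returns the register to $1$. What is true is that the \emph{product} along an accepting path is positive; that does not make each individual factor positive, so pruning shrinks the recognized language. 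Concretely, a one-state DBHVA(1) over $\{a\}$ that multiplies its register by $(-1)$ on every symbol recognizes exactly the strings of even length; after your pruning it recognizes only $\{\varepsilon\}$. Your own cited analogy exhibits the same point: Theorem \ref{thm: M2Z} prunes only matrices with determinant $\notin \{\pm 1\}$ and deliberately keeps determinant $-1$, precisely because two such matrices can multiply to determinant $1$.

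The paper repairs this with finite control rather than pruning: the counter automaton's state set is $Q \times \{0,1\}$, where the second component tracks the sign of the register and is flipped whenever a transition with a negative multiplier is simulated; the counters are updated by the exponent vectors of $|a_i|$, and the accept states are restricted to pairs whose sign component is positive. The final test ``all counters zero \emph{and} sign bit $0$'' is then equivalent to ``register equals $1$.'' With this modification your argument goes through; the rest of your construction (eliminating zero multipliers, collecting the primes of numerators and denominators, invoking Fact \ref{fact: counter} for counter updates larger than $\pm 1$, and preserving the mode $X$ and the real-time/one-way discipline) is sound and agrees with the paper's proof.
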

\begin{proof}
	Let $\mathcal{C}$ be an X$ k $BCA. We are going to construct a XBHVA(1) $ \mathcal{V}$ simulating $ \mathcal{C} $. The register of $ \mathcal{V} $ is initialized to 1. We are going to choose $ k $ distinct primes $ \{p_1,\dots,p_k\} $ to be multiplied with the register of $ \mathcal{V}$, each representing a counter. An increment and decrement of the $ i $'th counter of $ \mathcal{C} $ is simulated  by multiplying $ \mathcal{V} $'s register by $ p_i $ and $ \frac{1}{p_i} $ respectively. A string is accepted by $\mathcal{V} $ if the register is equal to 1 at the end of the computation, that is when all the counters are equal to 0.
	
	Now suppose that we are given a XBHVA(1) $ \mathcal{V} $. We are going to construct a X$ k $BCA $ \mathcal{C} $ simulating $ \mathcal{V} $. We may assume that the register of $ \cal V $ is not multiplied by 0, since such a computation will never be accepting. Let $A=\{a_1,a_2,\dots,a_n\}$ be the set of all rational numbers the register of $ \cal V $ can be multiplied with. Let $P=\{p_1,p_2,\dots ,p_k\}$ be the set of prime factors of the denominators and the numerators of the rational numbers in $A$. Then each $ a_i \in A$ can be expressed as 
	$$
	a_i=(-1)^{t_i}\frac{p_1^{x_{1_i}}p_2^{x_{2_i}}\cdots p_k^{x_{k_i}}}{p_1^{y_{1_i}}p_2^{y_{2_i}}\cdots p_k^{y_{k_i}}},
	$$ 
	\newpage
	\noindent where $ t_i=0 $ if $ a_i $ is positive and $ t_i=1 $ if $ a_i  $ is negative. $ \mathcal{C} $ will have $ k $  counters and the state set of $ \mathcal{C} $ will consist of two copies of $ \cal V $'s states, $ Q \times 0 $ and $ Q \times 1 $. The two copies are needed to encode the sign information of $ \mathcal{V}'s $ register in the states of $ \cal C $. Any computation starts in the first copy, in the state $( q_1 , 0 )$. Suppose that $ \cal V $ moves to state $ q' $ from state $ q $ by multiplying its register with $ a_i $. This transition is implemented in $ \cal C $ by adding the following transitions: If $ t_i=0 $, then the transitions from $ (q,0) $ to $ (q',0) $ and from $ (q,1) $ to $(q',1)  $ exist in $ \mathcal{C} $. If $ t_i=1 $, then the transitions from $ (q,0) $ to $ (q',1) $ and from $ (q,1) $ to $(q',0)  $ exist in $ \mathcal{C} $. In these transitions, the $ j $'th counter is incremented by $ x_{j_i}-y_{j_1} $. The only accept states of $ \cal C $ are those of the form $(q ,t_0 )$ where $ q  $ is an accept state of $ \cal V $. A string is accepted by $ \cal C $ when all the counters are equal to 0 , that is when the register of $ \cal V $ is equal to 1.
\end{proof}

 We list the following equivalences among the models.

\begin{itemize}
	\item $ \mathfrak{L}( \textup{1NBHVA(1)})=\bigcup_k\mathfrak{L}( \textup{1N\textit{k}BCA}) =\bigcup_k\mathfrak{L}(\mathbb{Z}^k) = \mathfrak{L}(\mathbb{Q}^+) = \mathfrak{L}(\textup{1NFAMW}) $

\end{itemize}

In the next theorem, we show that HVA(2)s are more powerful than counter automata. Before proving our result, we first prove a lemma to compare the computational power of real-time and one-way deterministic blind counter automata.

\begin{lem}\label{lem: detrt}
	$ \bigcup_k\mathfrak{L}(\textup{D$ k $BCA}) = \bigcup_k\mathfrak{L}(\textup{1D$ k $BCA}).$
\end{lem}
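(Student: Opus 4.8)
The inclusion $\bigcup_k\mathfrak{L}(\textup{D$ k $BCA}) \subseteq \bigcup_k\mathfrak{L}(\textup{1D$ k $BCA})$ is immediate, since a real-time machine is exactly a one-way machine that never uses a stay move; no construction is needed. The work is all in the reverse inclusion, so the plan is to show that an arbitrary one-way deterministic blind $k$-counter automaton $\mathcal{C}$ can be simulated by a real-time deterministic blind $k$-counter automaton $\mathcal{C}'$ using the same number of counters. The only way $\mathcal{C}$ can fail to be real-time is by executing stay moves, during which it re-reads the current symbol without advancing the tape-head.

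The central observation I would exploit is that, because $\mathcal{C}$ is both \emph{deterministic} and \emph{blind}, the behaviour of $\mathcal{C}$ while it sits on a fixed symbol is completely predetermined: from a state $q$ reading a symbol $\sigma$, the next transition depends neither on any later input nor on the counter contents, so the sequence of stay moves performed before the head finally advances is a fixed, state-only trajectory. Since $|Q|$ is finite, one of two things must happen within at most $|Q|$ stay steps: either $\mathcal{C}$ eventually makes a right move, consuming $\sigma$ and entering some state $q'$ with a net counter update $\Delta_{q,\sigma}\in\{-(|Q|{+}1),\dots,|Q|{+}1\}^k$ accumulated over the stay moves and the final right move; or a state repeats, in which case $\mathcal{C}$ is trapped in an infinite loop of stay moves and never advances its head. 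I would precompute, for every pair $(q,\sigma)$, which of these two cases occurs together with the data $(q',\Delta_{q,\sigma})$.

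The real-time machine $\mathcal{C}'$ is then built to collapse each such trajectory into a single step. On reading $\sigma$ in state $q$, $\mathcal{C}'$ moves directly to $q'$ and applies the net update $\Delta_{q,\sigma}$ to its counters in one move, which is legitimate under the convention (justified by Fact \ref{fact: counter}) that a blind counter automaton may alter each counter by an arbitrary integer in a single step, so no extra counters are required. In the looping case, $\mathcal{C}'$ instead enters a non-accepting sink state that consumes the remainder of the input and never accepts, faithfully reflecting that $\mathcal{C}$ rejects such inputs by nontermination. Because acceptance of a $k$BCA is only tested once the head reaches the $(n{+}1)$'st square with all counters zero, and $\mathcal{C}'$ arrives in exactly the state and counter configuration that $\mathcal{C}$ reaches after consuming the last symbol, the accept states of $\mathcal{C}'$ can be taken to be those of $\mathcal{C}$ (with the empty-string case handled by the initial state directly); thus $L(\mathcal{C}')=L(\mathcal{C})$.

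The main obstacle I anticipate is the bookkeeping around the stay-move loops and the single-step realization of the accumulated counter updates: one must argue carefully that a repeated state within the stay trajectory genuinely forces nontermination (using blindness to rule out the counters ever influencing an escape from the loop) and that folding a bounded-length trajectory into one transition preserves both blindness and the counter count. Everything else is routine, reducing to the fact that a deterministic blind machine's stationary behaviour is a finite, precomputable object.
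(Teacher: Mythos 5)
Your proposal is correct and takes essentially the same route as the paper's proof: since the machine is deterministic and blind, the stay-move behaviour on a fixed symbol is a state-only trajectory, and each maximal chain of stay moves followed by a right move is collapsed into a single real-time transition carrying the accumulated counter update, realizable in one step by Fact~\ref{fact: counter}. The only difference is cosmetic: you explicitly send the repeating-state (infinite stay loop) case to a rejecting sink, while the paper merely observes that such loops can never occur on a computation that goes on to scan the rest of the input.
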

\begin{proof} We are going to show that any computation by a 1D$ k $BCA can be carried out in real-time by a D$ k $BCA. For each input symbol, there can be only one outgoing transition from each state, since the computation is deterministic. There cannot be any loop in the state diagram which does not move the tape head, since in such a case, the next symbol will never be scanned. The only transitions which don't move the tape head should be in the form schematized in Figure \ref{fig: loop}. We omit the counter updates in the figure.
	
	\begin{figure}[h]
		\centering
		\includegraphics[width=1\linewidth]{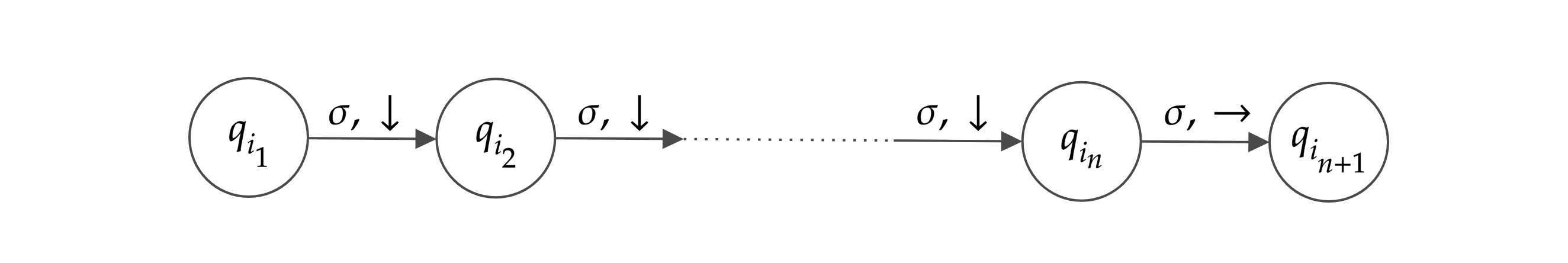}
		\caption{A part of the transition diagram of a one-way deterministic blind counter automaton}
		\label{fig: loop}
	\end{figure}
	
	There should be a sequence of transitions which do not move the tape head when scanning the symbol $ \sigma $, followed by a transition which moves tape head right, upon scanning the same symbol. Any other outcoming arrows from these states cannot be traversed as it is not possible to read any symbol different from $ \sigma $ without moving the tape head. This sequence of transitions can be handled by a single transition which moves the machine from $ q_{i_1} $ to $ q_{i_n} $, moves the tape head right and makes the necessary modifications on the counters. Hence a D$ k $BCA recognizing the same language as the original one which operates in real-time can be obtained. 
\end{proof}

\begin{thm}\label{thm: bhva2}
	$ \bigcup_k\mathfrak{L}\textup{(X$ k $BCA)} \subsetneq  \mathfrak{L} \textup{(XBHVA(2))}$ where $ X \in \{\textup{D,1D,N,1N}\} $.		
\end{thm}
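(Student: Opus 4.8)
The plan is to prove the containment is tight up to a single extra dimension, and then separate the two families with one carefully chosen witness language that serves all four modes at once.

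For the inclusion I would invoke Theorem \ref{thm: counters}, which gives $\bigcup_k\mathfrak{L}(\textup{X}k\textup{BCA}) = \mathfrak{L}(\textup{XBHVA(1)})$ for each $X \in \{\textup{D,N,1D,1N}\}$. It then suffices to embed an \textup{XBHVA(1)} into an \textup{XBHVA(2)} without altering the mode. Given a one-dimensional machine with initial register value $v_1$ and scalar multipliers $a$, I would build a two-dimensional machine with initial vector $\mypar{v_1~~1}$ that replaces every scalar $a$ by the matrix $\mymatrix{rr}{a&0\\0&1}$. The second coordinate is never touched and stays equal to $1$, while the first coordinate evolves exactly as the original register; hence the two-dimensional vector returns to its initial value precisely when the one-dimensional register does. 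Determinism, blindness, real-time versus one-way behaviour, and nondeterminism are all inherited transition by transition, so $\mathfrak{L}(\textup{XBHVA(1)}) \subseteq \mathfrak{L}(\textup{XBHVA(2)})$, and the inclusion $\bigcup_k\mathfrak{L}(\textup{X}k\textup{BCA}) \subseteq \mathfrak{L}(\textup{XBHVA(2)})$ follows for every $X$.

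For properness I would produce a single witness lying in the weakest of the four \textup{HVA(2)} classes and outside the strongest of the four counter classes. Consider the bounded language $L = \{a^n b^{2^n-1} \mid n \geq 1\}$. I claim $L \in \mathfrak{L}(\textup{DBHVA(2)})$: starting from the initial vector $\mypar{1~~1}$, the machine stays in a first (non-accepting) state while reading the $a$'s, multiplying by $\mymatrix{rr}{2&0\\0&1}$ so that after $a^n$ the vector is $\mypar{2^n~~1}$; on the first $b$ it moves to a unique accept state and thereafter multiplies by $\mymatrix{rr}{1&0\\-1&1}$, decrementing the first entry once per $b$, while any $a$ read after a $b$ is routed to a trap. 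After $a^n b^m$ the vector is $\mypar{2^n-m~~1}$, which equals the initial vector exactly when $m = 2^n-1$, so this real-time deterministic blind machine recognizes $L$. Since $\textup{DBHVA(2)}$ is contained in each of $\textup{NBHVA(2)}$, $\textup{1DBHVA(2)}$ and $\textup{1NBHVA(2)}$, we get $L \in \mathfrak{L}(\textup{XBHVA(2)})$ for every $X$. To show $L$ escapes the counter classes, note that $\{(n,2^n-1) \mid n\geq 1\}\subseteq \mathbb{N}^2$ is not semilinear, so the bounded language $L \subseteq a^*b^*$ is not semilinear; by Fact \ref{fact: bounded} it is not recognized by any 1NFAMW, and by the chain of equivalences recorded just after Theorem \ref{thm: counters} we have $\bigcup_k\mathfrak{L}(\textup{1N}k\textup{BCA}) = \mathfrak{L}(\textup{1NFAMW})$, whence $L \notin \bigcup_k\mathfrak{L}(\textup{1N}k\textup{BCA})$. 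As each family $\bigcup_k\mathfrak{L}(\textup{X}k\textup{BCA})$ is contained in $\bigcup_k\mathfrak{L}(\textup{1N}k\textup{BCA})$, the language $L$ lies outside all four, and combined with the inclusion this yields the strict containment for every $X$.

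I expect the main obstacle to be the properness half rather than the inclusion: one must exhibit a witness genuinely recognizable in real time by a blind deterministic two-dimensional machine (so that it populates even the weakest \textup{HVA(2)} class) while provably sitting outside the largest counter class. The leverage comes from multiplication, since doubling an entry is impossible for purely additive counter devices, and Fact \ref{fact: bounded} is exactly the tool that converts the non-semilinearity of the exponential relation into non-recognizability. A minor point to handle carefully is the off-by-one in the count $2^n-1$, forced by the requirement that the vector return exactly to its initial value; this does not affect non-semilinearity, and rejection of inputs outside $a^*b^*$ is dispatched by the trap state.
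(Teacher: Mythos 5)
Your proof is correct, but it separates the classes by a genuinely different route than the paper. For the inclusion you both rely on Theorem \ref{thm: counters}; you merely make explicit (via the block-diagonal padding $\mymatrix{rr}{a&0\\0&1}$) the embedding of an \textup{XBHVA(1)} into an \textup{XBHVA(2)} that the paper leaves implicit. The real divergence is in the properness argument. The paper uses \emph{two} witnesses: for $X\in\{\textup{D,1D}\}$ it takes $\mathtt{MPAL_2}$, recognized by a DBHVA(2) via the Stern--Brocot encoding (Example \ref{ex: stern}) but unrecognizable by real-time deterministic counter machines by an external result of \cite{Pe11} together with Lemma \ref{lem: detrt}; for $X\in\{\textup{N,1N}\}$ it takes the unary language $\mathtt{UPOW}'=\{a^{n+2^n}\}$, recognized by an NBHVA(2) but killed by Fact \ref{fact: unary}. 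This split is forced on the paper because neither of its witnesses works for all four modes: $\mathtt{UPOW}'$ is unary, so Theorem \ref{thm:unary} rules it out of every deterministic HVA class, while excluding $\mathtt{MPAL_2}$ from \emph{nondeterministic} blind counter automata would require a separate argument. Your single witness $L=\{a^nb^{2^n-1}\mid n\geq 1\}$ neatly sidesteps both obstacles: the $a$-to-$b$ symbol change replaces the nondeterministic guess (so a deterministic, blind, real-time two-dimensional machine suffices, placing $L$ in the weakest class), and $L$ is a bounded language with non-semilinear Parikh image, so Fact \ref{fact: bounded} together with $\bigcup_k\mathfrak{L}(\textup{1N}k\textup{BCA})=\mathfrak{L}(\textup{1NFAMW})$ excludes it from the strongest counter class, whence all four separations follow at once. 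What your approach buys is uniformity and self-containedness (one witness, one fact, no appeal to \cite{Pe11}); what the paper's buys is reuse of constructions it needs elsewhere anyway and witnesses ($\mathtt{MPAL_2}$, $\mathtt{UPOW}'$) that carry information beyond this theorem. One small point worth stating explicitly if you write this up: non-semilinearity of $\{(n,2^n-1)\}$ transfers to $L$ under \emph{any} bounded representation, since the representation set maps onto the Parikh image under a linear map and semilinearity is preserved by linear images.
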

\begin{proof}
	The inclusions follow by Theorem \ref{thm: counters}, as HVA(1)s are capable of simulating counter machines.
	For the deterministic case, it is known that no D$ k$CA can recognize the language $ \mathtt{MPAL}_2 $ in real-time for any $ k $ \cite{Pe11}. Hence we get that no D$ k $BCA and therefore no 1D$ k $BCA (as the two models are equivalent by Lemma \ref{lem: detrt}) can recognize $ \mathtt{MPAL}_2 $. On the other hand, $ \mathtt{MPAL}_2 $ can be recognized by a DBHVA(2) as shown in Example \ref{ex: stern}.
	
	For the nondeterministic case, any unary language recognized by a 1N$ k $BCA is regular by Fact \ref{fact: unary}, since $ \bigcup_k\mathfrak{L}$(1N$ k $BCA)= $ \mathfrak{L}$(1NFAMW). It is possible to recognize the unary nonregular language $ \mathtt{UPOW}'=\{a^{n+2^n} | n \geq 1 \} $ by a NBHVA(2) $ \mathcal{V} $ with the initial vector $ v = \mypar{1~~1} $, whose state transition diagram is given in Figure \ref{fig: upow}.
	
	\begin{figure}[h]
		\centering
		\includegraphics[width=0.8\linewidth]{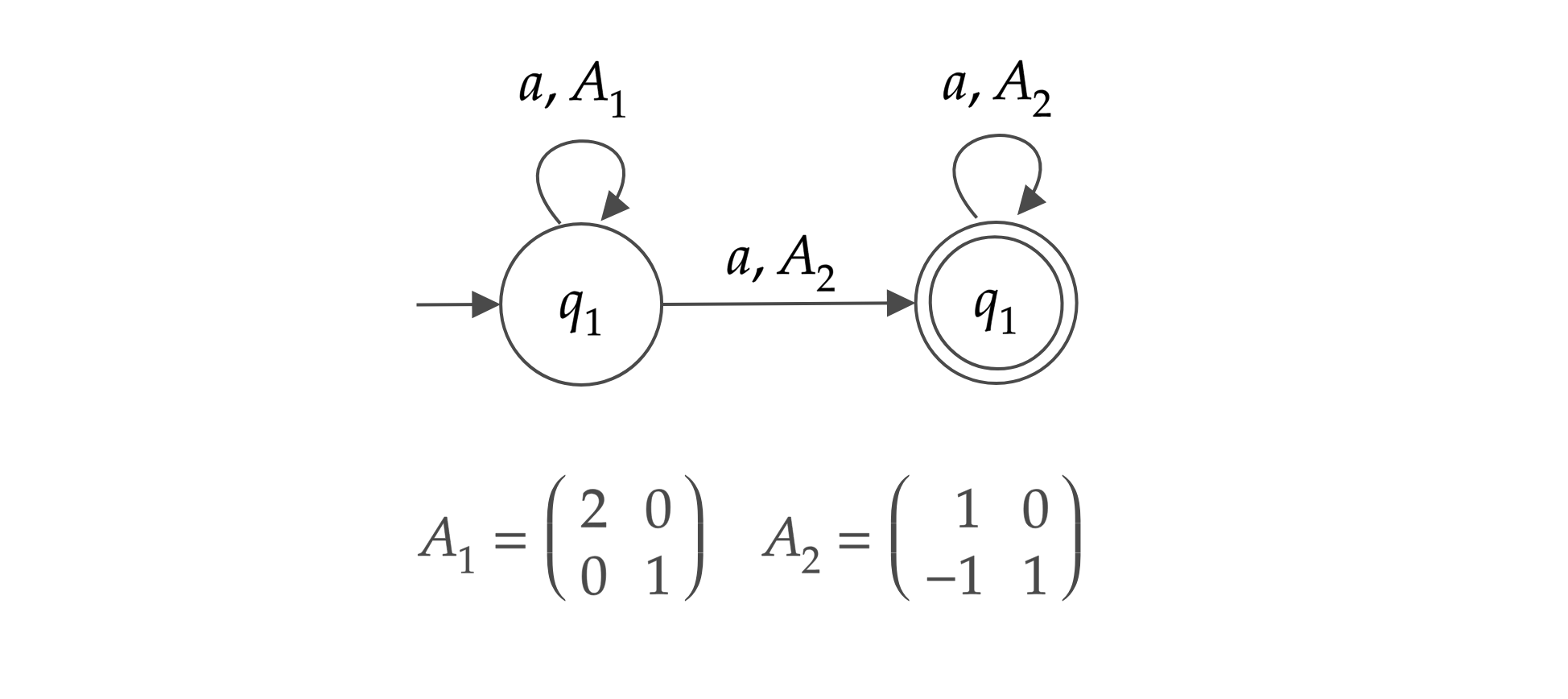}
		\caption{State transition diagram of $ \mathcal{V} $ recognizing $ \mathtt{UPOW' } $ }
		\label{fig: upow}
	\end{figure}
\end{proof}

Now we show that the same result can be achieved by 3-dimensional nondeterministic HVAs whose matrices are restricted to have integer entries.

\begin{thm}\label{thm: z3}
	$ \bigcup_k\mathfrak{L}(\textup{X$ k $BCA}) \subsetneq \mathfrak{L}\textup{(XBHVA(3)}_{M_3(\mathbb{Z})}) $ where $ \textup{X} \in \{\textup{N,1N}\} $.
\end{thm}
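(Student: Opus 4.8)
The plan is to prove the two directions separately: the inclusion follows by chaining the simulation results already established, and properness is witnessed by a unary language.

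For the inclusion $\bigcup_k\mathfrak{L}(\textup{X$k$BCA}) \subseteq \mathfrak{L}(\textup{XBHVA(3)}_{M_3(\mathbb{Z})})$, I would first invoke Theorem \ref{thm: counters} to collapse the whole family of blind counter automata into a single one-dimensional blind homing vector automaton: $\bigcup_k\mathfrak{L}(\textup{X$k$BCA}) = \mathfrak{L}(\textup{XBHVA(1)})$. The register produced by that simulation is \emph{rational}-valued, being multiplied by primes and their reciprocals. Viewing such a machine trivially as a rational-valued $\textup{XBHVA}_\$(1)$ that ignores its end-marker, I would then apply Lemma \ref{thm: ratint} with $k=1$ to obtain an integer-valued $\textup{XBHVA}_\$(3)$ recognizing the same language. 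Finally, since $X \in \{\textup{N},\textup{1N}\}$, Lemma \ref{lem: rtNBend} removes the end-marker, leaving an integer-valued $\textup{XBHVA}(3)$, i.e.\ a machine placing the language in $\mathfrak{L}(\textup{XBHVA(3)}_{M_3(\mathbb{Z})})$. Note that the target dimension is exactly three precisely because the rational-to-integer conversion of Lemma \ref{thm: ratint} costs two extra coordinates on top of the single coordinate needed to encode the counters.

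For properness, I would reuse the witness of the nondeterministic case of Theorem \ref{thm: bhva2}, the unary nonregular language $\mathtt{UPOW}'=\{a^{n+2^n}\mid n\ge 1\}$, and exhibit an \emph{integer}-matrix machine for it. Concretely, I would build a real-time nondeterministic blind machine with initial vector $\mypar{1~~1}$ that, in a first phase, doubles the first entry on each symbol using $\mymatrix{rr}{2&0\\0&1}$, then nondeterministically consumes one symbol at a phase switch that leaves the vector unchanged (the identity matrix), and in a second phase decrements the first entry on each symbol using $\mymatrix{rr}{1&0\\-1&1}$, the second entry staying fixed at $1$ to serve as the unit being subtracted. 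After $n$ doublings, the switch read, and $c$ decrements, the vector equals $\mypar{2^n-c~~1}$, so it returns to $\mypar{1~~1}$ exactly when $c=2^n-1$, making the accepted lengths $n+1+(2^n-1)=n+2^n$. All matrices lie in $M_2(\mathbb{Z})$, so padding the vector and matrices with one constant coordinate places the machine in $\mathfrak{L}(\textup{NBHVA(3)}_{M_3(\mathbb{Z})})$; as real-time machines are a special case of one-way machines, it lies in the one-way class as well.

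It then remains to argue $\mathtt{UPOW}'\notin\bigcup_k\mathfrak{L}(\textup{X$k$BCA})$. By Fact \ref{fact: unary} every unary language recognized by a $\textup{1NFAMW}$, equivalently by a $\textup{1N$k$BCA}$, is regular; since real-time blind counter automata are special cases of one-way ones, neither an $\textup{N$k$BCA}$ nor a $\textup{1N$k$BCA}$ can recognize the nonregular language $\mathtt{UPOW}'$, which establishes strictness. I expect the only real work to be bookkeeping: confirming that the composition Theorem \ref{thm: counters} $\to$ Lemma \ref{thm: ratint} $\to$ Lemma \ref{lem: rtNBend} genuinely lands in dimension three while preserving the language together with the blind and nondeterministic character, and checking the off-by-one in the witness construction (the symbol consumed at the phase switch) so that the accepted lengths are exactly $n+2^n$ rather than $n+2^n-1$.
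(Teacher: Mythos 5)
Your proposal is correct and follows essentially the same route as the paper: the same chain Theorem \ref{thm: counters} $\to$ Lemma \ref{thm: ratint} $\to$ Lemma \ref{lem: rtNBend} for the inclusion, and the same witness $\mathtt{UPOW}'$ together with Fact \ref{fact: unary} for properness. The only divergence is in the witness construction: the paper builds a genuinely 3-dimensional machine with matrices in $S_3(1)$ (doubling realized by adding two vector entries), a slightly stronger artifact that it reuses in the subsequent hierarchy theorem over $S_{k+1}(1)$, whereas your 2-dimensional machine with an entry-$2$ doubling matrix, padded to dimension 3, suffices for the statement at hand.
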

\begin{proof}
	Any X$ k $BCA can be simulated by a XBHVA(1) by Theorem \ref{thm: counters} and any XBHVA(1) can be simulated by a XBHVA$_\dollar $(3)$_{M_3(\mathbb{Z})}$ by Lemma $ \ref{thm: ratint} $. By using additional states, we can obtain an equivalent $\textup{XBHVA(3)}_{M_3(\mathbb{Z})} $ without end-marker by Lemma \ref{lem: rtNBend}. 
	
	We are going to show that the inclusion is proper by constructing a NBHVA(3)$_{S_3(1)} $ $ {\mathcal{V}} $ recognizing the unary nonregular language $ \mathtt{UPOW}'=\{a^{n+2^n}|n\geq 1\} $. The state transition diagram of $ \mathcal{V} $ is given in Figure \ref{fig: upow2}. Starting with the initial vector $  \mypar{1~~1~~1}$, $ {\mathcal{V}} $ multiplies the vector with matrix $ A_1 $ when reading each $ a $. The idea is to add the first and second entries together repeatedly to obtain powers of 2, so that after reading $ k $ symbols the value of the vector is equal to $ \mypar{2^k ~~ 2^k~~1} $. $ {\mathcal{V}} $ nondeterministically guesses $ n $ and starts decrementing the first entry from that point on by multiplying the vector with the matrix $ A_2 $ which fixes the second entry to 1 immediately. At the end of the computation, the value of the vector is equal to 
	$  \mypar{1~~1~~1} $ if and only if the input string is of the form $ a^{n+2^n} $ for some $ n $. 
	
	We conclude the result since no X$ k $BCA can recognize the language $ \mathtt{UPOW'} $ by Fact \ref{fact: unary}.
	
	\begin{figure}[h]
		\centering
		\includegraphics[width=1.1\linewidth]{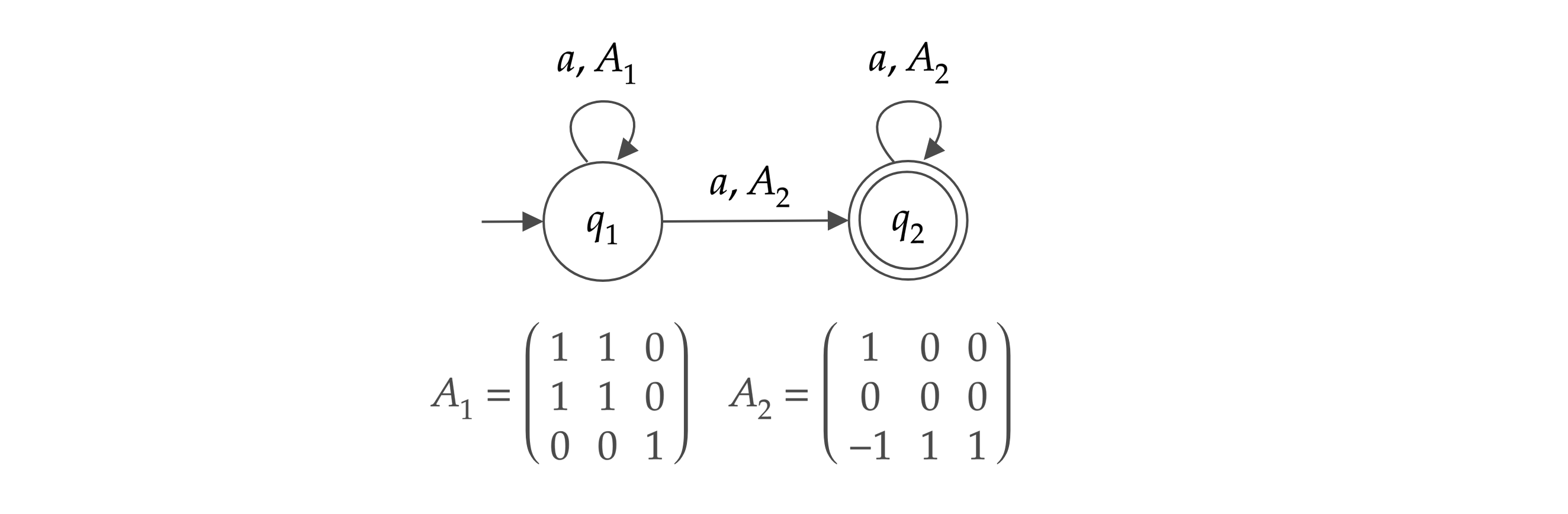}
		\caption{State transition diagram of $ \mathcal{V} $ recognizing $ \mathtt{UPOW}' $}
		\label{fig: upow2}
	\end{figure}	
\end{proof}	   

If we further restrict the matrices to have entries from the set $ \{-1,0,1\} $, then we obtain the following result.
\begin{thm}
	$ \bigcup_k\mathfrak{L}\textup{(X$ k $BCA)} \subsetneq \bigcup_k \mathfrak{L} \textup{(XBHVA($ k+1 $)}_{S_{k+1}(1)})$ where $ \textup{X} \in \{\textup{D,N,1D,1N}\} $.
\end{thm}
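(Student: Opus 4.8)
The plan is to prove the inclusion by a direct coordinate-wise simulation and then to obtain properness by importing witness languages and lower bounds that are already available in the excerpt.

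For the inclusion, I would simulate a given $\textup{X}k\textup{BCA}$ $\mathcal{C}$ by a $(k+1)$-dimensional blind homing vector automaton $\mathcal{V}$ whose vector stores one entry per counter together with a final coordinate held permanently equal to $1$. The initial vector is $\mypar{0~~\cdots~~0~~1}$, which matches the all-zero counters. Incrementing the $i$-th counter is implemented by multiplying on the right by $I + e_{(k+1,i)}$, where $e_{(k+1,i)}$ is the matrix with a single $1$ in row $k+1$ and column $i$; since the last coordinate always equals $1$, this adds $1$ to the $i$-th coordinate and leaves all others fixed. Decrementing uses $I - e_{(k+1,i)}$, and an unchanged counter uses $I$. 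Every such matrix has entries in $\{-1,0,1\}$, hence lies in $S_{k+1}(1)$, and the final coordinate is never disturbed. Because $\mathcal{C}$ is blind, its transitions never consult counter values, so $\mathcal{V}$ can be blind as well, and $\mathcal{V}$ returns to its initial vector exactly when all counters equal $0$, which is the blind acceptance condition of $\mathcal{C}$. The one-way/real-time regime is inherited verbatim, so this yields $\bigcup_k \mathfrak{L}(\textup{X}k\textup{BCA}) \subseteq \bigcup_k \mathfrak{L}(\textup{XBHVA}(k+1)_{S_{k+1}(1)})$ for each $\textup{X} \in \{\textup{D},\textup{1D},\textup{N},\textup{1N}\}$.

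For properness in the deterministic cases I would use $\mathtt{MPAL}_2$. Example \ref{ex: stern} builds a real-time deterministic automaton for $\mathtt{MPAL}_l$ using only the matrices $E^l_j$ and their inverses, all of which belong to $S_l(1)$; since this machine picks its matrix purely from its state and the scanned symbol and never inspects the vector mid-computation, it is in fact blind, giving $\mathtt{MPAL}_2 \in \mathfrak{L}(\textup{DBHVA}(2)_{S_2(1)}) \subseteq \mathfrak{L}(\textup{1DBHVA}(2)_{S_2(1)})$. On the other side, $\mathtt{MPAL}_2$ is recognized by no $\textup{D}k\textup{CA}$ in real time \cite{Pe11}, hence by no $\textup{D}k\textup{BCA}$, and by Lemma \ref{lem: detrt} by no $\textup{1D}k\textup{BCA}$ either. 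For the nondeterministic cases I would take the unary nonregular language $\mathtt{UPOW}' = \{a^{n+2^n}\mid n\geq 1\}$, which the proof of Theorem \ref{thm: z3} already places in $\mathfrak{L}(\textup{NBHVA}(3)_{S_3(1)}) \subseteq \mathfrak{L}(\textup{1NBHVA}(3)_{S_3(1)})$; since every unary language recognized by a real-time or one-way nondeterministic blind counter automaton is regular by Fact \ref{fact: unary}, no such machine recognizes $\mathtt{UPOW}'$.

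The verifications are all routine matrix bookkeeping, so I do not anticipate a genuine obstacle. The one point that warrants care is the observation that the automata of Example \ref{ex: stern} and Theorem \ref{thm: z3} are genuinely blind and use only matrices from the restricted set $S_{k+1}(1)$, so that the witness languages truly inhabit the right-hand side classes; everything else amounts to assembling the inclusion above with these pre-established separations.
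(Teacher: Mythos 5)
Your proposal is correct and follows essentially the same route as the paper's proof: a direct simulation in which the $(k+1)$'st coordinate is held fixed at $1$ so that counter increments and decrements become additions of $\pm 1$ via matrices in $S_{k+1}(1)$, with properness obtained from the same witnesses $\mathtt{MPAL}_2$ (deterministic case) and $\mathtt{UPOW}'$ (nondeterministic case). The only cosmetic differences are your choice of initial vector $\mypar{0~~\cdots~~0~~1}$ instead of the paper's all-ones vector, and your per-counter matrices $I \pm e_{(k+1,i)}$, which commute and multiply into the paper's single per-step update matrix $I + \sum_i c_i e_{(k+1,i)} \in S_{k+1}(1)$, so nothing of substance changes.
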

\begin{proof}
	Let us simulate a given X$ k $BCA by a XBHVA($ k+1 $)$_{S_{k+1}(1)} $ $ \mathcal{V} $. Let  $ \mypar{1~~1~~\cdots~~1} $ be the initial vector of $ {\mathcal{V}} $. The $ k+1 $'st entry of the vector will remain unchanged throughout the computation, which will allow the counter updates. At each step of the computation, $ {\mathcal{V}} $ will multiply the vector with the appropriate matrix $ A\in T $ where $ T \subseteq S_{k+1}(1) $ is the set of all $ (k+1)\times (k+1) $ matrices corresponding to possible counter updates. Since each counter can be decremented, incremented or left unchanged, $ |T|=3^k $. All matrices will have the property that $ A[i,i]=1  $ and $ A[k+1,k+1]=1 $. When the $ i $'th counter is incremented and decremented, then  $ A[k+1,i]=1 $ and $ A[k+1,i]=-1 $, respectively. At the end of the computation, the input will be accepted if the vector is equal to $ \mypar{1~~1~~\cdots~~1} $, which happens iff all counters have value 0.
	
	The inclusions are proper by the witness languages $ \mathtt{MPAL_2} $, which can be recognized by a DBHVA(2)$_{S_2(1)} $ by Example \ref{ex: stern} for the deterministic case and $ \mathtt{UPOW'} $, which can be recognized by a NBHVA(3)$ _{S_3(1)} $ by Theorem \ref{thm: z3}.
\end{proof}

\subsection{Non-blind Counter Automata}
The fact that the individual entries of the vector cannot be checked prevents us from simulating a non-blind counter automaton by a HVA. Since 1D2CA can recognize any recursively enumerable language, we focus on the real-time case. 

We are going to show that in the real-time deterministic case, counter automata and blind homing vector automata are incomparable. First, we give a characterization for DHVA($ k $)s when the alphabet is unary.

\begin{thm}\label{thm:unary}
	For any $k$, all languages over $\Sigma = \{a\} $ accepted by a \textup{DHVA($ k $)} are regular.
\end{thm}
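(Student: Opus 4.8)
The plan is to use the standard characterization that a language over a unary alphabet is regular if and only if its set of accepted lengths is an ultimately periodic subset of $\mathbb{N}$. So fix a DHVA($k$) $\mathcal{V}=(Q,\{a\},\mathrm{M},\delta,q_1,Q_a,v)$. Since $\mathcal{V}$ is deterministic and real-time, on input $a^n$ there is a single computation producing configurations $(q_0,v_0),(q_1,v_1),\dots$ with $(q_0,v_0)=(q_1,v)$, where reading one $a$ gives $\delta(q_i,a,\omega_i)=(q_{i+1},M_i)$ and $v_{i+1}=v_iM_i$, with $\omega_i$ recording whether $v_i=v$. The string $a^n$ is accepted exactly when $v_n=v$ and $q_n\in Q_a$. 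Hence the set of accepted lengths is contained in the set of \emph{return times} $R=\{\,n\ge 0 : v_n=v\,\}$, and it suffices to show that $R$, together with the states attained at those times, is ultimately periodic.

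First I would isolate the crucial consequence of determinism: the entire future trajectory from a configuration $(q,v)$ whose vector equals the initial vector is determined by $q$ alone, since $v$ is fixed and $\delta$ reads only the current state and the status bit $\omega$. Consequently, for each $q\in Q$ one can define the first-return time $\tau(q)\in\{1,2,\dots\}\cup\{\infty\}$, the least $t\ge 1$ at which the vector again equals $v$ when $\mathcal{V}$ is started in $(q,v)$, and the state $\rho(q)\in Q$ reached at that moment (with $\rho(q)$ undefined when $\tau(q)=\infty$). Writing $R=\{r_0=0<r_1<\cdots\}$ and $s_j=q_{r_j}$ for the state at the $j$-th return, well-definedness of $\tau,\rho$ yields the recurrences $r_{j+1}=r_j+\tau(s_j)$ and $s_{j+1}=\rho(s_j)$, with $s_0=q_1$.

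The sequence $s_0,s_1,s_2,\dots$ is thus generated by iterating the partial map $\rho$ on the finite set $Q$. If some $s_j$ has $\tau(s_j)=\infty$, then $R$ is finite, so the accepted lengths form a finite (hence regular) set. Otherwise $s_0,s_1,\dots$ is an infinite sequence in a finite set, so it is ultimately periodic: there are $p\ge 0$ and $\ell\ge 1$ with $s_{j+\ell}=s_j$ for all $j\ge p$. Then the increments $\tau(s_j)$ are ultimately periodic with the same period, so with $T=\sum_{i=0}^{\ell-1}\tau(s_{p+i})\ge \ell\ge 1$ we obtain $r_{j+\ell}=r_j+T$ for all $j\ge p$. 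Since the predicate $s_j\in Q_a$ likewise depends only on $j \bmod \ell$ for $j\ge p$, the accepted lengths $\{\,r_j : s_j\in Q_a\,\}$ form a finite union of arithmetic progressions with common difference $T$ (together with finitely many exceptional values below $r_p$), i.e.\ an ultimately periodic set. Therefore $L(\mathcal{V})$ is regular.

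The main obstacle I anticipate is justifying rigorously that $\tau$ and $\rho$ are genuinely functions of the state alone, and in particular that the status bit is read correctly at an anchor step (where $\omega={=}$) and throughout the ensuing phase; this is precisely where determinism is essential, and it is exactly the property that fails for the nondeterministic and blind variants, which can recognize unary non-regular languages such as $\mathtt{UPOW}'$. A minor bookkeeping point will be to treat $n=0$ uniformly (it is the anchor $r_0=0$ with $s_0=q_1$) and to confirm $\tau(q)\ge 1$, so that the return times strictly increase and $T>0$.
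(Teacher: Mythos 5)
Your proof is correct and follows essentially the same route as the paper's: both exploit the fact that, by determinism, the future of the computation from any configuration whose vector equals the initial vector $v$ is determined by the state alone, and then pigeonhole on the finite state set to conclude that the times at which the vector returns to $v$ (and the states reached there) are ultimately periodic. The only difference is packaging — you invoke the characterization of unary regular languages via ultimately periodic length sets, while the paper assembles an explicit DFA from the pre-period and the configuration loop.
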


\begin{proof}
	Let $ {L} $ be a unary language accepted by a DHVA($k$) $ {\mathcal{V}} $ and let $ v $ be the initial vector of $ {\mathcal{V}}$. We are going to construct a DFA recognizing $ {L} $ to prove that $ {L} $ is regular. We assume that $ {L} $ is infinite and make the following observation. Since $ {\mathcal{V}} $ has finitely many states, at least one of the accept states of $ {\mathcal{V}} $ will be accepting more than one string. Let $ w_1 $ and $ w_2 $ be the shortest strings accepted by an accept state $ q_a $ with $ |w_1|<|w_2| $. When accepting $ w_1 $ and $ w_2$, ${\mathcal{V}} $ is in state $ q_a $ and the value of the vector is equal to $ v $. After reading $ w_2 $, $ {\mathcal{V}} $ is in the same configuration as it was after reading $ w_1 $ and this configuration will be repeated inside a loop of $|w_2|-|w_1|= p $ steps. Therefore, we can conclude that all strings of the form $ a^{|w_1|+lp} $ for some positive integer $ l $ will be accepted by $ q_a $. 
	
	Between consecutive times $ q_a $ accepts a string, some other strings may be accepted by some other accept states. Let $ u $ be a string accepted by $ q_b $ with $ |w_1| < |u| < |w_2| $. Then all strings of the form $ a^{|u|+lp} $ for some positive integer $ l $ will be accepted by $ q_b$ since every time $ {\mathcal{V}} $  enters the accepting configuration at state $ q_a $, $ {\mathcal{V}} $ will enter the accepting configuration at state $ q_b $ after $ |u|-|w_1| $ steps. The same reasoning applies to any other accepting configuration inside the loop. 
	
	\newpage
	Now, let us construct a DFA $\cal  {F} $ accepting $ {L} $. $ \cal {F} $ has $ |w_1|+1+(p-1) $ states. The first $ |w_1|+1 $ states correspond to the strings of length at most $ |w_1| $ and the state $ q_{|w|} $ is an accept state for all $ w \in {L}$ that is of length at most $ |w_1| $. $ q_{|w_1|} $ and the next $ p-1 $ states $ q_{l_2},\dots,q_{l_p} $ stand for the configuration loop. States corresponding to accepting configurations inside the loop are labeled as accept states. 
	
	The transitions of the $ \cal F $ are as follows:
	\begin{align*}
	\delta(q_i,a)&=q_{i+1} \mbox{ for }  i=0,\dots,|w_1|-1 \\
	\delta(q_{|w_1|},a)&=q_{l_2} \\
	\delta(q_{l_i},a)&=q_{l_{i+1}}  \mbox{ for }  i=2,\dots,p-1 \\
	\delta(q_{l_p},a)&=q_{|w_1|} \\
	\end{align*}	
	Since $ {L} $ can be recognized by a DFA, $  {L} $ is regular. We conclude that any unary language accepted by a \textup{DHVA($ k $)}  is regular.
\end{proof}

\begin{thm}\label{th:last}
	$ \bigcup_k \mathfrak{L}\textup{(DBHVA($ k $))} $ and $ \bigcup_k \mathfrak{L} \textup{(D$ k $CA)} $ are incomparable.
	
\end{thm}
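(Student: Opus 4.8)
The plan is to establish incomparability by exhibiting two separating witness languages, one for each direction of non-containment. For the direction $\bigcup_k \mathfrak{L}(\textup{DBHVA}(k)) \nsubseteq \bigcup_k \mathfrak{L}(\textup{D}k\textup{CA})$, I would reuse the language $\mathtt{MPAL}_2 = \{w\#w^r \mid w \in \{a_1,a_2\}^*\}$. By Example \ref{ex: stern}, $\mathtt{MPAL}_2$ is recognized by a real-time deterministic blind homing vector automaton of dimension $2$ via the generalized Stern-Brocot encoding, so it lies in $\mathfrak{L}(\textup{DBHVA}(2))$. On the other hand, it is known \cite{Pe11} that no real-time deterministic $k$-counter automaton recognizes $\mathtt{MPAL}_2$ for any $k$; since every $\textup{D}k\textup{CA}$ is by definition real-time, $\mathtt{MPAL}_2 \notin \bigcup_k \mathfrak{L}(\textup{D}k\textup{CA})$. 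This settles the first direction immediately.

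For the reverse direction, $\bigcup_k \mathfrak{L}(\textup{D}k\textup{CA}) \nsubseteq \bigcup_k \mathfrak{L}(\textup{DBHVA}(k))$, I would exploit Theorem \ref{thm:unary}: every unary language accepted by a $\textup{DHVA}(k)$, and hence by the more restricted $\textup{DBHVA}(k)$, is regular. It therefore suffices to produce a single unary \emph{non-regular} language recognized by a real-time deterministic counter automaton. I propose the set of square-length words $L_{sq} = \{a^{n^2} \mid n \geq 1\}$, which is non-regular because the gaps $2n+1$ between consecutive members are unbounded.

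The key step, and the one I expect to be the main obstacle, is constructing a real-time $\textup{D}k\textup{CA}$ for $L_{sq}$: the naive ``count down the current gap, then reload'' strategy fails because a real-time machine reads exactly one symbol per step and cannot reset a counter to an arbitrarily large value in bounded time. I would circumvent this with a leapfrog scheme using two counters whose roles alternate. At each stage one counter $D$ counts down the current gap $2k+1$ (one decrement per symbol), while the companion counter $E$ simultaneously counts up from $0$, so that $E$ holds $2k+1$ exactly when $D$ reaches $0$ --- which happens precisely at position $(k+1)^2$. When $D$ hits zero I designate that configuration accepting, then swap the roles of the two counters and apply the bounded adjustment $D \leftarrow E + 2$ to install the next gap $2k+3$; resetting the emptied counter to $0$ restarts the upward count. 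All counter changes are by at most a fixed constant per step, so Fact \ref{fact: counter} guarantees this is realizable by an ordinary real-time $2$-counter automaton. Since $L_{sq}$ is unary and non-regular, Theorem \ref{thm:unary} shows $L_{sq} \notin \bigcup_k \mathfrak{L}(\textup{DBHVA}(k))$, completing the second direction and hence the incomparability.
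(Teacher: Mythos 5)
Your argument is essentially the paper's own proof. The first direction is identical: $\mathtt{MPAL}_2 \in \mathfrak{L}(\textup{DBHVA}(2))$ by Example \ref{ex: stern}, and \cite{Pe11} rules out recognition by any deterministic $k$-counter machine running in time $O(2^{n/k})$, which in particular covers the real-time D$k$CAs. The second direction also follows the paper's architecture --- exhibit a unary nonregular language in $\bigcup_k \mathfrak{L}(\textup{D}k\textup{CA})$ and invoke Theorem \ref{thm:unary} --- the only difference being that the paper simply cites \cite{SYS13} for a D2CA recognizing $\mathtt{UGAUSS}=\{a^{n^2+n} \mid n \in \mathbb{N}\}$, while you build a two-counter machine for $\{a^{n^2} \mid n \geq 1\}$ from scratch. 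Your leapfrog idea is sound and standard, but two timing details need care in a genuinely real-time machine: first, the ``swap roles and add $2$'' cannot occupy its own step, so it must be folded into the transition that consumes the next input symbol (the net counter updates are then $+1$ on each counter, still covered by Fact \ref{fact: counter}); second, since the status in $\{=,\neq\}$ visible at a transition is the one \emph{prior} to that step's update, a counter holding $1$ is indistinguishable from one holding $2$, so you must keep the countdown counter offset by one (it reaches $0$ one symbol before the square) in order to enter the accept state exactly at the positions $n^2$. With these routine corrections your construction is valid, and the incomparability follows exactly as in the paper.
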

\begin{proof}
	\label{ap:counter}
	We know that $ \mathtt{MPAL_2}=\{w\#w^r|w\in\{0,1\}^*\} $ can be recognized by a DBHVA(2) by Example \ref{ex: stern}. In \cite{Pe11}, it is proven that no deterministic counter machine with $ k $ counters operating in time $ O(2^{n/k}) $ can recognize $ \mathtt{MPAL_2} $. Since we are working with real-time machines, this result applies to our case.
	
	On the other hand, it is known that the nonregular unary language $ \mathtt{UGAUSS}=\{a^{n^2+n} | n \in \mathbb{N} \}$ can be recognized by a D2CA \cite{SYS13}. By Theorem \ref{thm:unary}, we know that BHVA($ k $)'s and inherently DBHVA($ k $)'s can recognize only regular languages in the unary case. Hence, we conclude that the two models are incomparable.

\end{proof}

For the nondeterministic case we can state the following result.
\begin{thm}
	$ \bigcup_k \mathfrak{L}\textup{(NBHVA($ k $))} \nsubseteq \bigcup_k \mathfrak{L} \textup{(N$ k $BCA)} $.
\end{thm}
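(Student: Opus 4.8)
The plan is to exhibit a single unary language that lies in the left-hand class but escapes the right-hand one, exploiting the fact that blind counter automata are extremely weak on unary inputs. The witness language will be the nonregular unary language $\mathtt{UPOW}'=\{a^{n+2^n}\mid n\geq 1\}$.

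First I would recall that $\mathtt{UPOW}'$ is recognized by an NBHVA(2): this is precisely the machine constructed in the proof of Theorem \ref{thm: bhva2} (with initial vector $\mypar{1~~1}$). Hence $\mathtt{UPOW}'\in\bigcup_k\mathfrak{L}(\textup{NBHVA}(k))$ is immediate, and no new construction is required for this direction.

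Next I would argue that no $\textup{N}k\textup{BCA}$ can recognize $\mathtt{UPOW}'$. The key observation is that every real-time $\textup{N}k\textup{BCA}$ is a special case of a one-way $\textup{1N}k\textup{BCA}$: the real-time model is obtained from the one-way model merely by forbidding $\varepsilon$-moves and fixing the head to move right at every step, so $\mathfrak{L}(\textup{N}k\textup{BCA})\subseteq\mathfrak{L}(\textup{1N}k\textup{BCA})$. Using the equivalence $\bigcup_k\mathfrak{L}(\textup{1N}k\textup{BCA})=\mathfrak{L}(\textup{1NFAMW})$ recorded in the list of equivalences following Theorem \ref{thm: counters}, together with Fact \ref{fact: unary}, every unary language in $\bigcup_k\mathfrak{L}(\textup{1N}k\textup{BCA})$ is regular. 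Consequently every unary language recognized by any $\textup{N}k\textup{BCA}$ is regular as well. Since $\mathtt{UPOW}'$ is unary and nonregular, it follows that $\mathtt{UPOW}'\notin\bigcup_k\mathfrak{L}(\textup{N}k\textup{BCA})$.

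Combining the two steps gives $\mathtt{UPOW}'\in\bigcup_k\mathfrak{L}(\textup{NBHVA}(k))\setminus\bigcup_k\mathfrak{L}(\textup{N}k\textup{BCA})$, which is exactly the claimed non-inclusion. I do not expect a genuine obstacle here; the only point deserving an explicit line of justification is the containment $\mathfrak{L}(\textup{N}k\textup{BCA})\subseteq\mathfrak{L}(\textup{1N}k\textup{BCA})$, which is what transfers the unary-regularity theorem from the one-way model to the real-time model. Everything else is a direct appeal to results already established earlier in the chapter.
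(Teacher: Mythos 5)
Your proof is correct, but it follows a genuinely different route from the paper's. The paper exhibits the non-unary witness $\mathtt{BIN}=\{wc0^{e_2(w)}cw^rc \mid w \in 1\{1,0\}^* \}$: it cites Greibach's result that no N$k$BCA recognizes $\mathtt{BIN}$ and then builds an explicit NBHVA(3) that recognizes it via base-2 encoding and decoding. You instead reuse the unary witness $\mathtt{UPOW}'=\{a^{n+2^n}\mid n\geq 1\}$ together with machinery already in the chapter: the NBHVA(2) from Theorem \ref{thm: bhva2}, the trivial containment $\mathfrak{L}(\textup{N$k$BCA})\subseteq\mathfrak{L}(\textup{1N$k$BCA})$ (which you rightly flag as the one step needing a line of justification, and which holds since the real-time model is by definition the one-way model with $\varepsilon$-moves forbidden), the equivalence $\bigcup_k\mathfrak{L}(\textup{1N$k$BCA})=\mathfrak{L}(\textup{1NFAMW})$, and Fact \ref{fact: unary}. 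In effect you observe that the theorem is an immediate corollary of the proper inclusion in Theorem \ref{thm: bhva2}, so your argument is shorter, needs no new construction, and witnesses the separation already in dimension 2. What the paper's heavier approach buys is strength against non-blind machines: this theorem sits in the non-blind subsection, and Greibach's result on $\mathtt{BIN}$ concerns counter automata that are not restricted to be blind, so the $\mathtt{BIN}$ witness remains meaningful against N$k$CAs. Your unary witness can never serve that purpose: real-time non-blind counter automata do recognize nonregular unary languages (e.g., $\mathtt{UGAUSS}$ by a D2CA, as noted before Theorem \ref{th:last}), so the unary-regularity argument is intrinsically confined to the blind case stated here.
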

\begin{proof}
	It is known that no N$ k $BCA can recognize the language $ \mathtt{BIN}=\{wc0^{e_2(w)}cw^rc | w \in 1\{1,0\}^* \} $ \cite{Gr76} where $ e_2(w) $ is the integer representation of the binary number represented by $ w $, for any $ k $. A NBHVA(3) $ \mathcal{V} $ recognizing the language $ \mathtt{BIN} $ is given in Figure \ref{fig: binary}.
	
	\begin{figure}[h]
		\centering
		\includegraphics[width=1\linewidth]{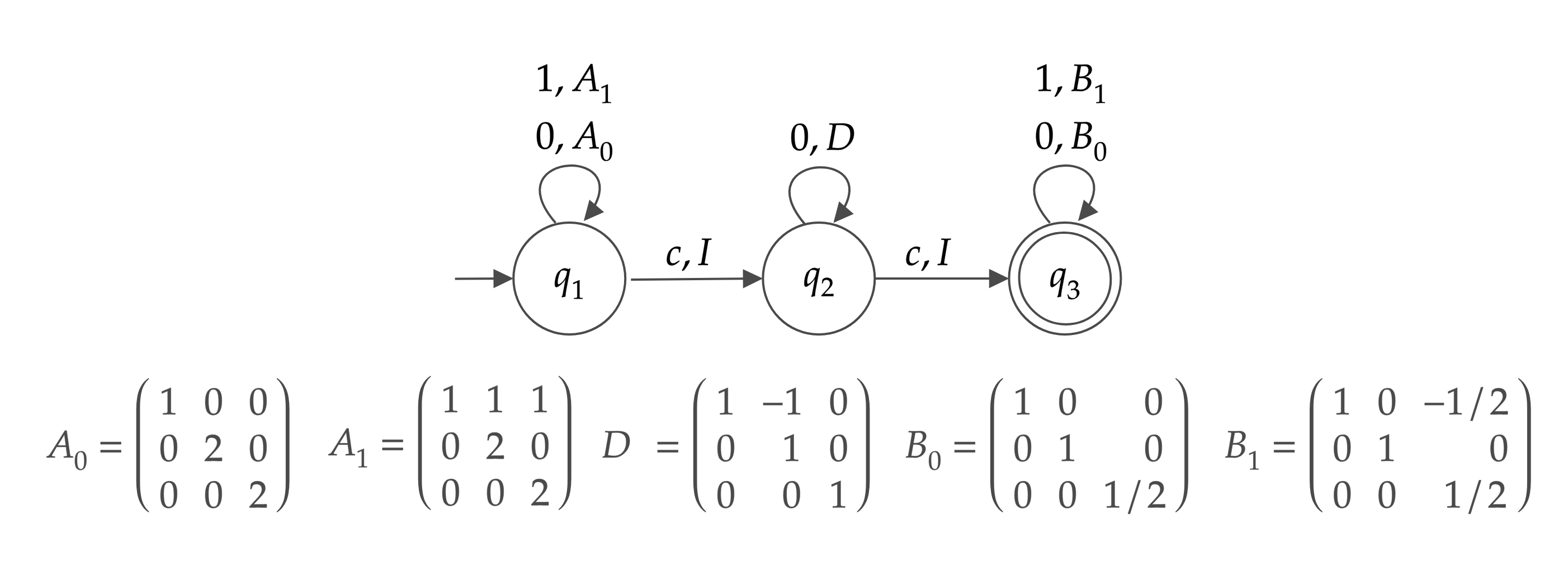}
		\caption{State transition diagram of $ \mathcal{V} $ recognizing $ \mathtt{BIN} $}
		\label{fig: binary}
	\end{figure}
	The initial vector of $ \mathcal{V} $ is $ \mypar{1~~ 0~~0}$. The idea is to use base-2 encoding to store $ e_2(w) $ in the second and the third entries of the vector with the help of the matrices $ A_0 $ and $ A_1 $. While reading the 0 sequence, the second entry is decremented using the matrix $ D $. While reading the last sequence between two $ c $'s, the inverses of the matrices used for base-2 encoding is used to check whether the sequence corresponds to the string $ w^r $.
\end{proof}

\section{Relationship with Extended Finite Automata}\label{sec: hva-efa}

In this section, we will exploit a relationship between 1NBHVA($ k $)'s and the extended finite automata over free groups to demonstrate the power of homing vector automata. 

\newpage
The two models seem to be linked in the case of finite automata over matrix groups, as the register is multiplied with a matrix at each step of the computation. Let us emphasize that the two models are different in the following sense. In a homing vector automaton, there is an initial vector $ {v} $, and the accepted strings are those which label a computation path along which the product of the sequence of matrices on the transitions is a matrix $ A $, such that  ${ v}= vA$. In the most general setting, the set of transition matrices belongs to the semigroup of rational matrices. In an accepting computation, the product of the matrices $ A $ belongs to the stabilizer subsemigroup of the set of rational matrices with respect to $ {v} $. In contrast, in an extended finite automaton over a matrix group, accepting computations are those in which $ A=I $. In that sense, one-way nondeterministic blind homing vector automata can be seen as akin to what someone who wanted to define a version of extended finite automata associated with general matrix semigroups, rather than groups, would come up with. In fact, the homing vector automaton can be seen as a special case of the rational semigroup automaton which is defined in \cite{RK10} as follows:

	Let $M$ be a monoid. An $M$-automaton is said to be \textit{with targets} if it is equipped with two subsets $ I_0, I_1 \subseteq M $ called the initial set and the terminal set respectively. An input string $ w \in \Sigma^* $ is accepted by the automaton if 
	there exists a computation from the initial state to some accepting state such that $ x_0x \in I_1 $, where
	$ x_0 \in I_0 $ and $ x  \in M$ is the content of the register of the machine after the reading of $w$. In the case that  $ I_0$ and $ I_1 $ are rational subsets of $ M $, the model is called  \textit{rational monoid automaton} defined by $M$ \cite{Re10,RK10}.  
	
	\textit{Rational semigroup automata} are defined analogously by taking $M$ as a semigroup instead of a monoid.

Note that the family of languages accepted by rational monoid automata 
where $I_0=I_1=\{1\}$ coincides with the set of languages recognized by ordinary $ M $-automata. Letting $I_0=I_1=v$ where $ v $ is the initial vector of a homing vector automaton and $ M $ to be a semigroup of matrices, one obtains homing vector automata.

We start by looking at the case of $2 \times 2  $ matrices. Recall from Subsection \ref{Section: 23matrices} that $ \mathbf{F}_2 $ admits a representation by $ 2 \times 2 $ matrices. In particular, the group generated by the matrices
\[
M_{a}=
\mymatrix{rr}{1&2\\
	0&1\\}
~~~\mbox{and}~~~
M_{b}=
\mymatrix{rr}{1&0\\
	2&1\\}
\]
is isomorphic to $\mathbf{F}_2 $.

%
%

Using the matrix representation of $\mathbf{F}_2$, any $\mathbf{F}_2$-automaton can be simulated by a suitably defined homing vector automaton that is of dimension 2, blind, nondeterministic, and one-way. The proof is due to Flavio D'Alessandro and can be found in \cite{SSD16}.
\begin{thm}\label{thm: EFA-1NBHVA}
	$ \mathfrak{L}( \mathbf{F}_2)  \subseteq \mathfrak{L}\textup{(1NBHVA(2))}.$
\end{thm}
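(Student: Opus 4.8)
The plan is to simulate the given $\mathbf{F}_2$-automaton by a dimension-$2$ homing vector automaton obtained through a transition-by-transition relabeling, using a faithful $2\times 2$ matrix representation of $\mathbf{F}_2$. Concretely, let $\mathcal{E}=(Q,\Sigma,\mathbf{F}_2,\delta,q_1,Q_a)$ be an $\mathbf{F}_2$-automaton, and fix a faithful representation $\rho:\mathbf{F}_2\rightarrow SL(2,\mathbb{Z})$. I would build a 1NBHVA(2) $\mathcal{V}$ with the same state set, initial state, and accept states, whose matrix set is $\mathrm{M}=\{\rho(g) : g \text{ labels a transition of } \mathcal{E}\}$ (a finite set, since $\mathcal{E}$ has finitely many transitions), and whose transition function is obtained from $\delta:Q\times\Sigma_\varepsilon\rightarrow\mathcal{P}(Q\times\mathbf{F}_2)$ by replacing each move $(q',g)\in\delta(q,\sigma)$ with $(q',\rho(g))$. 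Since a 1NBHVA has transition function $Q\times\Sigma_\varepsilon\rightarrow\mathcal{P}(Q\times\mathrm{M})$, the one-way, nondeterministic, blind format matches $\mathcal{E}$ exactly, and the $\varepsilon$-moves of $\mathcal{E}$ carry over directly. The initial vector $v$ of $\mathcal{V}$ is to be chosen below.

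Because $\rho$ is a homomorphism, the product of matrices along any computation path of $\mathcal{V}$ equals $\rho(g)$, where $g\in\mathbf{F}_2$ is the corresponding register value produced by $\mathcal{E}$. Completeness is then immediate: if $\mathcal{E}$ accepts $w$ along a path with register value $g=e$, then $\rho(g)=I$, so $v\rho(g)=v$ and $\mathcal{V}$ accepts along the same path. Soundness is where the work lies: if $\mathcal{V}$ accepts $w$ along a path whose matrix product is $A=\rho(g)$ with $vA=v$, I must conclude $g=e$. This reduces the whole theorem to arranging that the chosen $v$ has \emph{trivial stabilizer}, i.e.
\[
\{A\in\rho(\mathbf{F}_2) : vA=v\}=\{I\}.
\]
Equivalently, no nontrivial element of $\rho(\mathbf{F}_2)$ may have $v$ as a left $1$-eigenvector.

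The main obstacle is precisely establishing this stabilizer condition, and I expect the obvious choice to fail. If one uses the Sanov representation $g\mapsto$ words in $M_a,M_b$ from Subsection \ref{Section: 23matrices}, the group $\langle M_a,M_b\rangle$ is commensurable with the congruence group $\Gamma(2)$ and therefore contains many parabolic elements (trace $2$); moreover, since this group is closed under transpose and its parabolic fixed directions exhaust all of $\mathbb{P}^1(\mathbb{Q})$, \emph{every} rational row vector $v$ is fixed by some nontrivial element, so soundness breaks. The resolution is to select $\rho$ to be a faithful representation of $\mathbf{F}_2$ whose image is a Schottky group: by the ping-pong lemma one takes sufficiently high powers of two hyperbolic elements of $SL(2,\mathbb{Z})$ having pairwise disjoint attracting/repelling fixed points, obtaining a free group of rank $2$ in which every nontrivial element is hyperbolic, hence has $|\mathrm{tr}|>2$ and real eigenvalues $\lambda,\lambda^{-1}$ with $|\lambda|\neq 1$. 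Then $1$ is never an eigenvalue of a nontrivial $A$, so $vA=v$ forces $A=I$ for \emph{any} nonzero $v$, and in particular for a rational $v$ as required by the definition of HVAs. With this choice the soundness implication $vA=v\Rightarrow g=e$ holds, $\mathcal{V}$ recognizes exactly $L(\mathcal{E})$, and the remaining bookkeeping (handling $\varepsilon$-transitions and, if needed, the empty string via an extra initial/accepting state) is routine.
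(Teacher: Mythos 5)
Your proof is correct, and it departs from the paper's own argument at exactly the step you single out, so the comparison is worth making explicit. The paper's proof (credited to D'Alessandro, from \cite{SSD16}) performs the same transition-by-transition relabeling with the same states and accept states, but it instantiates the representation with the Sanov matrices $M_a=\mymatrix{rr}{1&2\\0&1}$ and $M_b=\mymatrix{rr}{1&0\\2&1}$ of Subsection \ref{Section: 23matrices}, sets $v=\mypar{1~~0}$, and settles soundness in one line by invoking a property (an equation not reproduced in this source) to conclude from $vA=v$ that $A=I$. As you anticipate, that conclusion is not available for this choice: $\mypar{1~~0}M_b=\mypar{1~~0}$, so already a one-state $\mathbf{F}_2$-automaton whose accepting initial state carries a self-loop reading $\sigma$ with label $b$ recognizes only $\{\varepsilon\}$, while its simulation would accept all of $\sigma^*$. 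Your stronger observation is also right and rules out any patch that keeps the Sanov representation: that group has finite index in $SL(2,\mathbb{Z})$, so for any primitive rational row vector $v$ one can conjugate the parabolic $\mymatrix{rr}{1&0\\1&1}$ so that it fixes $v$ and then pass to a suitable power lying in the subgroup, producing a nontrivial element stabilizing $v$; hence no rational initial vector makes that instantiation sound. Your substitute --- a Schottky image of $\mathbf{F}_2$ obtained by ping-pong from high powers of two hyperbolic matrices with disjoint fixed-point pairs, in which every nontrivial element is hyperbolic and therefore has no eigenvalue $1$ --- is precisely what restores the implication $vA=v\Rightarrow A=I$ for every nonzero rational $v$, and with it both directions of the simulation. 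In short, the two arguments share the same simulation skeleton, but the trivial-stabilizer lemma on which soundness rests fails for the paper's representation and holds for yours; your version is the one that actually establishes the theorem, and it additionally shows that the answer to the question raised in Section \ref{sec: hva-end} (for which matrix groups does a stabilizer-free initial vector exist?) depends on the chosen embedding of $\mathbf{F}_2$ rather than on the abstract group alone.
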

This allows us to draw the following conclusion about the class of languages recognized by 1NBHVA(2)'s.

\begin{cor}\label{cor: f2}
	The family of context-free languages is included in $ \mathfrak{L} \textup{(1NBHVA(2)}). $ 
\end{cor}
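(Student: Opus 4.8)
The plan is to derive this as an immediate consequence of the preceding Theorem~\ref{thm: EFA-1NBHVA} together with the characterization of context-free languages by free-group automata. Since the corollary is stated directly after Theorem~\ref{thm: EFA-1NBHVA}, essentially all of the genuine work has already been done in establishing that inclusion; what remains is simply to identify the left-hand side with the class $\mathsf{CF}$.

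Concretely, I would first invoke Fact~\ref{fact: cf}, which asserts $\mathfrak{L}(\mathbf{F}_2) = \mathsf{CF}$, i.e.\ the $\mathbf{F}_2$-automata recognize exactly the context-free languages. I would then chain this equality with the inclusion supplied by Theorem~\ref{thm: EFA-1NBHVA}, namely $\mathfrak{L}(\mathbf{F}_2) \subseteq \mathfrak{L}(\textup{1NBHVA(2)})$. Substituting the former into the latter yields
\[
\mathsf{CF} = \mathfrak{L}(\mathbf{F}_2) \subseteq \mathfrak{L}(\textup{1NBHVA(2)}),
\]
which is exactly the assertion of the corollary.

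Since this is a one-step deduction, there is no substantive obstacle to overcome here; the only thing to be careful about is ensuring that the two cited results genuinely refer to the same automaton model and the same notion of language recognition, so that the classes $\mathfrak{L}(\mathbf{F}_2)$ appearing in Fact~\ref{fact: cf} and in Theorem~\ref{thm: EFA-1NBHVA} are literally identical. They are, by the definitions in Sections~\ref{sec: efadefn} and~\ref{sec: hva-def}, so the transitivity of $\subseteq$ applied to the displayed chain completes the argument. The real content rests entirely on the matrix-representation construction underlying Theorem~\ref{thm: EFA-1NBHVA}, where the generators $M_a, M_b$ of the faithful $2\times 2$ representation of $\mathbf{F}_2$ are used to convert each transition of an $\mathbf{F}_2$-automaton into a vector-matrix transition of a two-dimensional blind homing vector automaton, with the initial vector $\mypar{1~~0}$ serving as the stabilized target.
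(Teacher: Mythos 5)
Your proposal is correct and matches the paper's own proof exactly: the paper likewise combines Fact~\ref{fact: cf} ($\mathfrak{L}(\mathbf{F}_2) = \mathsf{CF}$) with Theorem~\ref{thm: EFA-1NBHVA} ($\mathfrak{L}(\mathbf{F}_2) \subseteq \mathfrak{L}(\textup{1NBHVA(2)})$) in a one-step deduction. Your additional remarks about the underlying matrix-representation construction and the initial vector are accurate but not needed for the corollary itself.
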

\begin{proof}
Since $ \mathfrak{L}(\mathbf{F}_2) = \mathsf{CF} $ by Fact \ref{fact: cf}, the result then follows by Theorem \ref{thm: EFA-1NBHVA}.
\end{proof}

In the proof of Theorem \ref{thm: EFA-1NBHVA}, the matrices used for the simulation of an $ \mathbf{F}_2 $-automaton belong to $ SL(2,\mathbb{Z}) $. In the following theorem, we show that 1NBHVAs are more powerful than the corresponding monoid automata, when the matrices are restricted to the monoid $ {M}_{2}(\mathbb{Z}) $.

\begin{thm}\label{thm: z22}
	$ \mathfrak{L}({M}_{2}(\mathbb{Z})) \subsetneq \mathfrak{L}(\textup{1NBHVA(2)} _{{M}_{2}(\mathbb{Z})}) $.
\end{thm}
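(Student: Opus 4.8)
The plan is to split the statement into the inclusion $\mathfrak{L}(M_2(\mathbb{Z})) \subseteq \mathfrak{L}(\textup{1NBHVA(2)}_{M_2(\mathbb{Z})})$ and the strictness of that inclusion. For the inclusion I would invoke Theorem \ref{thm: M2Z}, which gives $\mathfrak{L}(M_2(\mathbb{Z})) = \mathsf{CF}$, together with Fact \ref{fact: cf} and Theorem \ref{thm: EFA-1NBHVA}. The latter shows that every $\mathbf{F}_2$-automaton, and hence (via $\mathfrak{L}(\mathbf{F}_2)=\mathsf{CF}$) every context-free language, is recognized by a 1NBHVA(2); crucially, the transition matrices produced in that simulation are the images $M_a,M_b$ of the free generators, which lie in $SL(2,\mathbb{Z}) \subseteq M_2(\mathbb{Z})$. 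Thus every language in $\mathfrak{L}(M_2(\mathbb{Z}))$ is already recognized by a 1NBHVA(2) whose matrix set is contained in $M_2(\mathbb{Z})$, which settles the inclusion.

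For the strictness I would exhibit a single non-context-free witness recognized by an integer-valued two-dimensional machine. Let $L = \{a^n b^{2^n-1} \mid n \geq 0\}$. I claim $L$ is recognized by a DBHVA(2)$_{M_2(\mathbb{Z})}$, and therefore by a 1NBHVA(2)$_{M_2(\mathbb{Z})}$, since the real-time deterministic blind model is a special case of the one-way nondeterministic blind one. The machine has initial vector $v = \mypar{1~~1}$ and two states $q_1$ (initial, scanning the $a$-block) and $q_2$ (scanning the $b$-block), both accepting. On reading $a$ in $q_1$ it multiplies the vector by $A = \mymatrix{rr}{2&0\\0&1\\}$ and stays in $q_1$; on reading $b$ (in either state) it multiplies by $B = \mymatrix{rr}{1&0\\-1&1\\}$ and moves to $q_2$. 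Since $\mypar{x~~y}A = \mypar{2x~~y}$ and $\mypar{x~~y}B = \mypar{x-y~~y}$, after reading $a^n b^m$ the vector equals $\mypar{2^n-m~~1}$, which returns to $v$ exactly when $m = 2^n-1$. The determinant-$2$ matrix $A$ is the key ingredient: it is legal in $M_2(\mathbb{Z})$ and performs honest doubling, whereas in an $M_2(\mathbb{Z})$-automaton, where acceptance demands that the product equal $I$, such a matrix is provably useless by the determinant argument of Theorem \ref{thm: M2Z}. This contrast is exactly what drives the separation.

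To finish I would confirm $L \notin \mathsf{CF}$ by a routine pumping-lemma argument: for a long $z = a^n b^{2^n-1} \in L$, any decomposition $z = uvwxy$ has $vx$ lying in the $a$-block, in the $b$-block, or straddling the boundary, and in each case pumping destroys either the $a^*b^*$ shape or the exponential relation between the two block lengths. Hence $L \notin \mathsf{CF} = \mathfrak{L}(M_2(\mathbb{Z}))$ while $L \in \mathfrak{L}(\textup{1NBHVA(2)}_{M_2(\mathbb{Z})})$, giving the proper containment. The one genuinely delicate point, and the part I would scrutinize most, is the dimension bookkeeping: unlike the $S_3(1)$-construction of Theorem \ref{thm: z3}, which needs a third coordinate to synthesize doubling from entries in $\{-1,0,1\}$, here the freedom to use the entry $2$ lets a single coordinate carry the exponential count while the second coordinate merely supplies the constant $1$ used for decrementing, so that two dimensions genuinely suffice.
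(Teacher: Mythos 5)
Your proof is correct, and the first half is exactly the paper's argument: the inclusion is obtained from Theorem \ref{thm: M2Z} ($\mathfrak{L}(M_2(\mathbb{Z}))=\mathsf{CF}$) together with Corollary \ref{cor: f2}, plus the observation (which the paper also makes explicitly just before the theorem) that the simulation behind Theorem \ref{thm: EFA-1NBHVA} only uses matrices from $SL(2,\mathbb{Z})\subseteq M_2(\mathbb{Z})$. Where you diverge is the witness for strictness. The paper uses $\mathtt{POW_r}=\{a^{2^n}b^n \mid n\geq 0\}$, recognized by a one-way machine that counts the $a$'s additively (vector $(i+1~~2^j)$ after $a^ib^j$), doubles on $b$'s, and then needs an $\varepsilon$-transition at the end to subtract the second entry from the first. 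Your witness $\{a^nb^{2^n-1}\mid n\geq 0\}$ reverses the roles: the determinant-$2$ matrix doubles on $a$'s and the $b$'s decrement, so the vector returns to $(1~~1)$ with no postprocessing at all. This buys you something the paper's construction does not: your machine is real-time, deterministic, and blind, so you in fact separate $\mathfrak{L}(M_2(\mathbb{Z}))$ from the weakest HVA variant, $\mathfrak{L}(\textup{DBHVA(2)}_{M_2(\mathbb{Z})})$, which trivially implies the stated result for 1NBHVA(2). In both proofs the conceptual engine is the same: a determinant-$2$ integer matrix is fatal inside an $M_2(\mathbb{Z})$-automaton (the determinant argument of Theorem \ref{thm: M2Z}) but perfectly usable by a homing vector automaton, whose acceptance condition stabilizes a vector rather than forcing the product to be $I$. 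One small point to tighten: you never define the transition on $a$ in state $q_2$. If it defaulted to the same $A$-multiplication, the machine would wrongly accept $abab$ (since $ab\in L$ returns the vector to $(1~~1)$); so either invoke the convention that an undefined transition halts and rejects, or complete the machine with a dead state (or a multiplication by the zero matrix, which is available in $M_2(\mathbb{Z})$) on any $a$ read after a $b$.
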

\begin{proof} In Theorem \ref{thm: M2Z}, it is proven that $ {M}_{2}(\mathbb{Z}) = \mathsf{CF} $ and by Corollary \ref{cor: f2}, the inclusion follows.

	Now we are going to prove that the inclusion is proper. Let us construct a 1NBHVA(2)$_{{M}_{2}(\mathbb{Z})} $ $ \mathcal{V} $ recognizing the non-context-free language $ \mathtt{POW_r}=\{a^{2^n}b^n | n \geq 0\} $. The state diagram of $ \mathcal{V} $ is given in Figure \ref{fig: machine}.

	\begin{figure}[h]
		\caption{State transition diagram of $ \mathcal{V} $ recognizing $\mathtt{POW_r} $}
		\label{fig: machine}
		\centering
		
		\includegraphics[width=1\linewidth]{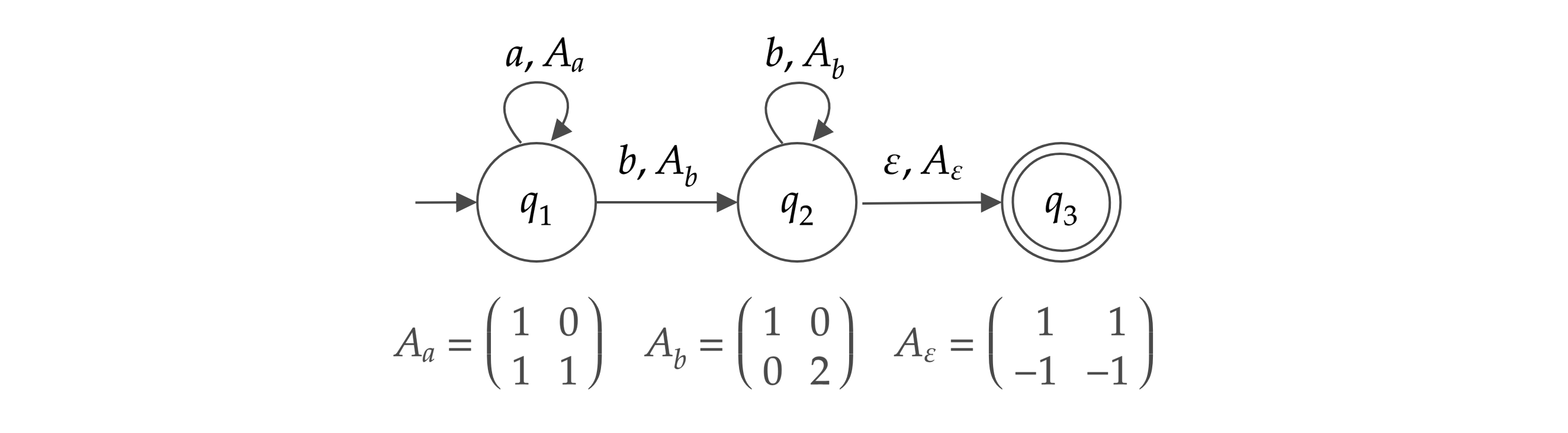}
		
	\end{figure}
	The initial vector of $ \mathcal{V} $ is $ v=(1~~1) $. While reading $ a $ in $ q_1 $, $ \mathcal{V}  $ multiplies its vector with the matrix $ A_a $. It moves to $ q_1 $ when it scans the first $ b $ and multiplies its vector with the matrix $ A_b $ as it reads each $ b $. $ \mathcal{V} $ multiplies its vector with the matrix $ A_{\varepsilon} $ and moves to $ q_2 $.
	
	When the vector is multiplied by $ A_a $, the first entry of the vector is increased by 1 and when the vector is multiplied by $ A_b $, the second entry of the vector is multiplied by 2. Hence, after reading an input string of the form $ a^ib^j $, the vector is equal to $ \mypar{i+1~~2^j} $, as a result of the multiplication by the matrix product $ {A_a}^i{A_b}^j  $. After multiplication by $ A_{\varepsilon} $, the second entry of the vector is subtracted from the first entry and this value is stored in both entries of the vector. The value of the final vector is equal to $\mypar{1~~1} $ iff $  i+1-2^j=1 $. Hence, the accepted strings are those which are of the form $ a^{2^j}b^{j} $.
\end{proof}

For $ 3 \times 3 $ matrices, we don't have a comparability result among the classes of languages recognized by the two models. The major limitation lies in the fact that the limits of finite automata over $ 3 \times 3 $ matrix groups are not known. Furthermore, we don't know how to directly simulate a  finite automaton overa group of $ 3 \times 3 $ matrices with a 1NBHVA(3). Nevertheless, let us note that 1NBHVA(3)s defined with matrices from $ M_3(\mathbb{Z}) $ can recognize any language recognized by 1N$ k $BCAs as we showed in Theorem \ref{thm: z3}, whereas it is an open question whether $ \mathfrak{L}(M_3(\mathbb{Z})) $ includes the class of languages recognized by 1N3BCAs. 

Next we are going to demonstrate the power of 1NBHVA($ k $)s for $ k\geq 4 $. Recall that $ \mathfrak{L}(\mathbf{F}_2 \times \mathbf{F}_2) = \mathfrak{L}(SL(4,\mathbb{Z})) $ by Theorem \ref{thm: sl4z}. The proof idea of Theorem \ref{thm: sl4z} is to embed two copies of $ \mathbf{F}_2 $ into $ 4 \times 4 $ matrices. Combining this idea with the proof of Theorem \ref{thm: EFA-1NBHVA}, one can show that 1NBHVA(4)s recognize any recursively enumerable language. Since the computation of a 1NBHVA($ k $) involves vector matrix multiplications by computable numbers, it can be simulated by a Turing machine and it follows that any language recognized by a 1NBHVA($ k $) is recursively enumerable. The proof details are omitted here and can be found in \cite{SSD16}.

\begin{thm}\label{thm: EFA-1NBHVA4}
	$\mathsf{RE} = \mathfrak{L}\textup{(1NBHVA(k))} $ for $ k \geq 4 $.

\end{thm}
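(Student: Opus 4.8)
The plan is to establish the two inclusions $\mathsf{RE} \subseteq \mathfrak{L}\textup{(1NBHVA($k$))}$ and $\mathfrak{L}\textup{(1NBHVA($k$))} \subseteq \mathsf{RE}$ separately, for every $k \geq 4$. The first inclusion is the substantial one; it combines the block embedding underlying Theorem \ref{thm: sl4z} with the simulation technique of Theorem \ref{thm: EFA-1NBHVA}. The second is a routine Turing machine simulation.

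For $\mathsf{RE} \subseteq \mathfrak{L}\textup{(1NBHVA(4))}$, I would start from an arbitrary $L \in \mathsf{RE}$. By Theorem \ref{thm: sl4z} there is an $(\mathbf{F}_2 \times \mathbf{F}_2)$-automaton $\mathcal{E} = (Q, \Sigma, \mathbf{F}_2 \times \mathbf{F}_2, \delta, q_1, Q_a)$ recognizing $L$. Recall from the proof of Theorem \ref{thm: sl4z} that $\psi(g_1, g_2) = \mymatrix{cc}{\varphi(g_1) & 0 \\ 0 & \varphi(g_2)}$ is an isomorphism of $\mathbf{F}_2 \times \mathbf{F}_2$ onto a block-diagonal subgroup of $SL(4,\mathbb{Z})$, where $\varphi$ is the faithful representation of $\mathbf{F}_2$ by $M_a, M_b$. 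I would build a 1NBHVA(4) $\mathcal{V}$ with the same state set $Q$, the same initial and accept states, initial vector $v = \mypar{1~~0~~1~~0}$, and transition relation $(q, \psi(g_1, g_2)) \in \mu(p, \sigma)$ whenever $(q, (g_1, g_2)) \in \delta(p, \sigma)$. Only finitely many group elements label the transitions of $\mathcal{E}$, so the matrix set of $\mathcal{V}$ is finite, as required.

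The key step is the homing condition. Along a computation path the vector of $\mathcal{V}$ is $vA$, where $A = \psi(g_1, g_2) = \mathrm{diag}(\varphi(g_1), \varphi(g_2))$ is the image of the accumulated group element $(g_1, g_2)$. Writing $v = (v_1 \mid v_2)$ with $v_1 = v_2 = \mypar{1~~0}$, block multiplication gives $vA = (v_1 \varphi(g_1) \mid v_2 \varphi(g_2))$, so $vA = v$ holds iff $\mypar{1~~0}\varphi(g_i) = \mypar{1~~0}$ for $i = 1, 2$. The stabilizer property used in the proof of Theorem \ref{thm: EFA-1NBHVA} — that $\mypar{1~~0} N = \mypar{1~~0}$ forces $N = I$ for $N$ in the image of $\mathbf{F}_2$ — then yields $\varphi(g_1) = \varphi(g_2) = I$, i.e. $A = I_4$; since $\psi$ is injective this is equivalent to $(g_1, g_2)$ being the identity of $\mathbf{F}_2 \times \mathbf{F}_2$, which is exactly the acceptance condition of $\mathcal{E}$. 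Hence $L(\mathcal{V}) = L$. For $k > 4$ I would pad: append $k - 4$ fixed coordinates to the vector and extend every matrix to $k \times k$ block-diagonal form with an $I_{k-4}$ block, so the extra coordinates never affect the equality test; this gives $\mathsf{RE} \subseteq \mathfrak{L}\textup{(1NBHVA($k$))}$ for all $k \geq 4$. For the reverse inclusion, the matrices and initial vector are rational-valued, so the vector stays a tuple of rationals and all arithmetic is computable; a nondeterministic Turing machine can guess a computation path of $\mathcal{V}$ (including its $\varepsilon$-moves), maintain state and vector exactly, and accept iff it ends in an accept state with the input consumed and the vector restored to $v$, recognizing $L(\mathcal{V})$.

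I expect the main obstacle to be verifying the homing condition cleanly — confirming that the dimension-$2$ triviality of the stabilizer of $\mypar{1~~0}$ under the free-group representation lifts blockwise to dimension $4$, and that injectivity of $\psi$ converts "$A = I_4$" back into "the group product is the identity". Once this is pinned down the remainder is bookkeeping, and the only further point needing care is the faithful treatment of $\varepsilon$-transitions, which causes no difficulty since both $\mathcal{E}$ and the 1NBHVA model admit $\varepsilon$-moves.
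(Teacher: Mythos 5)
Your construction follows the paper's route exactly: the block embedding $\psi$ of $\mathbf{F}_2 \times \mathbf{F}_2$ into $SL(4,\mathbb{Z})$, the initial vector $\mypar{1~~0~~1~~0}$, the transition relabeling, the padding for $k>4$, and the Turing-machine simulation for the reverse inclusion (these last two parts are fine). The gap is in the step you yourself flag as the crux, and it is a genuine one: the stabilizer property you invoke is false for the representation $\varphi$ used in this paper. Since $\varphi(b)=M_b=\mymatrix{rr}{1&0\\2&1}$ has first row $\mypar{1~~0}$, we have $\mypar{1~~0}M_b=\mypar{1~~0}$ although $M_b\neq I$; more precisely, the stabilizer of $\mypar{1~~0}$ in $\varphi(\mathbf{F}_2)$ is the infinite cyclic group $\langle M_b\rangle$. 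Consequently $vA=v$ does not force $A=I_4$: it only forces the accumulated pair $(g_1,g_2)$ to lie in $\langle b\rangle\times\langle b\rangle$. Concretely, let $\mathcal{E}$ be the one-state $\mathbf{F}_2\times\mathbf{F}_2$-automaton whose single initial-and-accept state carries a self-loop reading $\sigma$ with valence $(b,b)$; then $L(\mathcal{E})=\{\varepsilon\}$, but the machine $\mathcal{V}$ produced by your construction accepts every string in $\sigma^*$.

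This cannot be repaired by choosing a different rational initial vector: any nonzero rational row vector $v$ is fixed by some nonidentity parabolic matrix $P \in SL(2,\mathbb{Z})$ (take $P=I+N$ with $vN=0$, $N^2=0$, $N$ integral and nonzero), and since $\varphi(\mathbf{F}_2)$ has finite index $12$ in $SL(2,\mathbb{Z})$, some nonzero power of $P$ lies in $\varphi(\mathbf{F}_2)$ and still fixes $v$; so the stabilizer is always nontrivial. The missing idea is a normalization of the group automaton before applying $\psi$: for instance, first re-encode every transition valence through the injective endomorphism $a\mapsto ab$, $b\mapsto ba$ of $\mathbf{F}_2$. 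Every element of $\langle ab,\, ba\rangle$ has equal $a$- and $b$-exponent sums, so this subgroup meets $\langle b\rangle$ trivially; for the re-encoded automaton (which recognizes the same language, by injectivity), ``the vector returns to $v$'' really is equivalent to ``the accumulated group element is the identity,'' and your blockwise argument then goes through. Some step of this kind is precisely the nontrivial content that the paper outsources to Theorem \ref{thm: EFA-1NBHVA} and \cite{SSD16}, and it is the one ingredient your reconstruction gets wrong rather than merely omits.
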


\section{Real-time Homing Vector Automata}\label{sec: hva-rt}
 
 In the previous section, we have seen that allowing 
 one-way access to the input tape raises nondeterministic blind homing vector automata of small vector dimension to Turing equivalence. For this reason, we will be focusing on real-time input in this section. 

Let us start with an example. We show that by allowing nondeterminism, it is possible to recognize an $\mathsf{NP}$-complete language in real-time and with matrices belonging to set $ S_5(1) $. $\mathtt{SUBSETSUM}$ is the $\mathsf{NP}$-complete language which is the collection of all strings of the form $t \#  a_1\#...\# a_n\#$, such
that $t$ and the $a_i$'s are numbers in binary notation $(1 \leq i \leq n)$, and there
exists a set $I \subseteq \{1, . . . , n\}$ satisfying $\sum_{i \in
	I}a_i=t$, where $n > 0$.  We define $$\mathtt{SUBSETSUM}_r=\{ t^r \#  a_1^r\#...\# a_n^r\#\ | \exists I \subseteq \{1, . . . , n\} \mbox{ s.t. } \sum_{i \in
	I}a_i=t\}$$ in which the binary numbers appear in reverse order. It is obvious that $\mathtt{SUBSETSUM_r} \in \mathsf{NP}$, since $ \mathtt{SUBSETSUM}  \in \mathsf{NP} $. It is possible to reduce $\mathtt{SUBSETSUM}$ to $ \mathtt{SUBSETSUM_r}$ in polynomial time by reversing the binary numbers that appear in the input. Therefore, we can conclude that $ \mathtt{SUBSETSUM_r}$ is $\mathsf{NP}$-complete.   

\begin{ex}
	$ \mathtt{SUBSETSUM_r} \in \mathfrak{L}(\textup{NBHVA(5)}) $.
\end{ex}

	We construct a NBHVA(5)$ _{M_5(1)} $ ${\mathcal{V}}$ recognizing 
	$\mathtt{SUBSETSUM_r}$. The state transition diagram of $ \mathcal{V} $ recognizing $ \mathtt{SUBSETSUM_r} $ is given in Figure $ \ref{fig: subsetsum} $.  
	\begin{figure}[h]
		\centering
		\includegraphics[width=1\linewidth]{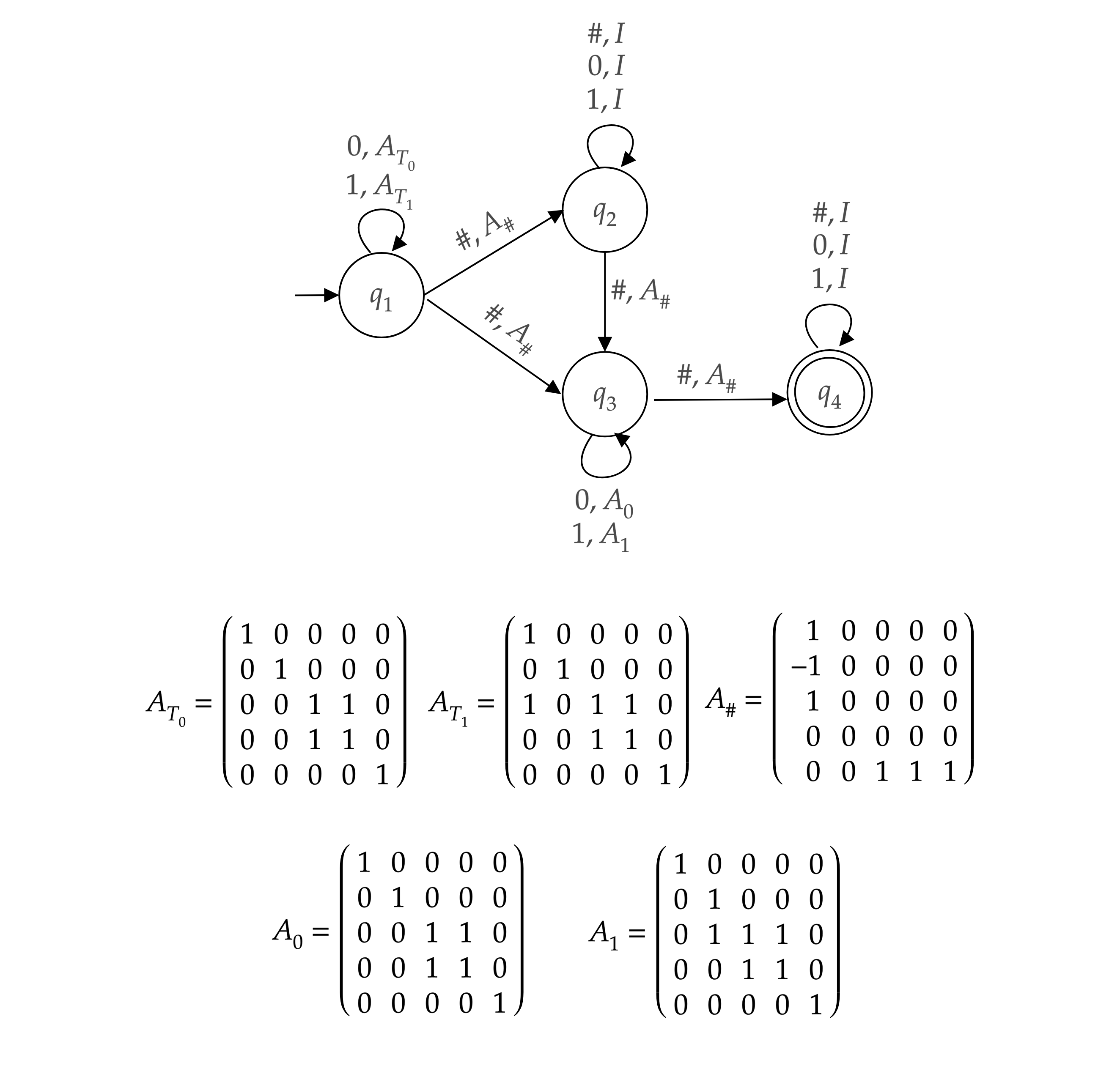}
		\caption{State transition diagram of $ \mathcal{V} $ recognizing $ \mathtt{SUBSETSUM_r} $ }
		\label{fig: subsetsum}
	\end{figure}
	
	The idea of this construction is to read the binary numbers in the string to entries of the vector, and to nondeterministically select the set of numbers that add up to $ t $.  We let the initial vector equal $ \mypar{0 ~~0~~1~~1~~1} $. We first encode $ t $ to the first entry of the vector as follows: While scanning the symbols of $t$, ${\mathcal{V}}$ multiplies the vector with the matrix $A_{T_0}$ (resp. $A_{T_1}$) for each scanned $0$
	(resp. $1$). The  powers of 2 required for the encoding are obtained by adding the third and fourth entries, which always contain identical numbers, to each other, creating the effect of multiplication by 2. When ${\mathcal{V}}$ reads a $\#$, ${\mathcal{V}}$ multiplies the vector with the matrix $ A_{\#} $ which subtracts the second entry from the first entry and resets the second entry back to 0, and the third and  fourth entries back to 1.   
%
	In the rest of the computation, ${\mathcal{V}}$ nondeterministically decides which $a_i$'s to
	subtract from the first entry.
	Each selected $a_i$ is encoded using the same technique into the second entry of the vector. While scanning the symbols of $a_i$, ${\mathcal{V}}$ multiplies the vector with the matrix $A_{0}$ (resp. $A_{1}$) for each scanned $0$
	(resp. $1$).
	
%
	
	${\mathcal{V}}$ chooses another
	$a_j$ if it wishes, and the same procedure is applied. At
	the end of the input, ${\mathcal{V}}$ accepts if the vector is equal to 
	$  \mypar{0 ~~0~~1~~1~~1}$, which requires that the first entry of the vector is equal to 0. This is possible iff there exists a set of $ a_i $'s whose sum add up to $ t $.

\subsection{Comparisons Among the Different Versions}
We start by comparing the deterministic blind and non-blind versions of our model. 

\begin{thm}\label{thm:blind}
	$ \bigcup_k \mathfrak{L} 
	\textup{(DBHVA(\textit{k}))} \subsetneq \bigcup_k \mathfrak{L} \textup{(DHVA(\textit{k}))}. $
\end{thm}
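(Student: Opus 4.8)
The inclusion $\bigcup_k \mathfrak{L}(\textup{DBHVA}(k)) \subseteq \bigcup_k \mathfrak{L}(\textup{DHVA}(k))$ is immediate, since a blind machine is just a non-blind machine that never consults the status $\omega$ of its vector. The plan is therefore to exhibit a witness language $L_{\mathrm{blk}}$ that a non-blind deterministic HVA recognizes but that no blind one does. I would take
$$L_{\mathrm{blk}}=\{a^{n_1}b^{n_1}a^{n_2}b^{n_2}\cdots a^{n_t}b^{n_t}\mid t\geq 1,\ n_i\geq 1\},$$
the language of nonempty concatenations of individually balanced blocks. This is the homing-vector analogue of a deterministic context-free language whose recognition requires detecting an empty stack in the middle of the computation: correctness demands verifying balance at \emph{every} block boundary, and the only on-line tool capable of this is precisely the mid-computation equality test that separates the two models.

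For the positive direction I would build a $\textup{DHVA}(2)$ with initial vector $v=\mypar{0~~1}$, using the second entry as a constant $1$ and the first as a signed counter holding $\#a-\#b$ within the current block; the vector equals $v$ exactly when the current block is balanced. Reading $a$ multiplies the vector by $\mymatrix{rr}{1&0\\1&1}$ and reading $b$ by $\mymatrix{rr}{1&0\\-1&1}$. Two states $q_a,q_b$ record whether we are inside an $a$-run or a $b$-run, and the machine consults $\omega$ to police block boundaries: an $a$ read in $q_b$ is legal only when the status signals equality to $v$ (the previous block has just rebalanced), while a further $b$ read while the status already signals equality is illegal (over-popping); all illegal moves enter a dead state. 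The input is accepted iff the machine halts in $q_b$ with the vector equal to $v$. A short case analysis shows this accepts exactly $L_{\mathrm{blk}}$.

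The heart of the proof is the negative direction. The key structural observation is that for a real-time \emph{blind} deterministic HVA the sequence of states visited depends only on the input word, so on the regular ``shape'' $a^{+}b^{+}a^{+}b^{+}\cdots$ the state reached is a function of the block pattern alone and is blind to the counts $n_i$. Since there are finitely many states, the matrices $M_{a^m}$ and $M_{b^p}$ accumulated along an $a$-run and a $b$-run are eventually periodic in $m$ and $p$ (fixed boundary matrices times a power of a fixed cycle matrix). Restricting the block lengths to fixed residues modulo the relevant periods (and iterating this restriction block by block, in Ramsey fashion) lets me assume that the machine returns to one fixed ``home'' state $q^{\ast}$ after each block. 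All discrimination is thereby forced onto the vector: writing $G_{m,p}=M_{a^mb^p}$ for the matrix accumulated while reading one block from $q^{\ast}$, recognition of $L_{\mathrm{blk}}$ would require the single linear condition
$$v\,G_{m_1,p_1}G_{m_2,p_2}\cdots G_{m_t,p_t}=v \iff m_i=p_i \text{ for all } i.$$
I would then seek a contradiction by producing, for a suitable unbalanced first block $a^{m_1}b^{p_1}$ with $m_1\neq p_1$, a second block $a^{m_2}b^{p_2}$ with $m_2\neq p_2$ that nonetheless drives the vector back to $v$; the resulting two-block string lands in the (accepting) pattern state with $vM_w=v$, yet lies outside $L_{\mathrm{blk}}$.

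The main obstacle is exactly this final collision. Because $k$-dimensional blind HVAs can already encode arbitrarily long strings in a fixed-dimension vector (the Stern--Brocot construction of Section~\ref{sec:binary encoding}), one cannot appeal to any naive ``bounded information'' counting bound; the argument must instead exploit that the homing test is a \emph{single} equation applied once at the end and that blindness forbids per-block verification. I expect the cleanest route is to fix the residues of $m,p$ modulo the state periods and use a pigeonhole/linear-recurrence argument on the powers of the cycle matrices to solve $v\,G_{m_1,p_1}G_{m_2,p_2}=v$ with at least one block unbalanced; controlling the non-commutativity of the block matrices $G_{m,p}$ is the delicate point that the whole proof turns on.
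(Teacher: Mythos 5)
Your positive direction is fine: the two-state $\textup{DHVA}(2)$ you describe does recognize $L_{\mathrm{blk}}$, and reducing the separation to a witness language is the same skeleton the paper uses. The fatal problem is the negative direction, and it is not merely the incompleteness you admit: the claim you are trying to prove is \emph{false}. Up to the empty string, your witness is $\mathtt{AB}^*=\{a^nb^n\}^*$, and Theorem \ref{thm:0HVA-upal} of this thesis states that $\mathtt{AB}^*$ is recognized by a stateless real-time deterministic \emph{blind} homing vector automaton of dimension $10$. Adding one classical state to that machine (a non-accepting initial state that moves to an accepting copy on the first symbol, with the same symbol-dependent matrices throughout) yields a DBHVA recognizing exactly $L_{\mathrm{blk}}=\mathtt{AB}^+$; blindness, determinism and real-time operation are untouched because the new state structure never consults the vector. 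Hence $L_{\mathrm{blk}}\in\bigcup_k\mathfrak{L}(\textup{DBHVA}(k))$, the language cannot separate the two classes, and no amount of care with the non-commutativity of your block matrices $G_{m,p}$ can close the argument. Your guiding intuition --- that mid-computation equality testing is ``the only on-line tool'' able to police block boundaries --- is exactly what that theorem refutes: a blind machine can arrange that any boundary violation leaves a permanent, uncorrectable trace in the vector, so no intermediate check is needed.

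There is also a structural reason your language was the wrong choice. In a homing automaton every accepted string returns the vector to $v$, so two accepted strings ending in the same state leave a deterministic blind machine in \emph{identical} configurations; the only way to extract a contradiction is then to find a common suffix $x$ with $ux\in L$ but $u'x\notin L$ for accepted $u,u'$. For $L_{\mathrm{blk}}$ no such pair exists: if $u\in L_{\mathrm{blk}}$, then $ux\in L_{\mathrm{blk}}$ iff $x=\varepsilon$, or $x$ begins with $a$ and $x\in L_{\mathrm{blk}}$ --- a condition independent of $u$. So all accepted strings are right-indistinguishable, the collision method has nothing to bite on, and this is precisely consistent with the language being blind-recognizable. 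The paper's proof instead uses $\mathtt{SUM}=\{a^nb^{a_1}a^{a_2}\mid n=a_1 \mbox{ or } n=a_1+a_2,\ a_1\geq 1,\ a_2\geq 1\}$: the non-blind machine genuinely needs the equality test (to decide whether to freeze its counter when $n=a_1$), while for the lower bound a pigeonhole over states gives two accepted strings $a^nb^ma^{n-m}$ and $a^nb^na^{n-m}$ ($m<n$) that end in the same state (blindness makes the state trajectory independent of the vector) with the vector at $v$, yet are separated by the one-symbol suffix $a$ --- an immediate contradiction. If you want a witness of your own, you must pick a language in which accepted strings have distinct right-contexts; your $L_{\mathrm{blk}}$ fails that requirement by construction.
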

\begin{proof}
	It is obvious that any DBHVA($ k $) can be simulated by a DHVA($ k $). We are going to prove that the inclusion is proper by the witness language $ \mathtt{SUM}=\{a^nb^{a_1}a^{a_2}|n=a_1 \mbox{ or } n=a_1 + a_2, a_1 \geq 1, a_2 \geq 1\} $. Let us first construct a DHVA(2)$_{S_2(1)} $ $ {\mathcal{V}} $ recognizing $\mathtt{SUM} $. The state transition diagram of $ \mathcal{V} $ is given in Figure \ref{fig: na1a2}. The idea is to simulate a counter with the help of the matrices. Starting with the initial vector 
	$ \mypar{1~~1}	$, ${\mathcal{V}} $ multiplies the vector with the matrix $ A_+ $ for each  $ a $ it reads before the $b$'s, incrementing the first entry of the vector with each such multiplication. After finishing reading the first segment of $ a $'s, ${\mathcal{V}} $ multiplies the vector with the matrix $ A_- $, decrementing the first entry of the vector for each $ b $.
	\begin{figure}[h]
		\centering
		\includegraphics[width=1\linewidth]{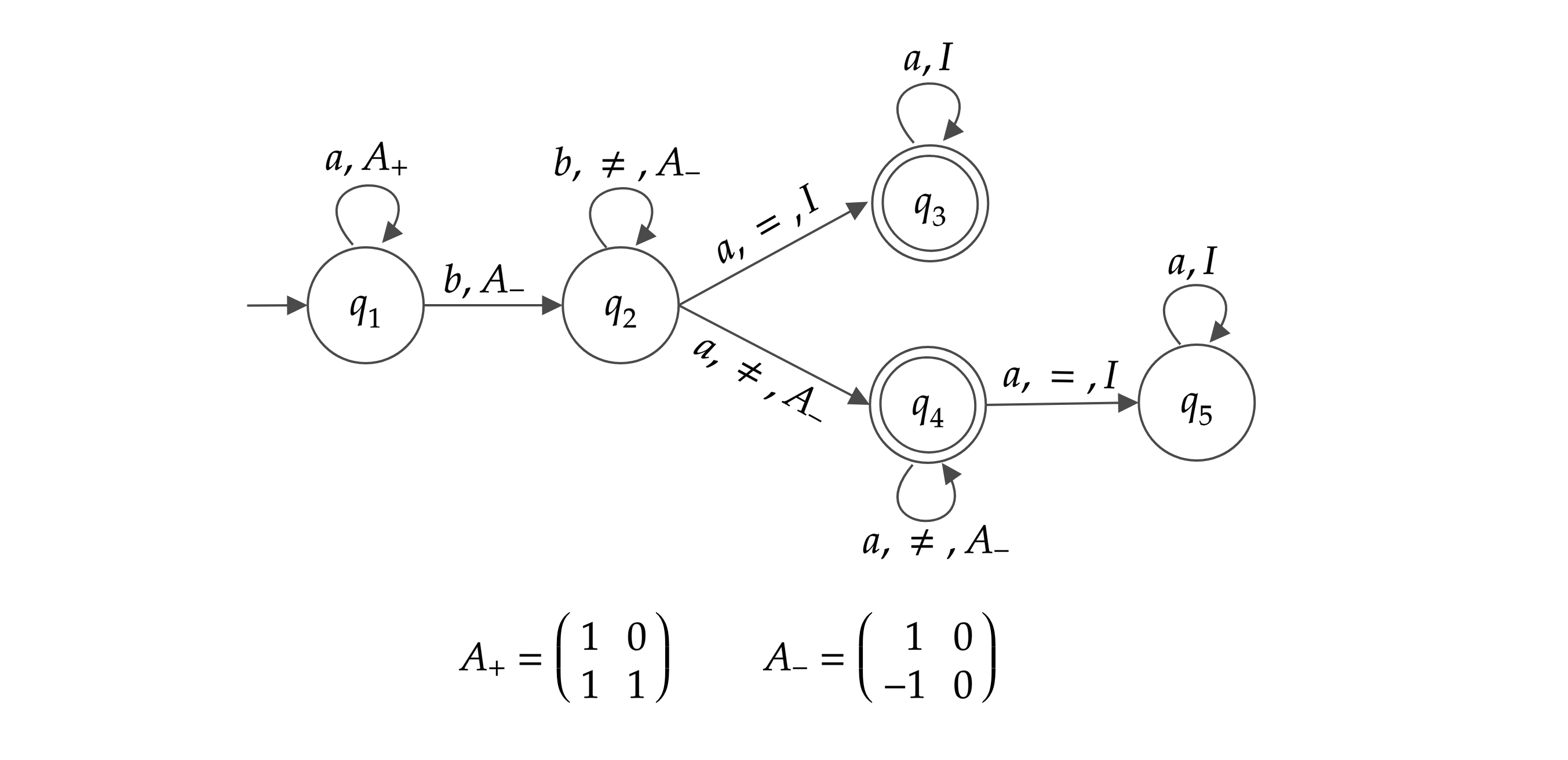}
		\caption{State transition diagram of $ \mathcal{V}  $ recognizing $ \mathtt{SUM} $ }
		\label{fig: na1a2}
	\end{figure}
	
	
	$ {\mathcal{V}} $ checks the current value of the vector for equality to $ \mypar{1~~1}$ when reading the first $ a $. If the equality is detected, it is the case that $ n=a_1 $, and $ {\mathcal{V}} $ multiplies the vector with the identity matrix at each step for the rest of the computation. If that is not the case, ${\mathcal{V}} $ continues to multiply the vector with matrix $ A_- $ for each $ a $ after the $b$'s. The value of the vector will be equal to 
	$ \mypar{1~~1}$ at the end of the computation if and only if $ n=a_1 $ or $ n =a_1+a_2$. 
	
	Note that $\mathtt{SUM} $ can be also recognized by a DHVA(1) by using the one-dimensional matrices $ A_{+}=(2) $ and $ A_{-}=(\frac{1}{2}) $.
	
	Now we are going to show that $\mathtt{SUM} $ cannot be recognized by any DBHVA($ k $). Suppose for a contradiction that $ {L} $ is recognized by some DBHVA($ k $) $ {\mathcal{V}'} $. After reading a prefix of $ a $'s, the computation of $ {\mathcal{V}'} $ on a sufficiently long suffix of $b$'s will go through a sequence of states, followed by a state loop. Suppose that $ {\mathcal{V}'} $ is in the same state after reading two different strings $ a^nb^m $ and $ a^nb^n $, $ m<n $. Now consider the strings $u= a^nb^ma^{n-m} \in \mathtt{SUM} $ and $ w=a^nb^na^{n-m} \in  \mathtt{SUM}  $. After reading any one of these strings, $ {\mathcal{V}'} $ should be in the same accept state, and the  vector should be at its initial value. Assume that the strings in question are both extended with one more $ a $.  Since the same vector is being multiplied with the same matrix associated with the same state during the processing of that last $ a $, it is not possible for $ {\mathcal{V}'} $ to give different responses to $ 
	a^nb^na^{n-m+1}$ and $ a^nb^ma^{n-m+1}$. Noting that $ a^nb^na^{n-m+1} \in \mathtt{SUM}$, whereas $ a^nb^ma^{n-m+1} \notin \mathtt{SUM}$, we conclude that $ \mathtt{SUM} $ cannot be recognized by any DBHVA($ k $). 
\end{proof}

In the following theorem, we show that nondeterministic real-time homing vector automata are more powerful than their deterministic versions, both in the blind and nonblind cases.
\begin{thm}\label{thm:upow}
	\begin{enumerate}
		\item $ \bigcup_k \mathfrak{L}\textup{(DBHVA(\textit{k}))} \subsetneq \bigcup_k \mathfrak{L} \textup{(NBHVA(\textit{k}))}$. 
		\item $ \bigcup_k \mathfrak{L}\textup{(DHVA(\textit{k}))} \subsetneq \bigcup_k \mathfrak{L} \textup{(NHVA(\textit{k}))}$. 
	\end{enumerate}
\end{thm}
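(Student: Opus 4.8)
The two containments $\bigcup_k \mathfrak{L}(\textup{DBHVA}(k)) \subseteq \bigcup_k \mathfrak{L}(\textup{NBHVA}(k))$ and $\bigcup_k \mathfrak{L}(\textup{DHVA}(k)) \subseteq \bigcup_k \mathfrak{L}(\textup{NHVA}(k))$ hold immediately, since a deterministic machine is merely a nondeterministic one whose transition function happens to be single-valued. The whole task is therefore to exhibit, for each part, a witness language separating the nondeterministic class from its deterministic counterpart. My plan is to use the single unary language $\mathtt{UPOW}'=\{a^{n+2^n}\mid n\geq 1\}$ as a witness for both items at once.

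For the positive direction, I would recall that $\mathtt{UPOW}'$ is recognized by a nondeterministic blind homing vector automaton of small dimension: a NBHVA(2) is constructed in the proof of Theorem \ref{thm: bhva2}, and a NBHVA(3)$_{S_3(1)}$ in the proof of Theorem \ref{thm: z3}. Either construction places $\mathtt{UPOW}'$ in $\bigcup_k \mathfrak{L}(\textup{NBHVA}(k))$, settling the membership side of part 1. Since every NBHVA($k$) is a restricted NHVA($k$), the very same machine witnesses $\mathtt{UPOW}' \in \bigcup_k \mathfrak{L}(\textup{NHVA}(k))$, settling the membership side of part 2.

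For the negative direction, I would invoke Theorem \ref{thm:unary}, which asserts that every unary language accepted by a DHVA($k$) is regular. A DBHVA($k$) is just a DHVA($k$) that never consults its vector during the computation, so $\bigcup_k \mathfrak{L}(\textup{DBHVA}(k)) \subseteq \bigcup_k \mathfrak{L}(\textup{DHVA}(k))$; consequently every unary language in either deterministic class is regular. As $\mathtt{UPOW}'$ is a unary \emph{nonregular} language, it lies in neither $\bigcup_k \mathfrak{L}(\textup{DBHVA}(k))$ nor $\bigcup_k \mathfrak{L}(\textup{DHVA}(k))$. Combining this with the membership statements above yields the two proper inclusions.

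The argument carries no real obstacle of its own: all the substantive work has already been done, namely the construction of a nondeterministic blind homing vector automaton for $\mathtt{UPOW}'$ and, more importantly, the regularity result Theorem \ref{thm:unary} for unary DHVA($k$) languages. The only point requiring a moment's care is the observation that the blind model is a genuine special case of the non-blind one, so that Theorem \ref{thm:unary} transfers verbatim to DBHVA($k$)s and a single unary witness suffices for both items.
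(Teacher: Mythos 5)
Your proposal is correct and follows essentially the same route as the paper's own proof: both establish the trivial inclusions, then use the unary nonregular witness $\mathtt{UPOW}'$, recognized by the NBHVA(2) of Theorem \ref{thm: bhva2}, together with Theorem \ref{thm:unary} (unary DHVA($k$) languages are regular) to rule out both deterministic classes. The only difference is that you spell out explicitly the observations that blind machines are special cases of non-blind ones, which the paper leaves implicit.
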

\begin{proof}
	
	It is obvious that the deterministic models can be simulated by the nondeterministic models. The inclusion is proper since $\mathtt{UPOW}' $ can be recognized by a NBHVA(2) by Theorem \ref{thm: bhva2} and every unary language recognized by a DHVA($ k $) is regular by Theorem $ \ref{thm:unary}$.
\end{proof}

A language $ {L} $ is in class $ \mathsf{TISP} $($ t(n),s(n) $) if there is a deterministic Turing machine that decides $ {L} $ within $ t(n) $ time and $ s(n) $ space where $ n $ is the length of the input. Since the numbers in the vector can grow by at most a fixed number of bits in each multiplication, a Turing machine simulating a DHVA($ k $) requires only linear space \cite{SYS13}. Since the numbers in the vector can have length $O(n)$, whereas the matrix dimensions and entries are independent of the input length $n$, multiplication of a vector and a matrix requires $ O(n) $ time for each input symbol. We can conclude that $ \bigcup_k \mathfrak{L}$(DHVA($ k $))$ \subseteq  \mathsf{TISP}( n^2,n )$.

\subsection{A Hierarchy Result}  
We will now establish a hierarchy result on real-time deterministic homing vector automata based on the dimension of the vector, when the matrix entries belong to a restricted set.

\begin{thm}\label{thm:hier}
	Then $ \mathfrak{L}\textup{(DHVA}(\textit{k}))_{S_k(m)} \subsetneq \mathfrak{L} \textup{(DHVA}(\textit{l}))_{S_l(m)}$ for $ l>(km)^k $. 
\end{thm}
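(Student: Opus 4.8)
The plan is to use the marked-palindrome language $\mathtt{MPAL_l}=\{w\#w^r \mid w\in\{a_1,\dots,a_l\}^*\}$ as a witness separating the two classes. For the easy inclusion, I would first note that $\mathfrak{L}(\textup{DHVA}(k))_{S_k(m)}\subseteq\mathfrak{L}(\textup{DHVA}(l))_{S_l(m)}$ whenever $l>k$: any $\textup{DHVA}(k)$ can be simulated by a $\textup{DHVA}(l)$ that pads its vector with $l-k$ extra coordinates which stay constant, multiplying by the block-diagonal matrices $\mathrm{diag}(A,I)$, whose entries still lie in $\{-m,\dots,m\}$. Since $l>(km)^k\geq k$, this gives the inclusion, and membership of the witness in the right-hand class is immediate from Example \ref{ex: stern}, which constructs a $\textup{DHVA}(l)_{S_l(1)}$ (hence a $\textup{DHVA}(l)_{S_l(m)}$) recognizing $\mathtt{MPAL_l}$.

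The substance is to show $\mathtt{MPAL_l}\notin\mathfrak{L}(\textup{DHVA}(k))_{S_k(m)}$ when $l>(km)^k$, by a counting argument. First I would establish a growth bound: since every matrix in $S_k(m)$ has entries in $[-m,m]$, multiplying a row vector on the right by such a matrix scales its $\infty$-norm by at most $km$; hence after reading an input of length $t$ the vector has entries bounded in absolute value by $C(km)^t$, where $C$ depends only on the fixed initial vector $v$. If $v$ has entries in $\tfrac1D\mathbb{Z}$, then every reachable vector lies in $\tfrac1D\mathbb{Z}^k$ inside this box, so the number of distinct vectors after $t$ steps is at most $(2DC(km)^t+1)^k$.

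Then I would count configurations. Suppose a $\textup{DHVA}(k)_{S_k(m)}$ $\mathcal{V}$ with state set $Q$ recognizes $\mathtt{MPAL_l}$. After reading a string $w$ of length $t$ over $\{a_1,\dots,a_l\}$, the configuration of $\mathcal{V}$ (its state together with its vector, which also determines the status flag) ranges over a set of size at most $|Q|\cdot(2DC(km)^t+1)^k$, while there are $l^t$ such strings. Taking logarithms, the inequality $l^t>|Q|\cdot(2DC(km)^t+1)^k$ reduces to $t(\log l-k\log(km))>O(1)$, and since $l>(km)^k$ gives $\log l>k\log(km)$, it holds for all sufficiently large $t$. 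Fixing such a $t$, the pigeonhole principle yields two distinct strings $w\neq w'$ of length $t$ that drive $\mathcal{V}$ into the same configuration. By determinism, $\mathcal{V}$ then behaves identically on the common suffix $\#w^r$, so it accepts $w\#w^r$ iff it accepts $w'\#w^r$; but $w\#w^r\in\mathtt{MPAL_l}$ while $w'\#w^r\notin\mathtt{MPAL_l}$, a contradiction.

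The main obstacle is precisely the unboundedness of the state set: a vector collision alone is not enough, because $\mathcal{V}$ might use its states to remember the first symbol and thereby evade a purely dimension-based bound. The point of choosing arbitrarily long half-words $w$ is that the number of reachable configurations grows only like $|Q|\cdot(km)^{tk}$, whereas the number of candidate half-words grows like $l^t$; once $l>(km)^k$ the latter eventually dominates \emph{regardless} of how many states $\mathcal{V}$ has, which is what forces genuine configuration collisions and makes the threshold $(km)^k$ exact. A minor technical point to handle cleanly is the rational (rather than integer) initial vector, which only affects the counting through the fixed constants $C$ and $D$ and does not change the asymptotics.
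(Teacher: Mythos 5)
Your proposal is correct and takes essentially the same route as the paper's own proof: the witness language $\mathtt{MPAL_l}$ from Example \ref{ex: stern}, the per-step growth factor of $km$ on vector entries, a pigeonhole argument comparing the $l^t$ half-words against the number of reachable configurations (state plus vector, which determines the status flag), and determinism on the common suffix $\#w^r$ to derive the contradiction, followed by the trivial padding argument for the inclusion. Your explicit treatment of the rational initial vector via the constants $C$ and $D$ is in fact slightly more careful than the paper, which tacitly takes the initial entries to be bounded by $m$, but this affects only the constants and not the asymptotics.
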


\begin{proof} Using the generalized Stern-Brocot encoding, we showed that it is possible to recognize $ \mathtt{MPAL_l}=\{w\#w^r|w\in\{a_1,a_2,\dots,a_l\}^*\} $ by a DBHVA($ l $)$ _{S_l(1)} $ in Exercise \ref{ex: stern}.
	
	We are now going to show that $ \mathtt{MPAL_l}  \notin \mathfrak{L}(\textup{DHVA}(k)) _{S_k(m)} $ for $ l>(km)^k $. We first note that the value of any entry of a vector of size $ k $ can be at most $ m^{n+1}k^n $ after reading $ n $ symbols. This is possible by letting the initial vector  have $m$ in all entries, and multiplying the vector with the matrix with all entries equal to $ m $  at each step. Similarly, the smallest possible value of an entry is $ -m^{n+1}k^n  $, and so the number of possible different values for a single entry is $ 2m^{n+1}k^n+1 $.  If the machine has $ s $  states, $ s(2m^{n+1}k^n+1)^k $ is an upper bound for the number of different reachable configurations after reading $ n $ symbols. Since there are $ l^n $ strings of length $ n $ when the alphabet consists of $ l $ symbols, for large $ n $  and $ l >(km)^k $, the machine will end up in the same configuration after reading two different strings $ u $ and $ w $. This will cause the strings $ u\#w^r $ and $ w\#u^r $ which are not in  $ 
	\mathtt{MPAL_l}$ to be accepted by the machine. Therefore, we conclude that $ \mathtt{MPAL_l} \notin \mathfrak{L}(\textup{DHVA}(k))_{S_k(m)}$.
	
	Since a vector automaton with a larger vector size can trivially simulate a vector automaton with a smaller vector size, this result applies to our case.
\end{proof}

\subsection{Closure Properties}\label{sec: hva-closure}

In this section, we examine the closure properties of the class of languages recognized by real-time homing vector automata. We start with a lemma which will be useful in  our proofs. The languages mentioned below are from \cite{ISK76}.

\begin{lem}\label{lem: anb2n}
	\begin{enumerate}
		\item $ \mathtt{UNION}=\{a^nb^{n}|n \geq 0 \} \cup \{a^nb^{2n}|n \geq 0 \}  \notin \bigcup_k  \mathfrak{L} \textup{(DHVA(\textit{k}))} $.
		
		\item $ \mathtt{L_{bab}}=\{b^n(a^nb^n)^k|n,k \geq 1\} \notin \bigcup_k  \mathfrak{L} \textup{(DHVA(\textit{k}))}  $ .
		
		\item $ \mathtt{IJK}=\{a^ib^jc^k|i\neq j \mbox{ or } j > k\} \notin \bigcup_k  \mathfrak{L} \textup{(DHVA(\textit{k}))}  $ .
		
		\item $ \mathtt{UNION_c}=\{a^nb^n|n\geq 0\} \cup \{a^nb^{2n}c|n\geq 0\}  \notin \bigcup_k  \mathfrak{L} \textup{(DHVA(\textit{k}))}  $ .
	\end{enumerate}
\end{lem}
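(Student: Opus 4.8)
The plan is to exploit the determinism and real-time nature of the model together with the acceptance condition. Since a DHVA($k$) is deterministic and consumes one symbol per step, after processing a string $w$ it is in a uniquely determined configuration consisting of its current state $q_w$ and its current vector $u_w$; moreover the entire future behaviour of the machine depends only on this configuration, because each transition is governed by $(\text{state},\text{symbol},\omega)$ and the status $\omega\in\{=,\neq\}$ is itself determined by $u_w$. Crucially, whenever $w$ is accepted we must have $q_w\in Q_a$ and $u_w=v$, the initial vector. Hence across all accepted strings the vector component of the configuration is fixed to $v$, and only the finitely many states can vary.

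First I would record the following fooling principle: if $S$ is an infinite set of strings accepted by a DHVA($k$) $\mathcal{V}$, then by the pigeonhole principle two distinct strings $w_1,w_2\in S$ reach the same accepting configuration $(q,v)$. Consequently, for every suffix $u\in\Sigma^*$ the computations on $w_1u$ and $w_2u$ coincide from that configuration onward, so $w_1u\in L(\mathcal{V})\iff w_2u\in L(\mathcal{V})$. To prove each item it then suffices to pick an infinite family of accepted strings and a single suffix that separates any two of them.

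For each language I would apply this principle as follows. For $\mathtt{UNION}$, take the accepted family $\{a^nb^n:n\geq 1\}$ and separate two members $a^{n_1}b^{n_1}$, $a^{n_2}b^{n_2}$ with $n_1<n_2$ by the suffix $b^{n_1}$: then $a^{n_1}b^{n_1}b^{n_1}=a^{n_1}b^{2n_1}\in\mathtt{UNION}$ while $a^{n_2}b^{n_2+n_1}\notin\mathtt{UNION}$. For $\mathtt{UNION_c}$, use the same family $\{a^nb^n\}$ and the suffix $b^{n_1}c$, giving $a^{n_1}b^{2n_1}c\in\mathtt{UNION_c}$ but $a^{n_2}b^{n_2+n_1}c\notin\mathtt{UNION_c}$. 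For $\mathtt{L_{bab}}$, take $\{b^na^nb^n:n\geq 1\}$ and the suffix $a^{n_1}b^{n_1}$, so that $b^{n_1}(a^{n_1}b^{n_1})^2\in\mathtt{L_{bab}}$ whereas $b^{n_2}a^{n_2}b^{n_2}a^{n_1}b^{n_1}\notin\mathtt{L_{bab}}$, since its leading run of $b$'s forces the repeated block length to be $n_2$, which the mismatched trailing $a^{n_1}b^{n_1}$ block violates. For $\mathtt{IJK}$, note that $a^nb^n\in\mathtt{IJK}$ for $n\geq 1$ (here $i=j$ but $j>k=0$); taking the suffix $c^{n_1}$ yields $a^{n_1}b^{n_1}c^{n_1}\notin\mathtt{IJK}$ (since $i=j$ and $j=k$) but $a^{n_2}b^{n_2}c^{n_1}\in\mathtt{IJK}$ (since $j=n_2>n_1=k$). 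In every case the fooling principle is contradicted.

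The argument is largely uniform, so the main work — and the only place where care is needed — is the membership bookkeeping for each suffix, especially for $\mathtt{IJK}$, whose disjunctive defining condition ($i\neq j$ or $j>k$) must be checked on both sides, and for $\mathtt{L_{bab}}$, where one must argue that the leading run of $b$'s pins down the repeated block length. I do not foresee a genuine obstacle: the determinism collapses the usual nondeterministic complications, and the requirement that the vector return to $v$ is exactly what forces the vector component of the fooling configuration to be identical, which is what makes the pigeonhole over the finite state set alone sufficient.
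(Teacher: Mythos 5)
Your proof is correct and follows essentially the same approach as the paper's: pigeonhole over the finitely many accept states, observe that acceptance forces the vector back to $v$ so the two strings reach identical configurations, then append a suffix on which determinism forces identical behaviour but membership differs. Even your witness strings match the paper's for $\mathtt{UNION}$, $\mathtt{L_{bab}}$, and $\mathtt{UNION_c}$, with only a trivial variation for $\mathtt{IJK}$ (the paper uses $u=a^ib^ic$, $v=a^jb^jc$ with suffix $c^{j-1}$, while you use $a^nb^n$ with suffix $c^{n_1}$; both are valid).
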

\begin{proof}
	We can show all these languages to be unrecognizable by DHVAs by applying the following common reasoning. Assume that the language $ {L} $ in question is recognized by some DHVA($ k $)  $ {\mathcal{V}} $. Since there are finitely many states, one of the states of $ {\mathcal{V}} $ will end up accepting more than one member of the language. For each language, we will focus on two such members $u$ and $v$. Note that $ {\mathcal{V}} $ is in the same configuration (since it has also returned to its initial vector) after reading both $u$ and $v$. We then append another string $x$ to both strings, selected so that $ ux \in {L} $ and $ vx \notin {L}$. The responses of $ {\mathcal{V}} $ to the  $ ux $ and $vx$ has to be identical, since it will have returned to the same configuration after processing both strings. We conclude that $ {\mathcal{V}} $ cannot distinguish between these two strings, and therefore that $ {L} \notin \bigcup_k  \mathfrak{L} \textup{(DHVA(\textit{k}))} $. All that 
	remains is to provide the strings $u$, $v$, and $x$ for the languages in the statement of the lemma. In the following, $i,j>1$ and $i\neq j$.
	
	\begin{enumerate}
		\item $u= a^ib^i $,  $v= a^jb^j $, and $ x=b^i $.
		
		\item  $u= b^ia^ib^i $, $v= b^ja^jb^j $ and $x= a^ib^i $.
		
		\item  $ u=a^ib^ic $, $ v=a^jb^jc $, and $x= c^{j-1}$ for $ i>j $.
		
		\item $u= a^ib^i $, $v= a^jb^j $, and $x= b^ic $.
	\end{enumerate}
\end{proof}
Let us note that it is possible to recognize the languages mentioned in the proofs with
DHVA($ k $)'s of smaller vector size when the vector entries are not restricted to be integers.
\begin{thm}
	\begin{enumerate}
		\item $ \bigcup_k \mathfrak{L} \textup{(DHVA(\textit{k}))} $  is closed under the following operations:
		\begin{enumerate}
			\item intersection with a regular set
		\end{enumerate}
		\item $ \bigcup_k \mathfrak{L} \textup{(DHVA(\textit{k}))} $ is not closed under the following operations:
		\begin{enumerate}
			\item union
			\item concatenation
			\item intersection
			\item star
			\item homomorphism	
			\item reversal
			\item complementation
		\end{enumerate}
	\end{enumerate}
\end{thm}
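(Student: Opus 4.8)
The plan is to prove each closure/non-closure claim using the witness languages established in Lemma \ref{lem: anb2n} together with the basic structural properties of real-time deterministic homing vector automata. For the positive result (closure under intersection with a regular set), I would use the standard product construction: given a DHVA($k$) $\mathcal{V}$ recognizing $L$ and a DFA $\mathcal{F}$ recognizing a regular language $R$, build a DHVA($k$) whose state set is $Q_{\mathcal{V}} \times Q_{\mathcal{F}}$, whose transitions simulate both machines in lockstep (the vector being updated exactly as in $\mathcal{V}$, the DFA component updated as in $\mathcal{F}$), and whose accept states are pairs $(q,p)$ with $q \in Q_a^{\mathcal{V}}$ and $p$ accepting in $\mathcal{F}$. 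Since both machines are real-time and read one symbol per step, the synchronization is immediate, and the input is accepted exactly when it lies in $L \cap R$ and the vector has returned to its initial value.

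For the non-closure results, the strategy is to exhibit, in each case, languages that \emph{are} recognized by DHVAs but whose combination under the operation in question yields one of the languages proven unrecognizable in Lemma \ref{lem: anb2n}. For \textbf{union}, I would take $L_1=\{a^nb^n\mid n\geq 0\}$ and $L_2=\{a^nb^{2n}\mid n\geq 0\}$, each recognizable by a DHVA(1) using a single counter-simulating multiplier, while $L_1\cup L_2=\mathtt{UNION}\notin\bigcup_k\mathfrak{L}(\textup{DHVA}(k))$ by Lemma \ref{lem: anb2n}(i). For \textbf{intersection}, I would find two DHVA-recognizable languages whose intersection is $\mathtt{IJK}$ or one of the other witnesses; e.g. writing $\mathtt{IJK}$ as the intersection of $\{a^ib^jc^k\mid i\neq j\}$ with an appropriate language forcing $j>k$, or more cleanly by De Morgan once complementation is handled. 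For \textbf{complementation}, since the class is closed under intersection with regular sets but \emph{not} under intersection, non-closure under complement follows from the identity $L_1\cap L_2=\overline{\overline{L_1}\cup\overline{L_2}}$ combined with non-closure under union; alternatively I would directly show $\overline{\mathtt{NEQ}}$-type behavior fails. For \textbf{concatenation} and \textbf{star}, I would pick simple DHVA-recognizable base languages (e.g. $\{a^nb^n\}$ and $\{c\}$, or $\{a\}$) whose concatenation or Kleene star reproduces $\mathtt{UNION_c}$ or $\mathtt{L_{bab}}$ respectively. For \textbf{reversal}, I would exhibit a DHVA-recognizable $L$ whose reverse $L^r$ is unrecognizable — a natural candidate is $\mathtt{POW_r}$-style or $\mathtt{SUM}$-style language, exploiting that the decoding direction matters for the Stern-Brocot-type encodings. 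For \textbf{homomorphism}, I would map a DHVA-recognizable language onto an unrecognizable one by erasing or relabeling symbols.

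The main obstacle, and the part requiring the most care, will be selecting the right pairs of \emph{recognizable} base languages for each operation so that the result is \emph{exactly} one of the four witnesses in Lemma \ref{lem: anb2n}, and verifying the recognizability of each base language by an explicit small DHVA (typically a DHVA(1) or DHVA(2) simulating a counter via multiplication by $2$, $\tfrac12$, or via $S_k(1)$ matrices). The reversal and homomorphism cases are the trickiest, since I must ensure the pre-image language is genuinely DHVA-recognizable while its image/reverse provably is not; here I expect to lean on the asymmetry between encoding and decoding in real-time deterministic computation, and on the unary regularity result of Theorem \ref{thm:unary} to block recognizability of the target. Once a suitable witness reduction is fixed for each operation, the non-recognizability of the target language is already supplied by Lemma \ref{lem: anb2n}, so no new impossibility argument is needed — the work is entirely in the constructions and the bookkeeping of which base languages to combine.
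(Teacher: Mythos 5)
Your part (i) product construction and your union witness match the paper's proof exactly, and your overall strategy (reduce every operation to a witness from Lemma \ref{lem: anb2n}) is the right one. However, several of your concrete reductions fail, and they fail for a common reason: you never identify closure under intersection with a regular set --- the very property you prove in part (i) --- as the \emph{tool} that makes the harder reductions go through. The paper uses it three times: for concatenation, $\{a^nb^n\}\{a^nb^{2n}\}\cap a^*b^* = \mathtt{UNION}$; for star, $(\{a^nb^n\}\cup\{ca^nb^{2n}\})^*\cap ca^*b^*$ gives $\{ca^nb^n\}\cup\{ca^nb^{2n}\}$, from which a machine for $\mathtt{UNION}$ is extracted; for complementation, $\overline{\{a^mb^mc^n\mid 0\le m\le n\}}\cap a^*b^*c^* = \mathtt{IJK}$. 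Your own proposals for these cases do not work as stated: $\{a^nb^n\}\cdot\{c\}=\{a^nb^nc\}$ is DHVA-recognizable, so it yields no contradiction; and your De Morgan argument for complementation is logically invalid --- assuming closure under complement, the identity $L_1\cap L_2=\overline{\overline{L_1}\cup\overline{L_2}}$ only produces $L_1\cap L_2$ if the class is \emph{also} closed under union, which it is not, so no contradiction arises. Worse, you propose to obtain non-closure under intersection from complementation and non-closure under complementation from intersection, which is circular.

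Two further slips. First, $\mathtt{IJK}=\{a^ib^jc^k\mid i\neq j \mbox{ or } j>k\}$ is defined by a \emph{disjunction}, so it cannot be written as the intersection of $\{a^ib^jc^k\mid i\neq j\}$ with a language forcing $j>k$; the paper instead exhibits intersection witnesses directly, namely the one-counter languages $\{b^+(a^nb^n)^*\}$ and $\{(b^na^n)^*b^+\}$, whose intersection is $\mathtt{L_{bab}}$. Second, for reversal and homomorphism the operative idea is not an asymmetry of Stern-Brocot-style encodings but \emph{marker placement}: take $L=\{b^na^n\}\cup\{cb^{2n}a^n\}$ (respectively $\{a^nb^n\}\cup\{ca^{n-1}b^{2n}\}$), which a deterministic machine recognizes by branching on whether the first symbol is $c$; reversal moves the marker to the end, giving $\mathtt{UNION_c}$, and the homomorphism $c\mapsto a$ erases the distinction, giving $\mathtt{UNION}$ --- in both cases a witness of Lemma \ref{lem: anb2n}. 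Until you fix the concatenation, star, intersection and complementation reductions along these lines (or find correct substitutes), part (ii) of your proof is incomplete.
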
	
\newpage
\begin{proof}
\hfill
	\begin{enumerate}
		\item 
		\begin{enumerate}
			\item Let $ {L}_{{\mathcal{V}}}	$ be recognized by a DHVA($ k $)  $ {\mathcal{V}}= (Q_1,\Sigma_1,\mathrm{M},\delta_1,q_1,Q_{a_1},v) $ and $
			{L}_{\mathcal{F}} $ be a language recognized by a finite state automaton $
			\mathcal{F}=(Q_2,\Sigma_2,\delta_2,$ $q_2,Q_{a_2}) $. Let us construct a DHVA($ k $) $
			\mathcal{\mathcal{V}'}=(Q,\Sigma,\mathrm{M},\delta,q_1,Q_a,v) $ recognizing $ {L}={L}_{{\mathcal{V}}} \cap
			{L}_{\mathcal{M}} $. $ \mathcal{V'} $ keeps track of the vector and the current state of $\cal  {V}
			$ as well as the current state of $ \mathcal{F}. $  Let $ Q' = Q_1 \times Q_2 $ be the state set of $
			\mathcal{\mathcal{V}'} $ and $ \Sigma=\Sigma_1 \cup \Sigma_2 $. For each $ (q_i,q_j) \in Q $, $ \sigma \in \Sigma $ and
			$\omega \in \Omega$, $ \delta((q_i,q_j),\sigma, \omega)= ((q_i',q_j'),A)$ where $\delta_1(q_i,\sigma,
			\omega)=(q_i',A)$ and $ \delta_2(q_j,\sigma)=q_j' $. $ q_1 $ is the pair $ (q_1,q_2) $ and $ Q_a $ is the set
			of pairs of states where both of the states are accept states of $ {\mathcal{V}} $ or $ \mathcal{F} $. We 
			obtain a DHVA($ k $) $ {\mathcal{V}}' $ recognizing $ {L} $.
		\end{enumerate}
		\item \begin{enumerate}
			\item Let $ {L}_1=\{a^nb^n | n \geq 0\} $ and $ {L}_2=\{a^nb^{2n}|n \geq 0\} $. $ {L}_1 $ and $ {L}_2 $ can be recognized by a DBHVA(2) which simulates a deterministic blind one-counter automaton whereas $  {L}_1 \cup  {L}_2 = \mathtt{UNION} $ cannot be recognized by any DHVA($ k $) for any $ k $ by Lemma \ref{lem: anb2n}.
			
			\item For the languages  $ {L}_1=\{a^nb^n | n \geq 0\} $ and $ {L}_2=\{a^nb^{2n}|n \geq 0\} $, ${L}_1{L}_2 \cap a^*b^*= \mathtt{UNION}  $,  which cannot be recognized by any DHVA($ k $) for any $ k $ by Lemma \ref{lem: anb2n} and Part i.a) of this theorem.
			
			\item Let $ \mathtt{L_1}=\{b^+(a^nb^n)^*| n \geq 1\} $ and $  \mathtt{L_2}=\{(b^na^n)^*b^+ | n \geq 1\} $. Both  $ {L}_1 $ and $ {L}_2 $ can be recognized by  DHVA(2)s which simulate deterministic one-counter automata, whereas  $ {L}_1 \cap \mathtt{L_2} = \mathtt{L_{bab}} = \{b^n(a^nb^n)^k|n,k \geq 1\}$ cannot be recognized by any DHVA($ k $) for any $ k $ by Lemma \ref{lem: anb2n}.
			
			\item Let $ {L}=\{a^nb^n|n \geq 0 \} \cup \{ca^nb^{2n}|n \geq 0 \}$. A  DBHVA($ 2$) $ {\mathcal{V}} $ recognizing ${L} $ branches into one of two computation paths depending on the first scanned symbol $ \sigma_1 $. If $ \sigma_1=a $, $ {\mathcal{V}} $ simulates a deterministic blind one-counter automaton recognizing $ \{a^{n-1}b^n | n \geq 0\} $ and if $ \sigma_1=c$, ${\mathcal{V}}$ simulates a deterministic blind one-counter automaton recognizing $\{a^nb^{2n}\}$. Now suppose $  {L}^* \in \bigcup_k \mathfrak{L} \textup{(DHVA(\textit{k}))} $. Then ${L}' = {L}^* \cap \{ca^ib^j|i,j \geq 0\}= \{ca^nb^n|n\geq 0\} \cup \{ca^nb^{2n}| n \geq 0\} \in \bigcup_k \mathfrak{L} \textup{(DHVA(\textit{k}))} $. A DHVA($ k $) recognizing $ {L}' $ can be easily modified to obtain a DHVA($ k $) recognizing the language $\mathtt{UNION} = \{a^nb^n|n\geq 0\} \cup \{a^nb^{2n}| n \geq 0\}   $, which is not in $ \mathfrak{L} \textup{(DHVA(\textit{k}))} $ by Lemma \ref{lem: anb2n}. 
			
			\item Let $ {L}=\{a^nb^n | n \geq 0\} \cup \{c a^{n-1}b^{2n}|n\geq 0\} $. A DBHVA($ k $) recognizing $ {L} $ works similarly to the one in part d). Now consider the homomorphism $ h $ such that $ h(a)=a $, $ h(b)=b $ and $ h(c)=a $. $ h({L})=\{a^nb^n | n \geq 0\} \cup \{a^{n}b^{2n}\} = \mathtt{UNION}  $, which cannot be recognized by any DHVA($ k $) for any $ k $ by Lemma \ref{lem: anb2n}.
			
			\item Let $ {L}=\{b^na^n | n \geq 0\} \cup \{cb^{2n}a^n|n\geq 0\} $. A DBHVA($ k $) recognizing $ {L} $ works similarly to the one in part d). Now consider the reverse of $ {L} $, $ \mathtt{UNION_c}= \{a^nb^n| n \geq 0\} \cup \{a^nb^{2n}c|n\geq 0\}$, which cannot be recognized by any DHVA($ k $) for any $ k $ by Lemma \ref{lem: anb2n}.

			\item Consider $ {L}=\{a^mb^mc^n| 0\leq m \leq n\} $, which can be recognized by a DHVA(3). $ \bar{{L}} \cap \{a^ib^jc^k |i,j,k \geq 0\} = \{a^ib^jc^k|i\neq j \mbox{ or } j > k\} =\mathtt{IJK}$ cannot be recognized by any DHVA($ k $) by Lemma \ref{lem: anb2n}.
		\end{enumerate}
	\end{enumerate}
\end{proof}

The set of languages recognized by real-time nondeterministic homing vector automata is closed under union, star and concatenation The constructions are fairly simple and omitted. 

\begin{thm}\label{thm: closed}
	\begin{enumerate}
		\item $  \bigcup_k \mathfrak{L} \textup{(DBHVA(\textit{k}))} $ is closed under the following operations:
		\begin{enumerate}
			\item intersection
		\end{enumerate}
		\item 
		$ \bigcup_k \mathfrak{L} \textup{(DBHVA(\textit{k}))} $ is not closed under the following operations:
		\begin{enumerate}
			\item union
			\item concatenation
			\item star
			\item homomorphism
			\item reversal	
			\item complementation
		\end{enumerate}
	\end{enumerate}
\end{thm}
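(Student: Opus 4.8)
The plan is to treat the single closure claim and the list of non-closure claims separately, and to settle complementation by a purely logical argument rather than a fresh witness language. For closure under intersection I would use a direct product construction that is faithful \emph{precisely because} the machines are blind. Given a DBHVA($k_1$) $\mathcal{V}_1$ and a DBHVA($k_2$) $\mathcal{V}_2$ recognizing $L_1$ and $L_2$, I build a DBHVA($k_1+k_2$) $\mathcal{V}$ whose state set is $Q_1\times Q_2$, whose initial vector is the concatenation $(v_1\mid v_2)$, and whose matrices are the block-diagonal matrices $\left(\begin{smallmatrix} A_1 & 0\\ 0 & A_2\end{smallmatrix}\right)$ built from the matrices $\mathcal{V}_1$ and $\mathcal{V}_2$ apply on the same symbol. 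Block-diagonality forces the two sub-vectors to evolve independently, so the combined vector returns to $(v_1\mid v_2)$ if and only if each sub-vector returns to its own initial value; taking $Q_{a_1}\times Q_{a_2}$ as accept states then makes $\mathcal{V}$ accept exactly $L_1\cap L_2$. The point I would stress is why this construction fails for the non-blind model (which is \emph{not} closed under intersection): a non-blind transition consults the status $\omega$ of the \emph{whole} register, whereas each factor machine needs the status of \emph{its own} sub-vector, and these can disagree. Since a blind machine never consults $\omega$ during computation, the product simulation is exact. As an immediate corollary, since every regular language is recognized by a DBHVA(1) (a one-dimensional register fixed at its initial value, deciding by state alone), the class is also closed under intersection with a regular set.

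For non-closure under union, concatenation, star, homomorphism and reversal, I would reuse the witness languages from the proof of the analogous DHVA theorem, observing in each case that the \emph{witnesses themselves} are already recognized by blind machines, while the target language lies outside $\bigcup_k\mathfrak{L}(\textup{DHVA}(k))$ by Lemma \ref{lem: anb2n} and hence, by Theorem \ref{thm:blind}, outside $\bigcup_k\mathfrak{L}(\textup{DBHVA}(k))$ as well. Concretely: $\{a^nb^n\}$ and $\{a^nb^{2n}\}$ are DBHVA-recognizable but their union is $\mathtt{UNION}$ (union); the same pair gives $L_1L_2\cap a^*b^*=\mathtt{UNION}$ (concatenation); $\{a^nb^n\}\cup\{ca^nb^{2n}\}$ settles star; $\{a^nb^n\}\cup\{ca^{n-1}b^{2n}\}$ under the homomorphism sending $c\mapsto a$ settles homomorphism; and $\{b^na^n\}\cup\{cb^{2n}a^n\}$ settles reversal via $\mathtt{UNION_c}$. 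The two items that pass through an intermediate intersection (concatenation and star) are legitimate because closure under intersection with a regular set was just established.

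Finally, complementation follows without any new construction. Since $\bigcup_k\mathfrak{L}(\textup{DBHVA}(k))$ is closed under intersection but, as just shown, not closed under union, the identity $A\cup B=\overline{\overline{A}\cap\overline{B}}$ forbids closure under complementation: were the class closed under complement, closure under intersection would force closure under union, contradicting the union case.

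\textbf{Main obstacle.} I expect the genuinely new content to be the intersection construction together with the explanation of why the blind restriction \emph{reverses} the non-closure result known for DHVAs; making that contrast precise — the mismatch between the global status $\omega$ and the per-factor statuses — is the delicate point. The remaining items reduce to an audit confirming that each DHVA witness is in fact blind-recognizable, plus the short logical deduction for complementation.
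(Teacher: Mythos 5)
Your proposal is correct and follows essentially the same route as the paper's proof: the same block-diagonal product construction (with state set $Q_1\times Q_2$ and concatenated initial vector) for closure under intersection, the same witness languages inherited from the non-blind DHVA theorem for union, concatenation, star, homomorphism and reversal, and the same De Morgan argument deducing non-closure under complementation from closure under intersection together with non-closure under union. The only cosmetic difference is that you justify the intermediate intersection steps (concatenation, star) by first establishing closure of the blind class under intersection with a regular set, whereas the paper routes those steps through the non-blind class via the containment $\bigcup_k \mathfrak{L}(\textup{DBHVA}(k)) \subseteq \bigcup_k \mathfrak{L}(\textup{DHVA}(k))$ and its own closure property; both are valid.
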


\begin{proof}
	\hfill 
	\begin{enumerate}
		\item 
		\begin{enumerate}
			\item Let ${L}_{{\mathcal{V}}_1}$ and $ {L}_{{\mathcal{V}}_2}$ be recognized by DBHVA($ k_1$) $ {\mathcal{V}}_1=(Q_1,\Sigma_1,\delta_1,q_1,Q_{a_1},v_1) $ and DBHVA($ k_2$) $ {\mathcal{V}}_2=(Q_2,\Sigma_2,\delta_2,q_2,Q_{a_2},v_2) $, respectively. Let us construct a DBHVA($ k $) $ {\mathcal{V}}=(Q,\Sigma,\delta,q_1,Q_a,v) $ recognizing $ {L}={L}_{{\mathcal{V}}_1} \cap {L}_{{\mathcal{V}}_2} $ where $ k=k_1 + k_2 $. Let $ Q = Q_1 \times Q_2 $ be the state set of $ {V} $ and $ \Sigma=\Sigma_1 \cup \Sigma_2 $. For each $ (q_i,q_j) \in Q $ and $ \sigma \in \Sigma $, $ \delta((q_i,q_j),\sigma)= ((q_i',q_j'),A)$, where $\delta_1(q_i,\sigma)=(q_i',A_1)$, $ \delta_2(q_j,\sigma, \omega)=(q_j',A_2)$ and $ A $ is a $ k \times  k $ block diagonal matrix with $ A_1 $ and $ A_2 $ on its diagonal. $ q_1 $ is the pair $ (q_1,q_2) $, and $ Q_a $ is the set of pairs of states where both of the states are accept states of $ \mathcal{V}_1 $ or $ \mathcal{V}_2 $. The initial vector $ v 
			$ of $ {\mathcal{V}} $ is of the form $ (v_1~~v_2) $ and has dimension $ k $. $ {\mathcal{V}}$ keeps track of the current states and the current values of  both vectors by simultaneously multiplying its vector with the appropriate matrices. Since the computation is blind, the value of the vector is checked only at the end of the computation, and an input string is accepted if the vector is equal to its initial value.
		\end{enumerate}
		\item The proofs for the non-blind version also apply here. The proof for part (f) follows from the fact that 
		$ \bigcup_k \mathfrak{L} \textup{(DBHVA(\textit{k}))} $ is closed under intersection but not union.
	\end{enumerate}
\end{proof}

The set of languages recognized by real-time nondeterministic blind homing vector automata is closed under union and intersection. The construction for union is straightforward, and the construction for intersection is identical to the deterministic case.

\subsection{Stateless computation}\label{sec: hva-stateless}

Given two strings, a finite automaton is said to separate them if it accepts one and rejects the other. Introduced by Goral{\v{c}}{\'\i}k and Koubek \cite{GK86}, the string separation problem asks for the minimum number of states needed for accomplishing this task. String separation by homing vector automata and vector automata have been investigated in \cite{SYS19}. It turns out that a homing vector automaton needs at least two states to separate any pair of strings, regardless of the dimension of the vector. We are therefore motivated to examine the limitations imposed by statelessness on homing vector automata in more detail.

Stateless machines \cite{YDI08,IKO10,KMO09} have been investigated by many researchers, motivated by their relation to membrane computing and P systems  \cite{Pau00}, which are stateless models inspired from biology. While vector automata can simulate their classical states in their vectors by using longer vectors, this is not the case for homing vector automata.

Our study on stateless homing vector automata yields a characterization for the class of languages recognized by stateless real-time deterministic FAMs without equality (0-DFAMW) \cite{ISK76}. It turns out that a language is recognized by a 0-DFAMW iff it is commutative and its Parikh image is the set of nonnegative solutions to a system of linear homogeneous Diophantine equations.  When the computation is nondeterministic, then any language recognized by a stateless real-time nondeterministic FAM without equality is commutative. We conclude by providing some further examples and observations about language recognition power of stateless homing vector automata.  

\subsubsection{Observations}
The limitation of having a single state for homing vector automata leads to the acceptance of the string  $ xx $, whenever the string $ x $ is accepted. This is true since further repetition of the same input naturally carries the vector through a trajectory that ends up in its initial value. Based on this observation, we can list the following consequences:

For $ X \in \{\textup{DB,D,NB,N}\} $,
\begin{itemize} 
	\item 		If string $ x $ is accepted by a 0-XHVA $\cal V$, then any member of $ \{x\}^* $ is also accepted by $\mathcal{V}$.
	\item 	If all members of a language $L $ are accepted by a 0-XHVA $\mathcal{V}$, then any string in $ L^* $ is also accepted by $\mathcal{V}$.
	\item 	If language $L $ is recognized by a 0-XHVA, then $ L = L^* $.
	\item 	0-XHVAs cannot recognize any finite language except $ {L}_\varepsilon $.
\end{itemize}

We can further make the following observation for deterministic models.

\newpage
\begin{lem}
	\label{lemma: diff}
	If the strings $ w_1 $ and $ w_1w_2 $ are accepted by a \textup{0-XHVA} $ V$ where $ X \in \{\textup{DB,D}\} $, then the string $ w_2$ is also accepted by $ \mathcal{ V} $.
\end{lem}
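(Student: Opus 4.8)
The plan is to exploit the fact that, for a deterministic machine, the content of the vector after reading a string is a well-defined function of the starting vector and the string read, together with the compositional nature of sequential computation. First I would record two preliminary observations. Since $\mathcal{V}$ is stateless, its unique state must be an accept state (otherwise no string is accepted and the implication holds vacuously), so a string $s$ is accepted precisely when the computation that begins with the vector set to the initial value $v$ ends with the vector again equal to $v$. Moreover, because $\mathcal{V}$ is deterministic and has a single state, for any starting vector $u$ and any input $s$ there is exactly one computation; write $\phi(u,s)$ for the vector it reaches. This holds equally in the blind case, where the matrix applied at each step depends only on the scanned symbol, and in the non-blind case, where it additionally depends on the status ($=$ or $\neq$) of the current vector relative to $v$ --- in both cases the next matrix is a function of the current configuration, so $\phi$ is well defined.

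The central step is the composition identity $\phi(v, w_1 w_2) = \phi(\phi(v,w_1), w_2)$, which simply expresses that the run on $w_1 w_2$ reads $w_1$ first, arrives at the vector $\phi(v,w_1)$, and then continues reading $w_2$ from that vector. Here I must be slightly careful in the non-blind case: the behaviour while reading the suffix $w_2$ could in principle differ from a fresh run on $w_2$ if the status bit were different. The resolution is that the hypothesis forces the two relevant starting vectors to coincide. Indeed, since $w_1$ is accepted we have $\phi(v,w_1)=v$, so the vector with which the machine enters the $w_2$-portion of the run on $w_1 w_2$ is exactly $v$, the same vector (and hence the same status, namely $=$) with which a standalone run on $w_2$ begins.

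Putting these together finishes the argument: from $\phi(v,w_1)=v$ and the composition identity, $\phi(v,w_1 w_2) = \phi(v, w_2)$; and since $w_1 w_2$ is accepted, $\phi(v, w_1 w_2) = v$. Therefore $\phi(v,w_2)=v$, which is exactly the statement that $w_2$ is accepted by $\mathcal{V}$. I expect the only delicate point to be the non-blind case described above, namely justifying that the suffix computation of $w_1 w_2$ coincides with an independent computation on $w_2$; once it is observed that both begin with the vector equal to $v$, determinism makes the two runs identical step by step, and the blind case is then an immediate special case.
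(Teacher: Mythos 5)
Your proposal is correct and follows essentially the same route as the paper's own proof: since $w_1$ is accepted, the vector has returned to its initial value $v$ when the run on $w_1w_2$ begins reading $w_2$, and by determinism (and statelessness) that suffix computation coincides with a standalone run on $w_2$, which must therefore also end at $v$. Your additional care about the non-blind status check and the vacuous non-accepting-state case only makes explicit what the paper leaves implicit.
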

\begin{proof}
	After reading $ w_1 $, the value of the vector is equal to its initial value. Since $ w_1w_2 $ is also accepted by $ \mathcal{V} $, reading $ w_2 $ results in acceptance when started with the initial vector.
\end{proof}

For the unary case we have the following.
%

\begin{thm}\label{thm: gcd}
	If the strings $ a^i $ and $ a^j $ ($ 1<i<j $) are accepted by a \textup{0-XHVA} $ \mathcal{V}$ where $ X \in  \{\textup{DB,D}\}   $, then the string $ a^{\gcd(i,j)} $ is also accepted by $ \mathcal{V} $.
\end{thm}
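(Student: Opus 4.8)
The plan is to reduce the claim to the subtractive form of the Euclidean algorithm, using Lemma~\ref{lemma: diff} as the engine that lets us subtract lengths. Rather than argue about $i$ and $j$ directly, I would prove the following slightly more general statement by strong induction on $p+q$: for all integers $p,q\geq 1$, if both $a^p$ and $a^q$ are accepted by $\mathcal{V}$, then $a^{\gcd(p,q)}$ is accepted by $\mathcal{V}$. The theorem is then the instance $p=i$, $q=j$.

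For the base case, suppose $p=q$. Then $\gcd(p,q)=p$ and $a^p$ is accepted by assumption, so there is nothing to prove. For the inductive step, assume $p\neq q$ and, without loss of generality, $p<q$. Writing $a^q=a^p a^{q-p}$, the string $a^p$ is a prefix of the accepted string $a^q$, so Lemma~\ref{lemma: diff} (applied with $w_1=a^p$ and $w_2=a^{q-p}$) yields that $a^{q-p}$ is accepted. Now $a^p$ and $a^{q-p}$ are both accepted, both exponents are at least $1$ (since $p\geq 1$ and $q-p\geq 1$), and $p+(q-p)=q<p+q$, so the induction hypothesis applies and gives that $a^{\gcd(p,\,q-p)}$ is accepted. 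Since $\gcd(p,q-p)=\gcd(p,q)$, the step is complete.

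Carrying out the induction starting from the pair $(i,j)$ simply traces the subtractive Euclidean algorithm: each successive pair consists of exponents of strings that we have shown to be accepted, and the process stabilises exactly at $\gcd(i,j)$, so $a^{\gcd(i,j)}$ is accepted. I would remark that the hypothesis $1<i$ merely rules out the degenerate case $i=1$, in which $\gcd(i,j)=1=i$ and the conclusion is already immediate. I do not expect a substantive obstacle here: the only point requiring care is ensuring that the induction measure $p+q$ strictly decreases while both exponents remain positive, which is precisely what guarantees that the recursion halts at the greatest common divisor instead of overshooting it into nonpositive exponents.
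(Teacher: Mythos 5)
Your proof is correct, but it takes a genuinely different route from the paper's. The paper argues via B\'ezout's identity: it writes $\gcd(i,j) = i l_i + j l_j$ with $l_i > 0$ and $l_j \le 0$, uses the fact (established earlier in that section) that statelessness makes the set of accepted strings closed under repetition, so that $a^{il_i} = (a^i)^{l_i}$ and $a^{j(-l_j)} = (a^j)^{-l_j}$ are accepted, and then applies Lemma~\ref{lemma: diff} exactly once, with $w_1 = a^{j(-l_j)}$ and $w_1w_2 = a^{il_i}$, to conclude that $a^{il_i - j(-l_j)} = a^{\gcd(i,j)}$ is accepted. You instead run the subtractive Euclidean algorithm: a strong induction on $p+q$ in which each step uses Lemma~\ref{lemma: diff} to pass from the accepted pair $a^p$, $a^q$ to the accepted pair $a^p$, $a^{q-p}$, with the $\gcd$ invariant and the measure strictly decreasing. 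Your argument needs only Lemma~\ref{lemma: diff}; it invokes neither B\'ezout's identity nor the closure-under-repetition property, so it is more self-contained, and it proves the statement for all exponents $p,q \ge 1$, making explicit (as you correctly note) that the hypothesis $1<i<j$ is inessential. What the paper's approach buys is brevity: a single application of the lemma instead of an induction, at the cost of importing those two auxiliary facts. Both proofs are sound; your induction is well-founded because $p \ge 1$ and $q-p \ge 1$ hold at every step, so the recursion terminates exactly at $\gcd(i,j)$.
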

\begin{proof}
	It is well known that for any positive integers $ i,j $, there are two integers $ l_i $ and $ l_j $ such that $ i l_i + j l_j = \gcd(i,j) $. Assume that $ l_i $ is positive and $ l_j $ is non-positive. (The other case is symmetric.) Note that $ il_i \geq j(-l_j) $. The strings $ a^{j(-l_j)} $ and $ a^{i l_i} $ are accepted by $ H $. By Lemma \ref{lemma: diff}, the string  $ a^{i l_i-j(-l_j)} $,  which is $ a^{\gcd(i,j)} $, is also accepted by $ \mathcal{V} $. 
	
\end{proof}

\begin{cor}
	\label{cor: gcd}
	If the strings $ a^i $ and $ a^j $ ($ 1<i<j $) are accepted by a \textup{0-XHVA} $ H$ where $ X \in  \{\textup{DB,D}\}  $ and $ \gcd(i,j)=1 $, then $ \mathcal{V} $ recognizes $ a^* $.
\end{cor}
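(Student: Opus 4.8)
The plan is to derive this directly from Theorem \ref{thm: gcd} together with the stateless observations established just before Lemma \ref{lemma: diff}. The key point is that the hypothesis $\gcd(i,j)=1$ collapses the conclusion of Theorem \ref{thm: gcd} down to the single letter $a$, after which the self-looping behavior of stateless machines propagates acceptance to all of $a^*$.

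Concretely, I would first apply Theorem \ref{thm: gcd} to the accepted strings $a^i$ and $a^j$ with $1<i<j$. That theorem guarantees that $a^{\gcd(i,j)}$ is accepted by $\mathcal{V}$, and since $\gcd(i,j)=1$ by hypothesis, this yields that the string $a^{\gcd(i,j)}=a^1=a$ is accepted by $\mathcal{V}$. This is the only substantive step, and it is already done for us by the preceding theorem.

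Next I would invoke the first of the stateless observations listed for $X \in \{\textup{DB},\textup{D}\}$: if a single string $x$ is accepted by a 0-XHVA, then every member of $\{x\}^*$ is accepted as well. Applying this with $x=a$, we conclude that every string in $\{a\}^*=a^*$ is accepted by $\mathcal{V}$. In particular all of $\{\varepsilon, a, aa, aaa, \dots\}$ lie in $L(\mathcal{V})$.

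Finally I would note that since the alphabet is the unary alphabet $\Sigma=\{a\}$, we have $\Sigma^*=a^*$, so every string over $\Sigma$ has just been shown to be accepted. Hence $L(\mathcal{V})=a^*$, which is exactly the claim that $\mathcal{V}$ recognizes $a^*$. There is no real obstacle here: the corollary is an immediate consequence of chaining Theorem \ref{thm: gcd} (to get $a$ accepted) with the stateless closure observation (to get all powers of $a$ accepted), the only thing worth remarking being that unarity of the alphabet makes ``accepts all of $a^*$'' coincide with ``recognizes $a^*$.''
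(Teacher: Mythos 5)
Your proof is correct and is exactly the derivation the paper intends: the corollary is stated without proof because it follows immediately by applying Theorem \ref{thm: gcd} to get $a=a^{\gcd(i,j)}$ accepted, then the stateless closure observation (acceptance of $x$ implies acceptance of all of $\{x\}^*$) to get all of $a^*$, with unarity of the alphabet giving the converse inclusion. Your explicit remark that ``accepts all of $a^*$'' coincides with ``recognizes $a^*$'' over a unary alphabet is a fair point that the paper leaves implicit.
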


Let us now investigate the case where the set of matrices is commutative.

Let $ L \in \Sigma^* $ be a language. The \textit{commutative closure} of $ L $ is defined as $ com(L)=\{x \in \Sigma^* | \phi(x) \in \phi(L) \} $. A language is \textit{commutative} if $ com(L)=L $.

\begin{thm}\label{thm: comm}
	If a language $L $ is recognized by a \textup{0-XBHVA} $ \mathcal{V} $ where $ X\in  \{\textup{D,N}\}  $ with a commutative set of matrices, then $L$ is commutative.
\end{thm}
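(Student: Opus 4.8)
The plan is to exploit the fact that statelessness together with blindness reduces the whole computation to a single product of matrices, and that commutativity makes this product insensitive to the order in which the input symbols are read.

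First I would record the following structural observation. Since $\mathcal{V}$ has a single state $q_1$ (which, being the only state, must be accepting) and is blind, its next move depends only on the scanned symbol. Thus for each $\sigma \in \Sigma$ the transition function associates to $\sigma$ a single matrix $A_\sigma \in \mathrm{M}$ in the deterministic case ($X=\textup{D}$), or a set $\mathrm{M}_\sigma \subseteq \mathrm{M}$ of matrices in the nondeterministic case ($X=\textup{N}$). Because the model is real-time, reading an input string $w = \sigma_1 \cdots \sigma_n$ results in exactly $n$ multiplications, one per symbol, with no $\varepsilon$-moves interspersed. Consequently, $w$ is accepted if and only if there is a choice of matrices $A_1, \ldots, A_n$ with $A_i \in \mathrm{M}_{\sigma_i}$ (this choice being forced in the deterministic case) such that $v A_1 A_2 \cdots A_n = v$, where $v$ is the initial vector.

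Next I would invoke commutativity. Let $x \in L$ and let $y \in \Sigma^*$ satisfy $\phi(y) = \phi(x)$; I must show $y \in L$. Since $x$ and $y$ have the same Parikh image, $y$ is obtained from $x$ by a permutation $\pi$ of its positions, say $x = \sigma_1 \cdots \sigma_n$ and $y = \sigma_{\pi(1)} \cdots \sigma_{\pi(n)}$. Fix an accepting computation of $\mathcal{V}$ on $x$, given by matrices $A_1, \ldots, A_n$ with $A_i \in \mathrm{M}_{\sigma_i}$ and $v A_1 \cdots A_n = v$. For $y$, I assign to its $j$-th position (which carries the symbol $\sigma_{\pi(j)}$) the matrix $A_{\pi(j)}$; this is a legal choice, since $A_{\pi(j)} \in \mathrm{M}_{\sigma_{\pi(j)}}$. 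The product realized along this computation of $y$ is $A_{\pi(1)} A_{\pi(2)} \cdots A_{\pi(n)}$. Because the set $\mathrm{M}$ is commutative, every pair of the matrices $A_1, \ldots, A_n$ commutes, and hence $A_{\pi(1)} \cdots A_{\pi(n)} = A_1 \cdots A_n$. Therefore $v A_{\pi(1)} \cdots A_{\pi(n)} = v A_1 \cdots A_n = v$, so this is an accepting computation of $\mathcal{V}$ on $y$, giving $y \in L$. This single argument covers both $X=\textup{D}$ (where each $\mathrm{M}_\sigma$ is a singleton, so the reindexing yields the unique computation on $y$) and $X=\textup{N}$.

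Having shown $com(L) \subseteq L$, and since $L \subseteq com(L)$ holds by definition, I conclude $com(L) = L$, i.e.\ $L$ is commutative. The argument is essentially immediate once the structural observation is in place; the only point requiring care is checking that the permuted sequence of chosen matrices constitutes a legal computation on $y$, which is precisely why the clean real-time correspondence between positions and symbols and the preservation of the multiset of allowed matrix sets under $\pi$ matter. I do not anticipate a genuine obstacle, as commutativity trivializes the dependence of the acceptance condition on the order of the symbols.
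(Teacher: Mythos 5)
Your proof is correct and follows essentially the same route as the paper's: take an accepting computation on $w$, use commutativity of the matrices to reorder the product, and observe that the reordered product is a legal accepting computation on the permuted string. Your explicit check that the permuted matrix sequence is a valid computation (each $A_{\pi(j)}$ belongs to $\mathrm{M}_{\sigma_{\pi(j)}}$, relying on statelessness, blindness, and real-time operation) is a point the paper leaves implicit, but it is the same argument.
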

\begin{proof}
	Let $w \in L  $ and suppose that the string $w=w_{[1]}w_{[2]}\cdots w_{[n]}$ is accepted by $ \mathcal{V} $. Let $A_1 A_2\cdots A_n$ be the product of the matrices labeling the computation such that $$v A_1 A_2 \cdots  A_n=v$$ where $v$ is the initial vector of $ \mathcal{V} $. Since the matrices are commutative, then for any permutation $\tau$, $$A_1A_2 \cdots  A_n=A_{\tau(1)} A_{\tau(2)} \cdots  A_{\tau(n)}.$$
	This leads to the acceptance of the string $w'=w_{[\tau(1)]} w_{[\tau(2)]} \cdots w_{[\tau(n)]} $ since $$vA_{\tau(1)} A_{\tau(2)} \cdots A_{\tau(n)}=v.$$ Hence, if $w$ is accepted by $ \mathcal{V} $, then any string obtained from $ w $ by permuting its letters is also accepted by $ \mathcal{V} $. Any string $ x $ with $ \phi(x)=\phi(w) $ is in $ L $ and we conclude that $ L $ is commutative. 	
\end{proof}

When the computation is not blind, then the class of languages recognized is no longer commutative. The language of balanced strings of brackets $ \mathtt{DYCK} $ can be recognized by 0-DHVA(1) as follows. Starting with the initial vector $ (1) $, for each left bracket the vector is multiplied by $ (2) $. As long as the vector is not equal to (1), for each right bracket, the vector is multiplied by ($\frac{1}{2}$). If the vector is equal to $ (1) $ and the next symbol is a right bracket, then the vector is set to (0).

\begin{cor}\label{cor: commreverse}
	If a language $L $ is recognized by a \textup{0-XBHVA} $ \mathcal{V} $ where $ X \in  \{\textup{D,N}\}  $  with a commutative set of matrices, then $L=L^r$.	
\end{cor}
\begin{proof}
	Suppose that $w \in L$. Then it is clear that $w^r$ will be also accepted by $ \mathcal{V} $ by Theorem \ref{thm: comm} and $w^r \in L$. Since for every string $ w $ it is true that  $w \in L $ if and only if $ w^r \in L$, we conclude that $L=L^r$.
\end{proof}

\subsubsection{Regular languages}

Let $ \cal F$ be an $ n $-state deterministic finite automaton. Without loss of generality, we can enumerate its states as $ q_1,\ldots,q_n $ where $q_1$ is the initial state. Moreover we can denote $ q_i $ by $ e_i $, which is the basis vector in dimension $ n $ having 1 in its $ i $'th entry, and 0 otherwise. Besides, for each symbol $\sigma $, we can design a zero-one matrix, say $ A_\sigma $, that represents the transitions between the states, i.e. $ A_\sigma[i,j] = 1 $ if and only if $ \cal F $ switches from $ q_i $ to $ q_j $ when reading symbol $\sigma $. Thus, the computation of $  \cal F $ on an input, say $ w $, can be traced by matrix-vector multiplication:  
\[
e_j =  e_1 A_{w[1]} A_{w[2]} \cdots A_{w[|w|]}
\]
if and only if $ \cal F $ ends its computation in $ q_j $ after reading $ w $.

Based on this representation, we can easily observe that if a language $ L $ is recognized by an $ n $-DFA whose initial state is the single accept state, then $ L $ is recognized by a 0-DBHVA($n$).

Let us give some examples of stateless HVAs recognizing regular languages.

\begin{ex} \label{ex: 1}
For $ k >1 $, $\mathtt{AB_k}^*=\{a^kb^k\}^* \in \mathfrak{L}(\textup{0-DBHVA(2k)})$.
	\end{ex}
 Let us construct a 0-DBHVA(2$ k $)  $ \mathcal{V}_k $ recognizing $\mathtt{AB_k}^*$. The initial vector of $ \mathcal{V}_k  $ is $ v=(1 ~~ 0 ~~ \cdots ~~ 0) $. 
For each $ a $, the value in the $ i $'th entry of the vector is transferred to the $ (i+1) $'st entry when $ 1 \leq i \leq k $, and, the rest of the entries are set to 0. For each $ b $, the value in the $ (i+k) $'th entry of the vector is transferred to the $ (i+k+1 \mod 2k) $'th entry, and the rest of the entries are set to 0, ($ 1 \leq i \leq k $). Thus, we return  to the initial vector if and only if after some number of $ (a^kb^k) $ blocks have been read.

\begin{ex} \label{ex: 2}
$ \modm = \{ a^i \mid i \mod m \equiv 0 \} \in \mathfrak{L}(	\textup{0-DBHVA($m$)}) $.
\end{ex}
It is easy to observe that any unary $n$-state DFA whose initial state is the single accept state can recognize either $ L_\varepsilon $ or $ \modm $ for some $m\leq n$. Hence, for any $ m>0 $, the language $ \modm  $ is recognized by a \textup{0-DBHVA($m$)}. 

Note that if it is allowed to use arbitrary algebraic numbers in the transition matrices, then for every $ m>0 $, the language $ \modm $ is recognized by a \textup{0-DBHVA(2)} with initial vector $ v = \mypar{1 ~~ 0} $ that multiplies its vector with the matrix $$ A_m = \mymatrix{rr}{ \cos  \frac{2 \pi}{m} & -\sin  \frac{2 \pi}{m} \\ \sin  \frac{2 \pi}{m} & \cos  \frac{2 \pi}{m} } $$ for each scanned symbol. (The entries of $A_m$ are algebraic for any $m$ \cite{Le93}.)    

One may ask whether all regular languages $ L $ satisfying $ L= L^* $ can be recognized by stateless HVAs. We provide a negative answer for the deterministic model. The language $ \modtwothree $ is defined as $ \{ a^2,a^3 \}^* = a^* \setminus \{a\} $, satisfying that $ \modtwothree = \modtwothree^* $. $ \modtwothree $ cannot be recognized by any \textup{0-DHVA} since a 0-DHVA which accepts $ a^2 $ and $ a^3 $, also accepts $ a $ by Lemma \ref{lemma: diff}.

\subsubsection{Stateless 1-dimensional HVAs}

We now focus on stateless HVAs whose vectors have dimension 1 and demonstrate some results on stateless FAMWs. Note that stateless FAMs do not process the end-marker  $\dollar$ by definition, since their single state is also an accept state and the computation ends once $\dollar$ is scanned in an accept state. 

We start by comparing the class of languages recognized by stateless versions of $ k $BCAs, FAMWs, and BHVA(1)s. The equivalence between $ k $BCA and FAMWs and BHVA(1)s does not hold immediately in the stateless case. The reason is that the counters can only be updated by the set $ \{-1,0,1\} $ in a single step since additional states are needed to update the counters by arbitrary integers (see Fact \ref{fact: counter}). Furthermore, the capability of multiplication with negative rational numbers brings additional power to the stateless DBHVA(1)s. 

\begin{thm}\label{thm: famhva}
	$  \mathfrak{L} \textup{(0-XFAMW)} \subsetneq \mathfrak{L}\textup{(0-XBHVA(1))}$ where $ X  \in \{\textup{D,N}\} $.
\end{thm}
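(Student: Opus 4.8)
The plan is to establish the inclusion directly and then produce a witness for each mode $X\in\{D,N\}$. The inclusion $\mathfrak{L}(\textup{0-XFAMW})\subseteq\mathfrak{L}(\textup{0-XBHVA(1)})$ is immediate: a stateless FAMW carries a single positive-rational register, initializes it to $1$, and accepts when the product of the scanned multipliers equals $1$. Reading each multiplier $\lambda\in\Lambda$ as the $1\times 1$ matrix $\mypar{\lambda}$ and taking initial vector $\mypar{1}$ turns it into a one-dimensional blind homing vector automaton with exactly the same accepting computations, since the homing test ``current vector equals $\mypar{1}$'' coincides with ``product equals $1$''. No other feature of the HVA definition is used, so the transformation is valid for both $X=D$ and $X=N$.

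For properness I would exploit that a 0-XBHVA(1) may multiply by \emph{negative} rationals while a 0-XFAMW is confined to positive ones. In the deterministic case the witness is $\mathtt{MOD_2}=\{a^{2n}\mid n\geq 0\}$. It is recognized by a 0-DBHVA(1) with initial vector $\mypar{1}$ multiplying by $\mypar{-1}$ on each $a$: after reading $a^n$ the register holds $(-1)^n$, which returns to $\mypar{1}$ iff $n$ is even. It is \emph{not} recognized by any 0-DFAMW, which fixes one positive rational $\lambda$ for $a$ and accepts $a^n$ iff $\lambda^n=1$; for positive $\lambda$ this forces either $\lambda=1$ (accepting all of $a^*$) or acceptance of $\varepsilon$ alone, so no 0-DFAMW recognizes $\mathtt{MOD_2}$. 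This settles the proper inclusion for $X=D$.

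The nondeterministic case requires a witness beyond the reach of positive multipliers even under guessing. Here I would first fix the algebraic picture on which the argument hangs: recording each positive multiplier by its vector of prime exponents embeds the register monoid of a 0-NFAMW into the torsion-free group $\mathbb{Z}^d$, so a word is accepted precisely when some assignment of per-symbol multipliers yields a zero exponent-sum; a 0-NBHVA(1), by contrast, computes in $\mathbb{Z}/2\oplus\mathbb{Z}^d$, the $\mathbb{Z}/2$ summand recording the sign contributed by occurrences of $-1$. Since both models are commutative by Theorem \ref{thm: comm}, the separating language is Parikh-determined, and the plan is to single out one whose accepting Parikh set genuinely needs the torsion coordinate, recognize it by a 0-NBHVA(1) using a negative multiplier, and prove its unrecognizability over $\mathbb{Q}^+$.

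The hard part will be exactly this last lower bound. Nondeterministic positive-rational machines are stronger than they first appear: by splitting the multipliers attached to a single symbol into a balanced pair such as $\{2,\tfrac{1}{2}\}$, they can already enforce congruence conditions like ``even count'' through a balance equation rather than through a genuine sign flip, so a naive parity witness is reproducible positively. Consequently the proof cannot rest on a Parikh- or commutativity-based invariant alone; it must isolate a property supplied by the torsion element of $\mathbb{Q}\setminus\{0\}$ that no balancing identity over $\mathbb{Q}^+$ can imitate, and then verify that the chosen witness has it. Turning this distinction into a rigorous impossibility argument for the nondeterministic model is where the real difficulty lies, the inclusion and the deterministic separation above being routine by comparison.
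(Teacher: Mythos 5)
Your treatment of the inclusion is fine, and your deterministic separation is correct: $\mathtt{MOD_2}$ is recognized by the $\textup{0-DBHVA(1)}$ that multiplies by $(-1)$ on each $a$, while a $\textup{0-DFAMW}$ has a single positive multiplier $\lambda$ for $a$, and $\lambda^n=1$ forces $\lambda=1$ or $n=0$. The paper does the same job with a different witness, $\mathtt{EVENAB}=\{a^nb^n \mid n \mbox{ even}\}$, recognized by a $\textup{0-DBHVA(1)}$ via the multipliers $-2$ and $1/2$, and excluded for $\textup{0-DFAMW}$s because $m_a^2m_b^2=1$ and positivity force $m_am_b=1$, so $ab$ would be accepted. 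For $X=\textup{D}$ the two arguments are equally valid; yours is, if anything, simpler.

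The gap is the nondeterministic half, which you leave open --- and it cannot be closed, because for $X=\textup{N}$ the strict inclusion is false. The balancing device you describe upgrades to a full simulation. Take any $\textup{0-NBHVA(1)}$ with matrix sets $M_\sigma\subseteq\mathbb{Q}$; discard the matrix $0$, which never occurs in an accepting run, and assume the initial vector is nonzero (otherwise the language is just $\Gamma^*$ for the set $\Gamma$ of symbols carrying at least one matrix, which a $\textup{0-NFAMW}$ recognizes trivially). Pick a prime $p$ dividing no numerator or denominator of any entry, keep every positive $m\in M_\sigma$, and replace every negative $-r\in M_\sigma$ (with $r>0$) by the two positive multipliers $rp$ and $r/p$. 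In the resulting $\textup{0-NFAMW}$, a product of choices equals $1$ iff the $p$-exponents cancel and the absolute values multiply to $1$; cancellation forces the number of positions that chose a variant of a negative multiplier to be even, so such a run maps back to an accepting run of the original machine, while conversely an accepting run of the original machine (whose product is $1>0$, hence uses evenly many negative multipliers) lifts by splitting its negative positions equally between $rp$- and $r/p$-variants. Hence $\mathfrak{L}(\textup{0-NFAMW})=\mathfrak{L}(\textup{0-NBHVA(1)})$: the ``property supplied by the torsion element that no balancing identity over $\mathbb{Q}^+$ can imitate'' which you were hunting for does not exist in dimension one.

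Your instinct about this case in fact diagnoses a flaw in the paper's own proof, which claims both cases with one argument: after supposing a $\textup{0-XFAMW}$ recognizes $\mathtt{EVENAB}$, it takes ``$m_a$ and $m_b$'' to be \emph{the} positive rationals multiplied on reading $a$ and $b$, tacitly assuming one multiplier per symbol --- legitimate only for $X=\textup{D}$. Concretely, the language of the paper's witness machine (equally many and evenly many $a$'s and $b$'s) is accepted by the $\textup{0-NFAMW}$ with $\Lambda_a=\{6,\,2/3\}$ and $\Lambda_b=\{1/2\}$, since $6^{k}(2/3)^{n-k}(1/2)^m=2^{n-m}3^{2k-n}$ equals $1$ exactly when $m=n=2k$. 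So the theorem stands only for $X=\textup{D}$; that part your proposal proves completely, and the part you could not finish was not finishable.
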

\begin{proof}
	Let $ \mathtt{EVENAB}=\{a^nb^n |~n = 2k \mbox{ for some }k\geq 0 \}$. The following 0-DBHVA(1) recognizes $ \mathtt{EVENAB} $: The register is multiplied by $ (-2) $ and $ (1/2) $ when the machine reads an $ a $ and $ b $ respectively. 
	
	Suppose that there exists some  0-XFAMW  recognizing $ \mathtt{EVENAB} $. Let $ m_a $ and $ m_b $ be the positive rational numbers that are multiplied by the register upon reading $ a $ and $ b $. Since $ aabb\in \mathtt{EVENAB} $, it is true that $ m_a^2m_b^2=1 $. Since both $ m_a $ and $ m _b$ are positive, it is not possible that $ m_am_b=-1 $. It follows that $ m_am_b=1 $, in which case the string $ ab $ is accepted and we get a contradiction. Hence we conclude that $ \mathtt{EVENAB} $ cannot be recognized by any 0-XFAMW.
\end{proof}

When the register is multiplied with only positive rational numbers, then we have $  \mathfrak{L} $(0-XFAMW)=$\mathfrak{L}$(0-XBHVA(1))$ _\mathbb{Q^+} $.

For real-time and blind machines, we can state the following result.

\begin{cor}\label{thm: commu}
	If $ L \in \mathfrak{L}(\textup{0-XFAMW})$ where $X \in  \{ \textup{D,N} \} $, then $ L $ is commutative.
\end{cor}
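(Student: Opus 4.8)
The plan is to reduce this immediately to Theorem~\ref{thm: comm}, which already shows that any language recognized by a \textup{0-XBHVA} with a \emph{commutative} set of matrices is commutative. The crucial observation is that a \textup{0-XFAMW} is essentially a \textup{0-XBHVA(1)}: its register holds a single (positive) rational number that gets multiplied by a rational from the finite multiplier set at each step, and this is exactly a homing vector automaton of dimension one, whose transition ``matrices'' are $1\times 1$, i.e.\ scalars. This identification is the content of Theorem~\ref{thm: famhva}, which gives $\mathfrak{L}(\textup{0-XFAMW}) \subseteq \mathfrak{L}(\textup{0-XBHVA(1)})$, and I would invoke it to pass from a \textup{0-XFAMW} recognizing $L$ to a \textup{0-XBHVA(1)} recognizing the same $L$.

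First I would note that for a \textup{BHVA} of dimension one the set of transition matrices is a set of $1\times 1$ matrices; since multiplication of scalars is commutative, this set is automatically commutative, independently of which rationals occur as multipliers. Hence \emph{every} \textup{0-XBHVA(1)} uses a commutative set of matrices, so the hypothesis of Theorem~\ref{thm: comm} is satisfied for free.

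Putting the pieces together: given $L \in \mathfrak{L}(\textup{0-XFAMW})$ with $X \in \{\textup{D},\textup{N}\}$, by Theorem~\ref{thm: famhva} there is a \textup{0-XBHVA(1)} $\mathcal{V}$ recognizing $L$, and by the preceding paragraph its set of ($1\times 1$) matrices commutes. Applying Theorem~\ref{thm: comm} to $\mathcal{V}$ then yields that $L$ is commutative, which completes the argument.

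I do not expect a genuine obstacle here: the whole content is the triviality that one-dimensional matrices commute, so the commutativity hypothesis of Theorem~\ref{thm: comm} holds automatically. The only point that deserves a sentence of care is checking that the translation from a \textup{0-XFAMW} to a \textup{0-XBHVA(1)} preserves statelessness together with the real-time and blind restrictions; it does, since the register-to-vector correspondence is applied transition-by-transition and introduces no new states and no vector-status checks during the computation.
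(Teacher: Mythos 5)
Your proposal is correct and matches the paper's own proof essentially verbatim: both pass from the \textup{0-XFAMW} to a \textup{0-XBHVA(1)} via Theorem~\ref{thm: famhva}, note that multiplication in dimension one is commutative, and then invoke Theorem~\ref{thm: comm}. The extra sentence you add about statelessness and the real-time/blind restrictions being preserved is a reasonable (if implicit in the paper) sanity check, but it does not change the argument.
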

\begin{proof}
By Theorem \ref{thm: famhva}, $ L $ is accepted by a 0-XBHVA(1). Since multiplication in dimension 1 is commutative, the result follows by Theorem \ref{thm: comm}.
\end{proof}

Let us recall Fact \ref{fact: bounded}, which states that a bounded language is accepted by a 1NFAMW iff it is semilinear \cite{ISK76}. In the next theorem, we prove a similar result and characterize the class of languages recognized by 0-DFAMWs. We show that any language recognized by a 0-DFAMW is commutative and semilinear. Furthermore, any commutative language whose Parikh image is the set of nonnegative solutions to a system of linear homogenous Diophantine equations can be recognized by a 0-DFAMW. 

\begin{thm}
$ L \in \mathfrak{L}(\textup{0-DFAMW}) $ iff $ L $ is commutative and $\phi(L) $ is the set of nonnegative integer solutions to a system of linear homogeneous Diophantine equations.
\end{thm}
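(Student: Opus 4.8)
The plan is to exploit the fact that a stateless real-time FAM without equality is nothing more than an assignment of a single positive rational multiplier to each input symbol. Since the machine has one state and cannot consult its register, its only data is, for each $\sigma \in \Sigma$, a fixed multiplier $m_\sigma \in \mathbb{Q}^+$ drawn from $\Lambda$; after reading a word $w$ the register holds $\prod_{\sigma \in \Sigma} m_\sigma^{|w|_\sigma}$, with acceptance occurring exactly when this product equals $1$. The whole computation therefore depends on $w$ only through its Parikh image $\phi(w)$. Consequently, in the forward direction $L$ being commutative is immediate (if $w \in L$ and $\phi(w')=\phi(w)$ then $w'\in L$), while in the converse direction commutativity of $L$ is precisely what lets me reduce membership of $w$ to a condition on $\phi(w)$.

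For the forward direction I would pass to prime factorizations. Let $p_1,\dots,p_r$ be the finitely many primes occurring in the numerators or denominators of the $m_\sigma$, and write $m_\sigma = \prod_{j=1}^{r} p_j^{a_{j\sigma}}$ with each $a_{j\sigma}\in\mathbb{Z}$. Then the register value after reading $w$ is $\prod_{j} p_j^{\,\sum_\sigma a_{j\sigma}|w|_\sigma}$, and by unique factorization this equals $1$ if and only if $\sum_\sigma a_{j\sigma}|w|_\sigma = 0$ for every $j$. Hence $w\in L$ iff $\phi(w)$ is a nonnegative integer solution of the homogeneous system with integer coefficient matrix $(a_{j\sigma})$, which shows $\phi(L)$ is exactly the solution set claimed.

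For the converse I would run this construction backwards. Given a commutative $L$ whose Parikh image is the set of nonnegative integer solutions of a homogeneous system $\sum_\sigma a_{j\sigma} x_\sigma = 0$ ($j=1,\dots,r$) with $a_{j\sigma}\in\mathbb{Z}$, I choose $r$ distinct primes $p_1,\dots,p_r$ and build a \textup{0-DFAMW} whose multiplier for $\sigma$ is $m_\sigma = \prod_{j=1}^r p_j^{a_{j\sigma}} \in \mathbb{Q}^+$. The computation analysed above shows the register returns to $1$ after reading $w$ exactly when $\phi(w)$ solves the system; by commutativity this happens iff $w\in L$, so the machine recognizes $L$.

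The proof is essentially a dictionary between the multiplicative structure of the register and the additive structure of the Parikh images, so there is no deep obstacle; the step requiring genuine care is the exactness of this correspondence. One must invoke that the chosen primes are distinct, so that a product of prime powers equals $1$ if and only if all exponents vanish, guaranteeing that ``register $=1$'' captures the linear system with no spurious solutions. I would also verify the boundary cases: that every $m_\sigma$ remains a positive rational (hence a legal FAM multiplier), that the empty word, whose image is the zero vector, is accepted on both sides, and that a possibly partial transition function is harmless, since forbidding a symbol $\sigma$ is equivalent to adjoining the homogeneous equation $x_\sigma = 0$ via a fresh prime.
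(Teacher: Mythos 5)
Your proposal is correct and follows essentially the same route as the paper: both directions rest on the same prime-factorization dictionary, translating ``register returns to $1$'' into a system of linear homogeneous Diophantine equations on the Parikh image, and the converse construction (distinct primes $p_j$, multiplier $\prod_j p_j^{a_{j\sigma}}$ for $\sigma$) is identical to the paper's. The only cosmetic difference is that you establish commutativity of $L$ inline, from the observation that the register value depends only on $\phi(w)$, whereas the paper cites its earlier corollary obtained via the equivalence with one-dimensional blind homing vector automata; the underlying fact (commutativity of rational multiplication) is the same.
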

\begin{proof} Let  $ L $ be a language over the alphabet $ \Sigma=\{\sigma_1,\dots,\sigma_n\}  $ recognized by a 0-DFAMW $ \mathcal{V} $. Let $A=\{a_1,a_2,\dots,a_n\}$ be the set of rational numbers such that the register is multiplied with $a_i$ upon reading $\sigma_i$. Let $\{p_1,p_2,\dots ,p_k\}$ be the set of prime factors of the denominators and the numerators of the rational numbers in $A$. Then each $ a_i $ can be expressed as 
$$
a_i=\frac{p_1^{x_{1_i}}p_2^{x_{2_i}}\cdots p_k^{x_{k_i}}}{p_1^{y_{1_i}}p_2^{y_{2_i}}\cdots p_k^{y_{k_i}}} .
$$  

If a string $w$ is accepted by $\mathcal{V}$, then the value of the register should be equal to 1 after reading $ w $, which is possible only if
$$
a_1^{w_{|\sigma_1|}}a_2^{w_{|\sigma_2|}}\cdots a_n^{w_{|\sigma_n|}}=1.
$$
This leads to the following system of linear Diophantine equations in $ n$ variables.

\begin{align*}
(x_{1_1}-y_{1_1})w_{|\sigma_1|}+(x_{1_2}-y_{1_2})w_{|\sigma_2|}+\dots + (x_{1_n}-y_{1_n})w_{|\sigma_n|}&=0\\
(x_{2_1}-y_{2_1})w_{|\sigma_1|}+(x_{2_2}-y_{2_2})w_{|\sigma_2|}+\dots + (x_{2_n}-y_{2_n})w_{|\sigma_n|}&=0\\
&\vdots\\
(x_{k_1}-y_{k_1})w_{|\sigma_1|}+(x_{k_2}-y_{k_2})w_{|\sigma_2|}+\dots + (x_{k_n}-y_{k_n})w_{|\sigma_n|}&=0\\
\end{align*}


For $ j=1,\dots ,k $, the $ j $'th equation is stating that the exponent of $ p_j $ is equal to 0 after reading $ w$. 

Hence, we can conclude that the Parikh images of the accepted strings are the nonnegative solutions to a system of linear homogeneous Diophantine equations. $ L $ is commutative by Theorem \ref{thm: commu}. (One can further conclude that $ L $ is semilinear.)  

\newpage

For the converse, suppose that we are given a commutative language $ L $ over the alphabet $ \Sigma = \{\sigma_1,\dots,\sigma_n\} $. Let $ T $ be the set of Parikh images of the strings in $ L $. $ T $ is the set of nonnegative solutions to a system of, say, $ k $ linear homogeneous Diophantine equations in $ n $ unknowns,

\begin{align*}
b_{11}t_1 + b_{12}t_2+\dots +b_{1n}t_n&=0\\
b_{21}t_2 + b_{22}t_2+\dots +b_{2n}t_n&=0\\
&\vdots\\
b_{k1}t_1+b_{k2}t_2+\dots + b_{kn}t_n&=0\\
\end{align*} 
where $ \mypar{t_1 ~~~t_2 ~~~\dots~~~ t_n} \in T$. 

We construct a 0-DFAMW $ \mathcal{V} $ recognizing $ L $ as follows. We choose a set of $ k $ distinct prime numbers, $\{p_1,p_2,\dots,p_k\} $. When $ \mathcal{V}$ reads $ \sigma_i $, the register is multiplied by $$ a_i= p_1^{b_{1i}}p_2^{b_{2i}}\cdots p_k^{b_{ki}}.$$ Suppose that a string $ w $ is accepted by $ \mathcal{V} $. Then 
$$
a_1^{w_{|\sigma_1|}}a_2^{w_{|\sigma_2|}}\cdots a_n^{w_{|\sigma_n|}}=1.
$$
The product is equal to 1 if all of the exponents of the prime factors are equal to 0, that is when $ \mypar{w_{|\sigma_1|} ~~~w_{|\sigma_2|}~~~ \dots ~~~ w_{|\sigma_n|} }\in T$. Hence we see that the set of accepted strings are those with Parikh image in $ T $. Since $ L $ is commutative, any string $ w $ with $ \phi(w) \in T $ belongs to $ L $ and we conclude that $ \mathcal{V} $ recognizes $ L $.

\end{proof}

Note that $ \mathfrak{L} $(0-DFAMW)=$\mathfrak{L}$(0-1DFAMW), since a 0-1DFAMW that has an instruction to stay on some input symbol cannot read the rest of the string. The reasoning of Lemma \ref{lem: detrt} applies. 

\subsubsection{Additional Results on Stateless Homing Vector Automata}

One sees that the nonregular language $\mathtt{AB} = \{ a^nb^n | n \geq 0 \} $ cannot be recognized by any 0-NHVA($ k $) for any $ k $, since $ \mathtt{AB} \neq \mathtt{AB}^* $. On the other hand, $ \Leq = \{ x \in \{a,b\}^* \mid |x|_a = |x|_b \} $ can be recognized by a 0-DBHVA(1) with initial vector $  (1) $, and transition matrices $ A_a = (2) $ and $ A_b = (1/2) $. It is possible to recognize the star of $\mathtt{AB} $, even in the deterministic and blind computation mode with a stateless homing vector automaton. Note that $ \mathtt{AB}^* $ cannot be recognized by any 1NFAMW \cite{ISK76}. The proof is due to Abuzer Yakaryılmaz and can be found in \cite{SYS19}.

\begin{thm}
	\label{thm:0HVA-upal}
	The language $\mathtt{AB}^*= \{ \{a^nb^n\}^* \mid n \geq 0 \}$ is recognized by a \textup{0-DBHVA(10)}.
\end{thm}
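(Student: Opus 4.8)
The plan is to exhibit an explicit stateless, real-time, deterministic, blind homing vector automaton of dimension $10$. Since such a machine has a single state and may not consult its vector until the end of the input, its behaviour is completely determined by two $10\times 10$ matrices $A_a, A_b$ together with an initial row vector $v$: a word $w = w_1\cdots w_m$ is accepted precisely when $v\,A_{w_1}\cdots A_{w_m} = v$. Thus the whole task reduces to choosing $v$, $A_a$ and $A_b$ so that this fixed-point identity holds exactly for the members of $\mathtt{AB}^*$. First I would record the structural description of the language: reading $a$ as $+1$ and $b$ as $-1$, a word lies in $\mathtt{AB}^*$ if and only if the associated walk is nonnegative, ends at height $0$, and each maximal excursion above $0$ rises monotonically to a peak and then descends monotonically back to $0$. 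Equivalently, an $a$ may immediately follow a $b$ only when the current height is $0$ (a block boundary), and a $b$ may never drive the height below $0$.

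The construction I would use encodes inside the vector the configuration of the obvious deterministic machine for $\mathtt{AB}^*$: a phase flag distinguishing the ascending part of a block (while the $a$'s are read) from its descending part (while the $b$'s are read), a height counter, a persistent entry equal to $1$ so that the matrices can add and subtract $1$ from the counter linearly, and one or more dedicated trap entries. I would set the initial configuration to be the one that occurs at a completed block boundary (descending phase, counter $0$), so that valid words, which always end at such a boundary, return the phase and counter coordinates to their initial values. The matrices are then built block-wise: in the ascending phase $A_a$ increments the counter, while $A_b$ switches to the descending phase and decrements; in the descending phase $A_b$ decrements, while $A_a$ opens a fresh block by resetting the counter. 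Over a correctly formed block $a^nb^n$ the counter and phase coordinates return exactly to their boundary values, so a concatenation of such blocks restores the whole of $v$.

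Rejection of non-members is the delicate point. The valley test is naturally linear: on a descending-phase $a$ I would have $A_a$ add the current counter value into a trap coordinate, so that a legal boundary $a$ (counter $0$) contributes nothing, whereas an $a$ at positive height contributes a positive amount and spoils the trap irreversibly. Correctness would be verified by induction on the number of completed blocks together with a case analysis on the first violating transition. I would then count the coordinates — phase, counter, the constant $1$, the trap, and the auxiliary entries needed to keep all of these operations simultaneously linear and integer-valued — to reach dimension $10$; allowing richer matrix entries would permit a smaller vector, as is already illustrated elsewhere in this chapter.

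The main obstacle is that the conditions defining $\mathtt{AB}^*$ are, at bottom, zero-tests on the height, and a blind machine may not look at its vector: a fixed, value-independent linear map cannot manufacture a quantity that is nonzero exactly when the counter is zero. The ``add the counter to the trap'' device above copes with valleys at positive height, but it is fooled by negative excursions, since a valley at height $+k$ can be cancelled in the trap sum by a later valley at height $-k$. The genuinely hard requirement is therefore to forbid the walk from ever dipping below $0$ (equivalently, to poison the computation at the first underflowing $b$), which is precisely the zero-test that resists a direct linear implementation. Overcoming this forces a more refined encoding of the counter in which an underflow lands irreversibly in the trap subspace while every legal descending step leaves that subspace untouched; arranging this to be simultaneously deterministic, integer-valued and self-restoring on valid words is the technical heart of the theorem, and is the construction carried out in detail by Yakaryılmaz in \cite{SYS19}.
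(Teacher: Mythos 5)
You have not proved the theorem. The reduction you begin with is right --- a stateless, real-time, deterministic, blind machine of dimension $10$ is nothing but a triple $(v,A_a,A_b)$, and $w=w_1\cdots w_m$ is accepted iff $v\,A_{w_1}\cdots A_{w_m}=v$ --- and your structural characterization of $\mathtt{AB}^*$ (nonnegative walk, ending at height $0$, with an $a$ allowed to follow a $b$ only at height $0$) is correct. But the entire content of the theorem is the exhibition of such a triple together with the two-directional verification that every member restores $v$ while every violation (wrong final height, an $a$ at a positive-height valley, an underflowing $b$) moves the vector irretrievably off $v$. You never produce the matrices: the sketch you do give (phase flag, height counter, persistent $1$, additive trap fed by the current height at each descending-phase $a$) is one you yourself then refute, observing that the trap sum can be cancelled by valleys at heights $+k$ and $-k$, and that a blind, value-independent linear update cannot realize the zero-test needed to poison the computation at the first underflowing $b$ (the set where the counter is nonzero is not a subspace, so no linear functional vanishes exactly on it). Your resolution --- that ``a more refined encoding'' exists and ``is the construction carried out in detail by Yakaryılmaz in \cite{SYS19}'' --- is a citation, not an argument; in particular the dimension $10$ is never accounted for, since the coordinate count you promise is never carried out, and could not be carried out for a scheme you have already shown to be broken.

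In fairness, the thesis itself does the same thing at this point: it states that the proof is due to Abuzer Yakaryılmaz and refers the reader to \cite{SYS19}, so there is no in-paper construction to compare against. Still, as a standalone proof attempt yours is incomplete exactly at the spot you flag: making underflow detection simultaneously blind, deterministic, stateless, integer-valued, and self-restoring on members of $\mathtt{AB}^*$ is the whole difficulty, and deferring it to the literature leaves the statement unproven.
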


Nondeterministic HVAs are more powerful than their deterministic variants in terms of language recognition in general. In the next theorem, we show that this is also true for the stateless models.

\begin{thm}
	\begin{enumerate}
		\item $\mathfrak{L} 
		\textup{(0-DBHVA)} \subsetneq  \mathfrak{L} \textup{(0-NBHVA)}. $
		\item $  \mathfrak{L} 
		\textup{(0-DHVA)} \subsetneq  \mathfrak{L} \textup{(0-NHVA)}. $
	\end{enumerate}
\end{thm}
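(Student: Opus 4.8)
The two inclusions are immediate: a stateless deterministic (blind) homing vector automaton is the special case of its nondeterministic counterpart whose transition relation is single-valued, so $\mathfrak{L}(\textup{0-DBHVA}) \subseteq \mathfrak{L}(\textup{0-NBHVA})$ and $\mathfrak{L}(\textup{0-DHVA}) \subseteq \mathfrak{L}(\textup{0-NHVA})$. The plan is to establish both strict inequalities simultaneously using the single witness language $\modtwothree = \{a^2,a^3\}^* = a^* \setminus \{a\}$. Recall that $\modtwothree$ has already been argued to lie outside $\mathfrak{L}(\textup{0-DHVA})$: if some 0-DHVA accepted it, then it would accept both $a^2$ and $a^3 = a^2 a$, whereupon Lemma \ref{lemma: diff} would force it to also accept $a \notin \modtwothree$, a contradiction. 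Since $\mathfrak{L}(\textup{0-DBHVA}) \subseteq \mathfrak{L}(\textup{0-DHVA})$, the language $\modtwothree$ is recognized by neither deterministic model.

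It therefore remains to exhibit a nondeterministic machine for $\modtwothree$, and I would do this already in the most restricted (blind, one-dimensional) setting, so that the same construction serves both parts. The idea is to let a 0-NBHVA(1) $\mathcal{V}$ start from the initial vector $\mypar{1}$ and, upon reading each $a$, nondeterministically multiply its register by one of the rationals in $M = \{2,\frac{1}{2},\frac{1}{4}\}$, accepting iff the register has returned to $1$. Acceptance of $a^n$ thus amounts to the existence of a length-$n$ sequence over $M$ whose product is $1$. The key combinatorial claim to verify is that such a sequence exists precisely when $n \neq 1$: the empty product handles $n=0$; no single element of $M$ equals $1$, so $n=1$ is rejected; the product $2 \cdot \frac{1}{2} = 1$ handles $n=2$ and $2 \cdot 2 \cdot \frac{1}{4} = 1$ handles $n=3$; and any larger $n$ is obtained by padding one of these two base solutions with the requisite number of neutral pairs $2 \cdot \frac{1}{2}$, chosen according to the parity of $n$. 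Hence $L(\mathcal{V}) = \modtwothree$.

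Combining the two observations finishes the argument: $\modtwothree \in \mathfrak{L}(\textup{0-NBHVA}) \subseteq \mathfrak{L}(\textup{0-NHVA})$, while $\modtwothree \notin \mathfrak{L}(\textup{0-DBHVA})$ and $\modtwothree \notin \mathfrak{L}(\textup{0-DHVA})$, which yields both proper inclusions at once. The only real content lies in the middle step, namely choosing the multiplier set $M$ and checking the parity-based padding argument; everything else is bookkeeping about containments among the model families. The mild subtlety to watch is that we need $1 \notin M$ (this is exactly what rejects the string $a$), while $M$ must still contain a mutually inverse pair together with a further element enabling a length-three product equal to $1$; dropping either feature would change the recognized language and break the witness.
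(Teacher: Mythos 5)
Your proof is correct, and structurally it is the same proof as the paper's: both inclusions are noted as trivial, and both strict inclusions are obtained at once from a single witness language that is recognized by a one-dimensional stateless nondeterministic blind machine but excluded from every $\textup{0-DHVA}(k)$ via Lemma \ref{lemma: diff}. The only real difference is the witness. The paper uses the nonregular language $\mathtt{LEQ}=\{x \in \{a,b\}^* \mid |x|_a \leq |x|_b\}$, recognized by a 0-NBHVA(1) that multiplies by $(2)$ on each $a$ and nondeterministically by $(1/2)$ or $(1)$ on each $b$; its exclusion from $\mathfrak{L}(\textup{0-DHVA})$ follows because $b$ and $ba$ are members while $a$ is not. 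You instead use the unary regular language $\modtwothree$, whose exclusion from $\mathfrak{L}(\textup{0-DHVA})$ the paper itself already establishes by exactly your argument (accepting $a^2$ and $a^3$ forces accepting $a$), and you supply the missing ingredient, a nondeterministic construction, with multiplier set $\{2,\frac{1}{2},\frac{1}{4}\}$. Your verification is sound: acceptance of $a^n$ amounts to choosing $n$ exponents from $\{1,-1,-2\}$ summing to $0$, which is possible precisely when $n\neq 1$ by your parity-padding argument, and the requirement $1\notin \mathrm{M}$ is indeed what rejects $a$. What your choice buys: the separation is witnessed already by a unary regular language, and as a by-product you show $\modtwothree\in\mathfrak{L}(\textup{0-NBHVA(1)})$, a positive data point for the paper's open question of whether stateless nondeterministic HVAs recognize every regular $L$ satisfying $L=L^*$. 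What the paper's choice buys is nothing essential for this particular theorem; the two witnesses are interchangeable here.
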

\begin{proof}
	Let us construct a 0-NBHVA(1) $ \mathcal{V} $ recognizing $\mathtt{LEQ}=\{x \in \{a,b\}^* \mid |x|_a \leq |x|_b\}$. Starting with the initial vector $(1) $, $ \cal V $ multiplies its vector with $ A=(2) $ for each $a$ and with $B_1=(1/2)$ or $B_2=(1) $ for each $b$ nondeterministically.
	
	Suppose that $\mathtt{LEQ}$ can be recognized by a 0-DHVA($k$) $ \cal V' $. The strings $ w_1=b $ and $ w_2=ba $ are accepted by $ \cal V' $. By Lemma \ref{lemma: diff}, the string $w_3=a$ is also accepted by $ \cal V' $. We obtain a contradiction, and conclude that $\mathtt{LEQ}$ cannot be recognized by any 0-DHVA($k$).
\end{proof}

\newpage
Now let us compare the language recognition power of 1 and 2 dimensional stateless HVAs.

\begin{thm}\label{thm:12}
\begin{enumerate}
\item $ \mathfrak{L} 
\textup{(0-DBHVA(1))} \subsetneq \mathfrak{L} \textup{(0-DBHVA(2))}. $
\item $ \mathfrak{L} 
\textup{(0-NBHVA(1))} \subsetneq  \mathfrak{L} \textup{(0-NBHVA(2))}. $
\end{enumerate}
\end{thm}
\begin{proof}
Note that the language $ \mathtt{AB_1}^*=(ab)^* $ can be recognized by a 0-DBHVA(2) by Example \ref{ex: 2}. Assume that $ \mathtt{AB_1}^* $ is recognized by a 0-NBHVA($1$). Since $\mathtt{AB_1}^*$ is not equal to its reverse, by Corollary \ref{cor: commreverse} we get a contradiction. We conclude that $\mathtt{AB_1}^*$ cannot be recognized by any 0-NBHVA(1). 	
\end{proof}

Now, we compare the blind and non-blind versions of one dimensional HVAs.

\begin{thm}
\begin{enumerate}
\item 	$ \mathfrak{L} 
\textup{(0-DBHVA(1))} \subsetneq  \mathfrak{L} \textup{(0-DHVA(1))}. $
\item 	$ \mathfrak{L} 
\textup{(0-NBHVA(1))} \subsetneq  \mathfrak{L} \textup{(0-NHVA(1))}. $

\end{enumerate}
\end{thm}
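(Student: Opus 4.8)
The plan is to establish both parts at once, since the easy inclusions hold trivially and a single witness language separates the blind and non-blind models in dimension one. For the inclusions $\mathfrak{L}\textup{(0-DBHVA(1))} \subseteq \mathfrak{L}\textup{(0-DHVA(1))}$ and $\mathfrak{L}\textup{(0-NBHVA(1))} \subseteq \mathfrak{L}\textup{(0-NHVA(1))}$, I would simply observe that a blind machine is a special case of a non-blind one: given any 0-DBHVA(1) (resp.\ 0-NBHVA(1)), one obtains an equivalent 0-DHVA(1) (resp.\ 0-NHVA(1)) by keeping exactly the same transitions and letting them ignore the status component $\omega$. Thus the only real content is the properness of the two inclusions.

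For the separating language I would use $\mathtt{DYCK}$, the language of balanced strings of brackets. Recall that it was already shown above that $\mathtt{DYCK} \in \mathfrak{L}(\textup{0-DHVA(1)})$, and since every deterministic machine is a nondeterministic one, also $\mathtt{DYCK} \in \mathfrak{L}(\textup{0-NHVA(1)})$. It therefore remains to argue that $\mathtt{DYCK}$ cannot be recognized by any blind stateless one-dimensional machine. The key point is that in dimension one every transition matrix is a single scalar, so the set of matrices used by any 0-DBHVA(1) or 0-NBHVA(1) is automatically commutative. By Theorem \ref{thm: comm}, any language recognized by such a machine must be commutative. But $\mathtt{DYCK}$ is not commutative: the string $()$ is balanced while its permutation $)($ is not, so one belongs to $\mathtt{DYCK}$ and the other does not. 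Hence $\mathtt{DYCK} \notin \mathfrak{L}(\textup{0-DBHVA(1)})$ and $\mathtt{DYCK} \notin \mathfrak{L}(\textup{0-NBHVA(1)})$, which yields the strictness of both inclusions.

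Since the argument reduces to invoking the already-established recognizability of $\mathtt{DYCK}$ by a non-blind machine together with the commutativity obstruction of Theorem \ref{thm: comm}, there is no substantial technical obstacle here; the only point requiring care is the observation that one-dimensional matrices always commute, which is what lets Theorem \ref{thm: comm} apply unconditionally to every 0-XBHVA(1). It is worth noting that the same witness handles both the deterministic and the nondeterministic parts, since the non-recognizability argument depends only on blindness (through commutativity in dimension one) and not on determinism.
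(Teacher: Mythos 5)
Your proposal is correct and takes essentially the same route as the paper's proof: the paper also establishes the trivial inclusions and then separates with a witness language recognized by a non-blind stateless machine (there, $\mathtt{AB_1}^*=(ab)^*$, via a 0-DHVA(1) that zeroes its vector on an out-of-order symbol) but unrecognizable by any 0-NBHVA(1) because $1\times 1$ matrices commute, invoking Corollary~\ref{cor: commreverse} (the reversal consequence of Theorem~\ref{thm: comm}). Your choice of $\mathtt{DYCK}$ with Theorem~\ref{thm: comm} applied directly is an equally valid instantiation of the exact same commutativity obstruction, and it even reuses the 0-DHVA(1) construction for $\mathtt{DYCK}$ already given in the paper rather than building a new machine.
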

\begin{proof}
Let us construct a 0-DHVA(1) $ \mathcal{V} $ recognizing $ \mathtt{AB_1}^*=(ab)^* $. The initial vector is equal to (1). For each scanned $ a $, $ \mathcal{V} $ multiplies its vector with $ A_==(2) $ if it is equal to its initial value, and with $ A_{\neq}=(0) $ otherwise. For each scanned $ b $, $ \mathcal{V} $ multiplies its vector with $ B_==(0) $ if it is equal to its initial value, and with $ B_{\neq}=(1/2) $ otherwise. $\mathtt{AB_1}^*$ cannot be recognized by any 0-NBHVA(1) as we saw in the proof of Theorem \ref{thm:12}.
\end{proof}

Let us look at some closure properties for the stateless models. All of the classes associated with the stateless models are closed under the star operation since for any language recognized by a stateless homing vector automaton, it is true that $ L=L^* $.

\begin{thm}
	\begin{enumerate}
		\item $\bigcup_k \mathfrak{L}\textup{(0-DBHVA($ k $))} $ is closed under the following operation:
		\begin{enumerate}
			\item intersection
		\end{enumerate}
		\item $\bigcup_k  \mathfrak{L}\textup{(0-DBHVA($ k $))} $ and $ \mathfrak{L}\textup{(0-DHVA)} $ are not closed under the following operations:
		\begin{enumerate}
			\item union
			\item complement
			\item concatenation
		\end{enumerate}
	\end{enumerate}
\end{thm}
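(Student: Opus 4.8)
=== PROOF PROPOSAL ===

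The plan is to prove the closure under intersection (part 1(a)) by a direct product construction, and to establish the three non-closure results (part 2) by exhibiting witness languages together with their images under the operation in question, appealing to languages already shown to be unrecognizable.

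For the intersection closure, I would adapt verbatim the block-diagonal construction used in Theorem \ref{thm: closed} for the non-stateless blind case. Given two stateless machines $\mathcal{V}_1$ (a 0-DBHVA($k_1$) with initial vector $v_1$) and $\mathcal{V}_2$ (a 0-DBHVA($k_2$) with initial vector $v_2$), I build a 0-DBHVA($k_1+k_2$) $\mathcal{V}$ whose initial vector is $\mypar{v_1~~v_2}$ and whose transition matrix for each symbol $\sigma$ is the block-diagonal matrix carrying $A_\sigma^{(1)}$ and $A_\sigma^{(2)}$. The single state of $\mathcal{V}$ causes no difficulty: since both machines are stateless there is no product state set to track, and the construction simply runs both vectors in parallel. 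Because the computation is blind, the combined vector is checked only at the end, and it equals its initial value $\mypar{v_1~~v_2}$ precisely when both component vectors have returned to their respective initial values, i.e. when the input is accepted by both $\mathcal{V}_1$ and $\mathcal{V}_2$. This yields $L(\mathcal{V}) = L(\mathcal{V}_1)\cap L(\mathcal{V}_2)$.

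For the non-closure results I would supply explicit witnesses. For union, take $L_1=\modm$ with $m=2$, that is $\{a^i \mid i \equiv 0 \bmod 2\}$, and $L_2=\{a^i \mid i\equiv 0\bmod 3\}$; both are recognized by stateless 0-DBHVAs by Example \ref{ex: 2}. Their union contains $a^2$ and $a^3$ but not $a$; by Corollary \ref{cor: gcd}, any 0-DHVA accepting both $a^2$ and $a^3$ (with $\gcd(2,3)=1$) must accept $a^*$, hence $a$, a contradiction, so $L_1\cup L_2$ is recognized by no 0-DHVA and a fortiori by no 0-DBHVA. For complement, I would note that the classes recognized by these stateless models always satisfy $L=L^*$ (as recorded in the observations of Section \ref{sec: hva-stateless}); since $\modm$ is recognized but its complement is a finite language together with the missing multiples, which cannot satisfy $L=L^*$ (it contains $a$ but, being finite in its small part and then cofinite, is not a star), complementation leaves the class. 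A cleaner route for complement, which I would prefer to carry out, is to use that the class is closed under intersection but demonstrably not under union, from which non-closure under complement follows by De Morgan's law; this is exactly the argument used for part (f) of Theorem \ref{thm: closed}, and it sidesteps any delicate computation. For concatenation, take $L=\modtwothree=\{a^2,a^3\}^*=a^*\setminus\{a\}$, which is recognized by a stateless DBHVA, and observe that $L\cdot L$ or a suitable concatenation with another stateless-recognizable language yields a language separating two strings whose difference forces acceptance of a forbidden short string via Lemma \ref{lemma: diff}, contradicting recognizability.

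The main obstacle I anticipate is the concatenation case: unlike union and complement, there is no immediate De Morgan shortcut, and I must choose the two concatenated languages and the separating strings so that the resulting language genuinely escapes the stateless class while its factors remain inside it. The key technical lever will be Lemma \ref{lemma: diff} (if $w_1$ and $w_1w_2$ are both accepted by a stateless deterministic HVA then $w_2$ is accepted) together with the constraint $L=L^*$; I would engineer the witness so that membership of two strings of the form $w_1$ and $w_1w_2$ is forced by the concatenation, while $w_2$ itself lies outside the target language. Getting the factor languages to be simultaneously recognizable by stateless blind machines (for the 0-DBHVA statement) and non-recognizable after concatenation is the delicate balancing act, and I expect most of the effort to go into verifying these witness languages rather than into the product construction.
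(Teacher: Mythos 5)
Parts 1(a) and 2(a) of your proposal are correct and match the paper: the block-diagonal product of two stateless blind machines is again a stateless blind machine, and $\mathtt{MOD_2}\cup\mathtt{MOD_3}$ contains $a^2$ and $a^3$ but not $a$, contradicting Lemma \ref{lemma: diff}. The genuine gap is in the concatenation case. Your witness rests on a false premise: you assert that $\modtwothree=\{a^2,a^3\}^*$ ``is recognized by a stateless DBHVA,'' but the paper proves precisely the opposite (end of the subsection on regular languages): any 0-DHVA accepting both $a^2$ and $a^3$ must also accept $a$ by Lemma \ref{lemma: diff}, so $\modtwothree$ is not recognized by any stateless deterministic HVA, blind or not, and it cannot serve as a factor of your concatenation. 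Moreover, since $\varepsilon\in\modtwothree$ and $\modtwothree$ is closed under concatenating its members, $\modtwothree\cdot\modtwothree=\modtwothree$, so that concatenation produces nothing new. The repair is immediate and uses exactly the two languages you already employed for union: $\mathtt{MOD_2}\cdot\mathtt{MOD_3}=\{a^{2i+3j}\mid i,j\geq 0\}=a^*\setminus\{a\}=\modtwothree$. Here $\modtwothree$ appears as the \emph{result} of the concatenation, not as a factor: the factors $\mathtt{MOD_2}$ and $\mathtt{MOD_3}$ are stateless-recognizable by Example \ref{ex: 2}, while their product is not recognizable by any 0-DHVA. This is the paper's argument.

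A secondary problem concerns complement: you say you would ``prefer'' the De Morgan route (closed under intersection, not closed under union, hence not closed under complement). That argument is sound for $\bigcup_k \mathfrak{L}(\textup{0-DBHVA}(k))$, but the theorem also asserts non-closure of $\mathfrak{L}(\textup{0-DHVA})$ under complement, and for the non-blind stateless class no intersection-closure is available (part 1 covers only the blind class; the product construction fails for non-blind machines because the equality test applies to the whole combined vector, not to each component separately). So the De Morgan shortcut leaves half of the claim unproved. Your first, direct argument --- $\overline{\modm}$ contains $a$, and any stateless HVA accepting $a$ accepts every member of $a^*$, hence would wrongly accept $a^m$ --- is the one to carry out; it covers both classes, and the paper states it even for 0-NHVAs. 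Incidentally, your description of $\overline{\modm}$ as ``finite in its small part and then cofinite'' is inaccurate (it is an infinite, non-cofinite subset of $a^*$), though this slip does not affect the core of that argument.
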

\begin{proof} 
	\hfill 
	\begin{enumerate}
		\item 	\begin{enumerate}
			\item The proof of Theorem \ref{thm: closed} that $\bigcup_k \mathfrak{L}\textup{(DBHVA($ k $))} $ is  closed under intersection is also valid for stateless models.
		\end{enumerate}	

				\item 	\begin{enumerate}
			\item The languages $ \mathtt{MOD_2}$ and $ \mathtt{MOD_3}$ are recognized by 0-DBHVA(2) and 0-DBHVA(3) respectively, by Example \ref{ex: 2}. Their union cannot be recognized by any 0-DHVA($ k $), since a 0-DHVA($ k $) accepting the strings $a^2$ and $a^3$ should also accept the non-member string $a$ by Lemma \ref{lemma: diff}.
\item The complement of the language $ \modm $ ($m>1$)is not recognized by any 0-NHVA($ k $), since $ \overline{\modm} $ contains $ a $ and any 0-NHVA($ k $) accepting $ a $ accepts any member of $a^*$.
\item The concatenation of  $ \mathtt{MOD_2}$ and $ \mathtt{MOD_3}$, $\modtwothree$, cannot be recognized by any 0-DHVA.
\end{enumerate}
\end{enumerate}
\end{proof}

$\bigcup_k  \mathfrak{L}\textup{(0-NBHVA(k))} $ and 
$ \bigcup_k \mathfrak{L}\textup{(0-NHVA(k))} $ are not closed under complement. $ \bigcup_k \mathfrak{L}\textup{(0-NBHVA(k))} $ is closed under intersection. The proofs are identical.

\section{Open Questions}\label{sec: hva-end}

What can we say about the relationship between real-time homing vector automata and one-way homing vector automata? We conjecture that one-way nondeterministic blind homing vector automata are more powerful than their real-time versions. Our candidate language is $ \mathtt{UPOW}=\{a^{2^n}|n\geq 0\} $, which can be recognized by a 1NBHVA(2). Note that when the machine in consideration is deterministic and blind, the real-time and one-way versions are equivalent in power. 

Can we show a separation result between the class of languages recognized based on the set of matrices used during the transitions of a homing vector automaton? 

Can we show a hierarchy result between the classes of languages recognized by deterministic homing vector automata of dimensions $ k $ and $ k+1 $ for some $ k>1 $, maybe when the matrix entries are restricted to the set $ \{-1,0,1\} $? Consider the family of languages $ \mathtt{POW}(k)=\{ a^nb^{k^n}|n \geq 0\} $. We conjecture that it is not possible to recognize $ \mathtt{POW}(k)$ with a homing vector automaton of dimension less than $ k+1 $ with the restricted set of matrices.

Let $ G $ be a group of $ k\times k $ matrices. Can we always construct a $ \textup{1NBHVA($ k $)}_G $ recognizing the same language as a given $ G $-automaton? (Note that we have proven that this is the case for $\textup{1NBHVA(2)}_{\mathbf{F}_2} $ and $ \mathbf{F}_2 $-automaton.)

Suppose that one can always find a suitable initial vector $ {v} $ such that for every $ A\in G $ except the identity matrix, $ {v}A\neq v $. Then one could construct the required  $ \textup{1NBHVA($ k $)}_G $ from the given $G $-automaton directly. For which groups $ G $ is it always possible to find such a vector?

What can we say about the reverse direction? For instance, is every language recognized by some $ \textup{1NBHVA(2)}_{\mathbf{F}_2} $ context-free?

We proved that 1NBHVA($ k $)s are more powerful than extended finite automata when both are defined over $ 2 \times 2 $ integer matrices. Is this result still true when both models are defined over $ 3 \times 3  $ integer matrices?

Do 0-NHVA($ k $)s recognize every regular language $L$ satisfying $ L=L^*$? Is there any nonregular language $L$ satisfying $L=L^*$ that cannot be recognized by any stateless HVA?

We proved that any language recognized by a 0-NFAMW is commutative. What can we say about the non-blind case?

We gave a characterization for the class of languages recognized by 0-DFAMWs. Can we give a similar characterization for the non-blind and nondeterministic models?

\chapter{VALENCE GRAMMARS AND VALENCE AUTOMATA}\label{chapter: valence}
In this chapter, we are going to focus on context-free valence grammars, valence automata and valence pushdown automata. 

We start by stating the formal definitions in Section \ref{sec: val-def}. In Section \ref{sec: val-pda}, we show that valence pushdown automata and finite valence automata are equivalent in terms of language recognition power, using the well known equivalence between pushdown automata and finite automata over polycyclic monoids. Then, we extend some results proven for valence automata in \cite{Re10} to context-free valence grammars and give the properties of the set of languages generated by context-free valence grammars in Section \ref{sec: val-new}. We state some open questions in Section \ref{sec: val-open}.

\section{Definitions}\label{sec: val-def}

	Let $ \mathcal{G}=(N,T,P,S) $ be a  \textit{context-free grammar}  where $ N$ is the set of variables, $ T $ is the terminal alphabet, 
$ P \subseteq N \times (N \cup T)^* $ is the set of rules or productions, and $ S \in N $ is the start symbol.  We will denote by $ \Rightarrow $ and $ \Rightarrow^* $ the step
derivation relation and its regular closure respectively.
$ {L}(\mathcal{G})$ denotes the language $\{w\in T^*: S \Rightarrow^* w\} $ of words generated by $G$. 
  
Given a monoid $M$, a \textit{context-free valence grammar over  $ M $} \cite{FS02} is a five-tuple  $ \mathfrak{G}=(N,T,P,S,M) $, where $ N,T,S  $ are defined as before and $ P \subseteq N \times (N \cup T)^* \times M $ is a finite set of objects called \textit{valence rules}. Every valence rule can be thus described as an ordered pair
$ p=(A \rightarrow \alpha, m) $, where $(A \rightarrow \alpha)\in N \times (N \cup T)^*$ and  $ m\in M $. The element $m$ is called the valence of $ p $. 

The step derivation $(\Rightarrow)$ of the valence grammar is defined as follows:  $ (w,m) \Rightarrow (w',m') $ if there exists a valence rule $ (A \rightarrow \alpha,n) $ such that $ w=w_1 Aw_2 $ and $ w'=w_1 \alpha w_2 $ and $ m'=mn$. 
The regular closure of $ \Rightarrow$ will be denoted by $ \Rightarrow^*$.
A derivation of $\mathfrak{G}$ will be said {\em successful} or {\em valid} if it is of the form $(S,1) \Rightarrow^* (w,1)$, that is, it
transforms the pair $(S,1)$ into the pair $(w,1)$, after finitely many applications of the step derivation relation. 
The language generated by $\mathfrak{G}$ is the set of all the words $w$ of $T^*$ such that $(S,1) \Rightarrow^* (w,1)$. 

A context-free valence grammar is said to be  \textit{regular} if all of its rules are right-linear, that is, every valence rule $ (A \rightarrow \alpha,n) $ is such that
$\alpha = uX,$ where $u\in T^*$ and $X\in N$. 
The language families generated by context-free and
regular valence grammars over $ M $ are denoted by $ \mathfrak{L}(\textup{Val}, \textup{CF},M)$ and  $ \mathfrak{L}(\textup{Val}, \textup{REG},M)$, respectively.

Let $  {\cal A}=(Q, \Sigma, \Gamma, \delta, q_1, Q_a) $ be a one-way nondeterministic pushdown automaton. Given a monoid $M$, the \textit{nondeterministic valence pushdown automaton} \textup{(\textit{valence} PDA) over $ M $} is the model of computation obtained from $ \cal A$ as follows:  
with every transition of $ \cal A$  is assigned an element of $M$, called the {\em valence} of the transition. Then the valence of an arbitrary computation 
is defined as the product of the valences of all the transitions of the computation (taken in the obvious order). 
A word of $\Sigma ^*$ is said to be {\em accepted} by the model if there exists an accepting computation for the word whose valence is the identity of $M$.
The set of all the accepted words is defined as the language accepted by the valence pushdown automaton. 
The family of languages accepted by valence \textup{PDA} over $ M $ is denoted  $ \mathfrak{L}(\textup{Val},\textup{PDA},M)$.
It is worth noticing that the equivalence between valence pushdown automata and valence context-free grammars does not hold for an arbitrary monoid. 
However a result of \cite{FS02} shows that the equivalence is true 
if $ M $ is a commutative monoid.

In the case that pushdown automaton $  \mathcal{A}$ is a finite state automaton, the corresponding valence model is called the
\textit{valence automaton over $ M $}. This model coincides with the $ M $-automaton, and the family $ \mathfrak{L}(\textup{Val}, \textup{NFA},M)$ of languages accepted by valence automata is exactly $\mathfrak{L}(M)$. The equivalence between valence automata and regular valence grammars is proven in \cite{FS02}.

Introduced by \cite{Pa80}, the valence grammars have been studied by various authors including \cite{FS97,Ho01,FS02}. 
A through study of several remarkable structural properties of 
the languages generated by the corresponding valence grammars has been done in \cite{FS02},
over arbitrary monoids and in particular over commutative groups. In \cite{FS02}, the following fact is proven.

\begin{fact}\textup{\cite{FS02}}\label{fact: fs}
Context-free valence grammars over finite or commutative monoids are not stronger
than context-free valence grammars over finite or commutative groups.		
\end{fact}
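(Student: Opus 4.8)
The plan is to treat the two families of monoids separately, reducing each to a group that is itself finite or commutative; in both cases one first notes that a grammar uses only finitely many valence rules, so the valences it employs generate a finitely generated submonoid of $M$, and hence we may assume from the outset that $M$ is finitely generated.

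For a finite monoid $M$ the goal is to show $\mathfrak{L}(\textup{Val},\textup{CF},M) \subseteq \mathsf{CF}$, after which we are done, since $\mathsf{CF} = \mathfrak{L}(\textup{Val},\textup{CF},\mathbf{F}_0)$ for the trivial group $\mathbf{F}_0$, which is both finite and commutative. First I would fold the valence bookkeeping into the (finite) set of nonterminals: because $M$ is finite the target set $\{1\}$ is recognizable, so one can attach to each nonterminal a finite amount of data recording the monoid contribution of the subtree it heads, and let the start symbol demand a total contribution of $1$. The only delicate point is that the valence of a derivation depends on the order in which the rules are applied, so for a fixed derivation tree one must track the set of products achievable over all admissible (parent-before-child) orderings of its nodes; since $M$ is finite there are only finitely many such shuffle-behaviours of subtrees, so this data remains finite and the construction produces an ordinary context-free grammar for the same language.

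For a commutative monoid $M$ the argument is cleaner and is the heart of the proof. As $M$ is commutative, the valence of a derivation depends only on how many times each rule is used, not on their order. I would enumerate the rules $r_1,\dots,r_k$ with valences $m_1,\dots,m_k\in M$ and let $\varphi\colon \mathbb{N}^k\to M$ be the monoid homomorphism with $\varphi(e_i)=m_i$, so that a derivation with rule-count vector $c\in\mathbb{N}^k$ is successful exactly when $\varphi(c)=1$. Put $S=\varphi^{-1}(1)$. The key observation is that $S$ is a \emph{saturated} submonoid of $\mathbb{N}^k$: if $a\in S$ and $a+b\in S$ then $1=\varphi(a+b)=\varphi(a)\varphi(b)=\varphi(b)$, so $b\in S$. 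I would then establish the lemma that a saturated submonoid satisfies $S=H\cap\mathbb{N}^k$, where $H=\langle S\rangle$ is the subgroup of $\mathbb{Z}^k$ it generates: any $x\in H\cap\mathbb{N}^k$ can be written as $x=u-v$ with $u,v\in S$, whence $v\in S$ and $v+x=u\in S$ force $x\in S$ by saturation. Finally, take the finitely generated commutative group $G=\mathbb{Z}^k/H$ with quotient map $\psi$, and build a context-free valence grammar over $G$ using the same productions but with valence $\psi(e_i)$ on rule $r_i$. A count vector $c$ then satisfies $\psi(c)=1$ iff $c\in H\cap\mathbb{N}^k=S$ iff $\varphi(c)=1$, so the new grammar generates exactly the original language, giving $\mathfrak{L}(\textup{Val},\textup{CF},M)\subseteq \mathfrak{L}(\textup{Val},\textup{CF},G)$.

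I expect the main obstacle to be the saturation lemma identifying $S$ with the nonnegative part $H\cap\mathbb{N}^k$ of a subgroup, since this is precisely what upgrades commutativity all the way to a group, together with the bookkeeping in the finite case needed to accommodate the order-dependence of the valence of a derivation; both of these are structural finiteness arguments rather than subtle language-theoretic constructions, so the plan should go through without essentially new machinery.
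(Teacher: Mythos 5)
Your treatment of the commutative case is correct, and it is a genuinely different argument from the one surrounding this statement in the thesis. Note first that the paper does not prove this Fact where it is quoted (it is imported from \cite{FS02}); the thesis's own derivation of the commutative half appears later, as the corollary of Theorem \ref{thm: characterize}: pass to the Rees quotient by the union of all proper ideals (Proposition \ref{prop: ideal}), apply the classification of simple and $0$-simple monoids (Fact \ref{fact: simple}), and observe that no commutative monoid contains a copy of the bicyclic monoid. Your route instead linearizes the problem: commutativity makes the valence of a derivation a function of the rule-count vector $c \in \mathbb{N}^k$ alone; the successful count vectors form $S = \varphi^{-1}(1)$, which you correctly show to be a saturated submonoid; your lemma $S = H \cap \mathbb{N}^k$ with $H = S - S \leq \mathbb{Z}^k$ is proved correctly (the identity $H = S - S$ uses that $S$ is closed under addition); and re-labelling rule $r_i$ with the image of $e_i$ in $\mathbb{Z}^k/H$ preserves exactly the successful derivations. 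This buys an explicit finitely generated abelian group $\mathbb{Z}^k/H$, where the structural route only produces a group abstractly.

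The finite case, however, contains a genuine gap, and it sits exactly at the step you flagged as "delicate." You set out to prove $\mathfrak{L}(\textup{Val},\textup{CF},M) \subseteq \mathsf{CF}$ for finite $M$. That statement is far stronger than what the Fact requires, and it is precisely the question that \cite{FS02} left open and that was only resolved in \cite{Ze11} --- the paper says so immediately after stating this Fact --- so a sketch at this level of detail cannot be expected to close it. Concretely, the invariant you propose to attach to nonterminals (the set of products achievable over all admissible orderings of the subtree's nodes) is not compositional. Take noncommuting $a,b$ in a finite group, a subtree that is a two-node chain with valences $a$ then $a^{-1}$, and a second chain with valences $1$ then $1$: both have achievable-product set $\{1\}$, yet shuffled with a sibling consisting of a single node of valence $b$ the first yields $\{b,\, aba^{-1}\}$ while the second yields only $\{b\}$. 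So the parent's achievable products are not determined by the children's achievable products; one would need the order structure of prefix products of the children's valence sequences, and you neither define such a richer "shuffle-behaviour" nor show it has finitely many compositional classes.

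The repair is short, because the Fact only asks for a reduction from finite monoids to finite \emph{groups}, and this follows by a rule-deletion argument in the same spirit as the paper's Proposition \ref{prop: ideal}. In a finite monoid, $xy = 1$ forces $yx = 1$ (left multiplication by $x$ is surjective, hence bijective, and it maps both $yx$ and $1$ to $x$); iterating this, every factor of any product equal to $1$ is a unit. Hence rules whose valence is a non-unit can never occur in a successful derivation and may be discarded, leaving a valence grammar over the group of units $U(M)$, which is a finite group. The same observation (each $x_i$ in $x_1 \cdots x_n = 1$ has the product of the remaining factors as an inverse) handles the commutative case as well, with $U(M)$ a commutative group --- which is presumably how \cite{FS02} proved both halves of the Fact at once.
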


It is left open in \cite{FS02} whether context-free grammars with valences from finite monoids are more powerful than ordinary context-free grammars and the question is answered in \cite{Ze11}.

\begin{fact} \textup{\cite{Ze11}}
Context-free grammars over finite monoids recognize exactly the class of context-free languages. 
\end{fact}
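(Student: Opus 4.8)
The plan is to establish the two inclusions $\mathsf{CF} \subseteq \mathfrak{L}(\textup{Val}, \textup{CF},M)$ and $\mathfrak{L}(\textup{Val}, \textup{CF},M) \subseteq \mathsf{CF}$ for every finite monoid $M$. The first inclusion is immediate: given an ordinary context-free grammar $\mathcal{G}=(N,T,P,S)$, one attaches the identity valence $1$ to every production to obtain a context-free valence grammar over $M$ generating the same language, since then every derivation is trivially valid. All the work lies in the converse, which amounts to simulating a context-free valence grammar over a finite $M$ by an ordinary context-free grammar.

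For the converse, let $\mathfrak{G}=(N,T,P,S,M)$ be a context-free valence grammar over a finite monoid $M$. The idea is to record the accumulated valence inside the nonterminals. First I would take as nonterminals the pairs $[A,m]$ with $A\in N$ and $m\in M$, the intended meaning of $[A,m]$ being that $A$ admits a derivation whose associated valence equals $m$. Then, for each valence rule $(A\rightarrow X_1\cdots X_k,\,n)\in P$ and each assignment of monoid elements $m_i\in M$ to the nonterminal symbols among the $X_i$, I would add the ordinary production $[A,m]\rightarrow Y_1\cdots Y_k$, where $Y_i=[X_i,m_i]$ when $X_i$ is a nonterminal and $Y_i=X_i$ when $X_i$ is a terminal, and where $m=n\cdot c_1\cdots c_k$ with $c_i=m_i$ for nonterminal $X_i$ and $c_i=1$ for terminal $X_i$ (the product taken in the order of the right-hand side). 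The start symbol is $[S,1]$. Because $M$ is finite, there are only $|N|\cdot|M|$ nonterminals and finitely many such productions, so this is a legitimate context-free grammar $\mathcal{G}'$; I would then prove $L(\mathcal{G}')=L(\mathfrak{G})$ by induction on derivation length, the key point being that the valence of a derivation factors multiplicatively along the derivation tree, matching the recursive computation of $m$ in the productions of $\mathcal{G}'$.

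The delicate point, and the main obstacle, is that $M$ need not be commutative, so the valence of a derivation depends on the order in which the rules are applied and not merely on the derivation tree; the bookkeeping above computes the valence in one fixed (leftmost) order, and one must check that restricting attention to this order is sound for the generated language. Finiteness of $M$ is exactly the resource that tames this. An equivalent and perhaps cleaner route makes the role of finiteness transparent: emit, for each applied rule $(A\to\alpha,n)$, a fresh terminal symbol $\langle n\rangle$ at the point of application, obtaining an ordinary context-free language over $T\cup\{\langle m\rangle : m\in M\}$, and then intersect it with the language $R$ of those words whose $\langle\cdot\rangle$-symbols multiply to $1$ in $M$. The set $R$ is regular precisely because $M$ is finite, as a finite automaton with state set $M$ can track the running product and accept in state $1$; and since $\mathsf{CF}$ is closed under intersection with regular sets and under the homomorphism erasing the $\langle m\rangle$ symbols, the image is context-free. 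If the non-commutative ordering still causes trouble, one may first invoke Fact \ref{fact: fs} to replace $M$ by a finite group and apply the same finite-state argument there.
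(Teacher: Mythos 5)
Your easy direction is fine, and so is the reduction from a finite monoid to a finite group (via Fact~\ref{fact: fs}, or directly from the observation that in a finite monoid every factor of a product equal to $1$ is a unit, so non-unit valences can be discarded). But the core of your argument has a genuine gap, and it is exactly the point you flag and then wave away. Both of your constructions --- the $[A,m]$ product grammar and the variant that emits a terminal $\langle n\rangle$ at each rule application and intersects with a regular set --- compute the valence of a derivation \emph{tree} along its canonical depth-first, left-to-right traversal, i.e.\ the leftmost order. The language of a context-free valence grammar, however, is defined through arbitrary rewriting order: by the paper's definition $(w,m)\Rightarrow(w',m')$ may rewrite any occurrence of any nonterminal, and the valence is the product of rule valences in the order of application. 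For non-commutative $M$ these two semantics give different languages for the \emph{same} grammar, so your CFG recognizes the wrong language. Concretely, over the finite group $S_3$ (composing left to right) take the valence rules $(S\rightarrow ABC,\,1)$, $(A\rightarrow a,\,(12))$, $(B\rightarrow b,\,(132))$, $(C\rightarrow c,\,(13))$. The unique derivation tree of $abc$ admits the application order $A$, then $C$, then $B$, with valence $(12)(13)(132)=1$, so $abc$ belongs to the valence language; but the leftmost order gives $(12)(132)(13)=(132)\neq 1$, so your simulating grammar (in either version) does not generate $abc$. Thus what you have actually proved is that the \emph{leftmost-derivation} language of a valence grammar over a finite monoid is context-free, which is a strictly weaker statement.

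Neither of your escape hatches closes this gap. Passing to a finite group via Fact~\ref{fact: fs} does not help, because finite groups such as $S_3$ are still non-commutative and the counterexample already lives there; and the assertion that ``finiteness of $M$ is exactly the resource that tames this'' is never backed by an argument --- a finite-state device tracking the running product tracks it in one fixed order, which is precisely what fails, since the products arising from interleavings of sibling subderivations cannot be recovered from one monoid element per subtree. Handling this order-dependence is the entire content of the cited result: the question of whether valence grammars over finite monoids generate only context-free languages was left open by Fernau and Stiebe for exactly this reason, and the paper accordingly states it as a Fact imported from \cite{Ze11} rather than proving it. To repair your proof you would need, at minimum, either a lemma showing that over a finite group every word derivable with valence $1$ in some order is derivable with valence $1$ by a leftmost derivation of a possibly different tree, or a finite, shuffle-compositional abstraction of subderivations that is richer than a single element of $M$; neither is supplied by your construction.
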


Now we are going to present some definitions about monoids and semigroups.

Let $ S $ be a semigroup. Given subsets $ A $ and $ B $ of semigroups $ S $, $ AB $ is the set $\{ab | a\in A, b\in B\} $. An \textit{ideal} $ \mathtt I $ of a semigroup $ S $ is a subset of $ S $ with the property that $ S\mathtt{I}S \subseteq \mathtt{I} $.

The binary relation $ \rho_{\mathtt I} $ defined by 
\[ 
a \rho_{\mathtt I} b \iff \mbox{ either $ a=b $ or both $ a $ and $ b $ belong to } {\mathtt I}
\]
is a congruence. The equivalence classes of $ S \textup{ mod }\rho_{\mathtt I}$ are $ {\mathtt I} $ itself and every one-element set $ \{x\} $ with $ x \in S\setminus {\mathtt I} $. The quotient semigroup $ S/\rho_{\mathtt I} $ is written as $ S / {\mathtt I} $ and is called the \textit{Rees quotient semigroup}   \cite{Ho95}. 
\[ 
S /{\mathtt I} = \{{\mathtt I}\} \cup \{\{x\}| x \in S \setminus {\mathtt I} \}
\]

A semigroup is called \textit{simple} if it contains no proper ideal. A semigroup $ S $  with a zero element is called 0-$ simple $ if the only ideals of $ S $ are $ \{0\} $ and $ S $ itself, and $ SS \neq \{0\} $.

\section{Equivalence of Finite and Pushdown Automata with Valences}\label{sec: val-pda}
In this section, we are going to prove that valence pushdown automata are only as powerful as valence automata. 

Let us recall the definition of polycyclic monoid which we have introduced in Subsection \ref{sec: linear}. Let $ X $ be a finite alphabet and let $X^*$ be the free monoid of words over $X$. For each symbol $ x\in X $, let $P_x$ and $Q_x$ be functions from $X^*$ into $X^*$
defined as follows:  for every $u\in X^*$,
\begin{align*}
P_x(u)= ux,\quad 
Q_x(ux)=u.
\end{align*}
The monoid of all partial functions on $X^*$ generated by the
set of functions $\{P_x,\, Q_x \ | x \in X   \}$ is called the \textit{polycyclic monoid} on $X$. We will focus on the polycyclic monoid of rank 2, which will be denoted by $ \textup{P}_2 $, since it contains every polycyclic monoid of countable rank.

\begin{thm}\label{thm: pdap2} For any monoid $ M $, 
	$\mathfrak{L}(\textup{Val},\textup{PDA},M) = \mathfrak{L}(\textup{Val},\textup{NFA},\textup{ P}_2 \times M ) $.
\end{thm}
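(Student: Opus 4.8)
The plan is to regard a valence PDA over $M$ as a finite-state device carrying two independent side effects: the stack, and the $M$-valence. The stack can be absorbed into a monoid valence by means of the classical correspondence between pushdown automata and automata over the polycyclic monoid, recalled in Subsection~\ref{sec: linear}: pushing a symbol $x$ is modelled by $P_x$, popping $x$ by $Q_x$, a wrong pop produces the zero element of $\textup{P}_2$, and the stack is empty at the end of a computation exactly when the product of the corresponding polycyclic elements equals the identity $1$. Since $\textup{P}_2$ contains every polycyclic monoid of countable rank, an arbitrary finite stack alphabet $\Gamma$ may be encoded, so I would freely use generators $P_\gamma, Q_\gamma$ for $\gamma \in \Gamma$. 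Because the stack side effect and the $M$-valence side effect are accumulated independently, the pair of products lives in the direct product $\textup{P}_2 \times M$, whose identity is $(1,1)$; this is what makes the statement plausible and dictates the two constructions below.

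For the inclusion $\mathfrak{L}(\textup{Val},\textup{PDA},M) \subseteq \mathfrak{L}(\textup{Val},\textup{NFA},\textup{P}_2 \times M)$, I would first normalize a given valence PDA so that each transition performs at most one stack operation (a single push, a single pop, or none) while reading $\sigma \in \Sigma_{\varepsilon}$ and carrying a valence $m \in M$. Each such transition is then replaced by a transition of a valence automaton (a finite automaton with the same state set) reading $\sigma$ and carrying the valence $(g,m) \in \textup{P}_2 \times M$, where $g = P_\gamma$ for a push of $\gamma$, $g = Q_\gamma$ for a pop of $\gamma$, and $g = 1$ when the stack is untouched. A computation of the resulting automaton reaches an accept state with accumulated valence $(1,1)$ if and only if, in the original machine, the same computation ended in an accept state with empty stack and with $M$-valence equal to the identity of $M$; that is, if and only if the input is accepted by the valence PDA.

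For the reverse inclusion, I would start from a valence automaton over $\textup{P}_2 \times M$ and simulate it by a valence PDA over $M$. A transition carrying $(0,m)$ can never participate in an accepting computation, since $0$ times anything is $0 \neq 1$, so all such transitions are discarded. A transition carrying $(1,m)$ becomes a PDA transition that leaves the stack unchanged and has valence $m$. A transition carrying $(f,m)$, with $f$ a nonzero, nontrivial element of $\textup{P}_2$, is first rewritten, using the normal form of $\textup{P}_2$, as a product of generators $g_1 \cdots g_k$; it is then simulated by a chain of $k$ PDA transitions through $k-1$ fresh intermediate states, where the $i$th link performs the push or pop dictated by $g_i$, the valence $m$ being attached to the first link and the identity of $M$ to the remaining ones. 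The PDA thus maintains on its stack exactly the word that the $\textup{P}_2$-component would produce, so it accepts with empty stack and $M$-valence equal to the identity precisely when the valence automaton accepts with total valence $(1,1)$.

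The routine parts — the PDA normalization and the transition-splitting — are standard. The step that needs the most care, and which I expect to be the crux, is the verification that the acceptance conditions correspond exactly across the two models: one must check that the product of the $\textup{P}_2$-components of an accepting run is the identity $1$ if and only if every pop matched the current top of the stack and the stack returned to empty, relying on the defining relations $P_\gamma Q_\gamma = 1$ and $P_\gamma Q_{\gamma'} = 0$ for $\gamma \neq \gamma'$, together with the fact that a single occurrence of the zero element annihilates the whole product. Once this bookkeeping is pinned down, the independence of the two coordinates of $\textup{P}_2 \times M$ yields the claimed equality of language families.
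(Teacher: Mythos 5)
Your proposal is correct and takes essentially the same route as the paper's proof: both directions rest on the classical PDA--polycyclic-monoid correspondence, translating stack effects into elements of $\textup{P}_2$ paired with the $M$-valence, discarding transitions whose $\textup{P}_2$-component is zero, and expanding each nonzero element in its normal form into a chain of PDA transitions through fresh intermediate states with the valence $m$ attached to the first link and the identity to the rest. The only cosmetic difference is that you first normalize the PDA to one stack operation per transition, whereas the paper translates a pop-$a$-push-$b$ transition directly into the single element $Q_aP_b$.
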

\begin{proof}Let $ {L}\in \mathfrak{L}(\textup{Val},\textup{PDA},M)$ and $\mathcal{\mathfrak{P}}=\{Q,\Sigma,X,\delta,q_1, Q_a,M\} $ be a valence PDA recognizing $ {L}$. We know that a PDA with stack alphabet $ X $ is equivalent to a valence automaton over $ P(X) $. Hence, $ \mathfrak{P} $ can be seen as an \textup{NFA} where two distinct valences (one in $P(X) $ and one in $ M $) are assigned to each transition. An equivalent valence automaton $ \mathcal{E}= \{Q,\Sigma,P(X) \times M,\delta',q_1, Q_a\} $ can be constructed, where a valence from the monoid $P(X) \times M  $ is assigned to each transition. Recall that the partial functions $ Q_a $ and $ P_b $ model the operations of popping $ a $ and pushing $ b $ respectively. A transition of $ \mathfrak{P} $ of the form $ (q',b,m) \in \delta(q,\sigma,a) $ where $ a,b \in X_{\varepsilon} $, $ q,q' \in Q $, $ \sigma \in \Sigma_{\varepsilon} $ and $ m\in M $ can be expressed equivalently as $(q', ( Q_aP_b , m ) ) \in \delta'(q,\sigma) $ where $ ( Q_aP_b , m )  \in P(X) \times M  $. 
	
	A string is accepted by $ \mathcal{E} $ if and only if the product of the valences labeling the transitions in $ \mathcal{E} $ is equal to $ ( 1,1 )  $, equivalently when the product of the valences labeling the transitions in $ \mathfrak{P} $ is equal to the identity element of $ M $ and the stack is empty. Since any polycyclic monoid is embedded in $ \textup{P}_2 $, we conclude that $ {L}\in \mathfrak{L}(\textup{Val},\textup{NFA},\textup{P}_2\times M ) $.
	
	Conversely, let $ {L}\in \mathfrak{L}(\textup{Val},\textup{NFA},\textup{P}_2\times M ) $ and let $ \mathcal{E} = \{Q,\Sigma,\textup{P}_2\times M,\delta,q_1, Q_a\}$ be a valence automaton over $ \textup{P}_2\times M $ recognizing $ {L} $. Suppose that $ ( p, m ) \in \textup{P}_2\times M  $ is a valence labeling a transition of $ \mathcal{E} $. The product of the labels of a computation which involves a transition labeled by the zero element of $ \textup{P}_2 $ cannot be equal to the identity element. Hence we can remove such transitions. Any nonzero element $ p $ of $ \textup{P}_2 $ can be written as $$ Q_{x_{1}}Q_{x_{2}}\dots Q_{x_{n}} P_{y_{1}}P_{y_{2}}\dots P_{y_{o}} $$ for some $ n,o \in \mathbb{N} $ and $ x_{i}, y_{i} \in X_{\varepsilon} $, after canceling out elements of the form $ P_aQ_a $ and $ P_bQ_b $, where $ X=\{a,b\} $ is the generator set for $ \textup{P}_2 $. The product can be interpreted as a series of pop operations followed by a series of push operations performed by a PDA, without consuming any input symbol. Hence, an equivalent valence PDA $ \mathfrak{P}=\{Q',\Sigma,X,\delta',q_1, Q_a,M\} $ can be constructed where a valence from $ M $ is assigned to each transition. Let $ (q', ( p, m )) \in \delta(q,\sigma) $ where  $ q,q' \in Q $, $ \sigma \in \Sigma_{\varepsilon} $, $( p, m ) \in \textup{P}_2\times M  $ and $ p=Q_{x_{1}}Q_{x_{2}}\dots Q_{x_{n}} P_{y_{1}}P_{y_{2}}\dots P_{y_{o}} $  be a transition in $ \mathcal{E} $. In $ \mathfrak{P} $, we need additional $ n+o $ states $ \{q_1, \dots ,q_{n+o}\} \notin Q $ and the following transitions to mimic that specific transition of $ \mathcal{E} $. 
	\begin{align*}
	\vspace{-0.1in}
	(q_{1}, \varepsilon,m) &\in \delta'(q,\sigma,x_{1})  \\
	(q_{2}, \varepsilon,1) &\in \delta'(q_{1},\varepsilon,x_{2})\\
	&~\vdots\\
	(q_{n+1}, \varepsilon,1) &\in \delta'(q_{n},\varepsilon,x_{n})\\
	(q_{n+2}, y_{1},1) &\in \delta'(q_{n+1},\varepsilon,\varepsilon )\\
	(q_{n+3}, y_{2},1) &\in \delta'(q_{n+2},\varepsilon,\varepsilon )\\
	&~\vdots\\
	(q', y_{o},1) &\in \delta'(q_{n+o},\varepsilon,\varepsilon )\\
	\end{align*}

	A string is accepted by $ \mathfrak{P} $ if and only if the product of the valences labeling the transitions in $ \mathfrak{P} $ is equal to the identity element of $M $ and the stack is empty, equivalently when the product of the valences labeling the transitions in $ \mathcal{E} $ is equal to $ ( 1,1 ) $. We conclude that $ {L}\in \mathfrak{L}(\textup{Val},\textup{PDA},M)$.
\end{proof}

Note that when $ M $ is commutative, the equality $\mathfrak{L}(\textup{Val},\textup{CF},M) = \mathfrak{L}(\textup{Val},\textup{NFA},\textup{P}_2\times M ) $ also holds. 
\begin{cor}
	Let $ M $ be a polycyclic monoid of rank 2 or more. Then  $\mathfrak{L}(\textup{Val},\textup{PDA},M) $ is the class of recursively enumerable languages.
\end{cor}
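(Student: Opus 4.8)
The plan is to combine Theorem \ref{thm: pdap2} with the classical fact that a finite-state device controlling two independent stacks is Turing complete. By Theorem \ref{thm: pdap2}, for any monoid $M$ we have $\mathfrak{L}(\textup{Val},\textup{PDA},M) = \mathfrak{L}(\textup{Val},\textup{NFA},\textup{P}_2 \times M)$, so it suffices to show that $\mathfrak{L}(\textup{Val},\textup{NFA},\textup{P}_2 \times M) = \mathsf{RE}$ whenever $M$ is polycyclic of rank at least $2$. Equivalently, one may read a valence $\textup{PDA}$ over $\textup{P}_2$ directly as a machine with two stacks: the pushdown stack supplied by the automaton itself and the second ``stack'' encoded in the $\textup{P}_2$-valued register.

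First I would record the easy inclusion $\mathfrak{L}(\textup{Val},\textup{PDA},M) \subseteq \mathsf{RE}$. The elements of $\textup{P}_2$ and of $M$ are partial functions on a free monoid whose products are effectively computable, so a nondeterministic Turing machine can simulate a valence automaton over $\textup{P}_2 \times M$ by maintaining the current register value and checking, at the end of an accepting run, whether that value equals the identity of $\textup{P}_2 \times M$.

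For the reverse inclusion I would exploit the embedding structure of polycyclic monoids. Since $M$ is polycyclic of rank at least $2$, restricting to two of its generators yields a submonoid isomorphic to $\textup{P}_2$ (including the shared zero element, since products of the form $P_xQ_y$ with $x\neq y$ collapse to $0$ in both monoids); hence $\textup{P}_2$ embeds in $M$ and $\textup{P}_2 \times \textup{P}_2$ embeds in $\textup{P}_2 \times M$. As restricting the valences of an automaton to an embedded submonoid never changes the accepted language, we obtain $\mathfrak{L}(\textup{Val},\textup{NFA},\textup{P}_2 \times \textup{P}_2) \subseteq \mathfrak{L}(\textup{Val},\textup{NFA},\textup{P}_2 \times M)$. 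It then remains to argue $\mathsf{RE} \subseteq \mathfrak{L}(\textup{Val},\textup{NFA},\textup{P}_2 \times \textup{P}_2)$. Here each factor $\textup{P}_2$ behaves exactly like a pushdown stack over a two-letter alphabet: the generators $P_x$ push and $Q_x$ pop, and the accumulated valence in that factor equals the identity precisely when the corresponding stack has been emptied. Thus a valence automaton over $\textup{P}_2 \times \textup{P}_2$ is a finite control operating two independent stacks that must both be emptied for acceptance, and the standard two-stack simulation of a Turing machine---storing the tape contents on either side of the head in the two stacks and appending a clean-up phase that empties both stacks once the machine accepts---yields every recursively enumerable language.

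The main obstacle is this last step: verifying that the two-stack simulation respects the valence acceptance condition, namely that the register returns to the identity of $\textup{P}_2 \times \textup{P}_2$ exactly on accepted inputs. This requires care in the clean-up phase, since a factor $\textup{P}_2$ reaches the identity only when its pushes and pops fully cancel, and one must ensure no spurious identity is attained mid-computation along a rejecting branch. Once this bookkeeping is in place, chaining the two inclusions through Theorem \ref{thm: pdap2} gives $\mathfrak{L}(\textup{Val},\textup{PDA},M) = \mathsf{RE}$.
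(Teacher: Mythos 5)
Your proof is correct, and it follows the same top-level skeleton as the paper's: both arguments use Theorem \ref{thm: pdap2} to reduce the statement to showing $\mathfrak{L}(\textup{Val},\textup{NFA},\textup{P}_2\times M)=\mathsf{RE}$. The difference lies in how that equality is obtained. The paper simply cites Kambites' result \cite{Ka09} that $\mathfrak{L}(\textup{Val},\textup{NFA},M\times M)=\mathsf{RE}$ for any polycyclic monoid $M$ of rank at least $2$ (the passage between $M\times M$ and $\textup{P}_2\times M$ being immediate from the mutual embeddings of $\textup{P}_2$ and a rank-$\geq 2$ polycyclic monoid), whereas you re-prove that ingredient from scratch: you embed $\textup{P}_2\times\textup{P}_2$ into $\textup{P}_2\times M$ by restricting $M$ to two of its generators, and then run the classical two-stack simulation of a Turing machine, reading each $\textup{P}_2$ factor as a blind stack whose valence is the identity exactly when its pushes and pops cancel. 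Your stated obstacle about ``spurious identities'' reached mid-computation is actually harmless: the valence acceptance condition only inspects the register after the entire input has been consumed in an accept state, so intermediate identity values are irrelevant. The only genuine care needed in the clean-up phase is that emptying a stack must be done blindly, by nondeterministically guessing each popped symbol and guessing when the stack is empty; a wrong guess drives that factor to the zero element (or to a non-identity value), so the offending branch simply fails to accept, which is exactly the bookkeeping you allude to. In short, the paper's route buys brevity by outsourcing the key fact to \cite{Ka09}; yours buys self-containedness, making explicit that the corollary is the two-stack/Turing-machine equivalence expressed in valence language.
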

\begin{proof}
	It is known that $ \mathfrak{L}(\textup{Val},\textup{NFA}, M \times M  ) $ is the class of recursively enumerable languages \cite{Ka09} when $ M $ is a polycyclic monoid of rank 2 or more. Since $\mathfrak{L}(\textup{Val},\textup{PDA},M) = \mathfrak{L}(\textup{Val},\textup{NFA},\textup{P}_2\times M ) $, by Theorem \ref{thm: pdap2}, the result follows. 
\end{proof}

\section{Context-free Valence Languages}\label{sec: val-new}

It is known that the class of languages generated by regular valence grammars and the class of languages recognized by valence automata coincide \cite{FS02}. In this section, we are going to prove that the results proven in \cite{Re10} which hold for valence automata and therefore regular valence grammars, also hold for context-free valence grammars. Although the proofs are almost identical, they are presented here for completeness. Note that the same proofs can be also adapted to valence PDA.

In \cite{Re10} Proposition 4.1.1, it is shown that the elements belonging to a proper ideal of a monoid do not have any use in the corresponding monoid automaton. We show that the same result holds for context-free valence grammars.

\begin{prop}\label{prop: ideal}
	Let $ {\mathtt I} $ be a proper ideal of a monoid $ M $. Then $ \mathfrak{L}(\textup{Val},\textup{CF},M) = \mathfrak{L}(\textup{Val},\textup{CF},M/{\mathtt I}).$
\end{prop}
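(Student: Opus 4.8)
The plan is to exploit the canonical projection onto the Rees quotient together with the crucial observation that the collapsed ideal acts as an absorbing zero. First I would record the structural facts about $M/\mathtt{I}$. Since $\mathtt{I}$ is a \emph{proper} ideal, the identity $1$ cannot lie in $\mathtt{I}$ (otherwise $M = M\cdot 1\cdot M \subseteq M\mathtt{I}M \subseteq \mathtt{I}$), so in $M/\mathtt{I}$ the class $\{1\}$ is the identity and the class $\mathtt{I}$ is an absorbing zero $0$ satisfying $0x = x0 = 0$ for every $x$. The quotient map $\pi\colon M \to M/\mathtt{I}$, which sends $m \mapsto \{m\}$ when $m \notin \mathtt{I}$ and $m \mapsto 0$ otherwise, is a surjective monoid homomorphism that is injective on $M \setminus \mathtt{I}$; in particular $\pi(m) = \{1\}$ holds exactly when $m = 1$.

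I would then introduce the two grammar transformations witnessing the two inclusions. From a grammar $\mathfrak{G} = (N,T,P,S,M)$ I build $\mathfrak{G}' = (N,T,P',S,M/\mathtt{I})$ by replacing every valence rule $(A \to \alpha, m)$ with $(A \to \alpha, \pi(m))$. In the other direction, from a grammar $\mathfrak{H}$ over $M/\mathtt{I}$ I build a grammar $\mathfrak{H}'$ over $M$ by discarding each rule whose valence is $0$ and replacing every other rule $(A \to \alpha, \bar m)$ by $(A \to \alpha, m)$, where $m$ is the unique element of $M \setminus \mathtt{I}$ with $\pi(m) = \bar m$ (well defined by surjectivity, unique by injectivity off $\mathtt{I}$). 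Because $\pi$ is a homomorphism and respects the step-derivation relation, a derivation $(S,1)\Rightarrow^*(w, m_1\cdots m_k)$ in $\mathfrak{G}$ maps to a derivation of $w$ in $\mathfrak{G}'$ with total valence $\pi(m_1)\cdots\pi(m_k) = \pi(m_1\cdots m_k)$; if the former is successful then $m_1\cdots m_k = 1$, so the image has total valence $\pi(1) = \{1\}$ and is successful. This yields $L(\mathfrak{G}) \subseteq L(\mathfrak{G}')$, and the symmetric pushforward along $\pi$ gives $L(\mathfrak{H}') \subseteq L(\mathfrak{H})$.

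The reverse halves are where the ideal structure does the real work, mirroring \cite{Re10}, Proposition~4.1.1. Consider a successful derivation in $\mathfrak{G}'$ (or in $\mathfrak{H}$), whose total valence is $\{1\} \neq 0$. Since $0$ is absorbing, any partial product of valences that ever equals $0$ stays $0$ thereafter; as the final product is $\{1\}$, no rule of valence $0$ is ever applied. Every applied rule therefore carries a valence in $\pi(M\setminus\mathtt{I})$, on which $\pi$ is injective, so each rule application lifts uniquely to a rule with valence in $M\setminus\mathtt{I}$ and the same production. The lifted derivation produces the same word and has total valence $m_1\cdots m_k$ with $\pi(m_1\cdots m_k) = \{1\}$; because $\pi^{-1}(\{1\}) = \{1\}$ this forces $m_1\cdots m_k = 1$, so the lift is successful. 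This gives $L(\mathfrak{G}') \subseteq L(\mathfrak{G})$ and $L(\mathfrak{H}) \subseteq L(\mathfrak{H}')$, and combining all four inclusions establishes the equality of the two language families.

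I expect the delicate point to be exactly this lifting step: one must justify that $\mathtt{I}$-valued (equivalently, zero-valued) rules are provably unusable in any accepting derivation \emph{before} inverting $\pi$, and note that although forming $P'$ may merge several $\mathtt{I}$-valued rules into a single zero-valued rule, no accepting derivation is lost, precisely because all such rules are dead. The remaining bookkeeping is a routine transfer of the homomorphism property through the step-derivation relation.
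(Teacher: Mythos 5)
Your proposal is correct and follows essentially the same route as the paper's own proof: both directions proceed by the same two grammar transformations (replacing each valence $x$ by its class $\{x\}$ and conversely), with the key observations that ideal-valued (zero-valued) rules can never occur in a successful derivation and that products of non-ideal elements equal $1$ in $M$ exactly when their classes multiply to $\{1\}$ in $M/\mathtt{I}$. Your write-up is somewhat more explicit about the Rees quotient structure (properness forcing $1 \notin \mathtt{I}$, injectivity of the projection off $\mathtt{I}$) than the paper, but the argument is the same.
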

\newpage 
\begin{proof}	Let $ {L} \in \mathfrak{L}(\textup{Val},\textup{CF},M)  $ and let $ \mathfrak{G} $ be a context-free valence grammar over the monoid $ M $ such that $ {L}(\mathfrak{G})={L} $. The product of the valences which appear in a derivation containing a rule with valence $ x \in {\mathtt I}  $, will itself belong to $ {\mathtt I} $. Since $ {\mathtt I} $ is a proper ideal and $ 1\notin {\mathtt I} $, such a derivation is not valid. Hence any such rules can be removed from the grammar and we can assume that $ \mathfrak G $ has no such rules. For any $ x_1,x_2,\dots,x_n \in M \setminus {\mathtt I} $, it follows that $ x_1\dots x_n=1 $ in $ M $  if and only if $ \{x_1\}\{x_2\}\dots \{x_n\}=\{1\} $ in $ M/{\mathtt I} $. Let $\mathfrak{G}' $ be the context-free grammar with valences in $ M/{\mathtt I} $, obtained from $ \mathfrak{G} $ by replacing each valence $ x \in M$ with $ \{x\} $. It follows that a string $ w $ has a valid derivation in $ \mathfrak{G} $ if and only if the product of the valences is mapped to $ \{1\} $ in $ \mathfrak{G}' $. Hence  $ {L}(\mathfrak{G}')={L} $.
	
	Conversely let $ {L} \in \mathfrak{L}(\textup{Val},\textup{CF},M/{\mathtt I})  $ and let $ \mathfrak{G}' $ be a context-free grammar over the monoid $ M/{\mathtt I} $ such that $ {L}(\mathfrak{G}')={L} $. Suppose that there exists a valid derivation consisting of a rule with $ {\mathtt I}$ as the valence. Then the product of the valences of the whole derivation will be $ {\mathtt I} $, which is not possible. Let $ \mathfrak G $ be the context-free grammar with valences in $ M$, obtained from $ \mathfrak{G}' $ by replacing each valence $ \{x\} \in M/{\mathtt I}$ with $ x $. Since  $ \{x_1\}\{x_2\}\dots \{x_n\}=\{1\} $ in $ M/{\mathtt I} $ if and only if  $ x_1\dots x_n=1 $ in $ M $, a string $ w $ has a valid derivation in $ \mathfrak{G} $ if and only if the product of the valences is mapped to $ \{1\} $ in $ \mathfrak{G}' $. Hence  $ {L}(\mathfrak{G})={L} $. 
\end{proof}

Let $ S $ be a semigroup. $ S $ is the \textit{null semigroup} if it has an absorbing element zero and if the product of any two elements in $ S $ is equal to zero. A null semigroup with two elements is denoted by $ \textup{O}_2 $.

The following corollary is analogous to  \cite{Re10} Cor. 4.1.2.
\begin{cor}\label{cor: simple} 
	For every monoid $ M $, there is a simple or 0-simple monoid $ N $ such
	that $ \mathfrak{L}(\textup{Val},\textup{CF},M) =   \mathfrak{L}(\textup{Val}, \textup{CF},N) $.
\end{cor}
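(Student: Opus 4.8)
The plan is to mirror the structure of the analogous result for valence automata (\cite{Re10}, Cor. 4.1.2), using Proposition \ref{prop: ideal} as the engine. Since quotienting $M$ by any proper ideal leaves $\mathfrak{L}(\textup{Val},\textup{CF},M)$ unchanged, it suffices to locate a single proper ideal $\mathtt{J}$ of $M$ whose Rees quotient $M/\mathtt{J}$ is $0$-simple, and to treat separately the degenerate case in which $M$ has no proper ideal at all. First I would dispose of that trivial case: if $M$ possesses no nonempty proper (two-sided) ideal, then by definition $M$ is simple, and we may take $N = M$, so the required equality holds trivially.

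Assuming then that $M$ does have a proper ideal, I would build the candidate as the union $\mathtt{J}$ of all proper ideals of $M$. An arbitrary union of ideals is again an ideal, so $\mathtt{J}$ is an ideal; the key point is that $\mathtt{J}$ remains \emph{proper}. Indeed, in a monoid no proper ideal can contain the identity $1$: if $1 \in \mathtt{I}$ for an ideal $\mathtt{I}$, then $M = M\,1\,M \subseteq M\mathtt{I}M \subseteq \mathtt{I}$, forcing $\mathtt{I} = M$. Hence $1 \notin \mathtt{J}$, so $\mathtt{J} \neq M$, and by construction $\mathtt{J}$ is the unique maximal proper ideal of $M$.

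Finally I would verify that $N := M/\mathtt{J}$ is $0$-simple. The Rees quotient has zero element $0 = \{\mathtt{J}\}$ (a genuine class since $\mathtt{J}$ is nonempty), and it retains the identity $\{1\}$ because $1 \notin \mathtt{J}$; thus $N$ is a monoid with a zero, on which context-free valence grammars are defined. Its ideals correspond to the ideals of $M$ containing $\mathtt{J}$, and by maximality of $\mathtt{J}$ these are only $\mathtt{J}$ and $M$, i.e. $\{0\}$ and $N$ itself. Moreover $NN \neq \{0\}$, since $\{1\}\{1\} = \{1\} \neq 0$. Therefore $N$ is $0$-simple, and Proposition \ref{prop: ideal} gives $\mathfrak{L}(\textup{Val},\textup{CF},M) = \mathfrak{L}(\textup{Val},\textup{CF},M/\mathtt{J}) = \mathfrak{L}(\textup{Val},\textup{CF},N)$.

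The main obstacle I anticipate is the semigroup-theoretic bookkeeping around the Rees quotient rather than any deep combinatorial argument: confirming that $\mathtt{J}$ is genuinely proper (the observation $1 \notin \mathtt{J}$ is the crux), that the identity survives the quotient so $M/\mathtt{J}$ is still a monoid, and that the correspondence between ideals of $M/\mathtt{J}$ and ideals of $M$ containing $\mathtt{J}$ yields exactly the two ideals needed for $0$-simplicity. Care must also be taken that the definition of ideal used here ($M\mathtt{I}M \subseteq \mathtt{I}$) coincides with the usual two-sided notion for monoids, which it does precisely because $1 \in M$.
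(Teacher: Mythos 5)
Your proof is correct and follows the same skeleton as the paper's: dispose of the case where $M$ has no proper ideal, otherwise take the union of all proper ideals, pass to the Rees quotient, and invoke Proposition \ref{prop: ideal}. The one genuine difference is how $0$-simplicity of the quotient $N = M/\mathtt{I}$ is established. The paper imports the dichotomy from the proof of Cor.~4.1.2 of \cite{Re10} --- either $N^2 = \{0\}$ or $N$ is $0$-simple --- and must then handle the degenerate branch $N = \textup{O}_2$ by a separate argument ($\textup{O}_2$ lacks an identity, so valence grammars over it collapse to valence grammars over the trivial monoid $\{1\}$, which is simple). You instead verify $0$-simplicity directly: the ideal correspondence for Rees quotients together with maximality of $\mathtt{I}$ gives that the only ideals of $N$ are $\{0\}$ and $N$, and --- this is the point the paper's route obscures --- since $1 \notin \mathtt{I}$, the class $\{1\}$ survives as a nonzero idempotent of $N$, so $NN \neq \{0\}$ automatically. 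In other words, your argument shows that the paper's null-semigroup case is vacuous when $M$ is a monoid; it can only arise in the identity-free semigroup setting of \cite{Re10}. What you lose is only the reuse of the cited semigroup-level result; what you gain is a self-contained proof with one fewer case.
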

\begin{proof} 
	If $ M $ has no proper ideals then it is simple. Otherwise, let $ \mathtt{I} $ be the union of all proper ideals of $ M $ and let $ N=M/ \mathtt{I} $. We can conclude from the proof of Cor. 4.1.2 \cite{Re10} 
	that $ N^2 = {0} $ or $ N $ is 0-simple. If $ N^2 = {0} $, then $  N $ is  $ \textup{O}_2 $ and the semigroup  $ \textup{O}_2 $ does not add any power to the grammar since it does not even contain the identity element. Hence, $ \mathfrak{L}(\textup{Val},\textup{CF},  \textup{O}_2) = \mathfrak{L}(\textup{Val},\textup{CF},\{1\}) $ where $ \{1\} $ is the trivial monoid which is simple. In the latter case $ N $ is 0-simple and by Proposition \ref{prop: ideal}, $ \mathfrak{L}(\textup{Val},\textup{CF},M) = \mathfrak{L}(\textup{Val},\textup{CF},M/I) =  \mathfrak{L}(\textup{Val}, \textup{CF},N) $.
\end{proof}

Proposition 4.1.3 of \cite{Re10} states that a finite automaton over a monoid with a zero element is no more powerful then a finite automaton over a version of the same monoid from which the zero element has been removed, in terms of language recognition. The result is still true for context-free valence grammars since the same proof idea applies. The following notation is used: $ M^0 = M \cup \{0\} $ if $ M $ has no zero element and $ M^0=M $ otherwise.  
\begin{prop}\label{prop: zero}
	Let $ M $ be a monoid. Then $ \mathfrak{L}(\textup{Val},\textup{CF},M^0) =\mathfrak{L}(\textup{Val},\textup{CF},M) $.
\end{prop}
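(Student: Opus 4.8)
The plan is to mirror the argument used for Proposition \ref{prop: ideal}, since the zero element of a monoid plays essentially the same role in a successful derivation as an element of a proper ideal: once it appears in a product of valences, it can never be cancelled back to the identity. First I would dispose of the degenerate case in which $M$ already possesses a zero element, for then $M^0 = M$ by definition and there is nothing to prove. So I would assume $M$ has no zero, so that $M^0 = M \cup \{0\}$ is obtained by adjoining a fresh absorbing element $0$, with $1$ remaining the identity and $0 \neq 1$.

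For the inclusion $\mathfrak{L}(\textup{Val},\textup{CF},M) \subseteq \mathfrak{L}(\textup{Val},\textup{CF},M^0)$, I would observe that $M$ is a submonoid of $M^0$: it contains the identity and is closed under the product, since the only way to produce $0$ in $M^0$ is to multiply by $0$ itself. Hence any context-free valence grammar over $M$ is verbatim a context-free valence grammar over $M^0$ using the same valences, and because the multiplication and the success condition (product of valences equal to $1$) are unchanged, it generates the same language.

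For the reverse inclusion I would start from a context-free valence grammar $\mathfrak{G}$ over $M^0$ with $\mathcal{L}(\mathfrak{G}) = L$ and argue that the rules carrying valence $0$ are useless. Indeed, any derivation $(S,1) \Rightarrow^* (w,m)$ that applies such a rule has $m = 0$ by absorption, so $m \neq 1$ and the derivation is not valid. Deleting all rules with valence $0$ therefore leaves the generated language untouched; the resulting grammar $\mathfrak{G}'$ uses only valences from $M^0 \setminus \{0\} = M$, so it is a context-free valence grammar over $M$ with $\mathcal{L}(\mathfrak{G}') = L$, giving $L \in \mathfrak{L}(\textup{Val},\textup{CF},M)$.

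There is no serious obstacle here; the entire content is the absorbing property of $0$ together with the observation that $M$ sits inside $M^0$ as a submonoid. The only points requiring minor care are checking that $1$ really stays the identity and stays distinct from $0$ once the zero is adjoined, and confirming that $M$ is closed under the inherited product, so that the first inclusion is literal rather than requiring any recoding of the valences.
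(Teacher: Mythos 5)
Your proposal is correct and follows essentially the same route as the paper's proof: the forward inclusion via $M \subseteq M^0$, and the reverse inclusion by observing that any rule with valence $0$ can never occur in a valid derivation (absorption forces the product to $0 \neq 1$) and so can be deleted. The extra care you take with the degenerate case $M^0 = M$ and with verifying that $M$ is a submonoid of $M^0$ is fine but not a substantive difference.
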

\begin{proof}
	Since $ M \subseteq M^0 $, it follows that $ \mathfrak{L}(\textup{Val},\textup{CF},M) \subseteq \mathfrak{L}(\textup{Val},\textup{CF},M^0) $. Suppose $ {L} \in \mathfrak{L}(\textup{Val},\textup{CF},M^0) $ and let $ \mathfrak{G} $ be a context-free valence grammar with valences in $ M^0 $ and $ {L}(\mathfrak{G})={L} $. Note that a valid derivation cannot contain a rule with a zero valence since otherwise the product of the valences would be equal to zero. Any such rules can be removed from $ \mathfrak{G} $ to obtain $ \mathfrak{G}' $, a context-free grammar with valences in $ M $, without changing the language, and $ {L} \in {L}(\mathfrak{G}') $.
\end{proof}

In the case $ |X|=1 $, the monoid $ P(X) $ coincides with the well-known structure of \textit{bicyclic monoid} which will be denoted by B.

\begin{fact}\textup{\cite{Re10}}\label{fact: simple}
	A simple (0-simple) monoid with identity $ 1 $ is either a group (respectively, a group with 0 adjoined) or contains a copy of the bicyclic monoid as a
	submonoid having 1 as its identity element.
\end{fact}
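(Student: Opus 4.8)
The statement is recorded here as Fact~\ref{fact: simple} following \cite{Re10}, but I outline how one would establish it directly. The heart of the argument is an elementary lemma: \emph{in any monoid $M$ with identity $1$, if there exist $a,b\in M$ with $ab=1$ but $ba\neq 1$, then the submonoid $\langle a,b\rangle$ is isomorphic to the bicyclic monoid $B$ and has $1$ as its identity.} First I would note that $a^kb^k=1$ for all $k\geq 0$ (an easy induction from $ab=1$), and deduce that the set $N=\{\,b^ia^j : i,j\geq 0\,\}$ is closed: $b^ia^j\cdot b^ka^l$ equals $b^ia^{j-k+l}$ when $j\geq k$, and $b^{i+k-j}a^l$ when $j<k$. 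This is exactly the multiplication of $B$, so $(i,j)\mapsto b^ia^j$ is a surjective homomorphism $B\to N$, and $1=b^0a^0\in N$ gives the shared identity. The substantive point is injectivity: I would show that the idempotents $e_i=b^ia^i$ form a strictly descending chain $e_0=1 > e_1=ba > e_2 > \cdots$, using $ba\neq 1$ to exclude any collapse $e_i=e_{i+1}$, and then invoke the fact that every nontrivial congruence of $B$ identifies two of its idempotents, so that the homomorphism must be injective.

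Granting this lemma, the entire statement reduces to producing a pair $a,b$ with $ab=1\neq ba$. For the simple case, suppose $M$ is a simple monoid with identity $1$ that is not a group. Since a monoid in which every element has a right inverse is necessarily a group, there is an element $c$ with no right inverse. By simplicity the two-sided ideal $McM$ equals $M$, so $1=ucv$ for some $u,v\in M$. I would then check that $e:=cvu$ is idempotent, since $e^2=cv(ucv)u=cv\cdot 1\cdot u=e$, and that $e\neq 1$, because $e=1$ would give $c(vu)=1$, contradicting the choice of $c$. Applying simplicity again to $e$ yields $1=peq$; setting $a=pe$ and $b=eq$ gives $ab=pe^2q=peq=1$, while $ba\cdot e=eqpe\cdot e=eqpe=ba$ forces $ba\neq 1$ (otherwise $e=1\cdot e=ba\cdot e=ba=1$). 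Thus $ab=1\neq ba$, and the lemma closes this case with the required bicyclic submonoid sharing the identity $1$.

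The $0$-simple case runs in parallel. If $M$ is $0$-simple with identity $1$ and zero $0$ and is not a group with a zero adjoined, then (since a monoid whose every nonzero element has a right inverse is a group with zero adjoined) some nonzero $c$ has no right inverse; for nonzero $x$ the ideal $MxM$ still equals $M$, so the same construction produces $1=ucv$ and the idempotent $e=cvu$ with $e\neq 1$. The extra bookkeeping is to confirm that the bicyclic image contains no zero. Writing $f=cv$ and $g=u$ so that $gf=ucv=1$, the relation $g=(gf)g=g(fg)$ shows that $fg=cvu=0$ would force $g=u=0$, contradicting $ucv=1$; hence $e=cvu\neq 0$, so simplicity applies to $e$ as before. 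Finally $ba\neq 0$, since $ba=0$ would give $b=(ba)b=0$ and then $ab=0\neq 1$; thus $N$ is a genuine copy of $B$ and not a degenerate quotient.

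The main obstacle, as usual with the bicyclic monoid, is the injectivity step of the lemma: verifying that the chain of idempotents $b^ia^i$ is strictly decreasing and appealing to the congruence structure of $B$ to conclude that $(i,j)\mapsto b^ia^j$ is one-to-one. The production of the pair $(a,b)$ via simplicity and the degeneracy checks in the $0$-simple case are routine once the idempotent $e\neq 1$ (and $e\neq 0$) has been exhibited.
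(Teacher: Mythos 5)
The paper itself contains no proof of this statement: it is imported as a Fact cited from \cite{Re10}, so the only meaningful benchmark is the classical argument (due to Clifford and Preston) that \cite{Re10} reproduces. Your proof is correct and is essentially that standard argument: the key lemma that $ab=1\neq ba$ forces $\langle a,b\rangle$ to be a copy of the bicyclic monoid sharing the identity $1$ is the Clifford--Preston theorem (your injectivity step via the strictly descending idempotents $b^ia^i$ and the fact that every proper quotient of $\textup{B}$ collapses idempotents is the usual way to prove it), and your use of simplicity to manufacture such a pair --- the element $c$ with no right inverse, the idempotent $e=cvu\neq 1$ from $1=ucv$, then $1=peq$ with $a=pe$, $b=eq$ --- together with the $0$-simple bookkeeping ($1\neq 0$, $e\neq 0$; the check $ba\neq 0$ is actually redundant, since injectivity of $\textup{B}\to\langle a,b\rangle$ already guarantees the copy of $\textup{B}$ avoids $0$) is exactly how the result is established in the literature.
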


\begin{thm}\label{thm: mg}
Let $ M $ be a monoid. Then either $ \mathfrak{L}(\textup{Val},\textup{CF},M) = \mathfrak{L}(\textup{Val},\textup{CF},G) $ for some group
$ G $, or $ \mathfrak{L}(\textup{Val},\textup{CF},N) \subseteq \mathfrak{L}(\textup{Val},\textup{CF},\textup{B}) $.	
\end{thm}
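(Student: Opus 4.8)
The plan is to reduce the general monoid $M$ to one of the structurally simplest monoids for which the dichotomy is transparent, and then read off the two alternatives from the classification of simple and $0$-simple monoids. First I would invoke Corollary~\ref{cor: simple} to obtain a simple or $0$-simple monoid $N$ with $\mathfrak{L}(\textup{Val},\textup{CF},M) = \mathfrak{L}(\textup{Val},\textup{CF},N)$; this lets me assume without loss of generality that the underlying monoid is simple or $0$-simple with identity $1$, so that everything reduces to understanding $\mathfrak{L}(\textup{Val},\textup{CF},N)$.

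Next I would apply Fact~\ref{fact: simple}, which partitions such an $N$ into three exhaustive cases: $N$ is a group $G$; $N$ is a group $G$ with a zero adjoined, i.e. $N = G^0$; or $N$ contains a copy of the bicyclic monoid $\textup{B}$ as a submonoid having $1$ as its identity element. In the first case the equality $\mathfrak{L}(\textup{Val},\textup{CF},M) = \mathfrak{L}(\textup{Val},\textup{CF},G)$ is immediate. In the second case I would strip the adjoined zero by Proposition~\ref{prop: zero}, giving $\mathfrak{L}(\textup{Val},\textup{CF},N) = \mathfrak{L}(\textup{Val},\textup{CF},G^0) = \mathfrak{L}(\textup{Val},\textup{CF},G)$, so the first alternative again holds.

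The third case is the only one requiring a new observation, namely a submonoid monotonicity property: if a monoid contains a submonoid whose identity coincides with the identity of the larger monoid, then the corresponding class of context-free valence languages is contained in that of the larger monoid. I would establish this directly — a context-free valence grammar whose valences all lie in the submonoid is verbatim a context-free valence grammar over the larger monoid, and since the condition defining a successful derivation (that the product of the valences equals $1$) refers to the common identity, the generated language is unchanged. Applying this with the bicyclic submonoid $\textup{B}$ and the larger monoid $N$ yields $\mathfrak{L}(\textup{Val},\textup{CF},\textup{B}) \subseteq \mathfrak{L}(\textup{Val},\textup{CF},N) = \mathfrak{L}(\textup{Val},\textup{CF},M)$, which is the second alternative.

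The main obstacle, though a mild one, is the identity-preservation hypothesis: the monotonicity step is legitimate only because Fact~\ref{fact: simple} guarantees that the bicyclic copy shares the identity $1$ of $N$. I would use this hypothesis explicitly, since valid derivations are defined relative to the ambient identity and the inclusion of language classes would fail for an embedding that did not preserve $1$. Everything else is bookkeeping resting on the reductions of Corollary~\ref{cor: simple} and Proposition~\ref{prop: zero} already available in the excerpt.
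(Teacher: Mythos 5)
You follow exactly the paper's skeleton: reduce to a simple or $0$-simple monoid $N$ via Corollary~\ref{cor: simple}, split by Fact~\ref{fact: simple} into the group, group-with-zero, and bicyclic cases, and dispose of the group-with-zero case with Proposition~\ref{prop: zero}. The one place you diverge is the bicyclic case, and there your version is the right one. The theorem as printed (and the paper's proof, which simply asserts it in its last sentence) ends with $\mathfrak{L}(\textup{Val},\textup{CF},N) \subseteq \mathfrak{L}(\textup{Val},\textup{CF},\textup{B})$ --- note that $N$ is not even introduced in the statement --- but that inclusion does not follow from $\textup{B}$ being a submonoid of $N$; submonoid containment yields precisely the opposite inclusion, which is what you prove: $\mathfrak{L}(\textup{Val},\textup{CF},\textup{B}) \subseteq \mathfrak{L}(\textup{Val},\textup{CF},N) = \mathfrak{L}(\textup{Val},\textup{CF},M)$. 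Your direction is also the one the paper actually uses downstream: case (iii) of Theorem~\ref{thm: characterize} reads ``contains $\mathfrak{L}(\textup{Val},\textup{CF},\textup{B})$,'' and the open-questions discussion repeats this reading. So the printed inclusion is best read as a typo, and your submonoid-monotonicity step --- a valence grammar with valences in $\textup{B}$ is verbatim a valence grammar over $N$, with successful derivations preserved because Fact~\ref{fact: simple} guarantees the embedded copy of $\textup{B}$ has the same identity $1$ as $N$ --- is exactly the justification missing from the paper's one-line final step. In short: your proof is correct, takes the same route as the paper, and in the only place it differs it repairs both the statement and the proof.
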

\begin{proof}
Let $ M $ be a monoid. By Corollary \ref{cor: simple}, then  $ \mathfrak{L}(\textup{Val},\textup{CF},M) =   \mathfrak{L}(\textup{Val},\textup{CF},N)$ for some simple or 0-simple monoid. By Fact \ref{fact: simple}, $ N $ is either a group (or a group with zero adjoined) or contains a copy of the bicyclic monoid. If $ N $ is a group, then the result follows. If $ N $ is a group with zero adjoined, then by \ref{prop: zero}, $\mathfrak{L}(\textup{Val},\textup{CF},N) = \mathfrak{L}(\textup{Val},\textup{CF},G)$ for some group $ G $. Otherwise, it should be the case that $ \mathfrak{L}(\textup{Val},\textup{CF},N) \subseteq \mathfrak{L}(\textup{Val},\textup{CF},\textup{B}) $. 
\end{proof}

Now we are ready to prove the main theorem of the section which will allow us to determine the properties of the set of languages generated by context-free valence grammars. We need the following proposition which is the grammar analogue of Proposition 1 of \cite{Ka06}.

\begin{prop}\label{prop: kambites}
	Let $ M $ be a monoid, and suppose that $ {L} $ is accepted by a context-free valence grammar over $ M $. Then there exists a finitely generated submonoid $  N $ of $ M $ such that $ {L} $ is accepted by a context-free valence grammar over $ N $.
\end{prop}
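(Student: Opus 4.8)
The plan is to exploit the fact that a context-free valence grammar has, by definition, only finitely many rules, and hence involves only finitely many valences. First I would take a context-free valence grammar $\mathfrak{G}$ over $M$ with $ {L}(\mathfrak{G}) = {L} $, and collect the valences appearing in its rules: since $P$ is finite, the set $\{m_1, \ldots, m_k\} \subseteq M$ of valences of the rules of $\mathfrak{G}$ is finite. I would then let $N = \langle m_1, \ldots, m_k \rangle$ be the submonoid of $M$ generated by this finite set. By construction $N$ is finitely generated, and as a submonoid it contains the identity $1$ of $M$, which is therefore also the identity of $N$.

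The central observation is that the valence of any derivation of $\mathfrak{G}$ is a product $m_{i_1} m_{i_2} \cdots m_{i_n}$ of the $m_i$'s, and hence lies in $N$. Because $N$ is closed under the monoid operation of $M$ and shares its identity element, such a product equals $1$ in $M$ if and only if it equals $1$ in $N$. Consequently, if I form the grammar $\mathfrak{G}'$ over $N$ obtained from $\mathfrak{G}$ by reinterpreting each valence $m_i \in M$ as the same element of $N$, then a word $w$ admits a valid derivation $(S,1) \Rightarrow^* (w,1)$ in $\mathfrak{G}$ exactly when it admits one in $\mathfrak{G}'$. This yields $ {L}(\mathfrak{G}') = {L} $ with $\mathfrak{G}'$ a context-free valence grammar over the finitely generated submonoid $N$, as required.

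There is no serious obstacle here; the argument is the grammar-side mirror of Proposition 1 of \cite{Ka06} for monoid automata, and it runs along the same lines as the proofs of Proposition \ref{prop: ideal} and Proposition \ref{prop: zero}, where the admissible derivations are analyzed through the products of their valences. The only point demanding care is the verification that the acceptance condition is genuinely preserved under passage from $M$ to $N$, that is, that restricting the ambient monoid does not alter which derivations are successful. This hinges solely on the identity of $N$ coinciding with that of $M$ and on the closure of $N$, both immediate from the definition of the submonoid generated by a set, so the verification is routine.
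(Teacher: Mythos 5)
Your proposal is correct and matches the paper's own proof: both arguments rest on the observation that a grammar has finitely many rules, hence finitely many valences, and that the submonoid $N$ they generate (which contains the identity of $M$ by definition of submonoid) already accounts for every product of valences arising in a derivation. The paper states this in three terse sentences; your write-up merely makes explicit the closure and shared-identity details that the paper leaves implicit.
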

\begin{proof}
	There are only finitely many valences appearing in the rules of a grammar since the set of rules of a grammar is finite. Hence, the valences appearing in derivations are from the submonoid $ N $ of $ M $ generated by those elements. So the grammar can be viewed as a context-free valence grammar over $ N $.
\end{proof}

Recall that a group $ G $ is \textit{locally finite} if every finitely generated subgroup of $ G $ is finite and \textit{periodic} if every element of the group has finite order.

\begin{thm}\label{thm: characterize}
	Let $ M $ be a monoid. Then $ \mathfrak{L}(\textup{Val},\textup{CF},M) $ either
	\begin{enumerate}[(i)]
		\item equals $  \mathsf{CF} $,
		\item contains $ \mathfrak{L}(\textup{Val},\textup{CF},\mathbb{Z}) $ ,
		\item contains  $ \mathfrak{L}(\textup{Val},\textup{CF},\textup{B}) $ or
		\item is equal to $ \mathfrak{L}(\textup{Val},\textup{CF},G) $ for $ G $ an infinite periodic group which is not locally finite.		
	\end{enumerate}
\end{thm}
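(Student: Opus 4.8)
The plan is to follow the blueprint of the valence-automaton classification of \cite{Re10}, replacing each appeal to a result about valence automata by its context-free counterpart established earlier in this section. The starting point is Corollary \ref{cor: simple}: there is a simple or $0$-simple monoid $N$ with $\mathfrak{L}(\textup{Val},\textup{CF},M) = \mathfrak{L}(\textup{Val},\textup{CF},N)$, so it suffices to classify $N$. By Fact \ref{fact: simple}, such an $N$ is of exactly one of three kinds: a group, a group with a zero adjoined, or a monoid containing a copy of the bicyclic monoid $\textup{B}$ as a submonoid whose identity is the identity $1$ of $N$. I would treat these possibilities in turn and show that each lands in one of the four listed alternatives.

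If $N$ contains $\textup{B}$ as an identity-preserving submonoid, then every context-free valence grammar over $\textup{B}$ is in particular a context-free valence grammar over $N$ (its valences simply lie in the submonoid), so $\mathfrak{L}(\textup{Val},\textup{CF},\textup{B}) \subseteq \mathfrak{L}(\textup{Val},\textup{CF},N) = \mathfrak{L}(\textup{Val},\textup{CF},M)$, which is alternative (iii). If $N$ is a group with a zero adjoined, Proposition \ref{prop: zero} lets me discard the zero and reduce to $\mathfrak{L}(\textup{Val},\textup{CF},G)$ for the underlying group $G$, merging this into the remaining group case.

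It then remains to classify $\mathfrak{L}(\textup{Val},\textup{CF},G)$ for a group $G$, where I would split on torsion. If $G$ has an element $g$ of infinite order, then $\langle g\rangle \cong \mathbb{Z}$ is a subgroup, and reading $\mathbb{Z}$-valences inside this copy shows $\mathfrak{L}(\textup{Val},\textup{CF},\mathbb{Z}) \subseteq \mathfrak{L}(\textup{Val},\textup{CF},G) = \mathfrak{L}(\textup{Val},\textup{CF},M)$, giving alternative (ii). If $G$ is periodic, I would further split on local finiteness. When $G$ is periodic and not locally finite, the conclusion is immediate, since then $\mathfrak{L}(\textup{Val},\textup{CF},M) = \mathfrak{L}(\textup{Val},\textup{CF},G)$ with $G$ an infinite (not being locally finite it cannot be finite) periodic group that is not locally finite, which is alternative (iv). When $G$ is periodic and locally finite, I claim $\mathfrak{L}(\textup{Val},\textup{CF},G) = \mathsf{CF}$, alternative (i): by Proposition \ref{prop: kambites} any $L \in \mathfrak{L}(\textup{Val},\textup{CF},G)$ is generated using valences from a finitely generated submonoid of $G$, which in a torsion group is already a finitely generated subgroup (each generator $g$ of finite order $n$ contributes its inverse $g^{n-1}$), hence finite by local finiteness; context-free valence grammars over a finite group generate only $\mathsf{CF}$ by the result of \cite{Ze11} that context-free grammars over finite monoids recognize exactly $\mathsf{CF}$ (a finite group being a finite monoid), while the reverse inclusion $\mathsf{CF} \subseteq \mathfrak{L}(\textup{Val},\textup{CF},G)$ is trivial via unit valences.

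These cases are exhaustive, since every group is either torsion or contains an element of infinite order, and every torsion group is either locally finite or not. The step I expect to require the most care is the periodic locally finite collapse to $\mathsf{CF}$: it hinges on the observation that in a torsion group a finitely generated submonoid is in fact a subgroup, so that Proposition \ref{prop: kambites} combined with local finiteness really does confine each language to a finite group, where the result of \cite{Ze11} applies. The bicyclic and infinite-order embeddings should be routine once one checks that the relevant inclusions preserve the identity element, which is precisely what Fact \ref{fact: simple} guarantees in the bicyclic case.
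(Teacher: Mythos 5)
Your proposal is correct and follows essentially the same route as the paper's proof: reduce to a simple or $0$-simple monoid via Corollary \ref{cor: simple}, apply Fact \ref{fact: simple}, strip the adjoined zero with Proposition \ref{prop: zero}, and split the group case by periodicity and local finiteness, using Proposition \ref{prop: kambites} together with the result of \cite{Ze11} in the locally finite case. Your write-up is in fact tighter than the paper's: you state the bicyclic inclusion in the correct direction ($\mathfrak{L}(\textup{Val},\textup{CF},\textup{B}) \subseteq \mathfrak{L}(\textup{Val},\textup{CF},M)$, whereas the paper's Theorem \ref{thm: mg} prints it reversed), you attach the $\mathbb{Z}$-embedding to alternative (ii) where the paper misprints (iii), and you justify why the finitely generated submonoid furnished by Proposition \ref{prop: kambites} is actually a subgroup in a torsion group, a point the paper glosses over.
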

\newpage
\begin{proof}
	Let $ M $ be a monoid. Then either $ \mathfrak{L}(\textup{Val},\textup{CF},M) = \mathfrak{L}(\textup{Val},\textup{CF},G) $ for some group
	$ G $, or $ \mathfrak{L}(\textup{Val},\textup{CF},N) \subseteq \mathfrak{L}(\textup{Val},\textup{CF},\textup{B}) $. Then in the former case $ (iii) $ holds. In the latter case $ \mathfrak{L}(\textup{Val},\textup{CF},M) = \mathfrak{L}(\textup{Val},\textup{CF},G) $ for some group $ N $.
	
	
	In the latter case, if $ N $ is a group with zero adjoined, then by Proposition \ref{prop: zero} we know that $ \mathfrak{L}(\textup{Val}, \textup{CF},N) =\mathfrak{L}(\textup{Val}, \textup{CF},G) $ for some group $ G $. If $ G $ is not periodic, then it has an element of infinite order which generates a subgroup isomorphic to $ \mathbb{Z} $ and hence \textit{(iii)} follows.
	Otherwise, suppose that $ G $ is locally finite. By Proposition \ref{prop: kambites}, every language in  $ \mathfrak{L}(\textup{Val},\textup{CF},G) $ belongs to  $ \mathfrak{L}(\textup{Val},\textup{CF},H) $ for some finitely generated subgroup $ H $ of $ G $. Since $ G $ is locally finite, $ H $ is finite. Any language $ \mathfrak{L}(\textup{Val},\textup{CF},H) $ is context-free by a result from \cite{Ze11} and hence $ (i) $ holds. 
	The only remaining case is that $ G $ is a periodic group which is not locally finite, in which case $ (iv) $ holds.
\end{proof}

The result about valence grammars over commutative monoids which we have stated in Fact \ref{fact: fs}, now follows as a corollary of Theorem \ref{thm: characterize}.

\begin{cor}
	Let $ M $ be a commutative monoid. Then $ \mathfrak{L}(\textup{Val},\textup{CF},M) = \mathfrak{L}(\textup{Val},\textup{CF},G) $ for some group $ G $.
\end{cor}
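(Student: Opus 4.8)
The plan is to trace through the structural dichotomy established in the proof of Theorem~\ref{thm: mg}, specializing every step to the commutative setting and showing that the branch leading to the bicyclic monoid $\textup{B}$ is impossible. First I would invoke Corollary~\ref{cor: simple} to obtain a simple or $0$-simple monoid $N$ with $\mathfrak{L}(\textup{Val},\textup{CF},M) = \mathfrak{L}(\textup{Val},\textup{CF},N)$. The crucial observation is that commutativity is preserved under the construction producing $N$: $N$ is the Rees quotient $M/\mathtt{I}$, where $\mathtt{I}$ is the union of all proper ideals of $M$, and in a Rees quotient one has $\{x\}\{y\} = \{xy\}$ (or the zero class) whenever the product survives, so $xy = yx$ in $M$ forces $\{x\}\{y\} = \{y\}\{x\}$ in $N$. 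Hence $N$ is a commutative simple or $0$-simple monoid.

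Next I would apply Fact~\ref{fact: simple}, which asserts that a simple or $0$-simple monoid with identity $1$ is either a group, a group with a zero adjoined, or contains a copy of the bicyclic monoid $\textup{B}$ as a submonoid sharing its identity. The key point is that the third alternative is incompatible with commutativity: every submonoid of a commutative monoid is itself commutative, whereas $\textup{B}$ (the monoid $P(X)$ with $|X|=1$, generated by $P_a, Q_a$ with $P_aQ_a = 1 \neq Q_aP_a$) is non-commutative. Therefore $N$ must be a group or a group with a zero adjoined.

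In the first case the claim is immediate with $G = N$. In the second case, writing $N = G^0$ for a group $G$, I would appeal to Proposition~\ref{prop: zero}, which gives $\mathfrak{L}(\textup{Val},\textup{CF},G^0) = \mathfrak{L}(\textup{Val},\textup{CF},G)$; chaining the equalities yields $\mathfrak{L}(\textup{Val},\textup{CF},M) = \mathfrak{L}(\textup{Val},\textup{CF},N) = \mathfrak{L}(\textup{Val},\textup{CF},G)$. Either way $\mathfrak{L}(\textup{Val},\textup{CF},M) = \mathfrak{L}(\textup{Val},\textup{CF},G)$ for a group $G$, as required.

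I do not anticipate a genuine obstacle here, since all the heavy lifting is already done in the cited results; the only point demanding care is the verification that passing to the simple/$0$-simple monoid $N$ does not destroy commutativity, which is what rules out the bicyclic branch. Everything else is a direct case analysis feeding the two already-established propositions.
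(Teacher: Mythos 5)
Your proof is correct and takes essentially the same route as the paper's: the paper likewise disposes of the bicyclic alternative in the dichotomy underlying Theorem~\ref{thm: characterize} on grounds of commutativity, letting the group and group-with-zero cases go through via Proposition~\ref{prop: zero}. If anything, you are slightly more careful than the paper, which only remarks that a commutative monoid $M$ cannot contain a copy of the bicyclic monoid, whereas the embedding must actually be excluded in the simple or $0$-simple Rees quotient $N$; your verification that commutativity survives the passage to $N = M/\mathtt{I}$ supplies exactly that detail.
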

\begin{proof}
	Since no commutative monoid $ M $ can contain a copy of the bicyclic monoid as a submonoid, the result follows by the proof of Theorem  \ref{thm: characterize}.
\end{proof}

\section{Open Questions}\label{sec: val-open}
We proved that a valence PDA over $ M $ is equivalent to a valence automaton over $ \textup{P}_2 \times M$. Can we prove a similar equivalence result for context-free valence grammars?

In Theorem \ref{thm: characterize}, we conclude that $ \mathfrak{L}(\textup{Val},\textup{CF},M) $ contains the class
$ \mathfrak{L}(\textup{Val},\textup{CF},\textup{B}) $ when $ M $ is a monoid that contains B. Since B is not commutative, no correspondence with valence PDA exists, and little is known about the class $ \mathfrak{L}(\textup{Val},\textup{CF},\textup{B}) $, except that it contains the set of partially blind one counter languages. What can we say further about $ \mathfrak{L}(\textup{Val},\textup{CF},\textup{B}) $?

\chapter{CONCLUSION} \label{chap: con}

The main purpose of this thesis is to explore various computational models with storage. We mainly examine extended finite automata and homing vector automata, investigating the language recognition power of these models under different restrictions. We also present several results about valence pushdown automata and context-free valence grammars.

Matrices have a fundamental role in this study. Focusing on finite automata over matrix groups, both models can be regarded as a finite automaton equipped with a storage mechanism that is modified through matrix multiplications. 

 We first examine the class of languages recognized by finite automata over rational and integer matrix groups and compare them with the previously known language classes. Our findings can be summarized as follows:
\begin{itemize}
	\item Finite automata over the group of $ 2 \times 2 $ integer matrices recognize exactly the class of context-free languages.
	\item Finite automata over the group of $ 2 \times 2$ rational matrices with determinant 1 can recognize some non-context-free languages.
\end{itemize}

Together with the previous results, the overall picture is given in Figure \ref{fig:efa}. The question mark indicates that the existence of a language in the particular subset is unknown.
\begin{figure}[h]
	\centering
	\includegraphics[width=1\linewidth]{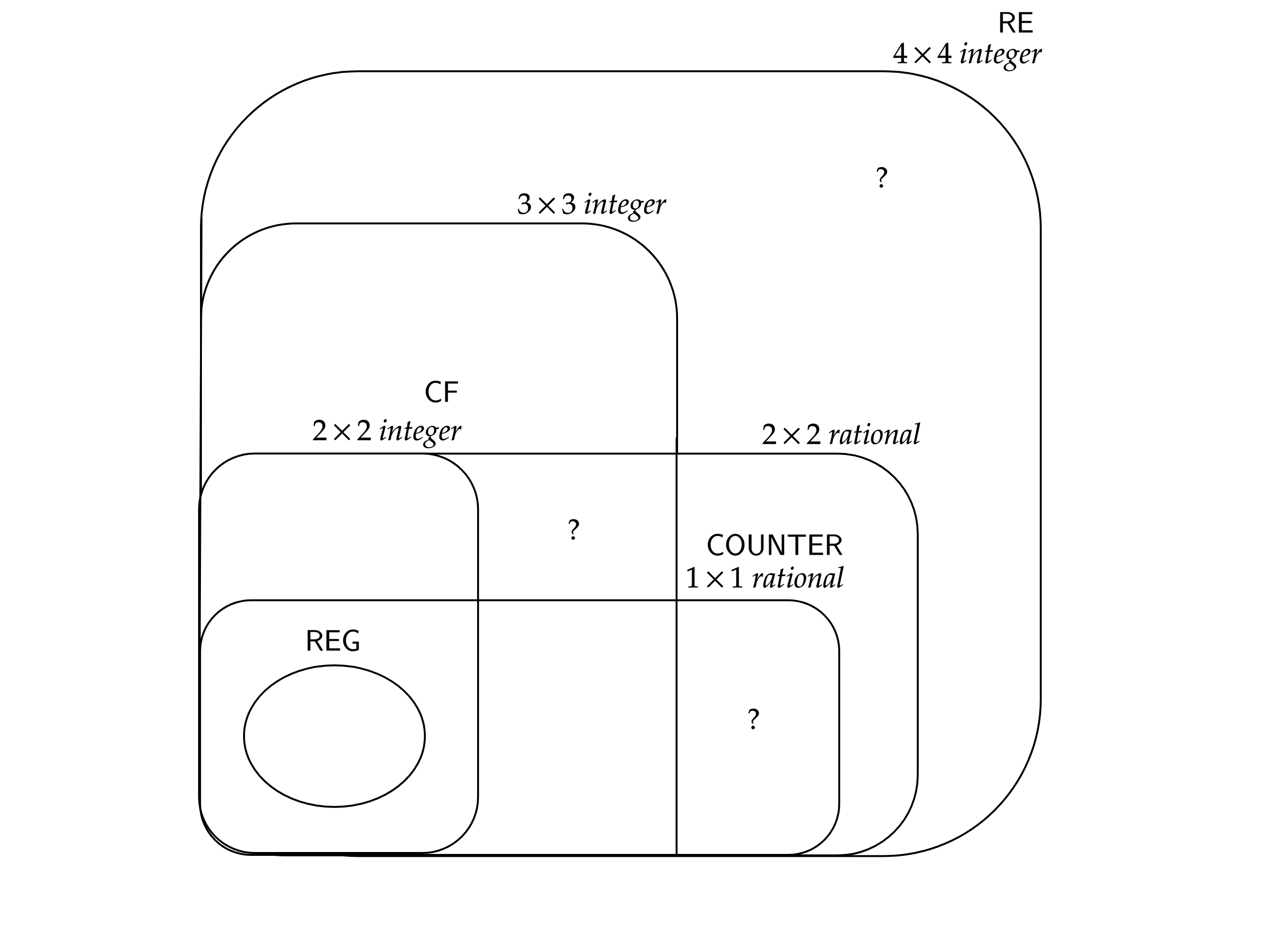}
	\caption{The hierarchy of the classes of languages recognized by extended finite automata over integer and rational matrix groups}
	\label{fig:efa}
\end{figure}

Next we analyze the language recognition power of extended finite automata under time restriction. We prove that the growth rate of the group is effective in the language recognition power of the corresponding group automaton, when a time bound is imposed on the machine. Using this result we prove that there exists a context-free language which cannot be recognized by any $ G $-automata in polynomial time if the group $ G $ has polynomial growth. Furthermore, we investigate the class of languages recognized by group automata in linear-time and obtain the following results:
\begin{itemize}
	\item Free group automata recognize exactly the class of context-free languages in linear time. 
	\item Heisenberg group automata cannot recognize some context-free languages in linear time.
	\item The class of languages recognized by finite automata over the group of $ 3 \times 3$ integer matrices is a proper superclass of the class of languages recognized by Heisenberg group automata under the restriction of linear time.
 \end{itemize}	

We investigate the link between the decidability problems for matrix groups and the corresponding group automata. We provide alternative proofs for the decidability of the subsemigroup membership problem for the group of $ 2 \times 2 $ integer matrices and the identity problem for the monoid of $ 2 \times 2 $ matrices with integer entries. We also prove the undecidability of the emptiness problem for finite automata over $ 4 \times 4 $ integer matrix groups.

Another way we focus on matrices is through the study of the homing vector automaton model, a finite automaton equipped with a vector. Like in many classical models where the acceptance condition is ending with the initial register value, a string is accepted by a homing vector automaton if the vector is equal to its initial value after a series of multiplications with some rational valued matrices. We define different variants of the machine such as real-time, one-way, deterministic, nondeterministic, blind, and non-blind and we add further restrictions on the machine by restricting the matrices multiplied with the register to a specific set. We investigate the relationship between homing vector automata and counter automata, proving that one-dimensional homing vector automata are equivalent to blind counter automata and the two models are incomparable when the computation is not blind. One-way nondeterministic blind homing vector automata are closely linked to extended finite automata over matrix groups and this link extends our knowledge on homing vector automata. We visualize our findings in Figure \ref{fig:hva} and summarize our findings as follows:

\begin{itemize}
	\item The class of languages recognized by one-way nondeterministic blind homing vector automata over the group of $ 2 \times 2 $ integer matrices with determinant 1 contains the class of context-free languages.
	\item  Over the monoid of $ 2 \times 2 $ integer matrices, the class of languages recognized by one-way nondeterministic blind homing vector automata is a proper superset of the class of languages recognized by extended finite automata.
	\item Over the monoid of $ 3 \times 3 $ integer matrices, the class of languages recognized by one-way nondeterministic blind homing vector automata is a proper superset of the class of languages recognized by one-way nondeterministic blind counter automata.
		\item The class of languages recognized by one-way nondeterministic blind homing vector automata over the group of $ 4 \times 4 $ integer matrices with determinant 1 is the class of recursively enumerable languages.
\end{itemize}
\begin{figure}
	\centering
	\includegraphics[width=1\linewidth]{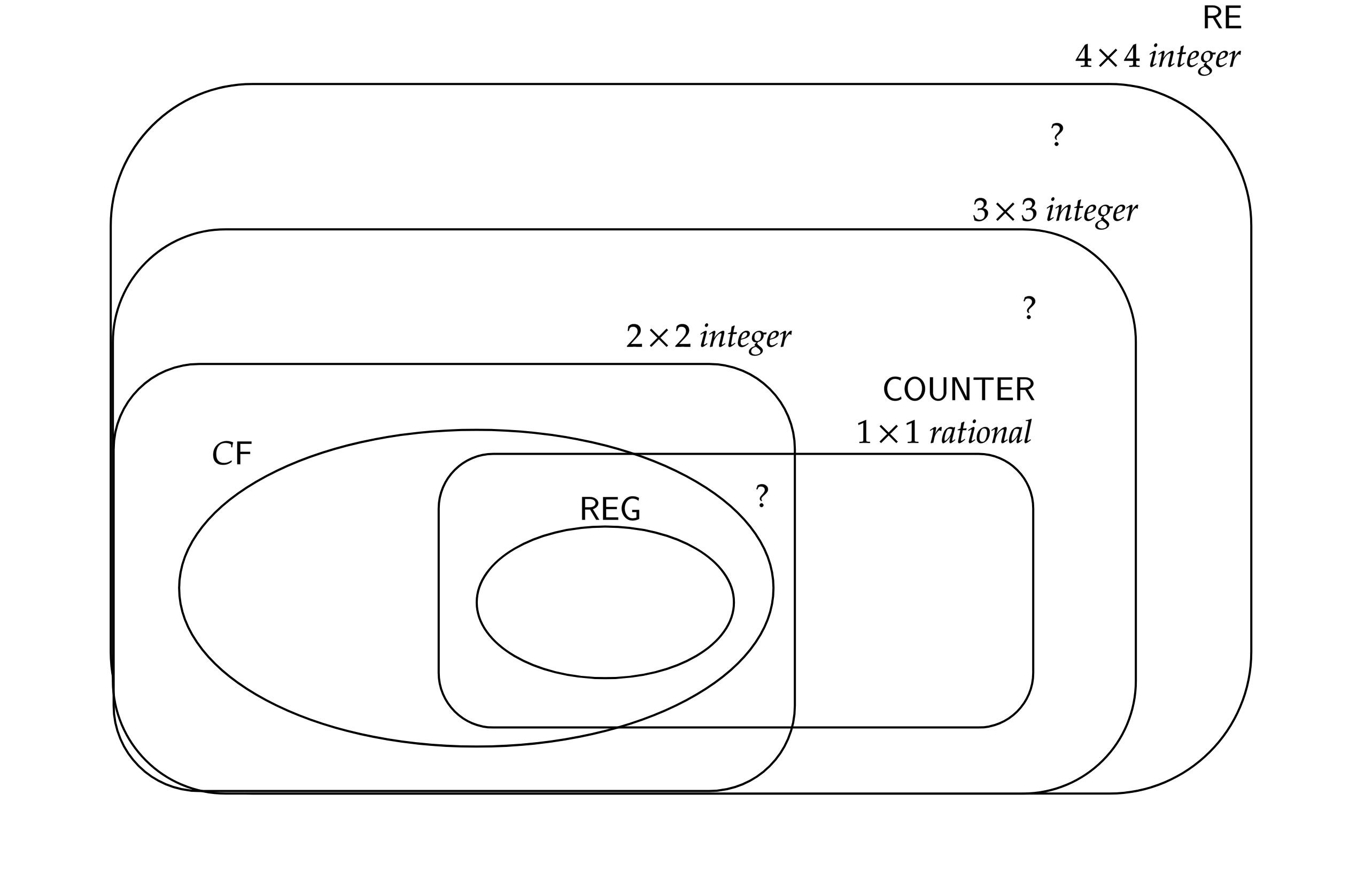}
	\caption{The hierarchy of languages recognized by one-way nondeterministic blind homing vector automata with rational and integer entries}
	\label{fig:hva}
\end{figure}

We propose a new method called the generalized Stern-Brocot encoding, which enables encoding any string over an alphabet of size $ k $ into a $ k $-dimensional vector. The encoding can be carried out in real-time with $ k \times k $ matrices whose entries belong to the set $ \{-1,0,1\} $. 
We further explore real-time homing vector automata, by establishing some separation results among the various variants and discussing some closure properties. We prove that any unary language recognized by a real-time deterministic homing vector automaton is regular. Continuing our study on real-time homing vector automata, we investigate their stateless versions. We examine the language recognition power of different versions of these machines and make some observations. Our study of the stateless real-time deterministic blind one-dimensional homing vector automata yields the following characterization for the class of languages recognized by stateless deterministic finite automata with multiplication without equality.
\begin{itemize}
	\item A language is recognized by a stateless deterministic finite automata with multiplication without equality iff it is commutative and its Parikh image is the set of nonnegative integer solutions to a system of linear homogeneous Diophantine equations. 
\end{itemize}   
Finally we look at the problem of string separation by homing vector automata and vector automata, observing that these models can separate any pair of words by using 2-dimensional vectors in the real-time, deterministic and blind computation mode.

Other computational models we investigate are the valence pushdown automata and context-free valence grammars. We prove that valence pushdown automata are equivalent to valence automata defined over some specific monoid. We generalize some results proven in the context of extended finite automata to context-free valence grammars.

%

To conclude, this research opens up a new perspective for analyzing the decision problems of matrix groups and various automaton models, and any computational model than can be traced by matrix multiplications. The study of extended finite automata over matrix groups and under time restriction contributes to the current literature on the topic. The classes of languages recognized by extended finite automata and homing vector automata over integer matrices of dimensions 2, 3, and 4 provides a different point of view for classifying languages. 

Many questions in need of further investigation are presented throughout the thesis. The future research on the subject should concentrate on the language recognition power of extended finite automata and homing vector automata over $ 3 \times 3 $ integer matrices. We conjecture that these classes are the proper subsets of the class of recursively enumerable languages. Note that for the group of $ 3 \times 3 $ integer matrices, the decidability of the membership problem is also open.

\bibliographystyle{fbe_tez_v11}
\bibliography{thesis_arxiv}

\begin{thebibliography}{10}
\newcommand{\enquote}[1]{``#1''}
\expandafter\ifx\csname url\endcsname\relax
  \def\url#1{{\tt #1}}\fi
\expandafter\ifx\csname urlprefix\endcsname\relax\def\urlprefix{}\fi

\bibitem{Si06}
Sipser, M., {\em Introduction to the Theory of Computation, 2nd edition\/},
  Thomson Course Technology, United States of America, 2006.

\bibitem{Ch62}
Chomsky, N., \enquote{Context-free grammars and pushdown storage}, {\em M. I.
  T. Res. Lab. Electron. Quart. Prog. Report.\/}, Vol.~65, pp. 187--194, 1962.

\bibitem{FMR67}
Fischer, P.~C., A.~R. Meyer and A.~L. Rosenberg, \enquote{Real time counter
  machines}, {\em Proceedings of the 8th Annual Symposium on Switching and
  Automata Theory, {SWAT'67} {(FOCS)}\/}, pp. 148--154, 1967.

\bibitem{Tu37}
Turing, A.~M., \enquote{On computable numbers, with an application to the
  Entscheidungsproblem}, {\em Proceedings of the London mathematical
  society\/}, Vol.~2, No.~1, pp. 230--265, 1937.

\bibitem{MS97}
Mitrana, V. and R.~Stiebe, \enquote{The accepting power of finite automata over
  groups}, {\em New Trends in Formal Languages\/}, pp. 39--48, Springer-Verlag,
  1997.

\bibitem{ISK76}
Ibarra, O.~H., S.~K. Sahni and C.~E. Kim, \enquote{Finite automata with
  multiplication}, {\em Theoretical Computer Science\/}, Vol.~2, No.~3, pp. 271
  -- 294, 1976.

\bibitem{Pa80}
P\u{a}un, G., \enquote{A new generative device: valence grammars}, {\em Rev.
  Roumaine Math. Pures Appl\/}, Vol.~25, No.~6, pp. 911--924, 1980.

\bibitem{FS02}
Fernau, H. and R.~Stiebe, \enquote{Sequential grammars and automata with
  valences}, {\em Theoretical Computer Science\/}, Vol. 276, No. 1–2, pp. 377
  -- 405, 2002.

\bibitem{SYS13}
Salehi, {\"{O}}., A.~Yakary{\i}lmaz and A.~C.~C. Say, \enquote{Real-time vector
  automata}, {\em Proceedings of the 19th International Conference on
  Fundamentals of Computation Theory, {FCT'13}\/}, Vol. 8070 of {\em {LNCS}\/},
  pp. 293--304, Springer, 2013.

\bibitem{SDS16}
Salehi, {\"{O}}., F.~D'Alessandro and A.~C.~C. Say, \enquote{Language classes
  associated with automata over matrix groups}, {\em Proceedings of the Eighth
  Workshop on Non-Classical Models of Automata and Applications, {NCMA'16}\/},
  Vol. 321 of {\em books@ocg.at\/}, pp. 287--300, {\"{O}}sterreichische
  Computer Gesellschaft, 2016.

\bibitem{SDS18}
Salehi, {\"O}., F.~D’Alessandro and A.~C. Say, \enquote{Language classes
  associated with automata over matrix groups}, {\em RAIRO-Theoretical
  Informatics and Applications\/}, Vol.~52, No. 2-3-4, pp. 253--268, 2018.

\bibitem{SS18}
Salehi, {\"{O}}. and A.~C.~C. Say, \enquote{Extended finite automata and
  decision problems for matrix semigroups.}, {\em Tenth Workshop on
  Non-Classical Models of Automata and Applications (Short Papers)\/}, Vol.
  321, pp. 45--52, {\"{O}}sterreichische Computer Gesellschaft, 2018.

\bibitem{Tu69}
Turakainen, P., \enquote{Generalized automata and stochastic languages}, {\em
  Proceedings of the American Mathematical Society\/}, Vol.~21, pp. 303--309,
  1969.

\bibitem{LR14}
Lipton, R.~J. and K.~W. Regan, {\em Quantum Algorithms via Linear Algebra\/},
  MIT Press, 2014.

\bibitem{SS15}
Salehi, {\"{O}}. and A.~C.~C. Say, \enquote{Homing vector automata}, {\em
  Proceedings of the Seventh Workshop on Non-Classical Models of Automata and
  Applications, {NCMA'15}\/}, Vol. 318 of {\em books@ocg.at\/}, pp. 193--205,
  {\"{O}}sterreichische Computer Gesellschaft, 2015.

\bibitem{SSD16}
Salehi, {\"{O}}., A.~C.~C. Say and F.~D'Alessandro, \enquote{Homing vector
  automata}, {\em RAIRO - Theoretical Informatics and Applications\/}, Vol.~50,
  No.~4, pp. 371--386, 2016.

\bibitem{SYS19}
Salehi, {\"O}., A.~Yakaryılmaz and A.~C.~C. Say, \enquote{New results on
  vector automata and homing vector automata}, {\em International Journal of
  Foundations of Computer Science\/}, 2019, accepted.

\bibitem{SDS17}
Salehi, {\"{O}}., F.~D'Alessandro and A.~C.~C. Say, \enquote{Generalized
  results on monoids as memory}, {\em Proceedings of the 15th International
  Conference on Automata and Formal Languages,{AFL'17}\/}, Vol. 252 of {\em
  {EPTCS}\/}, pp. 234--247, 2017.

\bibitem{Mi60}
Minsky, M., {\em Recursive Unsolvability of Post's Problem of ``tag'': And
  Other Topics in Theory of Truing Machines\/}, Massachusetts Institute of
  Technology, Lincoln Laboratory, 1960.

\bibitem{FMR68}
Fischer, P.~C., A.~R. Meyer and A.~L. Rosenberg, \enquote{Counter machines and
  counter languages}, {\em Mathematical Systems Theory\/}, Vol.~2, No.~3, pp.
  265--283, 1968.

\bibitem{Gr76}
Greibach, S.~A., \enquote{Remarks on the complexity of nondeterministic counter
  languages}, {\em Theoretical Computer Science\/}, Vol.~1, No.~4, pp. 269 --
  288, 1976.

\bibitem{Pe11}
Petersen, H., \enquote{Simulations by time-bounded counter machines}, {\em
  International Journal of Foundations of Computer Science\/}, Vol.~22, pp.
  395--409, 2011.

\bibitem{Gr78}
Greibach, S.~A., \enquote{Remarks on blind and partially blind one-way
  multicounter machines}, {\em Theoretical Computer Science\/}, Vol.~7, pp.
  311--324, 1978.

\bibitem{Fr03}
Fraleigh, J. and V.~Katz, {\em A First Course in Abstract Algebra\/},
  Addison-Wesley world student series, Addison-Wesley, 2003.

\bibitem{LS77}
Lyndon, R.~C. and P.~E. Schupp, {\em Combinatorial Group Theory\/},
  Springer-Verlag, 1977.

\bibitem{Sc27}
Schreier, O., \enquote{Die Untergruppen der freien Gruppen}, {\em Abhandlungen
  aus dem Mathematischen Seminar der Universit{\"a}t Hamburg\/}, Vol.~5, No.~1,
  pp. 161--183, 1927.

\bibitem{Ni21}
Nielsen, J., \enquote{Om Regning med ikke-kommutative Faktorer og dens
  Anvendelse i Gruppeteorien}, {\em Matematisk Tidsskrift. B\/}, pp. 77--94,
  1921.

\bibitem{Gi96}
Gilman, R., \enquote{Formal languages and infinite groups}, {\em Geometric and
  computational perspectives on infinite groups\/}, pp. 27--51, 1996.

\bibitem{ES69}
Eilenberg, S. and M.~Schützenberger, \enquote{Rational sets in commutative
  monoids}, {\em Journal of Algebra\/}, Vol.~13, No.~2, pp. 173 -- 191, 1969.

\bibitem{Co05}
Corson, J.~M., \enquote{Extended finite automata and word problems}, {\em
  International Journal of Algebra and Computation\/}, Vol.~15, No.~03, pp.
  455--466, 2005.

\bibitem{DM00}
Dassow, J. and V.~Mitrana, \enquote{Finite automata over free groups}, {\em
  International Journal of Algebra and Computation\/}, Vol.~10, No.~06, pp.
  725--737, 2000.

\bibitem{MS01}
Mitrana, V. and R.~Stiebe, \enquote{Extended finite automata over groups}, {\em
  Discrete Applied Mathematics\/}, Vol. 108, No.~3, pp. 287--300, 2001.

\bibitem{Ka09}
Kambites, M., \enquote{Formal languages and groups as memory}, {\em
  Communications in Algebra\/}, Vol.~37, No.~1, pp. 193--208, 2009.

\bibitem{RK09}
Render, E. and M.~Kambites, \enquote{Rational subsets of polycyclic monoids and
  valence automata}, {\em Inf. Comput.\/}, Vol. 207, No.~11, pp. 1329--1339,
  2009.

\bibitem{Re10}
Render, E., {\em Rational Monoid and Semigroup Automata\/}, Ph.D. Thesis,
  University of Manchester, 2010.

\bibitem{EO04}
Elston, G.~Z. and G.~Ostheimer, \enquote{On groups whose word problem is solved
  by a~counter automaton}, {\em Theoretical Computer Science\/}, Vol. 320, No.
  2–3, pp. 175 -- 185, 2004.

\bibitem{Ka06}
Kambites, M., \enquote{Word problems recognisable by deterministic blind monoid
  automata}, {\em Theoretical Computer Science\/}, Vol. 362, No.~1, pp.
  232--237, 2006.

\bibitem{EKO08}
Elder, M., M.~ and G.~Ostheimer, \enquote{On groups and counter automata}, {\em
  International Journal of Algebra and Computation\/}, Vol.~18, No.~08, pp.
  1345--1364, 2008.

\bibitem{CR15}
Corson, J.~M. and L.~L. Ross, \enquote{Automata with counters that recognize
  word problems of free products}, {\em International Journal of Foundations of
  Computer Science\/}, Vol.~26, No.~01, pp. 79--98, 2015.

\bibitem{RCR17}
Bishop-Ross, R., J.~M. Corson and J.~L. Ross, \enquote{Context-sensitive
  languages and G-automata}, {\em International Journal of Algebra and
  Computation\/}, Vol.~27, No.~02, pp. 237--249, 2017.

\bibitem{Ze16}
Zetzsche, G., {\em Monoids as Storage Mechanisms\/}, Phd thesis, Technische
  Universit{\"a}t Kaiserslautern, 2016.

\bibitem{CEO06}
Cleary, S., M.~Elder and G.~Ostheimer, \enquote{The word problem distinguishes
  counter languages}, {\em arXiv preprint math/0606415\/}, 2006.

\bibitem{KM79}
Kargapolov, M.~I. and J.~I. Merzljakov, {\em Fundamentals of the Theory of
  Groups\/}, Springer-Verlag, 1979.

\bibitem{BO08}
Brown, N.~P. and N.~Ozawa, {\em C*-Algebras and Finite-Dimensional
  Approximations\/}, Vol.~88, American Mathematical Soc., 2008.

\bibitem{Gi90}
Grigorchuk, R.~I., \enquote{On growth in group theory}, {\em Proceedings of the
  International Congress of Mathematicians\/}, Vol.~1, pp. 325--338, 1990.

\bibitem{NP70}
Nivat, M. and J.~F. Perrot, \enquote{Une g{\'e}n{\'e}ralisation du mono\"{\i}de
  bicyclique}, {\em Comptes Rendus de l’Acad{\'e}mie des Sciences de
  Paris\/}, Vol. 271, pp. 824--827, 1970.

\bibitem{Ni70}
Nivat, M., \enquote{Sur les automates a m{\'e}moire pile}, {\em Proc. of
  International Computing Symposium\/}, pp. 221--225, 1970.

\bibitem{Ze13}
Zetzsche, G., \enquote{Silent transitions in automata with storage}, {\em
  International Colloquium on Automata, Languages, and Programming\/}, pp.
  434--445, Springer Berlin Heidelberg, 2013.

\bibitem{Ha00}
La~Harpe, P.~D., {\em Topics in Geometric Group Theory\/}, The University Of
  Chicago Press, Chicago, 2000.

\bibitem{GS98}
Glaister, I. and J.~Shallit, \enquote{Automaticity III: polynomial automaticity
  and context-free languages}, {\em Computational Complexity\/}, Vol.~7, No.~4,
  pp. 371--387, 1998.

\bibitem{Zak83}
\.{Z}ak, S., \enquote{A {T}uring machine time hierarchy}, {\em Theoretical
  Computer Science\/}, Vol.~26, No.~3, pp. 327 -- 333, 1983.

\bibitem{De11}
Dehn, M., \enquote{{\"U}ber unendliche diskontinuierliche Gruppen}, {\em
  Mathematische Annalen\/}, Vol.~71, No.~1, pp. 116--144, 1911.

\bibitem{Mi58}
Mikhailova, K.~A., \enquote{The occurrence problem for direct products of
  groups}, {\em Dokl. Akad. Nauk SSSR\/}, Vol. 119, No.~6, pp. 1103--1105,
  1958.

\bibitem{Pa70}
Paz, A., \enquote{Events which are not representable by a probabilistic
  automaton}, {\em SIGACT News\/}, , No.~4, pp. 8--11, 1970.

\bibitem{COSW19}
Colcombet, T., J.~Ouaknine, P.~Semukhin and J.~Worrell, \enquote{On
  reachability problems for low dimensional matrix semigroups}, {\em arXiv
  preprint arXiv:1902.09597\/}, 2019.

\bibitem{CK05}
Choffrut, C. and J.~Karhum{\"a}ki, \enquote{Some decision problems on integer
  matrices}, {\em RAIRO-Theoretical Informatics and Applications\/}, Vol.~39,
  No.~1, pp. 125--131, 2005.

\bibitem{PS17p}
Potapov, I. and P.~Semukhin, \enquote{Membership problem in GL(2, $\mathbb{Z}$)
  extended by singular matrices}, {\em LIPIcs-Leibniz International Proceedings
  in Informatics\/}, Vol.~83, Schloss Dagstuhl-Leibniz-Zentrum fuer Informatik,
  2017.

\bibitem{PS17}
Potapov, I. and P.~Semukhin, \enquote{Decidability of the membership problem
  for 2$\times$ 2 integer matrices}, {\em Proceedings of the Twenty-Eighth
  Annual ACM-SIAM Symposium on Discrete Algorithms\/}, pp. 170--186, SIAM,
  2017.

\bibitem{KNP17}
Ko, S.-K., R.~Niskanen and I.~Potapov, \enquote{On the identity problem for the
  special linear group and the Heisenberg group}, {\em 45th International
  Colloquium on Automata, Languages, and Programming (ICALP 2018)\/}, Vol. 107
  of {\em Leibniz International Proceedings in Informatics (LIPIcs)\/}, pp.
  132:1--132:15, 2018.

\bibitem{BP10}
Bell, P.~C. and I.~Potapov, \enquote{On the undecidability of the identity
  correspondence problem and its applications for word and matrix semigroups},
  {\em International Journal of Foundations of Computer Science\/}, Vol.~21,
  No.~06, pp. 963--978, 2010.

\bibitem{KSS07}
Kambites, M., P.~V. Silva and B.~Steinberg, \enquote{On the rational subset
  problem for groups}, {\em Journal of Algebra\/}, Vol. 309, No.~2, pp.
  622--639, 2007.

\bibitem{Lo13}
Lohrey, M., \enquote{The rational subset membership problem for groups: a
  survey}, {\em Groups St Andrews\/}, Vol. 422, pp. 368--389, 2013.

\bibitem{Ze15}
Zetzsche, G., \enquote{The emptiness problem for valence automata or: Another
  decidable extension of Petri nets}, {\em International Workshop on
  Reachability Problems\/}, pp. 166--178, Springer, 2015.

\bibitem{BR84}
Baumslag, G. and J.~E. Roseblade, \enquote{Subgroups of direct products of free
  groups}, {\em Journal of the London Mathematical Society\/}, Vol.~2, No.~1,
  pp. 44--52, 1984.

\bibitem{St58}
Stern, M.~A., \enquote{\"{U}ber eine zahlentheoretische \mbox{F}unktion}, {\em
  Journal f\"{u}r die reine und angewandte Mathematik\/}, Vol.~55, pp.
  193--220, 1858.

\bibitem{Br61}
Brocot, A., \enquote{{Calcul des rouages par approximation, nouvelle
  m\'{e}thode}}, {\em Revue Chronom\'{e}trique\/}, Vol.~3, pp. 186--194, 1861.

\bibitem{GKP89}
Graham, R., D.~Knuth and O.~Patashnik, {\em Concrete Mathematics: A foundation
  for computer science\/}, Addison-Wesley, 1989.

\bibitem{Ga13}
Garrity, T., \enquote{A multidimensional continued fraction generalization of
  Stern’s diatomic sequence}, {\em Journal of Integer Sequences\/}, Vol.~16,
  No.~2, p.~3, 2013.

\bibitem{RK10}
Render, E. and M.~Kambites, \enquote{Semigroup automata with rational initial
  and terminal sets}, {\em Theoretical Computer Science\/}, Vol. 411, No. 7-9,
  pp. 1004--1012, 2010.

\bibitem{GK86}
Goral{\v{c}}{\'\i}k, P. and V.~Koubek, \enquote{On discerning words by
  automata}, {\em International Colloquium on Automata, Languages, and
  Programming, {ICALP'86}\/}, Vol. 226 of {\em LNCS\/}, pp. 116--122, Springer,
  1986.

\bibitem{YDI08}
Yang, L., Z.~Dang and O.~H. Ibarra, \enquote{On stateless automata and {P}
  Systems}, {\em Int. J. Found. Comput. Sci.\/}, Vol.~19, No.~5, pp.
  1259--1276, 2008.

\bibitem{IKO10}
Ibarra, O.~H., J.~Karhum{\"{a}}ki and A.~Okhotin, \enquote{On stateless
  multihead automata: Hierarchies and the emptiness problem}, {\em Theoretical
  Computer Science\/}, Vol. 411, No.~3, pp. 581--593, 2010.

\bibitem{KMO09}
Kutrib, M., H.~Messerschmidt and F.~Otto, \enquote{On stateless deterministic
  restarting automata}, {\em 35th Conference on Current Trends in Theory and
  Practice of Computer Science, {SOFSEM'09}\/}, Vol. 5404 of {\em {LNCS}\/},
  pp. 353--364, Springer, 2009.

\bibitem{Pau00}
P{\u{a}}un, G., \enquote{Computing with membranes}, {\em Journal of Computer
  and System Sciences\/}, Vol.~61, No.~1, pp. 108--143, 2000.

\bibitem{Le93}
Lehmer, D.~H., \enquote{A note on trigonometric algebraic numbers}, {\em The
  American Mathematical Monthly\/}, Vol.~40, No.~3, pp. 165--166, 1933.

\bibitem{FS97}
Fernau, H. and R.~Stiebe, \enquote{Regulation by valences}, {\em International
  Symposium on Mathematical Foundations of Computer Science\/}, pp. 239--248,
  Springer, 1997.

\bibitem{Ho01}
Hoogeboom, H.~J., \enquote{Context-free valence grammars-revisited}, {\em
  International Conference on Developments in Language Theory\/}, pp. 293--303,
  Springer, 2001.

\bibitem{Ze11}
Zetzsche, G., \enquote{On the capabilities of grammars, automata, and
  transducers controlled by monoids}, {\em Proceedings of the 38th
  International Conference on Automata, Languages and Programming - Volume Part
  II\/}, ICALP'11, pp. 222--233, Springer-Verlag, 2011.

\bibitem{Ho95}
Howie, J.~M., {\em Fundamentals of Semigroup Theory\/}, Clarendon Oxford, 1995.

\end{thebibliography}
\end{document}